\newtheorem{theorem}{Theorem}
\newtheorem{proposition}{Proposition}
\newtheorem{definition}{Definition}
\newtheorem{corollary}{Corollary}
\theoremstyle{definition}
\newtheorem{example}{Example}
\newtheorem{digression}{Digression}
\DeclareMathAlphabet{\mathpzc}{OT1}{pzc}{m}{it}
\newcounter{numberone}
\newenvironment{varenumerate}
{
\begin{list}{\arabic{numberone}.}
{
  \usecounter{numberone}
  \setlength{\itemsep}{0pt}
  \setlength{\topsep}{0pt}
  \setlength{\parsep}{0pt}
  \setlength{\partopsep}{0pt}
  \setlength{\leftmargin}{15pt}
  \setlength{\rightmargin}{0pt}
  \setlength{\itemindent}{0pt}
  \setlength{\labelsep}{5pt}
  \setlength{\labelwidth}{15pt}
}}
{
\end{list}
}
\newcommand{\bnf}{\;::=\;}
\newcommand{\values}{\mathcal{V}}
\newcommand{\valone} {V}
\newcommand{\varone} {x}
\newcommand{\vartwo} {y}
\newcommand{\varthree} {z}
\newcommand{\termone}{e}
\newcommand{\termtwo}{f}
\newcommand{\termthree}{g}
\newcommand{\termfour}{h}
\newcommand{\valtwo} {\mbox{\tt w}}
\newcommand{\valthree}{u}
\newcommand{\envone}{\env}
\newcommand{\envtwo}{\Delta}
\newcommand{\imp}{\vdash}
\newcommand{\ctxone}{\mathcal{C}}
\newcommand{\emptyctx}{\emptyset}
\newcommand{\distone}{\mu}
\newcommand{\disttwo}{\nu}
\newcommand{\signature}{\Sigma}
\newcommand{\abs}[1]{\lambda #1.}
\newcommand{\ocppo}{\ensuremath{\omega\text{-}\mathsf{cppo}}}
\newcommand{\lan}{\langle}
\newcommand{\ran}{\rangle}
\newcommand{\cc}{\cdots}
\newcommand{\hh}{\hdots}
\newcommand{\set}{\mathsf{Set}}
\newcommand{\powerset}{\mathcal{P}}
\newcommand{\distribution}{\mathcal{D}}
\newcommand{\monad}{T}
\newcommand{\unit}{\eta}
\newcommand{\functor}{F}
\newcommand{\defeq}{\triangleq}
\newcommand{\rel}{\mathbf{Rel}}
\newcommand{\howe}[1]{#1^{H}}
\newcommand{\relone}{\mathcal{R}}
\newcommand{\reltwo}{\mathcal{S}}
\newcommand{\vrelone}{\alpha}
\newcommand{\vreltwo}{\beta}
\newcommand{\vrelthree}{\gamma}
\newcommand{\vrelator}{\Gamma}
\newcommand{\vrelatorone}{\vrelator}
\newcommand{\vrelatortwo}{\Delta}
\newcommand{\vidrel}{1}
\newcommand{\true}{\mathsf{true}}
\newcommand{\false}{\mathsf{false}}
\DeclarePairedDelimiter\ket{\lvert}{\rangle}
\newcommand{\dirac}[1]{\ket{#1}}
\newcommand{\cpoleq}{\sqsubseteq}
\newcommand{\lub}{\bigsqcup}
\def\namedlabel#1#2{\begingroup
    #2%
    \def\@currentlabel{#2}%
    \phantomsection\label{#1}\endgroup
}
\newcommand{\bps}{\begin{description}}
\newcommand{\eps}{\end{description}}
\newcommand{\bind}{\mathbin{\scriptstyle{\gg=}}}
\newtheorem{fact}{Fact}
\newcommand{\sem}[1]{| #1 |}
\newcommand{\kleisli}[1]{#1^{*}}
\newcommand{\return}{\mathsf{return}}
\renewcommand{\valone}{v}
\renewcommand{\valtwo}{w}
\renewcommand{\termone}{e}
\renewcommand{\termtwo}{f}
\renewcommand{\termfour}{h}
\newcommand{\catone}{\mathbb{C}}
\newcommand{\TO}[1]{\Rightarrow^{n}}
\newcommand{\open}[1]{#1^{o}}
\newcommand{\substcomp}[3]{#1[#2 := #3]}
\newcommand{\substval}[3]{#1[#2/#3]}
\newcommand{\relator}{\Gamma}
\newcommand{\quantale}{\mathsf{V}}
\newcommand{\tensor}{\otimes}
\newcommand{\qunit}{k}
\newcommand{\join}{\bigvee}
\newcommand{\meet}{\bigwedge}
\newcommand{\qvalone}{a}
\newcommand{\qvaltwo}{b}
\newcommand{\qvalthree}{c}
\newcommand{\qvalfour}{d}
\newcommand{\quantaletwo}{\mathsf{W}}
\newcommand{\mmap}{\multimap}
\newcommand{\two}{\mathsf{2}}
\newcommand{\comp}{\cdot}
\def\tobar{\mathrel{\mkern3mu  \vcenter{\hbox{$\scriptscriptstyle+$}}%
            \mkern-12mu{\to}}}
\newcommand{\torel}{\tobar}
\newcommand{\valfour}{z}
\newcommand{\Vrel}[1]{#1\text{-}\mathbf{Rel}}
\newcommand{\vrel}{\Vrel{\quantale}}
\newcommand{\Vcat}[1]{#1\text{-}\mathsf{Cat}}
\newcommand{\vcat}{\Vcat{\quantale}}
\newcommand{\hausdorff}{H}
\newcommand{\hausdorffsym}{H^s}
\renewcommand{\monad}{T}
\renewcommand{\set}{\mathbf{Set}}
\renewcommand{\powerset}{\mathcal{P}}
\newcommand{\globalstate}{\mathcal{G}}
\renewcommand{\functor}{\monad}
\newcommand{\functorone}{\functor}
\newcommand{\functortwo}{U}
\newcommand{\undeff}{\bot}
\newcommand{\idrel}{1}
\newcommand{\idvrel}{id}
\newcommand{\Monad}{\mathbb{\monad}}
\newcommand{\setone}{X}
\newcommand{\settwo}{Y}
\newcommand{\setthree}{Z}
\newcommand{\setfour}{W}
\newcommand{\typeOne}{\sigma}
\newcommand{\typeTwo}{\tau}
\newcommand{\typeone}{\typeOne}
\newcommand{\typetwo}{\typeTwo}
\newcommand{\typeVar}{t}
\newcommand{\unittype}{\mathsf{unit}}
\newcommand{\sumtype}[2]{\sum\nolimits_{#1}{#2}}
\newcommand{\typevarone}{t}
\newcommand{\recType}[2]{\mu #1.#2}
\newcommand{\op}{\mathbf{op}}
\newcommand{\seq}[2]{\mathbf{let}\ x = #1\ \mathbf{in}\ #2}
\newcommand{\seqy}[2]{\mathbf{let}\ y = #1\ \mathbf{in}\ #2}
\renewcommand{\return}[1]{\mathbf{val}\ #1}
\newcommand{\fold}[1]{\mathbf{fold}\ #1}
\newcommand{\inject}[2]{\lan #1, #2 \ran}
\newcommand{\casesum}[2]{\mathbf{case}\ #1\ \mathbf{of}\ 
			\{\inject{i}{\varone} \to #2\}}
\newcommand{\unfoldin}[2]{\mathbf{case}\ #1\ \mathbf{of}\ 
	\{\fold{\varone} \to #2\}}
\newcommand{\pmfold}[2]{\unfoldin{#1}{#2}}
\newcommand{\casefold}[2]{\unfoldin{#1}{#2}}
\newcommand{\casefoldz}[2]{\mathbf{case}\ #1\ \mathbf{of}\ \{\mathbf{fold}\ \varthree \to #2\}}
\newcommand{\unbang}[2]{\mathbf{case}\ #1\ \mathbf{of}\ 
	\{\bang \varone \to #2\}}
\newcommand{\casebang}[2]{\unbang{#1}{#2}}
\newcommand{\casebangy}[2]{\mathbf{case}\ #1\ \mathbf{of}\ \{ !\vartwo \to #2\}}
\newcommand{\pmbang}[2]{\unbang{#1}{#2}}
\newcommand{\envOne}{\Gamma}
\newcommand{\envTwo}{\Delta}
\renewcommand{\envone}{\envOne}
\renewcommand{\envtwo}{\envTwo}
\newcommand{\substType}[3]{#1[#3/#2]}
\newcommand{\valseq}[2]{#1 \vdash^{\mathsf{v}} #2}
\newcommand{\compseq}[2]{#1 \compimp #2}
\renewcommand{\howe}[1]{#1^{\scriptscriptstyle{H}}}
\newcommand{\graph}{\mathcal{G}}
\newcommand{\strongkleisli}[1]{#1^{*}}
\newcommand{\catunit}{I}
\newcommand{\unitor}{\lambda}
\newcommand{\laxcommutegeq}{\ar @{} [dr] |\geq}
\newcommand{\laxcommute}{\ar @{} [dr] |\leq}
\renewcommand{\powerset}{\mathcal{P}}
\newcommand{\baseone}{s}
\newcommand{\basetwo}{r}
\newcommand{\basethree}{p}
\newcommand{\baseid}{1}
\newcommand{\bang}{{!}}
\newcommand{\toval}[1]{#1^{\scriptscriptstyle \values}}
\newcommand{\toterm}[1]{#1^{\scriptscriptstyle \Lambda}}
\newcommand{\howeimp}{\models}
\newcommand{\ruleHVar}{(\mathsf{H\text{-}var})}
\newcommand{\ruleHReturn}{(\mathsf{H\text{-}val})}
\newcommand{\ruleHSeq}{(\mathsf{H\text{-}let})}
\newcommand{\ruleHApp}{(\mathsf{H\text{-}app})}
\newcommand{\ruleHAbs}{(\mathsf{H\text{-}abs})}
\newcommand{\ruleHInject}{(\mathsf{H\text{-}inj})}
\newcommand{\ruleHCaseSum}{(\mathsf{H\text{-}sum\text{-}cases})}
\newcommand{\ruleHFold}{(\mathsf{H\text{-}fold})}
\newcommand{\ruleHPmFold}{(\mathsf{H\text{-}fold\text{-}cases})}
\newcommand{\ruleHBang}{(\mathsf{H\text{-}bang})}
\newcommand{\ruleHPmBang}{(\mathsf{H\text{-}bang\text{-}cases})}
\newcommand{\ruleHOp}{(\mathsf{H\text{-}op})}
\newcommand{\valimp}{\imp^{\mathsf{v}}}
\newcommand{\compimp}{\imp}
\newcommand{\approxsem}[2]{\sem{#1}_{#2}}
\newcommand{\howeimpcomp}{\models^{\mathsf{c}}}
\newcommand{\howeimpval}{\models^{\mathsf{v}}}
\newcommand{\fun}[1]{\sem{\substcomp{#1}{\varone}{\_}}}
\newcommand{\dual}[1]{#1^{\circ}}
\newcommand{\intervalQuantale}{[0,1]}
\newcommand{\couplings}{\Omega}
\newcommand{\wasserstein}{W}
\newcommand{\wassersteinbot}{W_\mbot}
\newcommand{\acts}{\circ}
\newcommand{\mop}{op}
\newcommand{\qop}{op_{\quantale}}
\newcommand{\changeofbases}{\Pi}
\newcommand{\monadic}[1]{\mathpzc{#1}}
\newcommand{\vsim}{\delta}
\newcommand{\vbisim}{\gamma}
\newcommand{\transitive}[1]{#1^{\scriptscriptstyle T}}
\newcommand{\get}{\mathbf{get}}
\newcommand{\settzero}{\mathbf{set}_{\ell := 0}}
\newcommand{\settone}{\mathbf{set}_{\ell := 1}}
\newcommand{\refine}[1]{\hat{#1}}
\newcommand{\refineimp}{\howeimp}
\newcommand{\refineimpval}{\howeimpval}
\newcommand{\numeral}[1]{\underline{#1}}
\newcommand{\mbot}{\bot}
\newcommand{\qtop}{\rotatebox[origin=c]{180}{$\Bot$}}
\newcommand{\qbot}{\Bot}
\newcommand{\Fuzz}{$\mathsf{Fuzz}$}
\newcommand{\qmmap}{\multimapdot}
\begin{document}

\title[Quantitative Behavioural Reasoning 
for Higher-order Effectful Programs]{Quantitative Behavioural Reasoning 
for Higher-order Effectful Programs: Applicative Distances \\
(Extended Version)}   


\author{Francesco Gavazzo}
\affiliation{
  \institution{Universit\`a di Bologna \& INRIA Sophia Antipolis}            
  \city{Bologna}
  \country{Italy}                    
}

\begin{abstract}
This paper studies quantitative refinements of Abramsky's applicative 
similarity and bisimilarity in the context of a generalisation of \Fuzz, 
a call-by-value $\lambda$-calculus with a linear type system that can 
express program sensitivity, enriched with algebraic operations \emph{\`a la} 
Plotkin and Power.
To do so a general, abstract framework for studying behavioural relations 
taking values over quantales is introduced according to Lawvere's analysis of 
generalised metric spaces. Barr's notion of relator (or lax extension) is 
then extended to quantale-valued relations, adapting and extending results 
from the field of monoidal topology. Abstract notions of 
quantale-valued effectful applicative similarity and bisimilarity 
are then defined and proved to 
be a compatible generalised metric (in the sense of Lawvere) and 
pseudometric, respectively, under mild conditions. 
\end{abstract}

\begin{CCSXML}
<ccs2012>
<concept>
<concept_id>10011007.10011006.10011008</concept_id>
<concept_desc>Software and its engineering~General programming languages</concept_desc>
<concept_significance>500</concept_significance>
</concept>
<concept>
<concept_id>10003456.10003457.10003521.10003525</concept_id>
<concept_desc>Social and professional topics~History of programming languages</concept_desc>
<concept_significance>300</concept_significance>
</concept>
</ccs2012>
\end{CCSXML}

\ccsdesc[500]{Software and its engineering~General programming languages}
\ccsdesc[300]{Social and professional topics~History of programming languages}

\keywords{applicative distance, applicative similarity, 
applicative bisimilarity, Howe's method, algebraic effects, Fuzz, relator}  

\maketitle

\section{Introduction}

Program preorders and equivalences are fundamental concepts in the 
theory of programming languages since the very birth of the discipline. 
Such notions are usually defined by means of relations 
between program phrases aimed to order or identify programs according to 
their observable \emph{behaviours}, the latter being usually defined 
by means of a primitive notion of observation such as termination to 
a given value. We refer to such 
relations as \emph{behavioural relations}. 
Well-known behavioural relations for higher-order functional languages 
include the \emph{contextual preorder} and \emph{contextual equivalence} 
\cite{Morris/PhDThesis}, \emph{applicative (bi)similarity} \cite{Abramsky/RTFP/1990}, 
and \emph{logical relations} \cite{Reynolds/Logical-relations/1983}.



Instead of asking when two programs $\termone$ and $\termone'$ are 
behaviourally similar or equal, a more informative question 
may be asked, namely how much (behaviourally)
different $\termone$ and $\termone'$ are.
That means that instead of looking at relations relating programs with 
similar or equal behaviours we look at relations assigning programs a numerical  
value representing their \emph{behavioural distance}, i.e. a numerical value 
quantifying the observable differences between their behaviours. 
The question of quantifying observable differences between programs turned 
out to be particularly interesting (and challenging) for effectful higher-order 
languages, where ordinary qualitative
(i.e. boolean-valued) equivalences and preorders are too strong. 
This is witnessed by recent results on behavioural pseudometrics for 
probabilistic $\lambda$-calculi 
\cite{CrubilleDalLago/LICS/2015,CrubilleDalLago/ESOP/2017} as well as  
results on semantics of higher-order languages for differential privacy 
\cite{Pierce/DistanceMakesTypesGrowStronger/2010,GaboardiEtAl/POPL/2017}. 
In the first case one soon realises that programs exhibiting a different 
behaviour only with probability close to zero are fully discriminated 
by ordinary behavioural relations, whereas in the second case relational 
reasoning does not provide any information on how much behavioural differences 
between inputs affect behavioural differences between outputs.

These problems can be naturally addressed by working with quantitative relations 
capturing weakened notions of metric such as \emph{generalised metrics} 
\cite{Lawvere/GeneralizedMetricSpaces/1973} and \emph{pseudometrics}
\cite{steen/CounterexamplesTopology/1995}. It is then natural to ask 
whether and to what extent ordinary behavioural relations can be refined 
into quantitative relations still preserving their nice properties. 
Although easy to formulate, answering such question is far from  
trivial and requires major improvements in the current theory of 
behavioural reasoning about programs.

This paper contributes to answering the above question, and it does so 
by studying the quantitative refinement of Abramsky's 
\emph{applicative similarity} and \emph{bisimilarity} 
\cite{Abramsky/RTFP/1990} for 
higher-order languages enriched with algebraic effects.
Applicative similarity (resp. bisimilarity) is a coinductively defined 
preorder (resp. equivalence) relating programs that exhibit similar 
(resp. equal) extensional behaviours. 
Due to its coinductive nature and to its nice 
properties, applicative (bi)similarity has been studied for a 
variety of calculi, both pure and effectful. 
Notable examples are extensions to nondeterministic 
\cite{Lassen/PhDThesis} and probabilistic 
\cite{DalLagoSangiorgiAlberti/POPL/2014,CrubilleDalLago/ESOP/2014} 
$\lambda$-calculi, and its more recent extension 
\cite{DalLagoGavazzoLevy/LICS/2017} 
to $\lambda$-calculi with algebraic effects \emph{\`a la}  
Plotkin and Power \cite{PlotkinPower/FOSSACS/01}. 
In \cite{DalLagoGavazzoLevy/LICS/2017} an abstract notion of applicative 
similarity is studied for an untyped $\lambda$-calculus enriched with 
a signature of effect-triggering operation symbols. Operation symbols 
are interpreted as algebraic operations with respect to a monad $\monad$ 
encapsulating the kind of effect such operations produce. Examples are 
probabilistic choices with the (sub)distribution monad, and nondeterministic 
choices with the powerset monad. 
The main ingredient used to extend Abramsky's applicative similarity 
is the concept of a \emph{relator} \cite{Barr/LMM/1970,Thijs/PhDThesis/1996}
for a monad $\monad$, i.e.
an abstraction meant to capture the possible ways a relation 
on a set $X$ can be turned into a relation on $\monad X$. 
That allows to define an abstract notion of 
\emph{effectful} applicative similarity parametric in a relator, 
and to prove an abstract precongruence 
theorem stating the resulting notion of applicative similarity is a 
compatible preorder.

The present work originated from the idea of generalising the theory 
developed in \cite{DalLagoGavazzoLevy/LICS/2017} to relations taking values 
over arbitrary quantitative domains (such as the real extended half-line 
$[0,\infty]$ or the unit interval $[0,1]$). 
Such generalisation requires three major improvements in the current 
theory of effectful applicative (bi)similarity:

\begin{enumerate}[wide = 0pt, leftmargin = *]
\item
The first improvement is to move from boolean-valued 
relations to relations taking values on quantitative 
domains such as 
$[0,\infty]$ or $[0,1]$ in such a way that restricting 
these domains to the two element set $\{0,1\}$ (or $\{\false, \true\}$) 
makes the theory collapse to 
the usual theory of applicative (bi)similarity. 
For that we rely on Lawvere's analysis \cite{Lawvere/GeneralizedMetricSpaces/1973} of 
generalised metric spaces and preordered sets as 
enriched categories. 
Accordingly, we replace boolean-valued relations with relations taking 
values over quantales \cite{Rosenthal/Quantales/1990} 
$(\quantale, \leq, \tensor, \qunit)$, i.e. algebraic structures 
(notably complete lattices equipped with a monoid structure) that play the role 
of sets of abstract quantities. Examples of quantales include 
the extended real half-line $([0,\infty], \geq, 0, +)$ ordered by 
the ``greater or equal'' relation $\geq$ and with monoid structure given by 
addition (and its restriction to the unit interval $[0,1]$), and 
the extended real half-line $([0,\infty], \geq, 0, \max)$ with monoid 
structure given by binary maximum (in place of addition), 
as well as any complete Boolean and Heyting algebra. 
This allows to develop an algebra of quantale-valued relations, 
$\quantale$-relations for short, which provides a general framework for studying 
both behavioural relations and behavioural distances (for instance, 
an equivalence $\quantale$-relation instantiates to an ordinary equivalence 
relation on the boolean quantale $(\{\false,\true\}, \leq, \wedge, \true)$, 
and to a pseudometric on the quantale $([0,\infty], \geq, 0, +)$).
\item
The second improvement is the generalisation of the notion of relator to 
quantale-valued relators, i.e. relators acting on relations taking values 
over quantales. Perhaps surprisingly, such 
generalisation is at the heart of the filed of \emph{monoidal topology} 
\cite{Hoffman-Seal-Tholem/monoidal-topology/2014}, a subfield of 
categorical topology aiming to unify ordered, metric, and 
topological spaces in categorical terms. Central to the development of 
monoidal topology is the notion of $\quantale$-relator or $\quantale$-lax 
extension of a monad $\monad$ which, analogously to the notion 
of relator, is a construction lifting $\quantale$-relations on a set $X$ to 
$\quantale$-relations on $\monad X$. 
Notable examples of $\quantale$-relators are obtained from the 
Hausdorff distance (for the powerset monad) 
and from the 
Wasserstein-Kantorovich distance \cite{Villani/optimal-transport/2008} 
(for the distribution monad). 
\item
The third improvement (on which we will expand more in the next paragraph) 
is the development of a \emph{compositional} theory of behavioural 
$\quantale$-relations (and thus of behavioural distances).
As we are going to see, ensuring compositionality in an higher-order 
setting is particularly challenging
due to the ability of higher-order programs to copy their 
input several times, a feature that allows them  to amplify  
distances between their inputs \emph{ad libitum}. 
\end{enumerate}
The result is an abstract theory of behavioural $\quantale$-relations 
that allows to define notions of quantale-valued applicative similarity and 
bisimilarity parametric in a quantale-valued relator. The notions obtained 
generalise the existing notions of real-valued applicative 
(bi)similarity and can be instantiated to concrete calculi to 
provide new notions of applicative (bisimilarity) distance. 
A remarkable 
example is the case of probabilistic $\lambda$-calculi, where to the best of the 
author's knowledge a (non-trivial) applicative distance 
for a universal (i.e. Turing complete) probabilistic $\lambda$-calculus is still 
lacking in the literature (but see Section \ref{section:related-works}). 

The main theorem of this paper states that under suitable conditions on monads and 
quantale-valued relators the abstract notion of quantale-valued applicative 
similarity is a compatible---i.e. compositional---reflexive and transitive 
$\quantale$-relation. 
Under mild conditions such result extends to quantale-valued 
applicative bisimilarity, which is thus proved to be a compatible, 
reflexive, symmetric, and transitive $\quantale$-relation 
(i.e. a compatible pseudometric).

In addition to the concrete results obtained for quantale-valued applicative 
(bi)similarity, the contribution of the present work also relies on
introducing and combining several 
notions and results developed in different fields (such as monoidal topology, 
coalgebra, and programming language theory) 
to build an abstract framework for studying 
quantitative refinements of behavioural relations for higher-order languages 
whose applications go beyond the present study of applicative (bi)similarity.

\paragraph{Compositionality, distance amplification, and linear types}
Once we have understood what is the behavioural distance 
$\vsim(\termone, \termone')$ (which, for the sake of this argument, 
we assume to be a non-negative real number) 
between two programs $\termone$ and $\termone'$, 
it is natural to ask if and how much such distance is modified when $\termone$ and 
$\termone'$ are used inside a bigger program---i.e. a context---$\ctxone[-]$. 
Indeed we would like to reason about the 
distance $\vsim(\ctxone[\termone], \ctxone[\termone'])$ \emph{compositionally}, 
i.e. in terms of the distance $\vsim(\termone, \termone')$. 

Compositionality is at the heart of relational reasoning about program 
behaviours. Informally, compositionality states that observational 
indistinguishability is preserved by language constructors; 
formally, a relation is compositional if it is \emph{compatible} with 
all language constructors, meaning that whenever two programs $\termone$ and 
$\termone'$ are related, then so are the bigger programs $\ctxone[\termone]$ 
and $\ctxone[\termone']$. 

Analogous to the idea that compatible relations are preserved 
by language constructors, we are tempted to define as compatible those 
distances that are not increased by language constructors. 
That is, we would like to say that a behavioural distance $\vsim$ is compatible 
if the distance $\vsim(\ctxone[\termone], \ctxone[\termone'])$ between 
$\ctxone[\termone]$ and $\ctxone[\termone']$ is always bounded by the 
distance $\vsim(\termone, \termone')$, no matter how $\ctxone[-]$ uses 
$\termone$ and $\termone'$. However, we soon realise that such proposal 
cannot work: not only 
how $\ctxone[-]$ uses $\termone$ and $\termone'$ matters, but also 
\emph{how much} it uses them does.
This phenomenon, called \emph{distance amplification} 
\cite{CrubilleDalLago/ESOP/2017}, can be easily observed 
when dealing with probabilistic languages.
Consider the following example for a probabilistic 
untyped $\lambda$-calculus \cite{DalLagoSangiorgiAlberti/POPL/2014}
taken from \cite{CrubilleDalLago/ESOP/2017}. 
Let $I$ be the identity combinator 
and $I \oplus \Omega$ be the program evaluating to $I$ with probability 
$\frac{1}{2}$, and diverging with probability $\frac{1}{2}$. 
Assuming we observe the probability of convergence of a program,
it speaks by itself that we would expect the behavioural distance 
$\vsim(I, I \oplus \Omega)$ between $I$ and $I \oplus \Omega$ to be 
$\frac{1}{2}$. However, it is sufficient to consider a family 
$\{\ctxone_n[-]\}_{n \geq 0}$ of contexts that duplicate their input $n$-times\footnote{ 
For instance 
$\{(\abs{\varone}{\underbrace{(\varone I) \hh (\varone I)}_n})(\lambda y.[-])\}_{n\geq 0}$.} 
to see that any such context amplifies the observable distance between 
$I$ and $I \oplus \Omega$: as $n$ grows, the probability of convergence of
$\ctxone[I \oplus \Omega]$ tends to zero, whereas the one of 
$\ctxone[I]$ remains always equal to one.
During its evaluation, every time the context $\ctxone_n$ evaluates its inputs 
the detected distance between the latter is somehow 
accumulated to the distances previously observed, 
thus exploiting the \emph{linear}---in opposition to classical---nature of 
the act of measuring. Such linearity naturally reflects the monoidal closed 
structure of categories of metric spaces, in opposition with 
the cartesian closed structure characterising `classical' (i.e. 
boolean-valued) observations.    

The above example shows that if we want to reason compositionally 
about behavioural distances, then we have to accept that contexts can amplify 
distances, and thus we should take into account the number of 
times a program accesses its input. 
More concretely, our notion of compatibility 
allows a context $\ctxone[-]$ using its input $\baseone$ times to 
increase the distance $\vsim(\termone, \termone')$ between $\termone$ and $\termone'$, 
but of a factor 
\emph{at most} $\baseone$. That is, the distance 
$\vsim(\ctxone[\termone], \ctxone[\termone'])$ should be bounded by
$\baseone \cdot \vsim(\termone, \termone')$. 
Our main result states that quantale-valued applicative (bi)similarity 
is compatible in this sense. This result allows us to reason about 
behavioural distances compositionally, so that we can e.g. conclude
that the distance between $I$ and $I \oplus \Omega$ 
is indeed $\frac{1}{2}$ (Example \ref{ex:probabilistic-applicative-similarity-distance}).

Reasoning about the number of times programs use (or test) their inputs 
requires a shift from ordinary languages to 
refined languages tracking information about the so-called 
\emph{program sensitivity} 
\cite{Pierce/DistanceMakesTypesGrowStronger/2010,GaboardiEtAl/POPL/2017}. 
The sensitivity of a program is the `law' describing 
how much behavioural differences in outputs are affected by 
behavioural differences in inputs, and thus provides the 
abstraction needed to handle distance amplification.


Our refined language is a generalisation of 
the language \Fuzz\ 
\cite{Pierce/DistanceMakesTypesGrowStronger/2010,GaboardiEtAl/POPL/2017}, 
which we call $\quantale$-\Fuzz. \Fuzz\ is a PCF-like language 
refining standard $\lambda$-calculi by means of a powerful linear type 
system enriched with sensitivity-indexed 
`bang types' that allow to track program sensitivity. 
Despite being parametric with respect to an arbitrary quantale, 
the main difference between $\quantale$-\Fuzz\ and \Fuzz\ 
is that the former is an effectful calculus parametric with 
respect to a signature of (algebraic) operation symbols. 
This allows to consider imperative, nondeterministic, 
and probabilistic versions of \Fuzz, as well as combinations thereof.

\paragraph{Structure of the work} 
After having recalled some necessary mathematical 
preliminaries, we introduce $\quantale$-\Fuzz\ and its monadic operational 
semantics (Section \ref{section:v-fuzz}). We then introduce 
(Section \ref{section:v-relators-and-v-relation-lifting}) 
the machinery of $\quantale$-relators showing how it can be successfully 
instantiated on several examples. 
In Section \ref{section:behavioural-v-relations} we define 
applicative $\vrelator$-similarity, a $\quantale$-relation generalising 
effectful applicative similarity parametric with respect to a$\quantale$-relator 
$\vrelator$, and prove it is a reflexive and 
transitive $\quantale$-relation whose kernel induces an abstract notion of 
applicative similarity. 
Our main theorem states 
that under suitable conditions on the $\quantale$-relator $\vrelator$, 
applicative $\vrelator$-similarity is compatible.
Finally, in Section 
\ref{section:from-applicative-v-similarity-to-applicative-v-bisimilarity}
we define the notion of applicative $\vrelator$-bisimilarity 
and prove that under mild conditions  
such notion is a compatible 
equivalence $\quantale$-relation (viz. a compatible pseudometric). 

\section{Preliminaries}\label{section:preliminaries}

In this section we recall some basic definitions and 
results needed in the rest of the paper. Unfortunately, there is no 
hope to be comprehensive, and thus we assume the reader to be 
familiar with basic domain theory \cite{AbramskyJung/DomainTheory/1994} 
(in particular we assume the notions of $\omega$-complete (pointed) 
partial order, \ocppo\ for 
short, monotone, and continuous functions), basic order 
theory \cite{DaveyPriestley/Book/1990}, and basic category theory 
\cite{MacLane/Book/1971}. 
In particular, for a monoidal category 
$\lan\catone, I, \tensor \ran$ we assume the reader to be familiar with the notion 
of \emph{strong Kleisli triple} 
\cite{MacLane/Book/1971,Kock/StrongMonads/1972} 
$\Monad = \lan \monad, \unit, \strongkleisli{-}\ran$.
We use the notation $\strongkleisli{f}: Z \tensor \monad X \to \monad Y$ for the 
strong Kleisli extension of $f: Z \tensor X \to \monad Y$ (and use the same 
notation for the ordinary Kleisli lifting of $f: X \to \monad Y$, the latter 
being essentially the subcase of $\strongkleisli{-}$ for $Z = I$) and reserve the 
letter $\unit$ to denote the unit of $\Monad$. 
Oftentimes, we refer to a (strong) Kleisli triples as a (strong) monad. 
We denote by $\catone_\Monad$ the Kleisli category of $\Monad$.
Finally, we recall that every monad on $\set$, the category of sets and functions, 
is strong (with respect to the cartesian structure).

We also try to follow the notation used in 
the just mentioned references. As a small difference, we denote 
by $g \comp f$ the composition of $g$ with $f$ rather than by $g \circ f$.

\subsection{Monads and Algebraic Effects}

Following \cite{PlotkinPower/FOSSACS/01} we consider algebraic 
operations as sources of side effects. Syntactically, 
algebraic operations are given via a signature $\signature$ consisting of 
a set of operation symbols (uninterpreted operations) together 
with their arity (i.e. their number of operands). Semantically, 
operation symbols are interpreted as algebraic operations on 
strong monads on $\set$. To any $n$-ary operation symbol $\op \in \signature$ 
and any set $X$ we associate a map $\mop_X: (\monad X)^n \to \monad X$ 
(so that we equip $\monad X$ with a $\signature$-algebra structure) 
such that $\kleisli{f}$ is a parametrised $\signature$-algebra 
(homo)morphis, for any $f:Z \times X \to \monad Y$. Concretely, 
we require 
$\mop_Y(\strongkleisli{f}(z,x_1), \hh, \strongkleisli{f}(z,x_1)) = 
\strongkleisli{f}(z,\mop_X(x_1, \hh, x_n))$
to hold for all $z \in Z, x_i \in \monad Y$.

We also use monads to give operational semantics to $\quantale$-\Fuzz\ 
\cite{DalLagoGavazzoLevy/LICS/2017}. 
Intuitively, a program $\termone$ evaluates to a \emph{monadic value} 
$\monadic{\valone} \in \monad \values$, where $\values$ denotes the set of values. 
For instance, a nondeterministic program evaluates to a \emph{set} of values, 
whereas a probabilistic program evaluates to a \emph{(sub)distribution} of 
values.
Due to the presence of non-terminating programs the evaluation 
of a term is defined as the limit of its ``finite evaluations'', and thus 
we need monads to carry a suitable domain structure. 
Recall that any category $\catone$ is \ocppo-enriched if the hom-set 
$\catone(X,Y)$ carries an 
\ocppo-structure, for all objects $X,Y$, and composition is continuous. 
A (strong) monad $\Monad$ is \ocppo-enriched 
if $\catone_\Monad$ is. In particular, in $\set$ that means that 
we have an \ocppo\ $\lan \monad X, \cpoleq_X, \mbot_X\ran$ for any set $X$. 
In particular, \ocppo-enrichment of $\Monad$ gives the following 
equalities for $g, g_n: X \to \monad Y$ and $f, f_n: Y \to \monad Z$ 
arrows in $\catone$: 
\begin{align*}
  \strongkleisli{(\lub_{n<\omega} f_n)} \comp g            
  & =  \lub_{n<\omega} \strongkleisli{f_n} \comp g,  \\ 
  \strongkleisli{f} \comp (\lub_{n<\omega} g_n)    
  & =  \lub_{n<\omega} (\strongkleisli{f} \comp g_n).
\end{align*}
Since $\quantale$-Fuzz is a call-by-value language, we 
also require the equality $\strongkleisli{f}(z, \mbot_X) = \mbot_Y$, 
for $f: Z \tensor X \to \monad Y$. 

Finally, we say that 
$\Monad$ is $\signature$-\emph{continuous} if satisfies the above conditions and 
operations $\mop_X : (\monad X)^n \to \monad X$ are continuous, meaning 
that for all $\omega$-chains $c_1, \hh, c_n$ in $\monad X$ we have:
$$
\mop_X(\lub c_1, \hh, \lub c_n) = \lub \mop_X(c_1, \hh, c_n).
$$
The reader can consult\cite{PlotkinPower/FOSSACS/01,DalLagoGavazzoLevy/LICS/2017} 
for more details.

\begin{example}\label{ex:monads}
The following are $\signature$-continuous monads:
\begin{enumerate}[wide = 0pt, leftmargin = *]
  \item The partiality monad $(-)_\mbot$ mapping a set $X$ to 
    $X_\mbot \defeq X + \{\mbot_X\}$. We give $X_\mbot$ an 
    \ocppo\ structure via $\cpoleq_X$ defined by 
    $\monadic{x} \cpoleq_X \monadic{y}$ if and only if $\monadic{x} = \mbot_X$ 
    or $\monadic{x} = \monadic{y}$. We equip the function space 
    $X \to Y_\bot$ with the pointwise order induced by $\cpoleq$.
  \item The powerset monad mapping a set to its powerset. 
    The unit maps an element $x$ to  
    $\{x\}$, whereas $\strongkleisli{f}: Z \times \powerset(X) \to \powerset(Y)$ 
    is defined by
    $\kleisli{f}(z,\mathpzc{X}) \defeq \bigcup_{x \in \mathpzc{X}} f(z,x)$, 
    for $f: Z \times X \to \powerset(Y)$, $\mathpzc{X} \subseteq X$, 
    and $z \in Z$. 
    We give $\powerset(X)$ an \ocppo\ structure via 
    subset inclusion $\subseteq$ and order the function space 
    $X \to \powerset(Y)$ with the pointwise order induced 
    by $\subseteq$. Finally,
    we consider the signature $\signature = \{\oplus\}$ consisting 
    of a single binary operation symbol for pure 
    nondeterministic choice and interpret it as set-theoretic union.
  \item The discrete \emph{subdistribution} monad $\distribution_{\leq 1}$ 
    mapping a set $X$ to $\distribution(X_\mbot)$, where $\distribution$ 
    denotes the discrete \emph{full} distribution monad. 
    The unit of $\distribution$ maps an element 
    $x$ to the Dirac distribution $\dirac{x}$ on it, whereas 
    the strong Kleisli extension 
    $\strongkleisli{f}: Z \times \distribution X \to \distribution Y$ 
    of $f: Z \times X \to \distribution Y$ is defined 
    by $\strongkleisli{f}(z,\mu)(y) \defeq \sum_{x \in X} \mu(x) \cdot f(z,x)(y)$.
    On $\distribution(X_\mbot)$, define the order $\cpoleq_X$ by 
    $\mu \cpoleq_X \nu$ if and only if $\forall x \in X.\ \mu(x) \leq \nu(x)$ holds.
    The pair $(\distribution(X_\mbot), \cpoleq_X)$ forms an \ocppo, 
    with bottom element given by the Dirac distribution on $\mbot_X$ 
    (the distribution modelling 
    the always zero subdistribution). The \ocppo\ structure lifts to 
    function spaces pointwisely.
    Finally, consider the signature
    $\signature \defeq \{\oplus_p \mid p \in \mathbb{Q},\ 0 < p < 1\}$ 
    whose interpretation on the subdistribution monad is defined by 
    $(\mu \oplus_p \nu)(x) \defeq p \cdot \mu(x) + (1-p) \cdot \nu(x)$. 
    Restricting to $p \defeq \frac{1}{2}$ we obtain fair probabilistic 
    choice $\oplus$.
  \item The partial global state monad $\globalstate_\bot$ is obtained from 
    the partiality monad and the global state monad; it maps 
    a set $X$ to $(S \times X)_\bot^X$. 
    The global state monad $\globalstate$ maps a set $X$ to 
    $(S \times X)^S$. Since ultimately a location 
    stores a bit we take $S \defeq \{0,1\}^\mathcal{L}$, where 
    $\mathcal{L}$ is a  
    set of (public) location names. 
    We can give an \ocppo\ structure to $\globalstate_\bot X$ by extending the 
    order of point $1$ pointwise. 
    We consider the signature $\signature_{\mathcal{L}} \defeq 
    \{\get, \settzero, \settone \mid \ell \in \mathcal{L}\}$ and 
    interpret operations in $\signature_{\mathcal{L}}$ on 
    $\globalstate$ as follows: 
    \begin{align*}
    \settzero(f)(b)   &  \defeq f(b[\ell := 0]), \\
    \settone(f)(b)    & \defeq  f(b[\ell := 1]), \\
    \get(f,g)(b)      & \defeq 
      \begin{cases} 
        f(b) &\text{if } b = 0, \\
        g(b) &\text{if } b = 1,
      \end{cases}
    \end{align*}
    where for $b \in S$, 
    $b[\ell := x](\ell) \defeq x$ and $b[\ell := x](\ell') \defeq b(\ell')$, 
    for $\ell' \neq \ell$.
\end{enumerate}
\end{example}

\subsection{Relations, Metrics, and Quantales}
We now recall basic notions on quantales \cite{Rosenthal/Quantales/1990} and 
quantale-valued relations ($\quantale$-relations) along the lines of 
\cite{Lawvere/GeneralizedMetricSpaces/1973}. The 
reader is referred to the monograph
\cite{Hoffman-Seal-Tholem/monoidal-topology/2014} for 
an introduction.
 
\begin{definition}\label{def:quantale}
  A (unital) quantale $(\quantale, \leq, \tensor, \qunit)$, 
  $\quantale$ for short,
  consists of a monoid $(\quantale, \tensor, \qunit)$ and a sup-lattice 
  $(\quantale, \leq)$ satisfying the following distributivity laws:
  \begin{align*}
    \qvaltwo \tensor \join_{i\in I} \qvalone_i  
    &= \join_{i \in I} (\qvaltwo \tensor \qvalone_i), 
    & 
    (\join_{i \in I} \qvalone_i) \tensor \qvaltwo
    &= \join_{i \in I} (\qvalone_i \tensor \qvaltwo).
  \end{align*}
  The element $\qunit$ is called unit, whereas  
  $\tensor$ is called multiplication of the quantale.
  Given quantales $\quantale, \quantaletwo$, a 
  \emph{quantale lax morphism} is a \emph{monotone} map 
  $h: \quantale \to \quantaletwo$  
  satisfying the following inequalities:
  \begin{align*}
  \ell 
  &\leq h(\qunit),
  &
  h(\qvalone) \tensor h(\qvaltwo)
  &\leq 
  h(\qvalone \tensor \qvaltwo),  
  \end{align*}
  where $\ell$ is the unit of $\quantaletwo$.
\end{definition}
It is easy to see that 
$\tensor$ is monotone in both arguments.
We denote top and bottom elements of a quantale by 
$\qtop$ and $\qbot$, respectively.  
Moreover, we say that a quantale is commutative if 
its underlying monoid is, and it is non-trivial if 
$\qunit \neq \qbot$.
Finally, we observe that for any $\qvalone \in \quantale$, 
the map $\qvalone \tensor (-): \quantale \to \quantale$ 
has a right adjoint $\qvalone \qmmap (-): \quantale \to \quantale$
which is uniquely determined by:
$$
\qvalone \tensor \qvaltwo \leq \qvalthree \iff
\qvaltwo \leq \qvalone \qmmap \qvalthree.
$$
From now on we tacitly assume quantales to be commutative 
and non-trivial.

\begin{example}\label{ex:quantales}
The following are examples of quantales:
\begin{enumerate}[wide = 0pt, leftmargin = *]
  \item The \emph{boolean quantale}
  $(\two, \leq, \wedge, \true)$ where $\two = \{\true, \false\}$
  and $\false \leq \true$. 
  \item The extended real half-line $([0, \infty], \geq, +,0)$ 
    ordered by 
    the ``greater or equal'' relation $\geq$ and 
    extended\footnote{We extend 
    ordinary as follows: 
    $x + \infty \defeq \infty \defeq \infty + x$.} 
    addition as monoid multiplication. 
    We refer to such quantale as the \emph{Lawvere quantale}.
    Note that in the Lawvere quantale the bottom element is $\infty$, 
    the top element is $0$, whereas infimum and supremum are defined 
    as $\sup$ and $\inf$, respectively. Notice also that $\qmmap$ is 
    truncated subtraction.  
  \item Replacing addition with maximum in the Lawevere 
    quantale we obtain the \emph{ultrametric Lawvere quantale}
    $([0,\infty], \geq, \max, 0)$, which  
    has been used to study generalised ultrametric spaces 
    \cite{Rutten/ultrametricSpaces/1996} (note that 
    in the ultrametric Lawvere quantale monoid multiplication and binary 
    meet coincide).
  \item Restricting the Lawvere quantale to the unit interval 
    we obtain the \emph{unit interval quantale} $([0,1], \geq, +, 0)$, 
    where $+$ stands for truncated addition.
  \item A left continuous \emph{triangular norm} ($t$-norm for short) 
    is a binary operator 
    $*: [0,1] \times [0,1] \to [0,1]$ that induces a quantale 
    structure over the complete lattice $([0,1], \leq)$ in 
    such a way that the quantale is commutative.
    Examples $t$-norms are:
    \begin{enumerate}
      \item The \emph{product $t$-norm}: $x *_p y \defeq x \cdot y$.
      \item The \emph{\L{}ukasiewicz $t$-norm}: $x *_l y \defeq \max\{x + y - 1, 0\}$.
      \item The \emph{G\"{o}del $t$-norm}: $x *_g y \defeq \min\{x,y\}$.
    \end{enumerate}
\end{enumerate}
\end{example}

In all quantales of Example \ref{ex:quantales} the unit $\qunit$ 
coincide the top element (i.e. $\qunit = \qtop$). 
Quantales with such property are called \emph{integral quantales}, and 
are particularly well-behaved. 
For instance, in an 
integral quantale $\qvalone \tensor \qvaltwo$ is a lower bound 
of $\qvalone$ and $\qvaltwo$ (and thus 
$\qvalone \tensor \bot = \bot$, for any $\qvalone \in \quantale$). 
From now on we tacitly assume quantales to be integral.

\paragraph{$\quantale$-relations}
The notion of $\quantale$-relation, for a quantale $\quantale$, 
provides an abstraction of the notion relation that subsumes 
both the qualitative---boolean valued---and the 
quantitative---real valued---notion of 
relation, as well as the associated notions of equivalence and 
(pseudo)metric. Moreover, sets and $\quantale$-relations 
form a category which, thanks to the quantale structure of $\quantale$, 
behaves essentially like $\rel$, the category of sets and relations. 
That allows to develop an algebra of $\quantale$-relations on the same
line of the usual algebra of relations.

Formally, for a quantale $\quantale$, a $\quantale$-relation 
$\vrelone: X \torel Y$ between sets 
$\setone$ and $\settwo$
is a function $\vrelone: \setone \times \settwo \to \quantale$.
For any set $\setone$ we can define the identity 
$\quantale$-relation $\idvrel_{\setone} : \setone \torel \setone$ 
mapping diagonal elements $(x,x)$ to $\qunit$, and all other 
elements to $\qbot$. Moreover, for $\quantale$-relations 
$\vrelone: \setone \torel \settwo$ and 
$\vreltwo: \settwo \torel \setthree$, 
we can define the composition 
$\vreltwo \comp \vrelone: \setone \torel \setthree$ 
by the so-called `matrix multiplication formula':
$$
(\vreltwo \comp \vrelone)(x,z) \defeq  
  \join_{y \in Y} \vrelone(x,y) \tensor \vreltwo(y,z).
$$
Composition of $\quantale$-relations is associative, 
and $\idvrel$ is the unit of composition. 
As a consequence, we have that sets and 
$\quantale$-relations form a category, called $\vrel$. 
$\vrel$ is a monoidal category with unit given by the one-element 
set and tensor product given by cartesian product of sets with 
$\vrelone \tensor \vreltwo : X \times Y 
\torel X' \times Y'$
defined pointwise, for $\vrelone: X \torel X'$ and $\vreltwo: Y \to Y'$. 
Moreover, for all sets $\setone, \settwo$, the hom-set 
$\vrel(\setone, \settwo)$ inherits a complete lattice structure from 
$\quantale$ according to the pointwise order. Actually, 
the whole quantale structure of $\quantale$ is inherited, in the 
sense that $\vrel$ is a \emph{quantaloid} 
\cite{Hoffman-Seal-Tholem/monoidal-topology/2014}. In particular, 
for all $\quantale$-relations 
$\vrelone: \setone \torel \settwo$, 
$\vreltwo_i : \settwo \torel \setthree$ ($i \in I)$, 
and $\vrelthree: \setthree \torel \setfour$ we have 
the following distributivity laws:
  \begin{align*}
  \vrelthree \comp (\join_{i \in I} \vreltwo_i) 
  &= \join_{i \in I} (\vrelthree \comp \vreltwo_i),
  &
  (\join_{i \in I} \vreltwo_i) \comp \vrelone 
  &= \join_{i \in I} (\vreltwo_i \comp \vrelone).
  \end{align*}

There is a bijection 
$\dual{-}: \vrel(\setone, \settwo) \to \vrel(\settwo, \setone)$ that 
maps each $\quantale$-relation $\vrelone$ to its dual $\dual{\vrelone}$ defined by
$\dual{\vrelone}(y,x) \defeq \vrelone(x,y)$. 
It is straightforward to see that 
$\dual{-}$ is monotone 
(i.e. $\vrelone \leq \vreltwo$ implies $\dual{\vrelone} \leq \dual{\vreltwo}$), 
idempotent (i.e. $\dual{(\dual{\vrelone})} = \vrelone$), and preserves the identity 
relation (i.e. $\dual{\idvrel} = \idvrel$). Moreover, since $\quantale$ 
is commutative we also have the equality 
$\dual{(\vreltwo \comp \vrelone)} = \dual{\vrelone} \comp \dual{\vreltwo}$. 

Finally, we define the graph functor $\graph$ from $\set$ to $\vrel$ acting 
as the identity on sets and mapping each function $f$ to its graph (so that 
$\graph(f)(x,y)$ is equal to $\qunit$ if $y = f(x)$, and $\qbot$ otherwise). 
It is easy to see that since $\quantale$ is non-trivial $\graph$ is 
faithful. In light of this observation we will use the notation 
$f : \setone \to \settwo$ in place of $\graph(f) : \setone \torel \settwo$ 
in $\vrel$.

A direct application of the definition of composition gives the equality:
$$
(\dual{g} \comp \vrelone \comp f)(x,w) = \vrelone(f(x),g(w))
$$
for $f: \setone \to \settwo$, $\vrelone: \settwo \torel \setthree$, and 
$g: \setfour \to \setthree$. Moreover, it is useful to keep in mind the 
following adjunction rules \cite{Hoffman-Seal-Tholem/monoidal-topology/2014}
(for $\vrelone, \vreltwo, \vrelthree$ 
$\quantale$-relations, and  $f,g$ functions with appropriate source and 
target):
\begin{align*}
g \comp \vrelone \leq \vreltwo      
  &\iff \vrelone \leq \dual{g} \comp \vreltwo, \\
\vreltwo \comp \dual{f} \leq \vrelthree  
  &\iff \vreltwo \leq \vrelthree \comp f.
\end{align*}
The above inequalities turned out to be useful in making 
pointfree calculations with $\quantale$-relations. 
In particular, we can use \emph{lax} 
commutative diagrams of the form
\begin{center}
\(
\vcenter{
\xymatrix{
\laxcommute
\setone     
\ar[r]^{f} 
\ar[d]_{\vrelone}|@{|}  &  
\setthree   
\ar[d]^{\vreltwo}|-*=0@{|}
\\
\settwo   
\ar[r]_{g}   &  
\setfour  } }
\)
 \end{center}
as diagrammatic representation for the inequation
$g \comp \vrelone \leq \vreltwo \comp f$. By adjunction rules, 
the latter is equivalent to $\vrelone \leq \dual{g} \comp \vreltwo \comp f$,
which pointwisely gives the following 
generalised non-expansiveness condition\footnote{Taking $f = g$ generalised 
non-expansiveness expresses  
monotonicity of $f$ in the boolean quantale, and non-expansiveness of 
$f$ in the Lawvere quantale and its variants (recall that when we 
instantiate $\quantale$ as e.g. the Lawvere quantale we have to 
\emph{reverse} inequalities).}:
$
\forall (x,y) \in \setone \times \settwo.\ 
\vrelone(x,y) \leq \vreltwo(f(x), g(y)).
$

Among $\quantale$-relations we are interested in those 
generalising equivalences and pseudometrics. 

\begin{definition}
  A $\quantale$-relation $\vrelone: \setone \torel \setone$ 
  is \emph{reflexive} if $\idvrel_{\setone} \leq \vrelone$, 
  \emph{transitive} if $\vrelone \comp \vrelone \leq \vrelone$, and 
  \emph{symmetric} if $\vrelone \leq \dual{\vrelone}$. 
\end{definition}

Pointwisely, reflexivity, transitivity, and symmetry give the 
following inequalities:
$$
\qunit \leq \vrelone(x,x),  \quad
\vrelone(x,y) \tensor \vrelone(y,z) \leq \vrelone(x,z), \quad
\vrelone(x,y) \leq \vrelone(y,x),
$$
for all $x,y,z \in X$.
We call a reflexive and transitive $\quantale$-relation a 
$\quantale$\emph{-preorder} or \emph{generalised metric} 
\cite{Lawvere/GeneralizedMetricSpaces/1973,BonsangueBreguelRutten/GeneralisedMetricSpaces/1998}, and a reflexive, symmetric, and 
transitive $\quantale$-relation a $\quantale$\emph{-equivalence} or \emph{pseudometric}.

\begin{example}
\begin{enumerate}[wide = 0pt, leftmargin = *]
  \item We see that $\two$-$\rel$ is the ordinary category $\rel$ of 
    sets and relations. Moreover, instantiating reflexivity and transitivity 
    on the boolean quantale, we recover the usual notion of preorder. If 
    we additionally require symmetry, then we obtain the usual notion 
    of equivalence relation.
  \item On the Lawvere quantale transitivity gives:
    $$
    \inf_y \vrelone(x,y) + \vrelone(y,z) \geq \vrelone(x,z),
    $$
    which means $\vrelone(x,z) \leq \vrelone(x,y) + \vrelone(y,z)$, for any 
    $y \in X$. That is, in the Lawvere quantale transitivity gives exactly 
    the triangle inequality. Similarly, reflexivity gives 
    $0 \geq \vrelone(x,x)$, i.e. $\vrelone(x,x) = 0$. 
    If additionally $\vrelone$ is symmetric, then we recover the usual notion 
    of pseudometric \cite{steen/CounterexamplesTopology/1995}. 
  \item Analogously to point $2$, if we consider the ultrametric 
    Lawvere quantale, we recover the ultrametric variants of the above notions.  
\end{enumerate}
\end{example}

\begin{digression}[$\quantale$-categories]\label{digression:v-categories}
Lawvere introduced generalised metric spaces in his seminal paper 
\cite{Lawvere/GeneralizedMetricSpaces/1973} as pairs $(X, \vrelone)$ 
consisting of a set $X$ and a generalised metric $\vrelone: X \torel X$ 
over the Lawvere quantale. 
Generalising from the Lawvere quantale to an arbitrary quantale $\quantale$ 
we obtain the so-called $\quantale$-categories 
\cite{Hoffman-Seal-Tholem/monoidal-topology/2014}. 
In fact, a $\quantale$-category $(X, \vrelone)$ is nothing but a category 
enriched over $\quantale$ regarded as a 
bicomplete monoidal category. The notion of $\quantale$-enriched 
functor precisely instantiates as non-expansive map between 
$\quantale$-categories, so that one can consider the category 
$\vcat$ of $\quantale$-categories and $\quantale$-functors. The 
category $\vcat$ has a rich structure. In particular, it is monoidal 
closed category. Given $\quantale$-categories $(X, \vrelone), (Y, \vreltwo)$, 
their exponential $(Y^X, [\vrelone, \vreltwo])$ 
is defined by 
$$
[\vrelone, \vreltwo](f,g) \defeq \meet_{x \in X} \vreltwo(f(x),g(x))
$$ 
(cf. with the usual, real-valued, sup-metric on function spaces), 
whereas their tensor product $(X \times Y, \vrelone \tensor \vreltwo)$ is defined 
pointwise.

Although in this work we will not work with $\quantale$-categories 
(we will essentially work in $\vrel$), it is sometimes useful to think in 
terms of $\quantale$-categories for `semantical intuitions'.
\end{digression}

\paragraph{Operations} 
For a signature $\signature$, we need to specify how operations in $\signature$ 
interact with $\quantale$-relations (e.g. how they modify distances), 
and thus how they interact with quantales. 

\begin{definition}\label{def:signature-quantale}
  Let $\signature$ be a signature. A $\signature$-quantale is a 
  quantale $\quantale$ 
  equipped with \emph{monotone} operations
  $\qop: \quantale^n \to \quantale$,
  for each $n$-ary operation $\op \in \signature$, 
  satisfying the following inequalities:
  \begin{align*}
    \qunit 
    &\leq \qop(\qunit, \hh, \qunit), \\
    \qop(\qvalone_1, \hh, \qvalone_n) \tensor 
    \qop(\qvaltwo_1, \hh, \qvaltwo_n) 
    &\leq \qop(\qvalone_1 \tensor \qvaltwo_1, 
    \hh, \qvalone_n \tensor \qvaltwo_n).
  \end{align*}
\end{definition}

\begin{example}\label{ex:quantale-operations}
  Both in the Lawvere quantale and in the unit 
  interval quantale we can interpret 
  operations $\oplus_p$ from Example \ref{ex:monads} 
  as probabilistic choices: $x \oplus_p y \defeq p \cdot x + (1-p) \cdot y$. 
  In general, for a quantale $\quantale$ we can 
  interpret $\qop(\qvalone_1, \hh, \qvalone_n)$ 
  both as $\qvalone_1 \tensor \hh \tensor \qvalone_n$ and 
  $\qvalone_1 \wedge \hh \wedge \qvalone_n$. 
\end{example}


\paragraph{Change of Base Functors}
We model sensitivity of a program 
as a function 
giving the `law' describing how distances between inputs 
are modified by the program. The notion of \emph{change of base functor} 
provides a mathematical abstraction to model the concept of sensitivity 
with respect to an arbitrary quantale. 

\begin{definition}\label{def:change-of-base-functor}
A change of base functor \cite{Hoffman-Seal-Tholem/monoidal-topology/2014}, 
CBF for short,
between quantales $\quantale, \quantaletwo$ 
is a lax quantale morphism $h :\quantale \to \quantaletwo$
(see Definition \ref{def:quantale}). If $\quantale = \quantaletwo$ 
we speak of change of base \emph{endofunctors} (CBEs, for short), 
and denote them by $\baseone, \basetwo \hh$. Clearly, every CBE $\baseone$ 
is also a CBF.
\end{definition}
The action $h \acts \vrelone$ of a CBF $h: \quantale \to \quantaletwo$ 
on a $\quantale$-relation $\vrelone: X \torel Y$ is defined by 
$h \acts \vrelone(x,y) \defeq h(\vrelone(x,y))$
(to improve readability we omit brackets).
Note that since $\quantale$ is integral, CBFs preserve 
the unit.

\begin{example}\label{ex:change-of-base-functor}
\begin{enumerate}[wide = 0pt, leftmargin = *]
\item Extended\footnote{We extend real-valued multiplication by:
  $0 \cdot \infty \defeq 0 \defeq \infty \cdot 0$, 
  $\infty \cdot x  \defeq \infty \defeq x \cdot \infty$.
 } real-valued multiplication $c \cdot -$, for $c \in [0,\infty]$, 
 is a CBE on the Lawvere quantale. Functions $c \cdot -$ act as CBEs 
 also on the unit interval quantale (where multiplication is meant to 
  be truncated).  
\item Both in the Lawvere quantale and in the unit interval 
  quantale, polynomials $P$ such that $P(0) = 0$ are
  CBEs. 
\item Define CBEs
  $n, \infty : \quantale \to \quantale$, for $n < \omega$ by 
  $0(\qvalone) \defeq \qunit$, $(n+1)(\qvalone) \defeq 
  \qvalone \tensor n(\qvalone)$, and $\infty(\qvalone) \defeq \qbot$. 
  Note that $\baseid$ acts as the identity function. 
\end{enumerate}
\end{example}

Finally, we observe that the action of CBFs on a $\quantale$-relation
obeys the following laws:
\begin{align*}
  (h \comp h')(\vrelone)  
  &= h \acts (h' \acts \vrelone), 
  \\
  (h \acts \vrelone) \comp (h \acts \vreltwo) 
  &\leq h \acts (\vrelone \comp \vreltwo).
\end{align*}

\begin{digression}\label{digression:change-of-base-functor}
We saw that $\quantale$-categories generalise the notions of 
metric space and ordered set, and that the notion of 
$\quantale$-functor generalises the notions of monotone and 
non-expansive function. However, when dealing with metric spaces 
besides non-expansive functions, a prominent role is played by 
\emph{Lipshitz continuous} functions. Given metric spaces 
$(X, d_X)$ and $(Y, d_Y)$, a function $f: X \to Y$ is called 
\emph{$c$-continuous}, for $c \in \mathbb{R}_{\geq 0}$ if the 
inequation $c \cdot d_X(x,x') \geq d_Y(f(x),f(x'))$ holds, for 
all $x,x' \in X$. Example \ref{ex:change-of-base-functor} shows 
that multiplication $c \cdot -$ by a real number $c$ is a 
change of base endofunctor on the Lawvere quantale, meaning that 
using CBEs we can generalise the notion of Lipshitz-continuity 
to $\quantale$-categories. In fact, easy calculations show that 
for any $\quantale$-category $(X, \vrelone)$ and any CBE $\baseone$ 
on $\quantale$, $(X, \baseone \acts \vrelone)$ is a $\quantale$-category. 
In particular, we can define $\baseone$-continuous functions from 
$(X, \vrelone)$ to $(Y, \vreltwo)$ as $\quantale$-functors from 
$(X, \baseone \acts \vrelone)$ to $(Y, \vreltwo)$. That is, 
we say that a function $f: X \to Y$ is $\baseone$-continuous if 
$\baseone \acts \vrelone(x,x') \leq \vreltwo(f(x),f(x'))$ holds, 
for all $x,x' \in X$.
\end{digression}

We conclude this section with the following result on 
the algebra of CBEs.

\begin{lemma}\label{lemma:algebra-change-of-base-functors}
Let $\quantale$ be a $\signature$-quantale. CBEs
are closed under the following operations (where $\op \in \signature$):
\begin{align*}
(\baseone \tensor \basetwo)(\qvalone) 
&\defeq \baseone(\qvalone) \tensor \basetwo(\qvalone), \\
(\basetwo \comp \baseone)(\qvalone)
&\defeq \basetwo(\baseone(\qvalone)), \\
(\baseone \wedge \basetwo)(\qvalone) &= \baseone(\qvalone) \wedge \baseone(\qvaltwo), \\
\qop(\baseone_1, \hh, \baseone_n)(\qvalone)
&\defeq \qop(\baseone_1(\qvalone), \hh, \baseone_n(\qvalone)).
\end{align*}
\end{lemma}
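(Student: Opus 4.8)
The plan is to verify that each of the four constructions is again a change of base endofunctor, i.e.\ a lax quantale morphism $h : \quantale \to \quantale$: by Definition~\ref{def:change-of-base-functor} this means exactly that $h$ is monotone and satisfies $\qunit \leq h(\qunit)$ and $h(\qvalone) \tensor h(\qvaltwo) \leq h(\qvalone \tensor \qvaltwo)$ (the laxity inequalities of Definition~\ref{def:quantale}). So it suffices to check these three properties separately for $\baseone \tensor \basetwo$, $\basetwo \comp \baseone$, $\baseone \wedge \basetwo$, and $\qop(\baseone_1, \hh, \baseone_n)$, in each case reducing to the corresponding properties of the components $\baseone, \basetwo, \baseone_i$ together with the (integral, commutative) quantale axioms and, in the last case, the $\signature$-quantale axioms of Definition~\ref{def:signature-quantale}. (Here I read the third clause as $(\baseone \wedge \basetwo)(\qvalone) \defeq \baseone(\qvalone) \wedge \basetwo(\qvalone)$.)

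First I would dispatch monotonicity uniformly: each construction post-composes the components with one of $\tensor$, $\wedge$, or $\qop$, all of which are monotone --- $\tensor$ by the remark following Definition~\ref{def:quantale}, $\wedge$ as a lattice meet, and $\qop$ by Definition~\ref{def:signature-quantale} --- so monotonicity is immediate. Next I would handle the unit conditions using the observation recorded just before Example~\ref{ex:change-of-base-functor}: since $\quantale$ is integral every CBE preserves the unit, so $\baseone(\qunit) = \basetwo(\qunit) = \qunit$ and $\baseone_i(\qunit) = \qunit$. The tensor, composition, and meet cases then follow from $\qunit \tensor \qunit = \qunit$ and $\qunit \wedge \qunit = \qunit$, while the $\qop$ case is precisely the first axiom $\qunit \leq \qop(\qunit, \hh, \qunit)$ of Definition~\ref{def:signature-quantale}.

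The main work is lax multiplicativity. For $\baseone \tensor \basetwo$ I would expand $(\baseone(\qvalone) \tensor \basetwo(\qvalone)) \tensor (\baseone(\qvaltwo) \tensor \basetwo(\qvaltwo))$ and use commutativity and associativity of $\tensor$ to regroup it as $(\baseone(\qvalone) \tensor \baseone(\qvaltwo)) \tensor (\basetwo(\qvalone) \tensor \basetwo(\qvaltwo))$; laxity of $\baseone$ and $\basetwo$ together with monotonicity of $\tensor$ then bound this by $\baseone(\qvalone \tensor \qvaltwo) \tensor \basetwo(\qvalone \tensor \qvaltwo)$, as required. For $\basetwo \comp \baseone$ I would chain two laxity steps, $\basetwo(\baseone(\qvalone)) \tensor \basetwo(\baseone(\qvaltwo)) \leq \basetwo(\baseone(\qvalone) \tensor \baseone(\qvaltwo)) \leq \basetwo(\baseone(\qvalone \tensor \qvaltwo))$, the second using laxity of $\baseone$ followed by monotonicity of $\basetwo$.

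For the meet, to bound the left-hand side by $\baseone(\qvalone \tensor \qvaltwo) \wedge \basetwo(\qvalone \tensor \qvaltwo)$ it suffices to bound it by each conjunct separately: from $\baseone(\qvalone) \wedge \basetwo(\qvalone) \leq \baseone(\qvalone)$ (and likewise at $\qvaltwo$), monotonicity of $\tensor$ and laxity of $\baseone$ give the bound by $\baseone(\qvalone \tensor \qvaltwo)$, and symmetrically for $\basetwo$. Finally, for $\qop(\baseone_1, \hh, \baseone_n)$ I would apply the second axiom of Definition~\ref{def:signature-quantale} with $\qvalone_i \defeq \baseone_i(\qvalone)$ and $\qvaltwo_i \defeq \baseone_i(\qvaltwo)$ to push $\tensor$ inside $\qop$, then apply laxity of each $\baseone_i$ componentwise together with monotonicity of $\qop$ to reach $\qop(\baseone_1(\qvalone \tensor \qvaltwo), \hh, \baseone_n(\qvalone \tensor \qvaltwo))$. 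No step is a genuine obstacle; the only points to watch are that the regrouping in the tensor case relies essentially on commutativity of $\quantale$, and that the $\qop$ case invokes the $\signature$-quantale axioms rather than the bare quantale laws.
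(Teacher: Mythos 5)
Your proof is correct, and it is exactly the routine verification the paper expects: the lemma is stated without proof in the paper, so there is no alternative argument to compare against. You also rightly read the third clause as $(\baseone \wedge \basetwo)(\qvalone) \defeq \baseone(\qvalone) \wedge \basetwo(\qvalone)$ (the paper's display contains a typo), and your only substantive dependencies --- commutativity of $\tensor$ for the middle-interchange regrouping, integrality for unit preservation, and the $\signature$-quantale axioms for the $\qop$ case --- are all among the paper's standing assumptions.
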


\section{The $\quantale$-fuzz Language }\label{section:v-fuzz}

As already observed in the introduction, when dealing with 
behavioural $\quantale$-relations a crucial parameter in amplification phenomena 
is program sensitivity. To deal with such parameter we 
introduce $\quantale$-fuzz, a higher-order effectful language 
generalising \Fuzz\ \cite{GaboardiEtAl/POPL/2017}. As \Fuzz, 
$\quantale$-\Fuzz\ is characterised by a powerful type system 
inspired by \emph{bounded linear logic} \cite{DBLP:journals/tcs/GirardSS92}
giving syntactic information on program sensitivity. 

\paragraph{Syntax} 
$\quantale$-fuzz is a \emph{fine-grained call-by-value} 
\cite{Levy/InfComp/2003} \emph{linear} $\lambda$-calculus with  
finite sum and recursive types. 
In particular, we make a formal distinction between values 
and computations (which we simply refer to as terms), and use syntactic 
primitives to returning values ($\return$) and sequentially compose 
computations (via a $\mathbf{let}$-$\mathbf{in}$ constructor). 
The syntax of $\quantale$-\Fuzz\ is parametrised over a 
signature $\signature$ of operation symbols, a $\signature$-quantale $\quantale$, 
and a family $\Pi$ of CBEs. From now on we 
assume $\signature$, $\quantale$, and $\Pi$ to be fixed. Moreover, 
we assume $\Pi$ to contain at least CBEs $n, \infty$ 
in Example \ref{ex:change-of-base-functor} and 
to be closed under operations in Lemma \ref{lemma:algebra-change-of-base-functors}.
Types, values, and terms of $\quantale$-\Fuzz\ are defined in Figure 
\ref{fig:types-terms-values-v-fuzz}, where $\typevarone$ denotes a type variable, 
$I$ is a \emph{finite} set (whose elements are denoted by 
$\hat \imath, \hat \jmath, \hh$), and $\baseone$ is in $\Pi$.

\begin{figure}[htbp]
{\hrulefill}
\begin{align*}
\typeone 
&\bnf \typevarone 
  \mid \sumtype{i \in I}{\typeone_i}
  \mid \typeone \mmap \typeone 
  \mid \recType{\typevarone}{\typeone}
  \mid \bang_{\baseone} \typeone.
\\
\valone 
&\bnf \varone
  \mid \abs{\varone}{\termone} 
  \mid \inject{\hat \imath}{\valone}
  \mid \fold{\valone}     
  \mid \bang \valone.  
\\   
\termone  
&\bnf \return{\valone} 
  \mid \valone \valone
  \mid \casesum{\valone}{\termone_i}
  \mid \seq{\termone}{\termone}
  \\
& \text{ }\mid \casebang{\valone}{\termone}
  \mid \casefold{\valone}{\termone} 
  \mid \op(\termone, \hh, \termone).
\end{align*}{\hrulefill}
\caption{Types, values, and terms of $\quantale$-\Fuzz.}
\label{fig:types-terms-values-v-fuzz}
\end{figure}

Free and bound variables in terms and values are defined as usual.
We work with equivalence classes of terms modulo renaming and tacitly 
assume conventions on bindings. 
Moreover, we denote by $\substval{\valtwo}{\valone}{\varone}$ and 
$\substcomp{\termone}{\varone}{\valone}$ the \emph{value} and 
\emph{term} obtained by capture-avoiding substitution 
of the \emph{value} $\valone$ for $\varone$ in $\valtwo$ and $\termone$, 
respectively (see \cite{DalLagoGavazzoLevy/LICS/2017} for details). 

Similar conventions hold for types. In particular, we denote 
by $\typeone[\typetwo/\typevarone]$ the result of capture-avoiding substitution 
of type $\typetwo$ for the type variable $\typevarone$ in $\typeone$. Finally, 
we write \textbf{0} for the empty sum type, \textbf{1} for 
$\textbf{0} \mmap \textbf{0}$, and $\mathsf{nat}$ for $\mu t.\textbf{1} + t$. 
We denote the numeral $n$ by $\numeral{n}$. 

$\quantale$-\Fuzz\ type system is essentially based on 
judgments of the form
$
\varone_1 :_{\baseone_1} \typeone_1, \hh, 
\varone_n :_{\baseone_n} \typeone_n \imp \termone: \typeone,
$
where $\baseone_1, \hh, \baseone_n$ are CBEs.
The informal meaning of such judgment is that on input $\varone_i$ ($i \leq n$), 
the term $\termone$ has sensitivity $\baseone_i$. That is, 
$\termone$ amplifies the (behavioural) distance between two input
values $\valone_i, \valtwo_i$ 
of \emph{at most} a factor $\baseone_i$; symbolically, 
$\baseone_i \acts \vrelone(\valone_i, \valtwo_i) \leq 
\vrelone(\substcomp{\termone}{\varone_i}{\valone_i}, 
\substcomp{\termone}{\varone_i}{\valtwo_i})$

An \emph{environment} $\envone$ is a sequence 
$\varone_1 :_{\baseone_1} \typeone_1, \hh, \varone_n :_{\baseone_n} \typeone_n$ 
of distinct identifiers with associated closed types and CBEs 
(we denote the empty environment by $\emptyctx$). 
We can lift operations on CBEs in 
Lemma \ref{lemma:algebra-change-of-base-functors} to environments as follows:
\begin{align*}
\basetwo \comp \envone
&= \varone_1 :_{\basetwo \comp \baseone_1} \typeone_1, \hh, \varone_n :_{\basetwo \comp \baseone_n} \typeone_n,
\\
\envone
\tensor 
\envtwo
&= \varone_1 :_{\baseone_1 \tensor \basetwo_1} \typeone_1, \hh, 
\varone_n :_{\baseone_n \tensor \basetwo_n} \typeone_n,
\\
\qop(\envone^{1}, \hh, \envone^{m})
&= \varone_1 :_{\qop(\baseone^{1}_1, \hh, \baseone^{m}_1)} \typeone_1, \hh, 
\varone_n :_{\qop(\baseone^{1}_n, \hh, \baseone^{m}_n)} \typeone_n, 
\end{align*}
for 
$\envone = \varone_1 :_{\baseone_1} \typeone_1, \hh, 
\varone_n :_{\baseone_n} \typeone_n$,
$\envtwo = \varone_1 :_{\basetwo_1} \typeone_1, \hh, 
\varone_n :_{\basetwo_n} \typeone_n$, and
$\envone^{i} = \varone_1 :_{\baseone^{i}_1} \typeone_1, \hh, 
\varone_n :_{\baseone^{i}_n} \typeone_n$. 
Note that the above operations are defined for environments having 
the same structure (i.e. differing only on CBEs). 
This is not a real restriction since we can always 
add the missing identifiers 
$\vartwo :_{\qunit} \typeone$, where $\qunit$ is the constant function 
returning the unit of the quantale 
(but see \cite{Pierce/DistanceMakesTypesGrowStronger/2010}).

The type system for $\quantale$-\Fuzz\ is defined in Figure 
\ref{fig:typing-system}. The system is based on two kinds of judgment 
(exploiting the fine-grained style of the calculus): 
judgments of the form $\envone \valimp \valone: \typeone$ for values 
and judgments of the form $\envone \compimp \termone: \typeone$ for terms. 
We denote by $\values_{\typeone}$ and $\Lambda_{\typeone}$ 
for the set of closed values and terms of type $\typeone$, 
respectively. Sometimes we also use the notation 
$\Lambda_{\envone \imp \typeone}$ for the set 
$\{\termone \in \Lambda \mid \envone \imp \termone: \typeone\}$
(and similarity for values).

\begin{figure}
{\hrulefill}
\vspace{0.2cm}

\(
\infer
  {\valseq{\envone, \varone :_{\baseone} \typeOne}{\varone : \typeOne}}
  {\baseone \leq \baseid}
\quad
\infer
  {\qop(\envOne_1, \hh, \envOne_n) \imp 
  \op(\termone_1, \hh, \termone_n): 
  \typeone}
  {\compseq{\envOne_1}{\termone_1: \typeOne}
  &
  \cc
  &
  \compseq{\envOne_n}{\termone_n: \typeOne}
  }
\)

\vspace{0.2cm}

\(
\infer
  {\valseq{\envOne}{\abs{\varone}\termone: \typeOne \mmap \typeTwo}}
  {\envOne, \varone :_{\baseid} \typeOne \compimp \termone: \typeTwo}
\quad
\infer
  {\compseq{\envOne \tensor \envTwo}{\valone \valtwo: \typeTwo}}
  {
  \valseq{\envOne}{\valone: \typeOne \mmap \typeTwo}
  &
  \valseq{\envTwo}{\valtwo: \typeOne}
  }
\)

\vspace{0.2cm}

\(
\infer
  {\envone \valimp \inject{\hat \imath}{\valone}: \sumtype{i \in I}{\typeone_i}}
  {\envone \valimp \valone: \typeone_{\hat \imath}}
\quad
\infer
  {\baseone \comp \envone \tensor \envtwo \imp 
    \casesum{\valone}{\termone_i}: \typetwo}
  {
  \envone \valimp \valone: \sumtype{i \in I}{\typeone_i}
  &
  \envtwo, \varone :_\baseone \typeone_i \imp \termone_i: \typetwo
  & 
  (\forall i \in I)
  }
\)

\vspace{0.2cm}

\(
\infer
  {\compseq{\envOne}{\return{\valone}: \typeOne}}
  {\valseq{\envOne}{\valone: \typeOne}}
\quad
\infer
  {\compseq{(\baseone \wedge \baseid) \comp \envOne \tensor \envTwo}
  {\seq{\termone}{\termtwo}: \typetwo}}
  {\compseq{\envOne}{\termone: \typeOne}
  &
  \compseq{\envTwo, \varone :_{\baseone} \typeOne}{\termtwo: \typetwo}
  }
\)

\vspace{0.2cm}

\(
\infer
  {\valseq{\baseone \comp \envOne}{\bang \valone: \bang_{\baseone} \typeOne}}
  {\valseq{\envOne}{\valone: \typeone}}
\quad
\infer
  {\compseq{\baseone \comp \envOne \tensor \envTwo}
  {\unbang{\valone}{\termone}: \typeTwo}}
  {
  \valseq{\envOne}{\valone:\bang_{\basetwo} \typeOne}
  &
  \compseq{\envTwo, \varone :_{\baseone \comp \basetwo} \typeone}
  {\termone: \typeTwo}
  }
\)

\vspace{0.2cm}

\(
\infer
  {\valseq{\envOne}{\fold{\valone}: \recType{\typeVar}{\typeOne}}}
  {
  \valseq{\envOne}{\valone: 
  \substType{\typeOne}{\typeVar}{\recType{\typeVar}{\typeOne}}}
  }
\quad
\infer
  {\compseq{\baseone \comp \envOne \tensor \envTwo}
  {\pmfold{\valone}{\termone}: \typeTwo}}
  {
  \valseq{\envOne}{\valone: \recType{\typeVar}{\typeOne}}
  &
  \compseq{\envTwo, \varone :_{\baseone} \substType{\typeOne}{\typeVar}
  {\recType{\typeVar}{\typeOne}}}
  {\termone: \typeTwo}
  }
\)

{\hrulefill}
\caption{Typing rules.}
\label{fig:typing-system}
\end{figure}

\begin{example}\label{ex:instances-of-v-fuzz}
\begin{enumerate}[wide = 0pt, leftmargin = *]
  \item Instantiating $\quantale$-\Fuzz\ with $\signature \defeq \emptyset$, 
    the Lawvere quantale, and CBEs 
    $\Pi = \{c \cdot - \mid c \in [0,\infty]\}$ we obtain 
    the original \Fuzz\ 
    \cite{Pierce/DistanceMakesTypesGrowStronger/2010}
    (provided we add a basic type for real numbers). We can also add 
    nondeterminism via a binary nondeterminism choice 
    operation $\oplus$. 
  \item We define the language $P$-\Fuzz\ as the instantiation of $\quantale$-\Fuzz\ 
    with a fair probabilistic choice operation $\oplus$, 
    the unit interval quantale $([0,1], \geq, +, 0)$, and CBEs 
    $\changeofbases = \{c \cdot - \mid c \in [0,\infty]\}$  
    (as usual we are actually referring to truncated multiplication).
    We interpret $\oplus$ in $[0,1]$ as in Example 
    \ref{ex:quantale-operations}. 
  \item We can add global states to 
    $P$-\Fuzz\ enriching $P$-\Fuzz's signature with operations 
    in $\signature_{\mathcal{L}}$ from Example \ref{ex:monads}.
  \end{enumerate}
\end{example}

Typing rules for $\quantale$-\Fuzz\ are similar to those of 
\Fuzz\ (e.g. in the variable rule we require $\baseone \leq 1$, meaning 
that the open value $\varone$ can access  $\varone$ at least once) with 
the exception of the rule for sequencing where we 
apply sensitivity $\baseone \wedge \baseid$ to the environment 
$\envone$ even if the sensitivity of $\varone$ in $\termtwo$ is $\baseone$. 
Consider the following instance of the sequencing rule on the Lawvere quantale:
\[
\infer[]
{x:_{\max(0,1) \cdot 1} \typeone \compimp \seqy{\termone}{\termtwo}: \typetwo}
{x:_\baseid \typeone \compimp \termone: \typeone
&
y:_0 \typeone \compimp \termtwo: \typetwo}
\]
where $\termtwo$ is a closed term of type $\typetwo$ and thus we can 
assume it to have sensitivity $0$ on all variables.
According to our informal intuition, $\termone$ has sensitivity $1$ 
on input $\varone$, meaning that $(i)$ $\termone$ can possibly detect 
(behavioural) differences between input values $\valone, \valtwo$, 
and $(ii)$ $\termone$ cannot amplify their behavioural distance 
of a factor bigger than $1$. Formally, 
point $(ii)$ states that we have the inequality 
$\vrelone(\valone, \valtwo) \geq 
\vrelone(\substcomp{\termone}{\varone}{\valone},
\substcomp{\termone}{\varone}{\valtwo})$, 
where $\vrelone$ denotes a suitable behavioural $[0,1]$-relation. 
On the contrary, $\termtwo$ is closed term and thus has sensitivity $0$ on 
any input, 
meaning that it cannot detect any observable difference between 
input values. 
In particular, for all values $\valone, \valtwo$ we have
$\vrelone(\substcomp{\termtwo}{\vartwo}{\valone},
\substcomp{\termtwo}{\vartwo}{\valtwo})
= \vrelone(\termtwo, \termtwo) = 0$ (provided that $\vrelone$ is reflexive). 
Replacing $\max(0,1)$ with $0$ in 
the above rule (i.e. $\baseone \wedge \baseid$ with $\baseone$ in the general case) 
would allow to infer the judgment 
$\varone :_0 \typeone \imp \seqy{\termone}{\termtwo}: \typetwo$, 
and thus to conclude
$
\vrelone(\seqy{\substcomp{\termone}{\varone}{\valone}}
              {\termtwo},
          \seqy{\substcomp{\termone}{\varone}{\valtwo}}
              {\termtwo})
= 0.
$
The latter equality is unsound as evaluating 
$\seqy{\substcomp{\termone}{\varone}{\valone}}
{\termtwo}$ 
(resp. $\seqy{\substcomp{\termone}{\varone}{\valtwo}}
{\termtwo}$) requires to 
\emph{first} evaluate $\substcomp{\termone}{\varone}{\valone}$
(resp. $\substcomp{\termone}{\varone}{\valtwo}$) thus making 
observable differences between $\valone$ and $\valtwo$ 
detectable (see also Section \ref{section:behavioural-v-relations} 
for a formal explanation).

\begin{example}\label{ex:v-fuzz-terms}
For every type $\typeone$ we have 
the term $I \defeq \return{(\abs{\varone}{\return \varone})}$ 
of type $\typeone \mmap \typeone$ as well as the purely 
divergent divergent term 
$\Omega \defeq \omega \bang (\fold \omega)$ of type $\typeone$, where 
$\omega \in 
\Lambda_{\bang_\infty(\recType{\typevarone}{\bang_\infty \typevarone \mmap \typeone}) 
\mmap \typeone}$ is defined by:
$
  \omega \defeq \abs{\varone}{\casebangy{\varone}
        {\casefoldz{\vartwo}{\varthree \bang (\fold \varthree)}}}.
$
\end{example} 

Before moving to the operational semantics of $\quantale$-\Fuzz, 
we remark that the syntactic distinction between terms and values gives 
the following equalities.

\begin{lemma}
The following equalities hold:
\begin{align*}
  \values_{\typeone \mmap \typetwo} &= 
  \{\abs{\varone}{\termone} \mid \varone :_{\baseid} 
  \typeone \compimp \termone: \typetwo\}, \\
  \values_{\sumtype{i \in I}{\typeone_i}} &= 
  \bigcup_{\hat \imath \in I} \{\inject{\hat \imath}{\valone} 
  \mid \valone \in \values_{\typeone_{\hat \imath}}\}, 
  \\
  \values_{\bang_{\baseone} \typeone} &= 
  \{\bang \valone \mid \valone \in \values_\typeone\}.
\end{align*}
\end{lemma}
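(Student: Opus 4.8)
The three equalities are canonical-forms statements for the closed-value judgment $\emptyctx \valimp \valone : \typeone$, and the plan is to prove each as a pair of inclusions. The inclusions $\supseteq$ are immediate applications of the introduction rules: from $\varone :_\baseid \typeone \compimp \termone : \typetwo$ rule $\ruleAbs$ gives $\emptyctx \valimp \abs{\varone}\termone : \typeone \mmap \typetwo$; from $\valone \in \values_{\typeone_{\hat\imath}}$ rule $\ruleInject$ gives $\emptyctx \valimp \inject{\hat\imath}{\valone} : \sumtype{i \in I}{\typeone_i}$; and from $\valone \in \values_\typeone$ rule $\ruleBang$ gives $\emptyctx \valimp \bang\valone : \bang_\baseone\typeone$, where I use $\baseone \comp \emptyctx = \emptyctx$ so that the conclusion is again a closed judgment. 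For the abstraction case one checks that the premise $\emptyctx, \varone :_\baseid \typeone \compimp \termone : \typetwo$ is exactly the side condition $\varone :_\baseid \typeone \compimp \termone : \typetwo$.

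The inclusions $\subseteq$ are the real content and rest on a single generation observation: the value typing rules are syntax-directed, so the head constructor of the value subject fixes the last rule of any value derivation, and each rule yields one specific type former in its conclusion --- $\ruleAbs$ an arrow, $\ruleInject$ a sum, $\ruleFold$ a recursive type, $\ruleBang$ a bang type, and $\ruleVar$ an arbitrary type but only with a variable subject and a non-empty environment. I would therefore fix a closed value of the relevant type, enumerate its possible shapes from the value grammar, and discard every shape whose rule concludes a type former different from the one at hand. Closedness is what makes this work: a closed value can never be a variable, so the $\ruleVar$ case (the only rule not tied to a fixed type former) is eliminated throughout.

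Carrying this out for the arrow case, $\emptyctx \valimp \valone : \typeone \mmap \typetwo$ forces $\valone = \abs{\varone}\termone$, and inverting $\ruleAbs$ returns precisely $\varone :_\baseid \typeone \compimp \termone : \typetwo$. The sum and bang cases are structurally identical, inverting $\ruleInject$ and $\ruleBang$ respectively; in the bang case inversion additionally uses that $\baseone \comp \envOne = \emptyctx$ forces $\envOne = \emptyctx$ (applying a CBE does not change the set of declared variables), so the premise is $\emptyctx \valimp \valone' : \typeone$ with $\valone = \bang\valone'$, i.e. $\valone' \in \values_\typeone$.

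I expect no genuine obstacle; the only matters requiring care are structural rather than computational. First, the system must carry no type-level subsumption or implicit folding, so that an arrow type cannot be inhabited by, say, a value of the form $\fold{\valone'}$ --- this holds because the rules in Figure~\ref{fig:typing-system} impose subsumption only on sensitivities inside $\ruleVar$ and keep type formers rigid, and there is never automatic folding or unfolding at the type level. Second, the weakening built into $\ruleVar$ (the arbitrary $\envone$ in its conclusion) must not leak into the closed setting, which is again guaranteed by closedness. Granting these, syntax-directedness holds and the inversions reduce to routine bookkeeping on the shape of the typing rules.
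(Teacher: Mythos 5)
Your proof is correct. The paper states this lemma without proof, presenting it as an immediate consequence of the syntactic separation of values from terms; your canonical-forms argument---introduction rules for the $\supseteq$ inclusions, syntax-directed inversion plus closedness ruling out the variable rule for the $\subseteq$ inclusions---is exactly the routine reasoning the paper leaves implicit, so there is nothing to correct or compare.
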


\paragraph{Operational Semantics} 
We give $\quantale$-\Fuzz\ monadic operational (notably evaluation)
semantics in the style of \cite{DalLagoGavazzoLevy/LICS/2017}. 
Let $\Monad = \lan \monad, \unit, \kleisli{-}\ran$ be a 
$\signature$-continuous monad. 
Operational semantics is defined by means of an evaluation 
function $\sem{-}^\typeone$ indexed over closed types,
associating to any term in $\Lambda_\typeone$ 
a monadic value in $\monad \values_\typeone$. The evaluation 
function $\sem{-}^\typeone$ is itself defined by means of 
the family of functions $\{\sem{-}^\typeone_n\}_{n < \omega}$
defined in Figure \ref{fig:approximation-semantics}. 
Indeed $\sem{-}_n^\typeone$ is a function from $\values_\typeone$ to 
$\monad \values_\typeone$. 

\begin{figure}
{\hrulefill}
\begin{align*}
  \sem{\termone}^\typeone_0  
  &\defeq \mbot_{\values_\typeone}    \\
  \sem{\return{\valone}}_{n+1}^\typeone 
  &\defeq \unit_{\values_\typeone}(\valone) \\
  \sem{(\abs{\varone}{\termone})\valone}_{n+1}^\typeone
  &\defeq \sem{\substcomp{\termone}{\varone}{\valone}}_n^\typeone \\
  \sem{\casesum{\inject{\hat \imath}{\valone}}{\termone_i}}_{n+1}^\typeone
  &\defeq \sem{\substcomp{\termone_{\hat \imath}}{\varone}{\valone}}_n^\typeone \\
  \sem{\pmfold{(\fold{\valone})}{\termone}}_{n+1}^\typeone
  &\defeq \sem{\substcomp{\termone}{\varone}{\valone}}_{n}^\typeone \\
  \sem{\pmbang{\bang \valone}{\termone}}_{n+1}^\typeone
  &\defeq \sem{\substcomp{\termone}{\varone}{\valone}}_{n}^\typeone \\
  \sem{\seq{\termone}{\termtwo}}_{n+1}^\typeone 
  &\defeq \kleisli{(\sem{\substcomp{\termtwo}{\varone}{-}}_n^{\typetwo, \typeone})}
    \sem{\termone}_n^\typetwo \\
  \sem{\op(\termone_1, \hh, \termone_k)}_{n+1}^\typeone 
  &\defeq \mop_{\values_\typeone}(\sem{\termone_1}_n^\typeone, \hh, 
  \sem{\termone_k}_n^\typeone)
\end{align*}
{\hrulefill}
\caption{Approximation evaluation semantics.}
\label{fig:approximation-semantics}
\end{figure}

Let us expand on the definition of $\sem{\seq{\termone}{\termtwo}}_{n+1}^\typeone$. 
Since $\seq{\termone}{\termtwo} \in \Lambda_\typeone$, 
there must be derivable judgments  
$\emptyctx \imp \termone: \typetwo$ and 
$\varone:_\baseone \typetwo \imp \termtwo: \typeone$. As a consequence,  
for any $\valone \in \values_\typetwo$, we have
$\sem{\substcomp{\termtwo}{\varone}{\valone}}_n^\typeone \in \monad\values_{\typeone}$.
This induces a function 
$\sem{\substcomp{\termtwo}{\varone}{-}}_n^{\typetwo,\typeone}$ 
from $\values_\typetwo$ to $\monad \values_\typeone$ 
whose Kleisli extension can be applied to 
$\sem{\termone}_n^\typetwo \in \monad \values_\typetwo$.

Finally, it is easy to see that  
$(\sem{\termone}_n)_{n < \omega}$ forms an $\omega$-chain 
in $\monad \values_\typeone$ (see Appendix \ref{appendix:proofs-v-fuzz} 
for a proof of the following result).

\begin{restatable}{lemma}{evaluationSemanticsOmegaChain}
  For any $\termone \in \Lambda_\typeone$, 
  we have
  $
  \sem{\termone}_n^\typeone \cpoleq_{\values_\typeone} \sem{\termone}_{n+1}^\typeone
  $,
  for any $n \geq 0$.
\end{restatable}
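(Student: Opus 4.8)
The plan is to prove the statement by induction on $n$, but in the strengthened, uniform form $P(n)$: \emph{for every closed type $\typeone$ and every $\termone \in \Lambda_\typeone$ we have $\sem{\termone}_n^\typeone \cpoleq_{\values_\typeone} \sem{\termone}_{n+1}^\typeone$}. Quantifying over all terms and types simultaneously is essential, because the inductive clauses of Figure \ref{fig:approximation-semantics} reduce the value of a term at level $n+1$ to the values of \emph{syntactically different} terms (substitution instances, or immediate subterms) at level $n$. The base case $n = 0$ is immediate: by definition $\sem{\termone}_0^\typeone = \mbot_{\values_\typeone}$ is the least element of the \ocppo\ $\monad\values_\typeone$, hence below $\sem{\termone}_1^\typeone$.

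For the inductive step I would assume $P(n)$ and establish $\sem{\termone}_{n+1}^\typeone \cpoleq_{\values_\typeone} \sem{\termone}_{n+2}^\typeone$ by cases on the shape of the closed term $\termone$, following the clauses of Figure \ref{fig:approximation-semantics}. The case $\termone = \return{\valone}$ is trivial, both sides being $\unit_{\values_\typeone}(\valone)$. Each ``redex'' case---application $(\abs{\varone}{\termtwo})\valone$, $\casesum{\inject{\hat\imath}{\valone}}{\termtwo_i}$, $\pmfold{(\fold{\valone})}{\termtwo}$, and $\pmbang{\bang\valone}{\termtwo}$---unfolds on both sides to $\sem{\substcomp{\termtwo}{\varone}{\valone}}$ evaluated at levels $n$ and $n+1$, so the conclusion is exactly $P(n)$ applied to the substitution instance.

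The two genuinely structural cases are sequencing and operations. For $\termone = \seq{\termtwo}{\termthree}$ I would invoke $P(n)$ twice: applied to $\termtwo$ it gives $\sem{\termtwo}_n^\typetwo \cpoleq \sem{\termtwo}_{n+1}^\typetwo$, and applied to each instance $\substcomp{\termthree}{\varone}{\valone}$ it gives, pointwise in $\valone$, the inequality $\sem{\substcomp{\termthree}{\varone}{-}}_n^{\typetwo,\typeone} \cpoleq \sem{\substcomp{\termthree}{\varone}{-}}_{n+1}^{\typetwo,\typeone}$ in the pointwise-ordered function space. Concluding then requires that $\strongkleisli{(-)}$ be monotone jointly in its function and monadic-value arguments; this is precisely what the \ocppo-enrichment of $\Monad$ supplies, since continuity of composition in $\catone_\Monad$ entails its monotonicity. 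For $\termone = \op(\termone_1, \hh, \termone_k)$ the reasoning is the same but lighter: $P(n)$ yields $\sem{\termone_j}_n^\typeone \cpoleq \sem{\termone_j}_{n+1}^\typeone$ for each $j$, and monotonicity of $\mop_{\values_\typeone}$---which follows from $\signature$-continuity, a continuous operation being in particular monotone---gives the claim.

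The only real obstacle is the sequencing case, and specifically the need to extract joint monotonicity of the strong Kleisli extension from the enrichment hypotheses rather than taking it for granted; I would record this as a small auxiliary observation (monotonicity from continuity) before the case analysis, after which every remaining case is routine.
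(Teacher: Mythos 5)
Your proposal is correct and follows essentially the same route as the paper's proof: induction on $n$ with the hypothesis implicitly quantified over all terms and types, the redex cases discharged by applying the hypothesis to substitution instances, and the sequencing case resolved by deriving monotonicity of the (strong) Kleisli extension in both arguments from the \ocppo-enrichment of $\Monad$ (the paper records exactly this observation in a footnote), with monotonicity of $\mop$ handling the operation case. The only difference is one of exposition: the paper presents sequencing as the sole representative case, whereas you spell out all cases and make the strengthened induction hypothesis explicit.
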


As a consequence, we can define  
$\sem{-}^\typeone: \Lambda_\typeone \to \monad \values_\typeone$
by 
$$
\sem{\termone}^\typeone \defeq \lub\nolimits_{n<\omega} \sem{\termone}_n^\typeone.
$$
In order to improve readability we oftentimes omit type superscripts 
in $\sem{\termone}^\typeone$.
We also notice that because $\mop$ is continuous and 
$\Monad$ is \ocppo-enriched, $\sem{-}^\typeone$ is itself continuous.

\begin{proposition}\label{prop:continuity-evaluation-function}
  The following equations hold:
    \begin{align*}
    \sem{\return{\valone}} 
    &= \unit(\valone), \\
    \sem{(\abs{\varone}{\termone})\valone}
    &= \sem{\substcomp{\termone}{\varone}{\valone}}, \\
    \sem{\casesum{\inject{\hat \imath}{\valone}}{\termone_i}}
    &= \sem{\substcomp{\termone_{\hat \imath}}{\varone}{\valone}}, \\
    \sem{\pmfold{(\fold{\valone})}{\termone}}
    &= \sem{\substcomp{\termone}{\varone}{\valone}}, \\
    \sem{\pmbang{\bang \valone}{\termone}}
    &= \sem{\substcomp{\termone}{\varone}{\valone}}, \\
    \sem{\seq{\termone}{\termtwo}}
    &= \kleisli{\sem{\substcomp{\termtwo}{\varone}{-}}}(\sem{\termone}), \\
    \sem{\op(\termone_1, \hh, \termone_k)} 
    &= \mop_{\values_\typeone}(\sem{\termone_1}, \hh, \sem{\termone_k}).
  \end{align*}
\end{proposition}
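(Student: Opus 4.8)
The plan is to derive all seven equations from the defining relation $\sem{\termone}^\typeone = \lub_{n<\omega}\sem{\termone}_n^\typeone$ by passing to suprema, using the preceding lemma establishing that $(\sem{\termone}_n)_n$ is an $\omega$-chain (so that the supremum exists) together with the continuity assumptions on $\Monad$ and on the operations $\mop$. Throughout I rely on two elementary facts about $\omega$-complete partial orders: first, that prepending a bottom element and shifting the index leaves the supremum unchanged, i.e. $\lub_{n<\omega} a_{n+1} = \lub_{n<\omega} a_n$ whenever $(a_n)_n$ is an $\omega$-chain with $a_0 = \mbot$; and second, the diagonal lemma, stating that for a family $(a_{n,m})$ monotone in each index one has $\lub_k a_{k,k} = \lub_n \lub_m a_{n,m}$. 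Recall that every approximant chain starts with $\sem{\termone}_0 = \mbot_{\values_\typeone}$, so the shift fact applies uniformly.

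The first five equations are immediate instances of the shift argument. For $\sem{\return{\valone}}$ the approximant chain is constantly $\unit(\valone)$ from index $1$ onwards, whence its supremum is $\unit(\valone)$. For the four redex equations $\sem{(\abs{\varone}{\termone})\valone}$, $\sem{\casesum{\inject{\hat \imath}{\valone}}{\termone_i}}$, $\sem{\pmfold{(\fold{\valone})}{\termone}}$, and $\sem{\pmbang{\bang \valone}{\termone}}$, the defining clause of Figure \ref{fig:approximation-semantics} rewrites the $(n+1)$-th approximant of the redex as the $n$-th approximant of the corresponding contractum. Hence each redex chain is an index-shifted copy (with $\mbot$ prepended) of the chain of the contractum, and taking suprema yields exactly the claimed equation.

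The interesting case is sequencing, $\sem{\seq{\termone}{\termtwo}}$, where after the shift I must compute $\lub_n \kleisli{(f_n)}(g_n)$ with $f_n \defeq \sem{\substcomp{\termtwo}{\varone}{-}}_n$ and $g_n \defeq \sem{\termone}_n$. Both are $\omega$-chains: $(g_n)_n$ by the preceding lemma, and $(f_n)_n$ pointwise by the same lemma (for each fixed $\valone \in \values_\typetwo$ the term $\substcomp{\termtwo}{\varone}{\valone}$ is closed), so $\lub_n f_n = \sem{\substcomp{\termtwo}{\varone}{-}}$ in the pointwise order. The two continuity equations guaranteed by the \ocppo-enrichment of $\Monad$ (continuity of strong Kleisli extension in both its function and its element arguments) give $\kleisli{(\lub_n f_n)}(\lub_m g_m) = \lub_n \lub_m \kleisli{(f_n)}(g_m)$, and since $\kleisli{(f_n)}(g_m)$ is monotone in both $n$ and $m$ (monotonicity in $n$ being monotonicity of Kleisli extension, itself part of the enrichment), the diagonal lemma collapses this to $\lub_k \kleisli{(f_k)}(g_k)$. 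Thus $\kleisli{\sem{\substcomp{\termtwo}{\varone}{-}}}(\sem{\termone}) = \sem{\seq{\termone}{\termtwo}}$. The main obstacle is precisely this interchange of two independent limits---the one over the continuation's approximants and the one over the approximants of the subject $\termone$---which is exactly what the enrichment equalities and the diagonal lemma are designed to resolve.

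The final equation, $\sem{\op(\termone_1, \hh, \termone_k)}$, is handled in the same spirit but using the $\signature$-continuity of $\Monad$: after the shift I must evaluate $\lub_n \mop_{\values_\typeone}(\sem{\termone_1}_n, \hh, \sem{\termone_k}_n)$, and joint continuity of $\mop$ yields $\mop_{\values_\typeone}(\lub_n \sem{\termone_1}_n, \hh, \lub_n \sem{\termone_k}_n) = \lub_n \mop_{\values_\typeone}(\sem{\termone_1}_n, \hh, \sem{\termone_k}_n)$, the diagonal lemma again reconciling the per-argument suprema with the diagonal one. Since $\lub_n \sem{\termone_j}_n = \sem{\termone_j}$, this is the claim, completing the plan.
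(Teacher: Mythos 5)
Your proof is correct, and it takes exactly the route the paper intends: the paper gives no explicit proof of this proposition (only the preceding remark that continuity of $\mop$ and the \ocppo-enrichment of $\Monad$ yield the result, with the appendix proving just the $\omega$-chain lemma), and your index-shift argument, the use of the two enrichment equalities for the sequencing case, and the diagonal lemma are precisely the details being left implicit. Nothing is missing; in particular your treatment of the double supremum $\lub_n\lub_m \kleisli{(f_n)}(g_m)$ via joint monotonicity is the one genuinely non-trivial point, and you handle it correctly.
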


\section{$\quantale$-relators and $\quantale$-relation Lifting}
\label{section:v-relators-and-v-relation-lifting}

In \cite{DalLagoGavazzoLevy/LICS/2017} the abstract theory of 
relators \cite{Barr/LMM/1970,Thijs/PhDThesis/1996} has been used to define
notions of applicative (bi)similarity for an untyped 
$\lambda$-calculus enriched with algebraic operations. 
Intuitively, a relator $\relator$ for a set endofunctor $\functor$
is an abstraction meant to capture 
the possible ways a relation on a set $X$ can be turned (or lifted) 
into a relation on $\monad X$. 
Relators allow to abstractly express the idea that 
bisimilar programs, when executed, exhibit the same  
observable behaviour (i.e. they produce the same effects) 
and evaluate to bisimilar values. 
In particular, whenever two 
programs $\termone$ and $\termone'$ are related by a 
(bi)simulation $\relone$, then the results  
$\sem{\termone}$ and $\sem{\termone'}$  of their evaluation 
must be related by $\relator \relone$. 
The latter relation ranging over 
monadic values, it takes into account the visible effects of executing 
$\termone$ and $\termone'$, such effects being encapsulated 
via $\monad$. 

The notion of $\quantale$-relator \cite{Hoffman-Seal-Tholem/monoidal-topology/2014} 
is somehow the `quantitative' generalisation
of the concept of a relator. Analogously to ordinary relators, 
$\quantale$-relators for a set endofunctor $\functor$ are abstractions
meant to capture the possible ways a $\quantale$-relation on a set $X$ 
can be (nicely) turned into a $\quantale$-relation on $\monad X$, and thus 
provide ways to lift a behavioural distance between programs to a 
(behavioural) distance between monadic values.
On a formal level, we say that a $\quantale$-relator extends $\functor$ from 
$\set$ to $\vrel$, laxly\footnote{Relators are also known as \emph{lax extensions} 
\cite{Hoffman-Seal-Tholem/monoidal-topology/2014,Hoffman/Cottage-industry/2015}.}. 

\begin{definition}\label{def:v-relator}
For a set endofucunctor $\monad$ a \emph{$\quantale$-relator} 
for $\monad$ is a mapping
$
(\vrelone: \setone \torel \settwo) \mapsto 
(\vrelator \vrelone: \monad \setone \torel \monad \settwo)
$
satisfying conditions \eqref{vrel-1}-\eqref{vrel-4}. 
We say that $\vrelator$ is conversive if it additionally satisfies 
condition \eqref{vrel-5}.
\begin{align*}
\idrel_{\monad \setone} 
&\leq  \vrelator(\vidrel_{\setone}),  
\tag{$\quantale$-rel 1} \label{vrel-1}    
\\
\vrelator\vreltwo \comp \vrelator\vrelone   
& \leq \vrelator(\vreltwo \comp \vrelone),  
\tag{$\quantale$-rel 2} \label{vrel-2}    
\\
\monad f \leq \vrelator f   
& \text{,}\phantom{\leq} \dual{(\monad f)} \leq \vrelator\dual{f},
\tag{$\quantale$-rel 3} \label{vrel-3}    
\\
\vrelone \leq \vreltwo                 
& \implies \vrelator\vrelone \leq \vrelator\vreltwo,
\tag{$\quantale$-rel 4} \label{vrel-4}
\\
\vrelator(\dual{\vrelone})
&=  \dual{(\vrelator \vrelone)}.   
\tag{$\quantale$-rel 5} \label{vrel-5}
\end{align*}
\end{definition}

Conditions (\ref{vrel-1}), (\ref{vrel-2}), and (\ref{vrel-4}) 
are rather standard. Condition (\ref{vrel-3}), which actually consists 
of two conditions, states that $\quantale$-relators behave in 
the expected way on functions. 
It is immediate to see that when instantiated with $\quantale = \two$, 
the above definition gives the usual notion of relator, with some 
minor differences. In \cite{DalLagoGavazzoLevy/LICS/2017} 
and \cite{Levy/FOSSACS/2011} 
a kernel preservation condition is required in place of 
(\ref{vrel-3}). Such condition is also known as stability in 
\cite{Huge-Jacobs/Simulations-in-coalgebra/2004}. 
Stability requires the equality 
$$
\vrelator(\dual{g} \comp \vrelone \comp f) 
= 
\dual{(\monad g)} \comp \vrelator \vrelone \comp \monad f
$$
to hold. It is easy to see that a $\quantale$-relator always 
satisfies stability. Notice also that stability gives the following implication:
$$
\vrelone \leq \dual{g} \comp \vreltwo \comp f 
\implies 
\vrelator \vrelone \leq \dual{(\monad g)} \comp \vrelator \vreltwo \comp \monad f, 
$$
which can be diagrammatically expressed as:
\begin{center}
\(
\vcenter{
\xymatrix{
\laxcommute
\setone     
\ar[r]^{f} 
\ar[d]_{\vrelone}|@{|}  &  
\setthree   
\ar[d]^{\vreltwo}|-*=0@{|}
\\
\settwo   
\ar[r]_{g}   &  
\setfour  } }
\)
$\implies$
\(
\vcenter{
\xymatrix{
\laxcommute
\monad\setone     
\ar[r]^{\monad f} 
\ar[d]_{\vrelator\vrelone}|@{|}  &  
\monad\setthree   
\ar[d]^{\vrelator\vreltwo}|-*=0@{|}
\\
\monad\settwo   
\ar[r]_{\monad g}   &  
\monad \setfour  } }
\).
\end{center}

Finally, we observe that any $\quantale$-relator $\vrelator$ for $\functor$ induces an endomap $\functor_\vrelator$ on $\vrel$ 
that acts as $\functor$ on sets and as $\vrelator$ as $\quantale$-relation. 
It is easy to check that conditions in Definition \ref{def:v-relator} 
makes $\monad_\vrelator$ a \emph{lax endofunctor}.

Before giving examples of $\quantale$-relators it is useful to 
observe that the collection $\quantale$-relators is closed under specific 
operations.

\begin{restatable}{proposition}{algebraVrelators}
\label{prop:algebra-of-v-relators}
Let $\functorone, \functortwo$ be set endofunctors. Then:
\begin{enumerate}[wide = 0pt, leftmargin = *]
  \item
  If $\vrelatorone$ and $\vrelatortwo$ are $\quantale$-relators for 
  $\functorone$ and $\functortwo$, respectively, then 
  $\vrelatortwo \comp \vrelatorone$ defined by 
  $
  (\vrelatortwo \comp \vrelatorone)\vrelone \defeq 
  \vrelatortwo \vrelatorone\vrelone$
  is a $\quantale$-relator for $\functortwo \functorone$.
  \item 
  If $\{\vrelator\}_{i \in I}$ is a family of $\quantale$-relators 
  for $\functorone$, then $\meet_{i \in I}\vrelator_i$ defined by 
  $
  (\meet_{i \in I} \vrelator_i)\vrelone 
  \defeq
  \meet_{i \in I} \vrelator_i\vrelone 
  $
  is a $\quantale$-relator for $\functorone$.
  \item 
  If $\vrelator$ is a $\quantale$-relator for $\functorone$, 
  then $\dual{\vrelator}$ defined by $\dual{\vrelator}\vrelone \defeq 
  \dual{(\vrelator\dual{\vrelone})}$ is a 
  $\quantale$-relator for $\monad$. 
  \item 
  For any $\quantale$-relator $\vrelator$, 
  $\vrelator \wedge \dual{\vrelator}$ is the greatest conversive 
  $\quantale$-relator smaller than $\vrelator$.
\end{enumerate}
\end{restatable}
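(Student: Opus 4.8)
The plan is to verify, for each of the four constructions, the defining inequalities \eqref{vrel-1}--\eqref{vrel-4} (and \eqref{vrel-5} where conversivity is at stake), using only the algebra of $\quantale$-relations recalled earlier. The tools I rely on are: monotonicity of composition in both arguments (a consequence of the distributivity of $\comp$ over joins); and the fact that $\dual{-}$ is a monotone involution with $\dual{\idvrel} = \idvrel$ and $\dual{(\vreltwo \comp \vrelone)} = \dual{\vrelone} \comp \dual{\vreltwo}$, hence an order isomorphism preserving arbitrary meets.

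For parts (1) and (2) the arguments are direct chains of inequalities. For the composite $\vrelatortwo \comp \vrelatorone$, condition \eqref{vrel-2} follows by instantiating \eqref{vrel-2} for $\vrelatortwo$ at $\vrelatorone\vrelone, \vrelatorone\vreltwo$, then using monotonicity \eqref{vrel-4} of $\vrelatortwo$ together with \eqref{vrel-2} for $\vrelatorone$; condition \eqref{vrel-3} uses \eqref{vrel-3} for $\vrelatortwo$ applied to the function $\functorone f$ followed by \eqref{vrel-3} for $\vrelatorone$; and \eqref{vrel-1}, \eqref{vrel-4} chain similarly. For the meet $\meet_{i \in I} \vrelator_i$ I first observe $(\meet_i\vrelator_i)\vrelone \leq \vrelator_j\vrelone$ for each $j$; monotonicity of composition then gives $(\meet_i\vrelator_i)\vreltwo \comp (\meet_i\vrelator_i)\vrelone \leq \vrelator_j(\vreltwo\comp\vrelone)$ for each $j$, and taking the meet over $j$ yields \eqref{vrel-2}. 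The remaining conditions are immediate, since a meet lies below every component and each component satisfies them.

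Part (3) is the delicate one, because dualising reverses the order of composition. Writing $\dual{\vrelator}\vrelone = \dual{(\vrelator\dual{\vrelone})}$, condition \eqref{vrel-1} reduces, after applying the involution and $\dual{\idvrel}=\idvrel$, to \eqref{vrel-1} for $\vrelator$. For \eqref{vrel-2} I compute $\dual{\vrelator}\vreltwo \comp \dual{\vrelator}\vrelone = \dual{(\vrelator\dual{\vrelone} \comp \vrelator\dual{\vreltwo})}$ via $\dual{B}\comp\dual{A} = \dual{(A\comp B)}$, bound the inner composite using \eqref{vrel-2} for $\vrelator$ and $\dual{\vrelone}\comp\dual{\vreltwo} = \dual{(\vreltwo\comp\vrelone)}$, and re-apply $\dual{-}$. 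The two halves of \eqref{vrel-3} for $\dual{\vrelator}$ correspond, after dualising, precisely to the two halves of \eqref{vrel-3} for $\vrelator$ with their roles swapped, while \eqref{vrel-4} is immediate from monotonicity of $\dual{-}$. The only real difficulty is bookkeeping of sources, targets, and the order reversal, which I expect to become mechanical once the types are fixed.

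Part (4) splits into showing that $\vrelator\wedge\dual{\vrelator}$ is a conversive $\quantale$-relator below $\vrelator$, and that it is the greatest such. It is a $\quantale$-relator by parts (2) and (3), being the meet of the relators $\vrelator$ and $\dual{\vrelator}$, and it clearly lies below $\vrelator$. Conversivity \eqref{vrel-5} follows from $(\vrelator\wedge\dual{\vrelator})\dual{\vrelone} = \vrelator\dual{\vrelone} \wedge \dual{(\vrelator\vrelone)}$, which, since $\dual{-}$ preserves meets, agrees with $\dual{((\vrelator\wedge\dual{\vrelator})\vrelone)}$. For maximality, given any conversive relator $\vrelatortwo \leq \vrelator$, I exploit conversivity in the form $\vrelatortwo\vrelone = \dual{(\vrelatortwo\dual{\vrelone})}$: from $\vrelatortwo\dual{\vrelone} \leq \vrelator\dual{\vrelone}$ and monotonicity of $\dual{-}$ I get $\vrelatortwo\vrelone \leq \dual{(\vrelator\dual{\vrelone})} = \dual{\vrelator}\vrelone$, so $\vrelatortwo \leq \dual{\vrelator}$, and together with $\vrelatortwo\leq\vrelator$ this gives $\vrelatortwo \leq \vrelator\wedge\dual{\vrelator}$.
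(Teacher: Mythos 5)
Your proposal is correct and follows essentially the same route as the paper: the paper's proof consists of exactly the kind of ``straightforward calculations'' you carry out, and the one case it spells out (condition \eqref{vrel-2} for the meet $\meet_{i\in I}\vrelator_i$, via $(\meet_i\vrelator_i)\vrelone \leq \vrelator_j\vrelone$, monotonicity of composition in the quantaloid $\vrel$, and then taking the meet over $j$) coincides with your argument for part (2). Your treatments of the composite, the dual (including the swap of the two halves of \eqref{vrel-3} under dualisation), and the maximality argument for part (4) are the omitted routine verifications, done correctly.
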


\begin{proof}
See Appendix \ref{appendix:proofs-v-relators-and-v-relation-lifting}.
\end{proof}

\begin{example}\label{ex:v-relators}
Let us consider the monads in Example \ref{ex:monads} regarded as 
functors.
\begin{enumerate}[wide = 0pt, leftmargin = *]
  \item For the partiality functor $(-)_\mbot$ 
    define the $\quantale$-relator $(-)_\mbot$ by:
    $$
     \vrelone_\mbot(x,y) \defeq \vrelone(x,y), 
     \quad \vrelone_\mbot(\mbot_X,\mathpzc{y}) \defeq \qunit,
     \quad \vrelone_\mbot(x, \mbot_Y) = \qbot,
     $$
     where $x \in X, y \in Y,
     \mathpzc{y} \in Y_\mbot$, and $\vrelone: X \torel Y$.
    The $\quantale$-relation $\vrelone_\mbot$ generalises
    the usual notion of \emph{simulation} for partial computations. 
    Similarly, $\vrelone_{\mbot\mbot} \defeq 
    \vrelone_\mbot \wedge \dual{((\dual{\vrelone})_\mbot)}$
    generalises the usual notion of \emph{bisimulation} for partial computation. 
  \item For the powerset functor $\powerset$ 
    define the $\quantale$-relator $\hausdorff$ 
    (called Hausdorff lifting) and 
    its conversive counterpart 
    $\hausdorffsym \defeq \hausdorff \wedge \dual{\hausdorff}$ by
    $
    \hausdorff \vrelone(\mathpzc{X},\mathpzc{Y}) 
    \defeq
    \meet_{x \in \mathpzc{X}} \join_{y \in \mathpzc{Y}} \vrelone(x,y).$
    If we instantiate $\quantale$ as the Lawvere 
    quantale, then $\hausdorffsym$ gives the usual 
    Hausdorff lifting of distances on a set $X$ to distances on $\powerset X$,
    whereas for $\quantale = \two$ we recover the usual notion of
    \emph{(bi)simulation} for unlabelled transition systems.
\item For the full distribution functor $\distribution$
  we define a $[0,1]$-relator (with respect to 
  the unit interval quantale) using the so-called 
  \emph{Wasserstein-Kantorovich lifting} \cite{Villani/optimal-transport/2008}.
  For $\mu \in \distribution(X), \nu \in \distribution(Y)$, 
  the set $\Omega(\mu, \nu)$ of \emph{couplings} of $\mu$ and $\nu$
  is the set of joint distributions $\omega \in \distribution(X \times Y)$ 
  such that $\mu = \sum_{y\in Y} \omega(-, y)$ and 
  $\nu = \sum_{x \in X} \omega(x,-)$. For a $[0,1]$-relation 
  $\vrelone: X \torel Y$ define:
  $$
  \wasserstein \vrelone(\mu, \nu) \defeq
  \inf\nolimits_{\omega \in \Omega(\distone, \disttwo)} 
  \sum\nolimits_{x,y} \vrelone(x,y) \cdot \omega(x,y).
  $$ 
  $\wasserstein \vrelone(\mu, \nu)$ attains its infimum and has 
  a dual characterisation. 
  \begin{restatable}{proposition}{dualityWassersteinLifting}
  \label{prop:duality-wasserstein-lifting}
  Let $\distone \in \distribution(X), \disttwo \in \distribution(\settwo)$ 
  be countable distributions and 
  $\vrelone: \setone \torel \settwo$ be a $\intervalQuantale$-relation.
  Then:
  \begin{align*}
  \wasserstein \vrelone(\mu,\nu) 
  &= 
  \min \{ \sum\nolimits_{x,y} \vrelone(x,y) \cdot \omega(x,y)  
  \mid\omega \in \Omega(\distone, \disttwo)\} \\
  &=
  \max \{\sum\nolimits_{x} a_x \cdot \mu(x) + 
  \sum\nolimits_y b_y \cdot \nu(y)  
  \\
  &\phantom{=} \mid a_x + b_y \leq \vrelone(x,y), 
    a_x, b_y \text{ bounded}\},
  \end{align*}
  where $a_x,b_y$ bounded means that  
  there exist $\bar a,\bar b \in \mathbb{R}$ such 
  that $\forall x.\ a_x \leq \bar a$, and 
  $\forall y.\ b_y \leq \bar b$.
  \end{restatable}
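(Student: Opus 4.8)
The plan is to read the two claimed equalities as Kantorovich--Rubinstein duality for the pair of (infinite-dimensional) linear programs associated with optimal transport: the transport program on the left and the potential program on the right. Three facts must be established, namely attainment of the primal infimum (the ``$\min$''), absence of a duality gap, and attainment of the dual supremum (the ``$\max$''). The easy ingredient is weak duality: for any coupling $\omega \in \couplings(\distone, \disttwo)$ and any potentials with $a_x + b_y \le \vrelone(x,y)$, the marginal constraints give
\[
\sum\nolimits_x a_x\,\distone(x) + \sum\nolimits_y b_y\,\disttwo(y)
= \sum\nolimits_{x,y} (a_x + b_y)\,\omega(x,y)
\le \sum\nolimits_{x,y} \vrelone(x,y)\,\omega(x,y),
\]
the rearrangement being legitimate by absolute convergence, since $\omega$ is a probability measure and the potentials may be taken bounded. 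Hence the right-hand supremum never exceeds the left-hand infimum.

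For attainment of the primal minimum I would argue by compactness. The coupling set $\couplings(\distone, \disttwo)$ is a subset of $\intervalQuantale^{\setone \times \settwo}$, which is compact (and, since $\setone, \settwo$ are countable, metrizable) in the product topology by Tychonoff's theorem. It is moreover closed: if a sequence of couplings converges pointwise to some $\omega$, then each marginal equation passes to the limit by dominated convergence, using that every coupling satisfies $\omega(x,y) \le \distone(x)$ and $\omega(x,y) \le \disttwo(y)$ with $\distone, \disttwo$ summable, so the limit is again a coupling. Finally, the cost $\omega \mapsto \sum_{x,y} \vrelone(x,y)\,\omega(x,y)$ is lower semicontinuous for pointwise convergence because $\vrelone \ge 0$ (Fatou's lemma), and a lower-semicontinuous function on a compact set attains its infimum; this yields the first equality.

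It remains to close the duality gap and to upgrade the dual supremum to a maximum, and here I would exploit that $\vrelone$ is bounded, $0 \le \vrelone \le 1$. Absence of a gap follows from the classical Kantorovich duality theorem for lower-semicontinuous, bounded-below costs on Polish spaces \cite{Villani/optimal-transport/2008}, noting that a countable set is Polish in the discrete topology, where bounded continuous functions are just arbitrary bounded families $a_x, b_y$; alternatively it can be obtained by truncating $\distone, \disttwo$ to finite support, invoking finite-dimensional linear-programming duality, and passing to the limit. To turn the dual supremum into a maximum I would confine the search to a compact box of potentials: using the shift invariance $(a_x, b_y) \mapsto (a_x + t, b_y - t)$, which preserves both feasibility and the objective, together with the $c$-transform operations $b_y \mapsto \inf_x(\vrelone(x,y) - a_x)$ and $a_x \mapsto \inf_y(\vrelone(x,y) - b_y)$, which preserve feasibility and never decrease the objective, the boundedness $0 \le \vrelone \le 1$ forces an optimising pair into a fixed compact interval; on the resulting box the dual objective is continuous for the product topology (dominated convergence against the summable $\distone, \disttwo$) and the feasibility constraints are closed, so a maximiser exists. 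The genuine obstacle is exactly this last step together with the no-gap property: the countability of $\setone$ and $\settwo$ makes both programs genuinely infinite-dimensional, so finite linear-programming duality does not apply verbatim and suprema of infinite families need not be attained; it is the boundedness of $\vrelone$ and the fixed, summable marginals that restore the compactness rescuing the argument.
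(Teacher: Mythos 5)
Your proof is correct, but it follows a genuinely different route from the paper's. The paper's proof is a one-step reduction: it records as a Fact the duality theorem for \emph{countable} transportation problems (Theorems 2.1 and 2.2 of the cited Kortanek reference), which asserts at once that there is no duality gap and that both the primal infimum and the dual supremum are attained, and then merely checks that $\wasserstein\vrelone(\distone,\disttwo)$ is literally such a transportation problem---the costs $\vrelone(x,y)$ are non-negative reals and the coupling conditions are exactly the marginal constraints---and that the optimal value lies in $\intervalQuantale$. You instead assemble the statement from standard optimal-transport ingredients: weak duality by direct rearrangement; primal attainment by Tychonoff compactness of $\couplings(\distone,\disttwo)$ in the product topology, closedness of the marginal constraints, and Fatou-type lower semicontinuity of the cost; absence of a gap by specialising Kantorovich duality on Polish spaces to a countable discrete space; and dual attainment by the shift/$c$-transform normalisation followed by a compactness argument. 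The paper's route is shorter and fits the discrete setting exactly, at the cost of resting entirely on a less standard reference; your route is essentially self-contained modulo the general Kantorovich duality theorem, and it makes visible why both optima are attained and precisely where countability, summability of the marginals, and $0\le\vrelone\le 1$ enter. Two details worth tightening if you write this up: the statement's potentials are only bounded \emph{above}, so the rearrangement in weak duality should be justified via Tonelli applied to the non-negative families $\bar a - a_x$ and $\bar b - b_y$ (or by first applying your $c$-transform to reduce to two-sided bounded potentials); and in the dual attainment the normalisation must be uniform along the maximising sequence---after a double $c$-transform the oscillation of each potential is at most $1$, so after an additive shift (harmless since $\distone,\disttwo$ have total mass $1$) the entire sequence lies in one fixed compact box.
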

  The above proposition (see Appendix 
  \ref{appendix:proofs-behavioural-v-relations} for a proof) 
  is a direct consequence of the  
  Duality Theorem for \emph{countable} transportation problems 
  \cite{Kortanek/InfiniteTransportationProblems/1995} 
  (Theorem 2.1 and 2.2).
  Using Proposition \ref{prop:duality-wasserstein-lifting} we can show that
  $\wasserstein$ indeed defines a $[0,1]$-relator 
  (but see Digression \ref{digression:building-v-relators}).  
  Finally, we can compose the Wasserstein lifting $\wasserstein$ with the 
  $\quantale$-relator $(-)_\mbot$ 
  of point 1 obtaining the (non-conversive) 
  $[0,1]$-relator $\wassersteinbot$ for the countable 
  subdistribution functor $\distribution_{\leq 1}$.   
  \end{enumerate}
\end{example}

\begin{digression}[Building $\quantale$-relators] 
\label{digression:building-v-relators}
Most of the $\quantale$-relators in Example \ref{ex:v-relators} can be 
obtained using a general abstract construction refining the so-called 
\emph{Barr extension} of a functor \cite{Kurz/Tutorial-relation-lifting/2016}. 
Recall that any relation $\relone: X \torel Y$ 
(i.e. a $\two$-relation $\relone: X \times Y \to \two$) can be 
equivalently presented as a subset of $X \times Y$ via its graph 
$G_{\relone}$. This allows to express 
$\relone$ as $\pi_2 \comp \dual{\pi_1}$ (in $\rel$), where 
$\pi_1: G_\relone \to X$, $\pi_2: G_\relone \to Y$ are the 
usual projection functions.

\begin{definition}\label{def:barr-extension}
Let $\monad$ be an endofunctor on $\set$ and $\relone: X \torel Y$ 
be a a relation. The \emph{Barr extension} $\overline{\monad}$ of 
$\monad$ to $\rel$ is defined by:
$$
\overline{\monad} \relone \defeq \monad \pi_2 \comp \dual{(\monad \pi_1)},
$$
where $\relone = \pi_2 \comp \dual{\pi_1}$.
Pointwise, $\overline{\monad}$ is defined by:
$$
\mathpzc{x}\ \overline{\monad} \relone\ \mathpzc{y} \iff 
\exists \mathpzc{w} \in \monad G_\relone.\ ( 
\monad \pi_1 (\mathpzc{w}) = \mathpzc{x},\ 
\monad \pi_2 (\mathpzc{w}) = \mathpzc{y}),
$$
where $\mathpzc{x} \in \monad X$ and $\mathpzc{y} = \monad Y$
\end{definition}
In general, $\overline{\monad}$ is not a $\two$-relator, but it is 
so if $\monad$ preserves weak pullback diagrams 
\cite{Kurz/Tutorial-relation-lifting/2016}
(or, equivalently, if $\functor$ satisfies the Beck-Chevalley condition 
\cite{Hoffman-Seal-Tholem/monoidal-topology/2014}).  
Such condition is satisfied by all functors we have considered so far 
in our examples. 

Definition \ref{def:barr-extension} crucially relies on the 
double nature of a relation, which can be viewed both as an 
arrow in $\rel$ and as an object in $\set$. This is no 
longer the case for a $\quantale$-relation, and thus it is not 
clear how to define the Barr extension of a functor $\functor$ from 
$\set$ to $\vrel$.
However, the Barr extension of $\functor$ can be characterised in an 
alternative way if we assume $\functor$ to preserves weak pullback diagrams 
(although the reader can see 
\cite{Manes/Taut-monads,Hoffman-topological-theories-as-closed-objects} 
for more general conditions). Let 
$\xi: \functor \two \to \two$ be the map defined by 
$\xi(\mathpzc{x}) = \true$ 
if and only if $\mathpzc{x} \in \functor \{\true\}$, where $\functor \{\true\}$ 
is the image of the map $\functor \iota$ for the inclusion 
$\iota: \{\true\} \to \two$. That is, $\xi(\mathpzc{x}) = \true$ if and 
only if there exists an element $\mathpzc{y} \in \functor\{\true\}$ such 
that $\functor \iota (\mathpzc{y}) = \mathpzc{x}$. Note that this makes sense 
since $\functor$ preserves monomorphisms (recall that we can describe monomorphism 
as weak pullbacks) and thus 
$\functor \iota: \functor \{\true\} \to \functor \two$ is a monomorphism.
We can now characterise
$\overline{\functor} \relone$ without mentioning the graph of 
$\relone$:
$$
\overline{\functor}\relone(\mathpzc{x},\mathpzc{y}) = \true
\iff 
\exists \mathpzc{w} \in \functor (X \times Y).\ 
\begin{cases}
\functor \pi_1 (\mathpzc{w}) &= \mathpzc{x}, \\
\functor \pi_2 (\mathpzc{w}) &= \mathpzc{y}, \\
\xi \comp \functor \relone(\mathpzc{w}) &= \true.
\end{cases}
$$
Since the existential quantification is nothing but the joint of the 
boolean quantale $\two$, the above characterisation
of $\overline{\functor}$ can be turned into a definition of an extension 
of $\functor$ to $\vrel$ parametric with respect to a map 
$\xi: \functor \quantale \to \quantale$. 

\begin{definition}\label{def:v-barr-extension}
For a set endofunctor $\functor$ and a map 
$\xi: \functor \quantale \to \quantale$ define the 
\emph{$\quantale$-Barr extension} $\overline{\functor}_\xi$ 
of $\functor$ to $\vrel$ with respect to $\xi$ as follows:
$$
\overline{\functor}_\xi \vrelone(\mathpzc{x}, \mathpzc{y}) 
\defeq \join_{\mathpzc{w} \in \Omega(\mathpzc{x}, \mathpzc{y})} 
\xi \comp \functor \vrelone(\mathpzc{w}),
$$
for $\mathpzc{x} \in \functor X, \mathpzc{y} \in \functor Y$, where 
the set $\Omega(\mathpzc{x}, \mathpzc{y})$ of generalised couplings of 
$\mathpzc{x}, \mathpzc{y}$ is defined by:
$$
\Omega(\mathpzc{x}, \mathpzc{y}) \defeq 
\{\mathpzc{w} \in \functor(X \times Y) \mid
  \functor \pi_1 (\mathpzc{w}) = \mathpzc{x},\ 
  \functor \pi_2 (\mathpzc{w}) = \mathpzc{y}\}.
$$
\end{definition}

\begin{example}\label{ex:stucture-maps}
\begin{enumerate}[wide = 0pt, leftmargin = *]
  \item Taking 
    $\xi: \powerset \quantale \to \quantale$ defined by 
    $\xi(\mathpzc{X}) \defeq \meet \mathpzc{X}$ we 
    recover the Hausdorff lifting $\hausdorffsym$. 
  \item Taking expectation function 
    $\xi: \distribution [0,1] \to [0,1]$ defined by 
    $\xi(\mu) \defeq \sum_x x \cdot \mu(x)$ we 
    recover Wasserstein lifting $\wasserstein$.
\end{enumerate}
\end{example} 
Using the map $\xi: \functor \quantale \to \quantale$ we can define an extension 
of $\functor$ to $\vrel$. However, such extension is in general not a 
$\quantale$-relators. 
Nonetheless, under mild conditions on $\xi$ and assuming $\functor$ to 
preserve weak pullback, it is possible to show that 
$\overline{\functor}_\xi$ is indeed a $\quantale$-relator. The following 
proposition has been proved in \cite{Clementino-Tholen/From-lax-monad-extensions-to-topological-theoreis,Hoffman-topological-theories-as-closed-objects} 
(a similar result for real-valued pseudometric spaces has been proved in 
\cite{Bonchi/Behavioral-metrics-via-functor-lifting/FSTTCS/2014,Bonchi/Towards-trace-metrics-via-functor-fifting/CALCO/2015}, where
an additional extension still parametric over $\xi$ is also studied).

\begin{proposition}\label{prop:v-barr-extensions-are-v-relators}
  Let $\functor$ be functor preserving weak pullbacks and 
  $\xi: \functor \quantale \to \quantale$ 
  be a map such that:
  \begin{enumerate}[wide = 0pt, leftmargin = *]
    \item $\xi$ respect quantale multiplication:
      $$
        \xymatrix{
        \laxcommute
          \functor(\quantale \times \quantale)    
            \ar[r]^{\functor \tensor} 
            \ar[d]_{\lan \xi \comp \functor \pi_1, \xi \comp \functor \pi_2\ran}  &  
          \functor \quantale   
            \ar[d]^{\xi\ .}
          \\
          \quantale \times \quantale   
            \ar[r]_{\tensor}   &  
          \quantale  
          } 
      $$
    \item $\xi$ respects the unit of the quantale:
      $$
        \xymatrix{
        \laxcommute
          \functor 1    
            \ar[r]^{\functor \qunit} 
            \ar[d]_{!}  &  
          \functor \quantale   
            \ar[d]^{\xi\ .}
          \\
          1    
            \ar[r]_{\qunit}   &  
          \quantale  
          } 
      $$
    \item $\xi$ respects the order of the quantale. That is, the map 
      $\varphi \mapsto \xi \comp \functor \varphi$, for $\varphi: X \to \quantale$, 
      is monotone.
  \end{enumerate}
  Then $\overline{\functor}_\xi$ is a conversive $\quantale$-relator.
\end{proposition}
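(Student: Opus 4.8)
The plan is to verify directly that $\overline{\functor}_\xi$ satisfies conditions \eqref{vrel-1}--\eqref{vrel-5} of Definition \ref{def:v-relator}, using throughout that a generalised coupling can be transported by applying $\functor$ to suitable maps between products. Two of the five conditions are almost immediate. Monotonicity \eqref{vrel-4} follows at once from the third hypothesis on $\xi$: if $\vrelone \leq \vreltwo$ then $\xi \comp \functor \vrelone \leq \xi \comp \functor \vreltwo$, and taking the join over $\Omega(\mathpzc{x},\mathpzc{y})$ preserves the order. Conversiveness \eqref{vrel-5} follows from the definition alone: the swap map $\mathsf{sw}\colon X \times Y \to Y \times X$ induces via $\functor\mathsf{sw}$ a bijection between $\Omega(\mathpzc{x},\mathpzc{y})$ and $\Omega(\mathpzc{y},\mathpzc{x})$, and since $\dual{\vrelone} \comp \mathsf{sw} = \vrelone$ as maps into $\quantale$, each summand $\xi \comp \functor(\dual{\vrelone})(\functor\mathsf{sw}(\mathpzc{w}))$ equals $\xi \comp \functor\vrelone(\mathpzc{w})$, so the two joins agree.

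Next I would treat the behaviour on functions \eqref{vrel-3}, which subsumes reflexivity \eqref{vrel-1} as the instance $f = \idfun_X$. Given $f\colon X \to Y$ and $\mathpzc{x} \in \functor X$, the element $\functor\langle\idfun_X, f\rangle(\mathpzc{x})$ lies in $\Omega(\mathpzc{x}, \functor f(\mathpzc{x}))$, because $\pi_1 \comp \langle\idfun_X,f\rangle = \idfun_X$ and $\pi_2 \comp \langle\idfun_X,f\rangle = f$. Moreover $\graph(f) \comp \langle\idfun_X,f\rangle$ is the constant map at $\qunit$, so the second hypothesis on $\xi$ (unit preservation) bounds the corresponding summand below by $\qunit$; hence $\overline{\functor}_\xi(\graph f)(\mathpzc{x}, \functor f(\mathpzc{x})) \geq \qunit$, which is exactly $\functor f \leq \overline{\functor}_\xi f$. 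The dual inequality $\dual{(\functor f)} \leq \overline{\functor}_\xi \dual{f}$ is obtained symmetrically using $\functor\langle f, \idfun_X\rangle$.

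The main obstacle is lax transitivity \eqref{vrel-2}, which is the only place where weak-pullback preservation is essential. Fixing $\vrelone\colon X \torel Y$ and $\vreltwo\colon Y \torel Z$ and distributing $\tensor$ over joins, the composite $(\overline{\functor}_\xi\vreltwo \comp \overline{\functor}_\xi\vrelone)(\mathpzc{x},\mathpzc{z})$ becomes a join, over $\mathpzc{y} \in \functor Y$ and couplings $\mathpzc{w}_1 \in \Omega(\mathpzc{x},\mathpzc{y})$, $\mathpzc{w}_2 \in \Omega(\mathpzc{y},\mathpzc{z})$, of the terms $\xi \comp \functor\vrelone(\mathpzc{w}_1) \tensor \xi \comp \functor\vreltwo(\mathpzc{w}_2)$. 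Since $\functor\pi_2(\mathpzc{w}_1) = \mathpzc{y} = \functor\pi_1(\mathpzc{w}_2)$ and $X \times Y \times Z$ is the pullback of $X \times Y \to Y \leftarrow Y \times Z$, weak-pullback preservation yields $\mathpzc{w} \in \functor(X \times Y \times Z)$ with $\functor p_{12}(\mathpzc{w}) = \mathpzc{w}_1$ and $\functor p_{23}(\mathpzc{w}) = \mathpzc{w}_2$, where $p_{12},p_{23},p_{13}$ are the projections of the triple product onto the indicated pairs of factors; its image $\mathpzc{w}' \defeq \functor p_{13}(\mathpzc{w})$ then lies in $\Omega(\mathpzc{x},\mathpzc{z})$. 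Applying the first hypothesis on $\xi$ (multiplication) to $\functor\langle\vrelone\comp p_{12}, \vreltwo\comp p_{23}\rangle(\mathpzc{w})$ gives
\[
\xi \comp \functor\vrelone(\mathpzc{w}_1) \tensor \xi \comp \functor\vreltwo(\mathpzc{w}_2)
\leq \xi \comp \functor\bigl(\tensor \comp \langle \vrelone \comp p_{12}, \vreltwo \comp p_{23}\rangle\bigr)(\mathpzc{w}),
\]
and since $\tensor \comp \langle \vrelone \comp p_{12}, \vreltwo \comp p_{23}\rangle \leq (\vreltwo\comp\vrelone)\comp p_{13}$ pointwise (because $\vrelone(x,y)\tensor\vreltwo(y,z) \leq \join_{y'}\vrelone(x,y')\tensor\vreltwo(y',z) = (\vreltwo\comp\vrelone)(x,z)$), the third hypothesis bounds the right-hand side by $\xi \comp \functor(\vreltwo\comp\vrelone)(\mathpzc{w}') \leq \overline{\functor}_\xi(\vreltwo\comp\vrelone)(\mathpzc{x},\mathpzc{z})$. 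Taking the join over all indices yields \eqref{vrel-2}. The delicate routine point is the bookkeeping of the projection identities (such as $\pi_1 \comp p_{13} = \pi_1 \comp p_{12}$) that guarantees each $\functor$-image lands in the intended coupling set; the conceptual crux, however, is precisely the gluing of $\mathpzc{w}_1$ and $\mathpzc{w}_2$ into a single $\mathpzc{w}$, which is exactly what weak-pullback preservation provides.
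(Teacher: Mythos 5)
Your proof is correct, and every step checks out: monotonicity from hypothesis (3), conversiveness from the swap isomorphism alone, the function clauses (and hence \eqref{vrel-1} as the $f = \idfun$ instance of \eqref{vrel-3}) from the graph-coupling $\functor\langle \idfun, f\rangle(\mathpzc{x})$ together with unit-laxness of $\xi$, and lax composition \eqref{vrel-2} from weak-pullback preservation applied to the pullback $X \times Y \times Z$ of $\pi_2 \colon X \times Y \to Y$ and $\pi_1 \colon Y \times Z \to Y$, followed by multiplication-laxness and order-preservation of $\xi$. The directions in which you read the two lax squares match the paper's convention (left-bottom composite below top-right composite), and the distributivity of $\tensor$ over joins that you invoke to flatten the composite into a join of elementary terms is exactly a quantale axiom. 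There is, however, nothing in the paper to compare this against: the paper does not prove Proposition \ref{prop:v-barr-extensions-are-v-relators} at all, but imports it from the monoidal-topology literature (Clementino--Tholen and Hofmann), where it is established in the more abstract setting of lax extensions of monads/topological theories. Your argument is the elementary, self-contained verification in $\set$ of that result, and its conceptual core --- gluing two generalised couplings along their common marginal via weak-pullback preservation, then pushing the glued witness down to a coupling of the outer marginals --- is precisely the mechanism those references axiomatise. What your route buys is transparency and independence from the cited machinery; what the cited route buys is generality (the same argument instantiated across different quantales and theories) and reusability for the conversive/Kleisli variants the paper later needs.
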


It is straightforward to check that the expectation function in 
Example \ref{ex:stucture-maps} satisfies the above three conditions. 
By Proposition \ref{prop:v-barr-extensions-are-v-relators} 
it follows that the Wasserstein lifting gives indeed a $[0,1]$-relator, 
and thus so does its composition with the $[0,1]$-relator $(-)_\undeff$.

The extension $\overline{\functor}_{\xi}$ gives a somehow canonical  
\emph{conversive} $\quantale$-relator and thus provides a way to build 
canonical (applicative) $\quantale$-\emph{bi}simulations. 
However, $\overline{\functor}_{\xi}$ being intrinsically conversive 
it is not a good candidate to build $\quantale$-simulations.
For most of the examples considered we can get around the problem
considering $(\overline{\functor}_\xi)_\bot$ (as we do with e.g. $\wasserstein_\bot$). 
Nonetheless, it is desirable 
to have a general notion of extension characterising 
notions of $\quantale$-simulations. That has been done for ordinary relations 
in e.g. \cite{Huge-Jacobs/Simulations-in-coalgebra/2004,Levy/FOSSACS/2011} 
for functors $\functor$ inducing 
a suitable order $\leq_{X}$ on $\monad X$ and considering the 
relator $\overline{\functor}_{\leq} \defeq 
\leq_{-} \comp \overline{\functor} \comp \leq_{-}$. 
Proving that $\overline{\functor}_{\leq}$ gives indeed a relator requires 
$\monad$ to satisfy specific conditions. For instance,
in \cite{Levy/FOSSACS/2011} it is proved that if $\functor$ 
satisfies a suitable form of weak-pullback preservation (which takes into 
account the order induced by $\functor$), then 
$\overline{\functor}_{\leq}$ is indeed a relator.
This suggests to consider functors $\functor$ inducing a suitable 
$\quantale$-relation $\vrelone_X$ on $\monad X$ and thus to 
study if, and under which conditions, 
$\vrelone_{-} \comp \overline{\functor}_\xi \comp \vrelone_{-}$ 
is a $\quantale$-relator. This proposal has not been investigated 
in the context of the present work but it definitely constitutes a topic for 
future research.
\end{digression}

\paragraph{$\quantale$-relators for Strong Monads}

In previous paragraph we saw that a $\quantale$-relator 
extends a functor from $\set$ to $\vrel$ laxly.
Since we model effects through strong monads it seems more natural to 
require $\quantale$-relators to
extend strong monads from $\set$ to $\vrel$ laxly. 

The reason behind such requirement can be intuitively understood as follows. 
Recall that by Proposition \ref{prop:continuity-evaluation-function} 
we have (for readability we omit types) 
$\sem{\seq{\termone}{\termtwo}} = 
\kleisli{\sem{\substcomp{\termtwo}{\varone}{-}}} \sem{\termone}$. 
This operation can be described using the so called bind function 
$$
\bind: (X \to \monad Y) \times \monad X \to \monad Y,
$$ 
so that we have 
$\sem{\seq{\termone}{\termtwo}} = 
\sem{\substcomp{\termtwo}{\varone}{-}} \bind \sem{\termone}$. 
Now, let $f,g: X \to Y$ be functions, $\vrelone: X \torel X, \vreltwo: Y \torel Y$ 
be $\quantale$ relations, 
and $\vrelator$ be a $\quantale$-relator for $\monad$.
Considering the compound $\quantale$ relation 
$[\vrelone, \vrelator \vreltwo] \tensor \vrelator \vrelone$ 
(see Digression \ref{digression:v-categories}) and ignoring 
issues about sensitivity, it is then natural to require 
$\bind$ to be non-expansive. That is, we require the inequality 
$$
[\vrelone, \vrelator \vreltwo](f,g) \tensor 
\vrelator \vrelone(\mathpzc{x},\mathpzc{y}) \leq 
\vrelator \vreltwo(f \bind \mathpzc{x},g \bind \mathpzc{y})
$$
i.e.
$$
\meet_{x \in X}\vrelator\vreltwo(f(x),g(x)) \tensor 
\vrelator \vrelone(\mathpzc{x},\mathpzc{y}) \leq 
\vrelator \vreltwo(f \bind \mathpzc{x},g \bind \mathpzc{y}).
$$
Informally, we are requiring the behavioural distance 
between sequential compositions of programs to be
bounded by the behavioural distances between their components 
(this is of course a too strong requirement, but at this point 
it should be clear to the reader that it is sufficient to 
require $\bind$ to be Lipshitz continuous rather 
than non-expansive). 
Since $\bind$ is nothing but the strong Kleisli extension 
$\strongkleisli{\mathsf{apply}}$ of the application function 
$\mathsf{apply}: (X \to \monad Y) \times X \to \monad Y$ 
defined by $\mathsf{apply}(f,x) \defeq f(x)$, what we need to do 
is indeed to extend strong monads from $\set$ to $\vrel$ (laxly).

\begin{definition}\label{def:strong-v-relator}
  Let $\Monad = \lan \monad, \unit, \strongkleisli{-} \ran$ 
  be a strong monad on $\set$, and $\vrelator$ be a $\quantale$-relator 
  for $\monad$ (regarded as a functor). 
  We say that $\vrelator$ is an $L$-continuous\footnote{
  Instantiating $\quantale$ as the Lawvere quantale, 
  we see that condition \eqref{s-Strong-Lax-Bind} is requiring 
  Lipshitz continuity of multiplication and strength of $\Monad$.
  } 
  $\quantale$-relator for $\Monad$ if it satisfies 
  the following conditions for any CBE $\baseone \leq 1$.
  \begin{align}
  \vrelone \leq \dual{\unit_Y} \comp \vrelator \vrelone \comp \unit_X,
  \tag{Lax unit} \label{Lax-Unit} 
  \\
  \vrelthree \tensor (\baseone \acts \vrelone) 
  \leq \dual{g} \comp \vrelator \vreltwo \comp f 
  &\implies 
  \vrelthree \tensor (\baseone \acts  \vrelator\vrelone) 
  \leq \dual{(\strongkleisli{g})} \comp \vrelator \vreltwo \comp \strongkleisli{f},
  \tag{$L$-Strong lax bind} \label{s-Strong-Lax-Bind}
  \end{align}
\end{definition}
The condition $\baseone \leq 1$ reflects the presence of 
$\baseone \wedge \baseid$ in the typing rule for sequencing.
Also notice that by taking $\baseone \defeq 1$, conditions \eqref{Lax-Unit} 
and \eqref{s-Strong-Lax-Bind}
are equivalent to requiring unit, multiplication, and strength of $\Monad$ 
to be non-expansive. 
 
\begin{example}\label{ex:wasserstein-lifintg-satisfies-strong-lax-bind}
It is easy to check that $\quantale$-relators for the 
partiality and the powerset monads satisfy conditions in  
Definition \ref{def:strong-v-relator}. 
Using Proposition \ref{prop:duality-wasserstein-lifting} 
it is possible to show that also the Wasserstein 
lifting(s) $\wasserstein$ and $\wassersteinbot$ do, 
although this is less trivial (see Appendix 
\ref{appendix:proofs-behavioural-v-relations}).
\end{example}

Finally, if $\Monad$ is $\signature$-continuous we require 
$\quantale$-relators for $\Monad$ to be compatible with the 
$\signature$-continuous structure.

\begin{definition}
  Let $\Monad$ be a $\signature$-continuous monad, 
  $\quantale$ be a $\signature$-quantale, and  
  $\vrelator$ be a $\quantale$-relator for $\Monad$. 
  We say that $\vrelator$ is \emph{$\signature$-compatible} 
  and \emph{inductive} if the following 
  inequalities hold:
  \begin{align*}
   \qop(\vrelator\vrelone(\mathpzc{u}_1, \mathpzc{y}_1), \hh 
  \vrelator\vrelone(\mathpzc{u}_n, \mathpzc{y}_n)) 
  &\leq
  \vrelator\vrelone(\mop_{X}
  (\mathpzc{u}_1, \hh, \mathpzc{u}_n),
  \mop_{Y}(\mathpzc{y}_1, \hh, \mathpzc{y}_n)),
  \\
  \qunit 
  &\leq \vrelator \vrelone(\mbot_{X}, \mathpzc{y}), 
  \\
  \meet\nolimits_n \vrelator\vrelone(\mathpzc{x}_n, \mathpzc{y})
  &\leq \vrelator\vrelone(\lub\nolimits_n \mathpzc{x}_n, \mathpzc{y}).
  \end{align*}
  for any 
  $\omega$-chain $(\mathpzc{x}_n)_{n < \omega}$ and 
  elements $\mathpzc{u}_1, \hh, \mathpzc{u}_n$ in $\monad X$, 
  elements $\mathpzc{y},\mathpzc{y}_1, \hh, \mathpzc{y}_n \in \monad Y$,
  $n$-ary operation symbol $\op \in \signature$, and 
  $\quantale$-relation $\vrelone: X \torel Y$.
\end{definition}    
In particular, if $\vrelator$ is inductive and  
$\qvalone \leq \vrelator\vrelone(\mathpzc{x}_n, \mathpzc{y}))$ holds 
for any $n < \omega$, then
$
\qvalone \leq \vrelator\vrelone(\lub_{n< \omega} \mathpzc{x}_n, \mathpzc{y}).
$

\begin{example}\label{ex:inductive-v-relator}
Easy calculations show that $(-)_\mbot$ and $\hausdorff$
are inductive and $\signature$-compatible.
Using results from \cite{Villani/optimal-transport/2008} 
and \cite{Wasserstein-metric-and-subordination} (Lemma 5.2) 
it is possible to show that $\wassersteinbot$ is 
inductive, the relevant inequality being
$$
\wassersteinbot\vrelone(\sup_n \mu_n, \nu)
\leq \sup_n \wassersteinbot\vrelone(\mu_n, \nu).
$$
Proving $\signature$-compatibility of $\wasserstein$ and $\wassersteinbot$ 
amounts to prove 
$$
\vrelator\vrelone(\mu_1 \oplus_p \nu_1, \mu_2 \oplus_p \nu_2) \leq
\vrelator \vrelone(\mu_1,\mu_2) \oplus_p
\vrelator \vrelone(\nu_1, \nu_2),
$$ 
which is straightforward. 
\end{example}

\paragraph{From $\quantale$-relators to $\two$-relators}
\label{paragraph:from-v-relators-to-two-relators} 
Before applying the abstract theory of $\quantale$-relators to 
$\quantale$-\Fuzz\ we show how a $\quantale$-relator induces a 
canonical $\two$-relator (this will be useful in the next section). 
Consider the maps:
\begin{center}
\begin{tabular}{cc}
$\varphi  : \quantale \to \two$ &   
$\psi  : \two \to \quantale$  
\\
$\qunit \mapsto \true,\ \qvalone \mapsto \false$ &
$\true \to \qunit,\ \false \to \qbot$
\end{tabular}
\end{center}
We immediately see that $\varphi$ and $\psi$ are
CBFs and that $\varphi$ is the right adjoint of $\psi$.
We associate to every $\quantale$-relation $\vrelone$  
its kernel $\two$-relation $\varphi \acts \vrelone$ and
to any $\two$-relation $\relone$ the
$\quantale$-relation $\psi \acts \relone$. 
Similarly, we can associate to each 
$\quantale$-relator $\vrelator$ the $\two$-relator 
$\vrelatortwo_\vrelator\relone \defeq \varphi \acts \vrelator(\psi \acts \relone).$ 
Moreover, since
$\varphi$ is the right adjoint of $\psi$ we have the inequalities:
\begin{align*}
\psi \acts \vrelatortwo_\vrelator \relone   &\leq \vrelator(\psi \acts \relone) \\
\vrelatortwo_\vrelator(\varphi \acts \vrelone) & \leq \varphi \acts \vrelator\vrelone.
\end{align*}

Finally, we say that $\vrelator$ is compatible with $\varphi$ if 
$\vrelatortwo_\vrelator(\varphi \acts \vrelone) = \varphi \acts \vrelator\vrelone$
holds for any $\vrelone: X \torel Y$.

\begin{example}
\label{ex:simulation-relators-from-v-simulation-relators}
\begin{enumerate}[wide = 0pt, leftmargin = *]
  \item For the $\quantale$-relator $(-)_\bot$ and $\relone: X \torel Y$ 
    we have 
    $\vrelatortwo_\bot\relone(\mathpzc{x}, \mathpzc{y}) = \true$ if and 
    only if $\mathpzc{x} \in X, \mathpzc{y} \in Y$ and 
    $\relone(\mathpzc{x},\mathpzc{y}) = \true$, or $\mathpzc{x} = \bot$. 
    That is, $\vrelatortwo_\bot$ gives the usual simulation relator 
    for `effect-free' $\lambda$-calculi. 
    An easy calculation shows that 
    $\vrelatortwo_\bot(\varphi \acts \vrelone) = \varphi \acts \vrelone_\bot$. 
    Replacing $(-)_\bot$ with $(-)_{\bot\bot}$ we recover the bisimulation 
    relator for `effect-free' $\lambda$-calculi.
  \item For the $\quantale$-relator $\hausdorff$ and $\relone: X \torel Y$ 
    we have: 
    $$\vrelatortwo_\hausdorff\relone(\mathpzc{X}, \mathpzc{Y}) = \true 
    \iff \forall x \in \mathpzc{X}.\ \exists y \in \mathpzc{Y}.\ 
    \relone(x,y) = \true.
    $$ 
    Therefore, $\vrelatortwo_\hausdorff$ gives the usual notion of simulation 
    for nondeterministic systems. 
    Proving compatibility with $\varphi$, i.e.
    $\vrelatortwo_\hausdorff(\varphi \acts \vrelone) = \varphi \acts \hausdorff\vrelone$, 
    is straightforward. 
    A similar argument holds for $\hausdorffsym$.
  \item Consider the Wasserstein lifting $\wasserstein$ and observe that 
    we have
    $
    \vrelatortwo_{\wasserstein}\relone(\mu, \nu) = \true$ if and only if
    the following holds:
    $$
    \exists \omega \in \couplings(\mu,\nu).\ 
    \forall x,y.\ \omega(x,y) > 0 \implies \relone(x,y) = \true.
    $$
    We have thus recovered the usual notion of probabilistic relation lifting 
    via couplings \cite{Kurz/Tutorial-relation-lifting/2016}. 
    Moreover, if $\varphi \acts \wasserstein\vrelone(\mu,\nu) = \true$, 
    then $\wasserstein\vrelone(\mu,\nu) = 0$, meaning that there exists a coupling 
    $\omega \in \couplings(\mu, \nu)$ such that 
    $\sum_{x,y} \omega(x,y) \cdot \vrelone(x,y) = 0$. In particular,  
    if $\omega(x,y) > 0$, then $\vrelone(x,y) = 0$ i.e. 
    $(\varphi \acts \vrelone)(x,y) = \true$. That is, $\wasserstein$ is 
    compatible with $\varphi$. From point $1$ it follows that 
    $\wassersteinbot$ is compatible with 
    $\varphi$ as well.
\end{enumerate}
\end{example}

We conclude this section with the following auxiliary lemma 
(whose proof is given in Appendix \ref{appendix:proofs-behavioural-v-relations}), 
which will be useful to prove that the kernel of applicative distances are 
suitable applicative (bi)simulations. 

\begin{restatable}{lemma}{kernelLemma}
\label{lemma:kernel-lemma}
Let $\vrelator$ be $\quantale$-relator compatible with 
$\varphi$. Then the following hold:
  \begin{align*}
  \vcenter{
     \xymatrix{
     \laxcommute
      X   
      \ar[r]^-{f} 
      \ar[d]_{\vrelone}|-*=0@{|}   
      &  
      \monad \setthree        
      \ar[d]^{\vrelator \vreltwo}|-*=0@{|}
      \\
      \settwo   
      \ar[r]_-{g}   
      &  
      \monad \setfour 
      } }
  &\implies 
      \vcenter{
      \xymatrix{
      \laxcommute
      X   
      \ar[r]^-{f} 
      \ar[d]_{\varphi \acts \vrelone}|-*=0@{|}   
      &  
      \monad \setthree        
      \ar[d]^{\vrelatortwo_{\vrelator} (\varphi \acts \vreltwo)}|-*=0@{|}
      \\
      \settwo   
      \ar[r]_-{g}   
      &  
      \monad \setfour 
      } },
      \\
        \vcenter{
     \xymatrix{
     \laxcommute
      X   
      \ar[r]^-{f} 
      \ar[d]_{\relone}|-*=0@{|}   
      &  
      \monad \setthree        
      \ar[d]^{\vrelatortwo_{\vrelator} \reltwo}|-*=0@{|}
      \\
      \settwo   
      \ar[r]_-{g}   
      &  
      \monad \setfour 
      } }
  &\implies 
      \vcenter{
      \xymatrix{
      \laxcommute
      X   
      \ar[r]^-{f} 
      \ar[d]_{\psi \acts \relone}|-*=0@{|}   
      &  
      \monad \setthree        
      \ar[d]^{\vrelator (\psi \acts \reltwo)}|-*=0@{|}
      \\
      \settwo   
      \ar[r]_-{g}   
      &  
      \monad \setfour 
      } }.
  \end{align*}
\end{restatable}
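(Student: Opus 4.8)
The plan is to reduce both implications to pointfree inequalities between $\quantale$-relations (resp.\ $\two$-relations) using the adjunction rule $g \comp \vrelone \leq \vreltwo \iff \vrelone \leq \dual{g} \comp \vreltwo$, and then to transport these inequalities along the change-of-base functors $\varphi$ and $\psi$. The single observation that makes everything go through is that, although a CBF need not preserve joins and therefore does not commute with arbitrary $\quantale$-relation composition, it \emph{does} commute with whiskering by function graphs: for any CBF $h$, relation $\vreltwo$, and functions $f,g$ of appropriate type, one has $h \acts (\dual{g} \comp \vreltwo \comp f) = \dual{g} \comp (h \acts \vreltwo) \comp f$. This is immediate from the identity $(\dual{g} \comp \vreltwo \comp f)(x,w) = \vreltwo(f(x),g(w))$ recalled in Section~\ref{section:preliminaries}, since both sides evaluate pointwise to $h(\vreltwo(f(x),g(w)))$.

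For the first implication I would first rewrite the hypothesis $g \comp \vrelone \leq \vrelator\vreltwo \comp f$ in the equivalent form $\vrelone \leq \dual{g} \comp \vrelator\vreltwo \comp f$ via the adjunction rule. Applying the monotone map $\varphi \acts (-)$ and then the whiskering identity yields $\varphi \acts \vrelone \leq \dual{g} \comp (\varphi \acts \vrelator\vreltwo) \comp f$. At this point the compatibility hypothesis enters: since $\varphi \acts \vrelator\vreltwo = \vrelatortwo_\vrelator(\varphi \acts \vreltwo)$, the right-hand side is exactly $\dual{g} \comp \vrelatortwo_\vrelator(\varphi \acts \vreltwo) \comp f$. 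A final application of the adjunction rule turns $\varphi \acts \vrelone \leq \dual{g} \comp \vrelatortwo_\vrelator(\varphi \acts \vreltwo) \comp f$ back into the desired $g \comp (\varphi \acts \vrelone) \leq \vrelatortwo_\vrelator(\varphi \acts \vreltwo) \comp f$.

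The second implication is symmetric, with $\psi$ in place of $\varphi$ and running from $\two$-relations to $\quantale$-relations. I would rewrite $g \comp \relone \leq \vrelatortwo_\vrelator\reltwo \comp f$ as $\relone \leq \dual{g} \comp \vrelatortwo_\vrelator\reltwo \comp f$, apply the monotone map $\psi \acts (-)$ together with the whiskering identity to obtain $\psi \acts \relone \leq \dual{g} \comp (\psi \acts \vrelatortwo_\vrelator\reltwo) \comp f$, and then use the general inequality $\psi \acts \vrelatortwo_\vrelator\reltwo \leq \vrelator(\psi \acts \reltwo)$ (valid for every $\quantale$-relator, with no compatibility needed) together with monotonicity of whiskering to bound this above by $\dual{g} \comp \vrelator(\psi \acts \reltwo) \comp f$. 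The adjunction rule then delivers the conclusion. Note the asymmetry: the first implication genuinely needs the \emph{equality} furnished by compatibility, because only the inequality $\vrelatortwo_\vrelator(\varphi \acts \vreltwo) \leq \varphi \acts \vrelator\vreltwo$ holds for free and it points the wrong way here, whereas the second implication uses only the free inequality.

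The only real obstacle is a temptation to avoid rather than a genuine difficulty: one is tempted to push $\varphi \acts$ (or $\psi \acts$) through the entire composite $\dual{g} \comp \vrelator\vreltwo \comp f$ by treating it as a composition of $\quantale$-relations, which fails since $\varphi$ does not preserve the joins implicit in the matrix-multiplication formula for composition. The resolution is to notice that the CBF actions only ever have to commute with whiskering by \emph{functions}, where the join collapses to a single summand and no join preservation is required; all genuine relational compositions stay on the side where monotonicity alone suffices.
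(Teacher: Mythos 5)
Your proof is correct. One remark on the comparison: although the paper points to Appendix \ref{appendix:proofs-behavioural-v-relations} for a proof of Lemma \ref{lemma:kernel-lemma}, no such proof actually appears there (the appendix only proves the adequacy bound, Proposition \ref{prop:applicative-v-similarity-is-reflexive-and-transitive}, and Proposition \ref{prop:kernel-of-app-v-sim-is-app-sim}, the last of which merely \emph{uses} the kernel lemma), so there is nothing concrete to measure your argument against. On its own terms your argument is sound and uses exactly the ingredients the paper sets up: the adjunction rules for whiskering by function graphs, the pointwise identity $(\dual{g} \comp \vreltwo \comp f)(x,w) = \vreltwo(f(x),g(w))$ (which, as you note, makes the CBF action commute with whiskering without any join-preservation assumption, the join in the matrix formula collapsing to a single summand because the quantale is integral), the compatibility equality $\vrelatortwo_\vrelator(\varphi \acts \vreltwo) = \varphi \acts \vrelator\vreltwo$ for the first implication, and only the free counit inequality $\psi \acts \vrelatortwo_\vrelator\reltwo \leq \vrelator(\psi \acts \reltwo)$ for the second. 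Your diagnosis of the asymmetry is also right: the free inequality $\vrelatortwo_\vrelator(\varphi \acts \vreltwo) \leq \varphi \acts \vrelator\vreltwo$ points the wrong way for the first implication, which is precisely why the compatibility hypothesis is stated for that direction and not needed for the other.
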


\section{Behavioural $\quantale$-relations}
\label{section:behavioural-v-relations}
 
In this section we extend the relational theory developed in 
e.g. \cite{Lassen/PhDThesis,Gordon/FOSSACS/01} for higher-order 
functional languages to $\quantale$-relations for $\quantale$-\Fuzz. 
Following \cite{Pitts/ATBC/2011} we refer to such relations as 
\emph{$\lambda$-term $\quantale$-relations}. 
Among such $\quantale$-relations we define 
applicative $\vrelator$-similarity,
the generalisation of Abramsky's applicative similarity  to 
both algebraic effects and $\quantale$-relations, and 
prove that under suitable conditions it is compatible 
generalised metric. We postpone the study of applicative 
$\vrelator$-bisimilarity to Section 
\ref{section:from-applicative-v-similarity-to-applicative-v-bisimilarity}.
As usual we assume a signature $\signature$, 
a $\signature$-quantale $\quantale$,
a collection of CBEs $\Pi$ (according to 
Section \ref{section:v-fuzz}), and a $\signature$-continuous 
(strong) monad $\Monad$ to be fixed. We also assume 
$\quantale$-relators to satisfy all requirements given in Section 
\ref{section:v-relators-and-v-relation-lifting}. 

\begin{definition}
  A \emph{closed} $\lambda$-term $\quantale$-relation  
  $\vrelone = (\toterm{\vrelone}, \toval{\vrelone})$ 
  associates to each closed type $\typeone$, binary $\quantale$-relations
  $\toval{\vrelone}_\typeone, \toterm{\vrelone}_\typeone$ on closed values 
  and terms inhabiting it, respectively. 
\end{definition}

Since the syntactic shape of expressions 
determines whether we are dealing with terms or 
values, oftentimes we will write 
$\vrelone_\typeone(\termone, \termtwo)$ (resp. 
$\vrelone_\typeone(\valone, \valtwo)$) in place of 
$\toterm{\vrelone}_\typeone(\termone, \termtwo)$ (resp. 
$\toval{\vrelone}_\typeone(\valone, \valtwo)$). 

In order to be able to work with open 
terms we introduce the notion of 
\emph{open $\lambda$-term $\quantale$-relation}.

\begin{definition}
  An \emph{open} $\lambda$-term $\quantale$-relation 
  $\vrelone$
  associates to each (term) sequent 
  $\envone \compimp \typeone$ a $\quantale$-relation 
  $\envone \compimp \vrelone(-, -): \typeone$ on 
  terms inhabiting it, and to each value sequent 
  $\envone \valimp \typeone$ a $\quantale$-relation 
  $\envone \valimp \vrelone(-, -): \typeone$ on 
  values inhabiting it. We require open $\lambda$-term 
  $\quantale$-relations to be closed under weakening, 
  i.e. for any environment $\envtwo$ we require:
  \begin{align*}
  (\envone \compimp \vrelone(\termone, \termtwo): \typeone) 
  &\leq 
  (\envone \tensor \envtwo \compimp \vrelone(\termone, \termtwo): \typeone),
  \\
  (\envone \valimp \vrelone(\valone, \valtwo): \typeone)
  &\leq 
  (\envone \tensor \envtwo \valimp \vrelone(\valone, \valtwo): \typeone).
  \end{align*}
\end{definition}
As for closed $\lambda$-term $\quantale$-relations, we 
will often write $\envone \imp \vrelone(\valone, \valtwo): \typeone$ in 
place of $\envone \valimp \vrelone(\valone, \valtwo): \typeone$ and 
simply refer to open $\lambda$-term $\quantale$-relations as 
$\lambda$-term $\quantale$-relations (whenever relevant we will 
explicitly mention whether we are dealing with open or closed 
$\lambda$-term $\quantale$-relations).

\begin{example}
Both the discrete and the indiscrete $\quantale$-relations are 
open $\lambda$-term $\quantale$-relations. 
The discrete $\lambda$-term $\quantale$-relation is defined by:
$$
\envone \imp \mathsf{disc}(\termone, \termone): \typeone 
\defeq \qunit, 
\qquad 
\envone \imp \mathsf{disc}(\termone, \termtwo): \typeone 
\defeq \bot,
$$ 
(and similarly for values), 
whereas the indiscrete $\lambda$-term $\quantale$-relation is 
defined by 
$$\envone \imp \mathsf{indisc}(\termone, \termtwo): \typeone \defeq \qunit$$
(and similarly for values).
\end{example}

We notice that the collection 
of open $\lambda$-term $\quantale$-relations carries
a complete lattice structure (with respect to the pointwise order), 
meaning that we can define $\lambda$-term $\quantale$-relation both 
inductively and coinductively.

We can always extend a closed $\lambda$-term $\quantale$-relation 
$\vrelone = (\toterm{\vrelone}, \toval{\vrelone})$ to an open 
one. 

\begin{definition}
Let $\envone \defeq \varone_1 :_{\baseone_1} \typeone_1, \hh, 
\varone_n :_{\baseone_n} \typeone_n$ be an environment. For values 
$\vec{\valone} \defeq \valone_1, \hh, \valone_n$ we write 
$\vec{\valone} : \envone$ 
if for any $i \leq n$, 
$\emptyctx \valimp \valone_i : \typeone_i$ holds.
Given a closed $\lambda$-term $\quantale$-relation 
$\vrelone = (\toterm{\vrelone}, \toval{\vrelone})$ 
we define its open extension 
$\open{\vrelone}$ as follows\footnote{The superscript is the 
letter `o' (for open), and should not be confused with 
$\circ$ which we use for the map $\dual{-}$ sending a 
$\quantale$-relation to its dual.}:
\begin{align*}
\envone \imp \open{\vrelone}(\termone, \termtwo): \typetwo 
&\defeq
 \bigwedge\nolimits_{\vec{\valone}: \envone} \toterm{\vrelone}_\typetwo
(\substcomp{\termone}{\vec{\varone}}{\vec{\valone}},
\substcomp{\termtwo}{\vec{\varone}}{\vec{\valone}}) 
\\
\envone
\valimp \open{\vrelone}(\valone, \valtwo): \typetwo
& \defeq
\bigwedge\nolimits_{\bar{\valthree}:\envone} \toval{\vrelone}_\typetwo
(\substval{\valone}{\vec{\valthree}}{\vec{\varone}},
\substval{\valtwo}{\vec{\valthree}}{\vec{\varone}}).
\end{align*}
\end{definition}

We now define \emph{applicative $\vrelator$-similarity}.

\begin{definition}\label{def:v-applicative-simulation}
Let $\vrelator$ be a $\quantale$-relator and 
$\vrelone = (\toterm{\vrelone}, \toval{\vrelone})$ be a closed 
$\lambda$-term $\quantale$-relation. 
Define the closed $\lambda$-term $\quantale$-relation 
$[\vrelone] = (\toterm{[\vrelone]}, \toval{[\vrelone]})$ 
as follows:
    \begin{align*}
        \toterm{[\vrelone]}_{\typeone}(\termone, \termtwo) 
        &\defeq 
        \vrelator\toval{\vrelone}_{\typeone} 
        (\sem{\termone}, \sem{\termtwo}), 
        \\
        \toval{[\vrelone]}_{\typeone \mmap \typetwo}
        (\valone, \valtwo)
        &\defeq 
        \meet\nolimits_{\valthree \in \values_\typeone} 
        \toterm{\vrelone}_{\typetwo}
        (\valone \valthree, \valtwo \valthree),
        \\
        \toval{[\vrelone]}_{\sumtype{i \in I}{\typeone_i}}
        (\inject{\hat \imath}{\valone}, \inject{\hat \imath}{\valtwo}) 
        &\defeq  \toval{\vrelone}_{\typeone_{\hat \imath}}(\valone, \valtwo),
        \\
        \toval{[\vrelone]}_{\sumtype{i \in I}{\typeone_i}}
        (\inject{\hat \imath}{\valone}, \inject{\hat \jmath}{\valtwo}) 
        &\defeq  \qbot,
        \\
        [\vrelone]_{\recType{\typeVar}{\typeone}}
        (\fold{\valone}, \fold{\valtwo})
        &\defeq  
        \vrelone_{\substType{\typeone}{\typeVar}{\recType{\typeVar}{\typeone}}}
        (\valone, \valtwo),
        \\
        [\vrelone]_{\bang_{\baseone} \typeone}
        (\bang \valone, \bang \valtwo)
        &\defeq  
        (\baseone \acts \vrelone_{\typeone})(\valone, \valtwo).
    \end{align*}
(notice that the definition of $\toval{[\vrelone]}$ 
is by case analysis on $\emptyctx \valimp \valone, \valtwo: \typeone$).
A $\lambda$-term $\quantale$-relation $\vrelone$ is an 
applicative $\vrelator$-simulation if $\vrelone \leq [\vrelone]$.
\end{definition}
The clause for $\typeone \mmap \typetwo$ generalises the usual applicative 
clause, whereas the clause for $\bang_\baseone \typeone$ `scale' 
$\toval{\vrelone}_\typeone$ by $\baseone$.
It is easy to see that the above definition induces a map 
$\vrelone \mapsto [\vrelone]$ on the complete lattice of 
closed $\lambda$-term $\quantale$-relations. 
Moreover, such map is monotone since both $\vrelator$ and 
CBEs are.  

\begin{definition}\label{def:v-applicative-similarity}
  Define applicative $\vrelator$-similarity $\vsim$ as the 
  greatest fixed point of $\vrelone \mapsto [\vrelone]$. That is, 
  $\vsim$ is the greatest (closed) $\lambda$-term $\quantale$-relation 
  satisfying the equation $\vrelone = [\vrelone]$ 
  (such greatest solution exists by the Knaster-Tarski Theorem).
\end{definition} 

Applicative $\vrelator$-similarity comes with an associated 
coinduction principle: 
for any closed $\lambda$-term $\quantale$-relation $\vrelone$,
if $\vrelone \leq [\vrelone]$, then $\vrelone \leq \vsim$.

\begin{example}\label{ex:probabilistic-applicative-similarity-distance}
Instantiating Definition \ref{def:v-applicative-similarity} with the 
Wasserstein lifting $\wassersteinbot$  
we obtain the quantitative analogue of 
\emph{probabilistic applicative similarity} 
\cite{DalLagoSangiorgiAlberti/POPL/2014} for $P$-\Fuzz.
In particular, for two terms $\termone, \termtwo \in \Lambda_\typeone$, 
$\vsim(\termone, \termtwo)$ is (for readability we omit subscripts):
\begin{equation*}
\begin{split}
& \min_{\omega  \in \couplings{(\sem{\termone}, \sem{\termtwo})}}
 \sum_{\valone, \valtwo \in \values}
\omega(\valone,\valtwo) \cdot \toval{\vsim}(\valone,\valtwo) 
+ \sum_{\valone \in \values} 
\omega(\valone, \mbot) \cdot \toval{\vsim}_{\mbot}(\valone, \mbot)
\\
&+ 
\sum_{\valtwo \in \values} 
\omega(\mbot, \valtwo) \cdot \toval{\vsim}_{\mbot}(\mbot,\valtwo) 
+ 
\omega(\mbot, \mbot) \cdot \toval{\vsim}_{\mbot}(\mbot,\mbot).
\end{split}
\end{equation*}

The above formula can be simplified observing that we have 
$\toval{\vsim}_{\mbot}(\mbot,\mbot) = 0$, 
$\toval{\vsim}_{\mbot}(\valone, \mbot) = 1$, and 
$\toval{\vsim}_{\mbot}(\mbot,\valtwo) = 0$ by very 
definition of $\vsim_\mbot$.
We immediately notice  
that $\vsim$ is adequate in the following
sense: for all terms $\termone, \termtwo \in \Lambda_\typeone$ 
we have the inequality
$$
\sum \sem{\termone} - \sum \sem{\termtwo}
\leq 
\toterm{\vsim}(\termone, \termtwo),
$$
where $\sum \sem{\termone}$ is the 
probability of convergence of $\termone$, i.e.
$\sum_{\valone \in \values} \sem{\termone}(\valone)$, 
and subtraction is actually truncated subtraction.

Let us now consider terms $I,\Omega \in \Lambda_{\typeone \mmap \typeone}$ 
of Example \ref{ex:v-fuzz-terms}. 
We claim that
$\toterm{\vsim}(I,I \oplus \Omega) = \frac{1}{2}$. By adequacy 
we immediately see that 
$\frac{1}{2} \leq \toterm{\vsim}(I,I \oplus \Omega)$. 
We prove $\toterm{\vsim}(I,I \oplus \Omega) \leq \frac{1}{2}$.
Let $\valone \defeq \abs{\varone}{\return{\varone}}$ and consider 
the coupling $\omega$ defined by:
$$\omega(\valone,\valone) 
= \frac{1}{2}, 
\quad \omega(\valone, \mbot) = 
\frac{1}{2}$$
and zero for the rest.
Indeed $\omega$ is a coupling of $\sem{I}$ and $\sem{I \oplus \Omega}$. 
Moreover, by very definition of $\vsim$ and $\wassersteinbot$ 
we have:
$$
\toterm{\vsim}(I, I \oplus \Omega) \leq
\omega(\valone,\valone)
 \cdot \toval{\vsim}(\valone,\valone) + 
\omega(\valone, \mbot).
$$
The right hand side of the above inequality gives exactly $\frac{1}{2}$, 
provided that 
$\toval{\vsim}(\valone,\valone) = 0$. 
This indeed holds in full generality.
\end{example}

\begin{restatable}{proposition}{applicativeVSimilarityReflexivityTransitivity}
\label{prop:applicative-v-similarity-is-reflexive-and-transitive}
  Applicative $\vrelator$-similarity $\vsim$ is a reflexive and 
  transitive $\lambda$-term $\quantale$-relation.
\end{restatable}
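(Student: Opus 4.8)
The plan is to use the coinduction principle attached to $\vsim$ in Definition~\ref{def:v-applicative-similarity}: since $\vsim$ is the greatest fixed point of the monotone map $\vrelone \mapsto [\vrelone]$, to prove an inclusion of the form $\vrelone \leq \vsim$ it suffices to exhibit $\vrelone$ as an applicative $\vrelator$-simulation, i.e. to verify $\vrelone \leq [\vrelone]$. Reflexivity of $\vsim$ is precisely the statement $\idvrel \leq \vsim$, where $\idvrel$ is the identity (i.e. discrete) $\lambda$-term $\quantale$-relation assigning $\qunit$ to diagonal pairs and $\qbot$ elsewhere, and transitivity is $\vsim \comp \vsim \leq \vsim$. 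Hence I would prove both by checking that $\idvrel$ and $\vsim \comp \vsim$ are applicative $\vrelator$-simulations and then invoking coinduction.

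For reflexivity I would verify $\idvrel \leq [\idvrel]$ clause by clause; the case analysis on values is exhaustive thanks to the classification of closed values recalled just before the operational semantics. On terms we need $\qunit \leq \vrelator \toval{\idvrel}(\sem{\termone},\sem{\termone})$, which is immediate from $\idrel_{\monad \values} \leq \vrelator \vidrel$, i.e. (\ref{vrel-1}). On the function clause $\toval{[\idvrel]}_{\typeone \mmap \typetwo}(\valone,\valone) = \meet_{\valthree} \toterm{\idvrel}(\valone \valthree, \valone \valthree) = \qunit$ by definition of $\idvrel$; the sum and recursive clauses reduce directly to reflexivity at a smaller type; and the bang clause reduces to $\qunit \leq \baseone(\qunit)$, which holds because $\quantale$ is integral and hence every CBE preserves the unit. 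Coinduction then yields $\idvrel \leq \vsim$.

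For transitivity I would set $\reltwo \defeq \vsim \comp \vsim$ and prove $\reltwo \leq [\reltwo]$ clause by clause, using $\vsim = [\vsim]$ throughout. The term clause is the delicate one: expanding gives $\toterm{\reltwo}(\termone,\termthree) = \join_{\termtwo} \vrelator\toval{\vsim}(\sem{\termone},\sem{\termtwo}) \tensor \vrelator\toval{\vsim}(\sem{\termtwo},\sem{\termthree})$, and since the denotations $\sem{\termtwo}$ range within $\monad \values$ this join is bounded by the genuine relational composition $(\vrelator\toval{\vsim} \comp \vrelator\toval{\vsim})(\sem{\termone},\sem{\termthree})$, which by (\ref{vrel-2}) is at most $\vrelator(\toval{\vsim} \comp \toval{\vsim})(\sem{\termone},\sem{\termthree}) = \toterm{[\reltwo]}(\termone,\termthree)$. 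For the function clause I would unfold both copies of $\vsim$ via the applicative clause, so that an intermediate value $\valtwo$ contributes $\toterm{\vsim}(\valone \valfour, \valtwo \valfour) \tensor \toterm{\vsim}(\valtwo \valfour, \valthree \valfour)$ at each argument $\valfour$; enlarging the join over $\valtwo \valfour$ to a join over arbitrary intermediate terms and then taking the meet over $\valfour$ recovers $\toval{[\reltwo]}$. The sum and recursive clauses are analogous, noting for sums that a mismatch of injection tags forces one factor of the tensor to equal $\qbot$, which annihilates the product by integrality of $\quantale$. The bang clause reduces to $(\baseone \acts \toval{\vsim}) \comp (\baseone \acts \toval{\vsim}) \leq \baseone \acts (\toval{\vsim} \comp \toval{\vsim})$, which is exactly the composition law for the action of a CBE recorded after Example~\ref{ex:change-of-base-functor}. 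Coinduction then gives $\vsim \comp \vsim \leq \vsim$.

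I expect the main obstacle to be the term clause of transitivity, where the composition of $\quantale$-relations is a supremum indexed by intermediate \emph{terms}, whereas (\ref{vrel-2}) controls the composition of lifted relations as a supremum over all of $\monad \values$. The argument hinges on the observation that the set of denotations $\{\sem{\termtwo}\}$ is a subset of $\monad \values$, so that monotonicity of joins lets one pass from the term-indexed supremum to the true relational composition before applying (\ref{vrel-2}); everything else is a routine, if lengthy, clause-by-clause verification, the only other point of care being the lax-morphism inequality for CBEs used in the bang clause.
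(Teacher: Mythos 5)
Your proposal is correct and follows essentially the same route as the paper's own proof: coinduction showing that the identity relation and $\vsim \comp \vsim$ are applicative $\vrelator$-simulations, with the term clause handled by bounding the term-indexed join by the $\monad\values$-indexed composition before applying (\ref{vrel-2}), the function clause by instantiating the intermediate term as the intermediate value applied to the argument, and the bang clause by the CBE composition law. The only (harmless) differences are that you spell out the reflexivity details the paper leaves as a sketch, and you invoke the fixed-point equality $\vsim = [\vsim]$ where the paper uses only the simulation inequality.
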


\begin{proof}[Proof sketch.]
The proof is by coinduction, showing that both the identity 
$\lambda$-term $\quantale$-relation and $\vsim \comp \vsim$ 
are applicative $\vrelator$-simulations. 
A formal proof is given in Appendix \ref{appendix:proofs-behavioural-v-relations}.
\end{proof}

In light of Example \ref{ex:simulation-relators-from-v-simulation-relators} 
we can look at the kernel 
of $\vsim$ and recover well-known notions of (relational) applicative similarity 
(properly generalised to $\quantale$-\Fuzz).

\begin{restatable}{proposition}{kernelApplicativeSim}
\label{prop:kernel-of-app-v-sim-is-app-sim}
Define applicative $\vrelatortwo_{\vrelator}$-similarity $\preceq$
by instantiating Definition \ref{def:v-applicative-simulation} with the 
$\two$-relator $\vrelatortwo_{\vrelator}$ and replacing the clause for 
types of the form $\bang_\baseone \typeone$ as follows:
$\bang \valone\ \relone_{\bang_\baseone \typeone}\ \bang \valtwo $
implies
$(\varphi \comp \baseone \comp \psi) \acts \relone_\typeone(\valone,\valtwo)$.
Then the kernel $\varphi \acts \vsim$ of $\vsim$ coincide with 
$\preceq$. 
\end{restatable}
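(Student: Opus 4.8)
The plan is to exploit that both $\vsim$ and $\preceq$ are greatest fixed points---of the operator $[-]$ of Definition~\ref{def:v-applicative-simulation} and of its $\two$-valued variant $[-]_{\two}$ (the one built from $\vrelatortwo_{\vrelator}$ together with the modified clause for $\bang_{\baseone}\typeone$), respectively---living in the complete lattices of closed $\lambda$-term $\quantale$-relations and $\two$-relations. The CBFs $\varphi,\psi$ act pointwise on these lattices, and since $\varphi$ is right adjoint to $\psi$ the two actions form a Galois insertion $\psi\acts(-)\dashv\varphi\acts(-)$: non-triviality gives $\varphi\comp\psi=\mathsf{id}_{\two}$, hence $\varphi\acts\psi\acts\relone=\relone$, while the counit reads $\psi\acts\varphi\acts\vrelone\leq\vrelone$. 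I would prove the two inclusions $\preceq\leq\varphi\acts\vsim$ and $\varphi\acts\vsim\leq\preceq$ separately, each by coinduction.

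The technical heart is the operator identity
$$[\relone]_{\two}=\varphi\acts[\psi\acts\relone]$$
for every $\two$-relation $\relone$, which is exactly what the modified clause is engineered to guarantee. I would check it clause by clause: on the term clause it is precisely the definition $\vrelatortwo_{\vrelator}\reltwo=\varphi\acts\vrelator(\psi\acts\reltwo)$, the passage of $\varphi$ and $\psi$ across the evaluation maps $\sem{-}\colon\Lambda_{\typeone}\to\monad\values_{\typeone}$ being justified by Lemma~\ref{lemma:kernel-lemma} (using that $\vrelator$ is compatible with $\varphi$); on the arrow, sum and recursive clauses it follows from $\varphi\comp\psi=\mathsf{id}$ together with the fact that, $\quantale$ being integral ($\qunit=\qtop$) and non-trivial ($\qbot\neq\qunit$), both $\varphi$ and $\psi$ preserve arbitrary meets and send $\qbot,\qunit$ correctly; on the clause for $\bang_{\baseone}\typeone$ it holds by definition, since $\varphi\acts[\psi\acts\relone]_{\bang_{\baseone}\typeone}(\bang\valone,\bang\valtwo)=\varphi(\baseone(\psi(\relone_{\typeone}(\valone,\valtwo))))=(\varphi\comp\baseone\comp\psi)(\relone_{\typeone}(\valone,\valtwo))$.

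The first inclusion is then immediate. From $\preceq=[\preceq]_{\two}=\varphi\acts[\psi\acts\preceq]$ I would apply $\psi\acts(-)$ and use the counit to obtain $\psi\acts\preceq=\psi\acts\varphi\acts[\psi\acts\preceq]\leq[\psi\acts\preceq]$; thus $\psi\acts\preceq$ is an applicative $\vrelator$-simulation, so $\psi\acts\preceq\leq\vsim$ by the coinduction principle, and applying the monotone $\varphi\acts(-)$ gives $\preceq=\varphi\acts\psi\acts\preceq\leq\varphi\acts\vsim$.

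For the converse it suffices, again by coinduction, to show $\varphi\acts\vsim\leq[\varphi\acts\vsim]_{\two}=\varphi\acts[\psi\acts\varphi\acts\vsim]$; since $\vsim=[\vsim]$ this amounts to proving that $\varphi\acts[-]$ depends on its argument only through the kernel, i.e. $\varphi\acts[\vrelone]=\varphi\acts[\psi\acts\varphi\acts\vrelone]$. On the term, arrow, sum and recursive clauses this is exactly the invariance already exploited above (each such clause factors through $\varphi\acts\vrelone$). The one case that does not factor through the kernel for free is the clause for $\bang_{\baseone}\typeone$, where the required equality reduces to the identity $\varphi\comp\baseone=\varphi\comp\baseone\comp\psi\comp\varphi$ for every $\baseone\in\Pi$---equivalently, for $q\neq\qunit$, to $\baseone(q)=\qunit\iff\baseone(\qbot)=\qunit$. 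I expect this to be the main obstacle: monotonicity of $\baseone$ together with $\psi\acts\varphi\acts\vrelone\leq\vrelone$ yields only one direction, and the equivalence genuinely uses the shape of the CBEs in $\Pi$ (for the scaling endofunctors $c\cdot(-)$ of the examples, $c\cdot q=\qunit$ with $q\neq\qunit$ forces $c$ to be degenerate, whence $c\cdot\qbot=\qunit$ as well). Once this identity is established, $\varphi\acts\vsim=\varphi\acts[\vsim]=\varphi\acts[\psi\acts\varphi\acts\vsim]=[\varphi\acts\vsim]_{\two}$ closes the coinduction, and combining the two bounds gives $\varphi\acts\vsim=\preceq$.
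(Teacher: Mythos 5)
Your proposal is correct and, in substance, it is the same argument as the paper's: both inclusions are established coinductively through the pair $\varphi,\psi$, using $\varphi\comp\psi=\mathsf{id}_{\two}$ and $\psi\comp\varphi\leq\mathsf{id}_{\quantale}$. The paper phrases this as two clause-by-clause simulation checks --- that $\varphi\acts\vsim$ is an applicative $\vrelatortwo_{\vrelator}$-simulation (invoking Lemma~\ref{lemma:kernel-lemma} for the term clause and integrality of $\quantale$ for the arrow clause), and that $\psi\acts\preceq$ is an applicative $\vrelator$-simulation --- whereas you factor the same computations through the operator identity $[\relone]_{\two}=\varphi\acts[\psi\acts\relone]$ and run the two coinductions as short Galois-insertion calculations. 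What your packaging buys is an explicit accounting of which hypothesis each clause consumes: compatibility of $\vrelator$ with $\varphi$ for the term clause, meet-preservation of the right adjoint $\varphi$ for arrows, sums and folds, and a genuine property of the CBEs for bang types.

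On the step you flag as the main obstacle, your caution is warranted, and you should know that the paper does not discharge it either: in its proof that $\varphi\acts\vsim$ satisfies the bang clause, it asserts $(\varphi\comp\baseone\comp\psi\comp\varphi)\acts\toval{\vsim}_{\typeone}=(\varphi\comp\baseone)\acts\toval{\vsim}_{\typeone}$ ``by algebra of CBFs'', which is exactly your identity $\varphi\comp\baseone=\varphi\comp\baseone\comp\psi\comp\varphi$. Unfolded, for $q\neq\qunit$ this demands $\baseone(q)=\qunit\Rightarrow\baseone(\qbot)=\qunit$, and monotonicity yields only the converse implication; the property is not a consequence of the CBE axioms (on the Lawvere quantale, the monotone lax morphism sending every $x<\infty$ to $0$ and $\infty$ to $\infty$ violates it). So, strictly speaking, both your proof and the paper's need a mild additional assumption on $\Pi$; it holds for the generators $n,\infty$ of Example~\ref{ex:change-of-base-functor}, for the scalings $c\cdot(-)$, and for every concrete CBE the paper uses. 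In short: there is no gap relative to the paper --- your proposal is, if anything, more explicit about the one step the paper leaves implicit.
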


\begin{proof}[Proof sketch]
By coinduction (and using Lemma \ref{lemma:kernel-lemma}) one shows 
that $\varphi \acts \vsim$ is an 
applicative $\vrelatortwo_\vrelator$-simulation and that 
$\psi \acts \preceq$ is an applicative $\vrelator$-simulation. 
A detailed proof is given 
in Appendix \ref{appendix:proofs-behavioural-v-relations}.
\end{proof}

Note that if $\relone_\typeone(\valone,\valtwo)$ holds, then so does 
$(\varphi \comp \baseone \comp \psi) \acts \relone_\typeone(\valone,\valtwo)$, 
but the vice-versa does not necessarily hold. For instance, 
taking $\baseone \defeq 0$ we see that 
$$
(\varphi \comp 0 \comp \psi) \acts \relone_\typeone(\valone, \valtwo) = 
\varphi(0(\psi(\false))) = \varphi(0 \cdot \infty) = \varphi(0) = \true,
$$
which essentially means we identify distinguishable values if they are 
not used. 
Nonetheless, the reader should notice that 
the encoding of a `standard' $\lambda$-calculus $\Lambda$ in $\quantale$-\Fuzz\ can 
be obtained via the usual encoding of $\Lambda$ in its linear refinement 
\cite{Wadler/CbN-CbV-linear-lambda-calculus/1999} which 
corresponds to the fragment of $\quantale$-\Fuzz\ based on CBEs 
$\baseid$ and $\infty$, thus avoiding the above undesired 
result. 

Finally, we introduce the notion of \emph{compatibility} 
which captures a form of Lipshitz-continuity with respect to $\quantale$-\Fuzz\ 
constructors. It is useful to follow \cite{Lassen/PhDThesis} and 
define compatibility via the notion of \emph{compatible refinement}.

\begin{definition}\label{def:compatible-refinement}
  The \emph{compatible refinement} $\refine{\vrelone}$ of 
  an open $\lambda$-term $\quantale$-relation $\vrelone$ is defined by:
  \begin{align*}
  (\envone \imp \refine{\vrelone}(\termone, \termtwo): \typeone) 
  &\defeq \join \{\qvalone \mid \envone \howeimp \qvalone 
  \leq \refine{\vrelone}(\termone, \termtwo): \typeone\},
  \\
  (\envone \valimp \refine{\vrelone}(\valone, \valtwo): \typeone) 
  &\defeq \join \{\qvalone \mid \envone \howeimpval \qvalone \leq 
  \refine{\vrelone}(\valone, \valtwo): \typeone\},
  \end{align*}
  where judgments  
  $\envone \howeimp \qvalone \leq \refine{\vrelone}(\termone, \termtwo): \typeone$ 
  and 
  $\envone \howeimpval \qvalone \leq \refine{\vrelone}(\valone, \valtwo): \typeone$
  are inductively defined for $\qvalone \in \quantale$, 
  $\envone \compimp \termone, \termtwo : \typeone$, and 
  $\envone \valimp \valone, \valtwo : \typeone$  by 
  rules in Figure \ref{fig:compatible-refinement}.
  We say that $\vrelone$ is compatible if $\refine{\vrelone} \leq \vrelone$.
\end{definition}

\begin{figure*}[htbp]
\hrule 

\vspace{0.2cm}

\(
\infer
  {\envone, \varone :_\baseone \typeone \refineimp 
  \qunit \leq \refine{\vrelone}(\varone, \varone): 
  \typeone
  }{}
\qquad
\infer
  {\qop(\envone_1, \hh, \envone_n) \refineimp 
  \qop(\qvalone_1, \hh, \qvalone_n) \leq 
  \refine{\vrelone}(\op(\termone_1, \hh, \termone_n), 
  \op(\termone_1, \hh, \termone_n)): \typeone
  }
  {
  \qvalone_1 \leq \envone_1 \imp \vrelone(\termone_1, \termtwo_1) : \typeone
  &
  \cc
  &
  \qvalone_n \leq \envone_n \imp \vrelone(\termone_n, \termtwo_n) : \typeone
  }
\)

\vspace{0.2cm}

\(
\infer
  {\envone \refineimpval \qvalone \leq \refine{\vrelone}
  (\abs{\varone}{\termone}, \abs{\varone}{\termtwo}): 
  \typeone \mmap \typetwo
  }
  {
  \qvalone \leq \envone, \varone :_{\baseid} \typeone \imp 
  \vrelone(\termone, \termtwo): \typetwo
  }
\qquad
\infer
  {\envone \tensor \envtwo \refineimp \qvalone \tensor \qvaltwo  
  \leq \refine{\vrelone}(\valone \valtwo, \valone' \valtwo') : \typetwo
  }
  {
  \qvalone \leq \envone \valimp \vrelone(\valone, \valone'): 
  \typeone \mmap \typetwo
  &
  \qvaltwo \leq \envtwo \valimp \vrelone(\valtwo, \valtwo') : \typeone
  }
\)

\vspace{0.2cm}

\(
\infer
  {\envone \refineimpval \qvalone \leq 
  \refine{\vrelone}(\inject{\hat \imath}{\valone}, \inject{\hat \imath}{\valtwo}): 
  \sumtype{i \in I}{\typeone_i}
  }
  {
  \qvalone \leq \envone \valimp 
  \vrelone(\valone, \valtwo): 
  \typeone_{\hat \imath}
  }
\qquad
\infer
  {\baseone \comp \envone \tensor \envtwo 
  \refineimp \baseone(\qvalone) \tensor \qvaltwo_{\hat \imath} 
  \leq \refine{\vrelone}
  (\casesum{\inject{\hat \imath}{\valone}}{\termone_i}, 
  \casesum{\inject{\hat \imath}{\valtwo}}{\termtwo_i}): 
  \typetwo
  }
  {
  \qvalone \leq \envone \valimp \vrelone
  (\inject{\hat \imath}{\valone},\inject{\hat \imath}{\valtwo}): 
  \sumtype{i \in I}{\typeone_i}
  &
  \qvaltwo_i \leq \envtwo, \varone:_{\baseone_i} \typeone_i \imp 
  \leq \vrelone(\termone_i, \termtwo_i): \typetwo
  &
  (\forall i \in I)
  }
\)

\vspace{0.2cm}
\(
\infer
  {\envone \refineimp \qvalone \leq 
  \refine{\vrelone}(\return{\valone}, \return{\valtwo}): \typeone
  }
  {
  \qvalone \leq \envone \valimp \vrelone(\valone, \valtwo): \typeone
  }
\qquad
\infer
  {(\baseone \wedge \baseid) \comp \envone \tensor \envtwo \refineimp 
  (\baseone \wedge \baseid)(\qvalone) \tensor \qvaltwo  
  \leq \refine{\vrelone}(\seq{\termone}{\termtwo}, \seq{\termone'}{\termtwo'}) 
  : \typetwo
  }
  {
  \qvalone \leq \envone \imp \vrelone
  (\termone, \termone'): \typeone
  &
  \qvaltwo \leq \envtwo, \varone :_{\baseone} \typeone \imp  
  \vrelone(\termtwo', \termtwo') : \typetwo
  }
\)

\vspace{0.2cm}

\(
\infer
  {\baseone \comp \envone \refineimpval \baseone(\qvalone) 
  \leq \vrelone
  (\bang \valone, \bang \valtwo): \bang_{\baseone} \typeone
  }
  {
  \qvalone \leq \envone \refineimp 
  \vrelone(\valone, \valtwo): \typeone
  }
\qquad
\infer
  {\baseone \comp \envone \tensor \envtwo \refineimp 
  \baseone(\qvalone) \tensor \qvaltwo \leq 
  \refine{\vrelone}
  (\pmbang{\valone}{\termone}, \pmbang{\valtwo}{\termtwo}): \typetwo
  }
  {
  \qvalone \leq \envone \valimp \vrelone(\valone, \valtwo): 
  \bang_{\basetwo} \typeone
  &
  \qvaltwo \leq \envtwo, \varone :_{\baseone \comp \basetwo} \typeone \imp 
  \vrelone(\termone, \termtwo): \typetwo
  }
\)

\vspace{0.2cm}

\(
\infer
  {\envone \refineimpval \qvalone \leq \refine{\vrelone}
  (\fold{\valone}, \fold{\valtwo}) : \recType{\typeVar}{\typeone}
  }
  {
  \qvalone \envone \valimp \vrelone(\valone, \valtwo): 
  \substType{\typeone}{\typeVar}{\recType{\typeVar}{\typeone}}
  }
\qquad
\infer
  {\baseone \comp \envone \tensor \envtwo 
  \refineimp \baseone(\qvalone) \tensor 
  \qvaltwo \leq \refine{\vrelone}
  (\pmfold{\valone}{\termone}, \pmfold{\valtwo}{\termtwo}):\typetwo
  }
  {
  \qvalone \leq \envone \valimp \vrelone(\valone, \valtwo): 
  \recType{\typeVar}{\typeone}
  &
  \qvaltwo \leq \envtwo, \varone :_{\baseone} \substType{\typeone}
  {\typeVar}{\recType{\typeVar}{\typeone}}
  \imp \qvaltwo \leq \vrelone(\termone, \termtwo): \typetwo
  }
\)

\vspace{0.2cm}

\hrule
\caption{Compatible refinement.}
\label{fig:compatible-refinement}
\end{figure*}

It is easy to see that if $\vrelone$ is compatible, then it satisfies 
inequalities in Figure \ref{fig:compatibility-clauses}. Actually, 
$\vrelone$ is compatible precisely if it satisfies the 
inequalities in Figure \ref{fig:compatibility-clauses}.

\begin{figure*}
\hrule \text{ }\\
\begin{align*}
\qunit &\leq (\envone \valimp \vrelone(\varone, \varone): \typeone)
\\
  \envone, \varone :_{\baseid} \typeone \compimp 
  \vrelone(\termone, \termtwo): \typetwo
& \leq 
  \envone \valimp \vrelone(\abs{\varone}{\termone}, 
    \abs{\varone}{\termtwo}): \typeone \mmap \typetwo
\\
(\envone \valimp \vrelone(\valone, \valone') :
  \typeone \mmap \typetwo) \tensor 
  (\envtwo \valimp \vrelone(\valtwo, \valtwo'): \typeone)
& \leq
  (\envone \tensor \envtwo \compimp \vrelone
    (\valone \valtwo, \valone' \valtwo'): \typetwo)
\\
\envone \valimp \vrelone(\valone, \valtwo): \typeone_{\hat \imath} 
&\leq 
  \envone \valimp \vrelone(\inject{\hat \imath}{\valone}, 
  \inject{\hat \imath}{\valtwo}): \sumtype{i \in I} \typeone_i
\\
\baseone \acts (\envone \valimp 
\vrelone(\inject{\hat \imath}{\valone}, \inject{\hat \imath}{\valtwo}):
 \sumtype{i \in I} \typeone_i) \tensor 
(\envtwo, \varone :_{\baseone} \typeone \compimp 
\vrelone(\termone_{\hat \imath}, \termtwo_{\hat \imath}): \typetwo) 
& \leq
\baseone \comp \envone \tensor \envtwo \compimp
\vrelone(\casesum{\inject{\hat \imath}{\valone}}{\termone_i}, 
        \casesum{\inject{\hat \imath}{\valtwo}}{\termtwo_i}): \typetwo
\\
\envone \valimp \vrelone(\valone, \valtwo): \typeone
& \leq 
  \envone \compimp \vrelone(\return{\valone}, \return{\valtwo}): 
  \typeone
\\
  (\baseone \wedge \baseid) \acts 
  (\envone \compimp \vrelone(\termone, \termone'): \typeone) 
  \tensor 
  (\envtwo, \varone :_{\baseone} \typeone \compimp \vrelone
  (\termtwo, \termtwo'): \typetwo )
 &\leq 
  (\baseone \wedge \baseid) \comp \envone \tensor \envtwo \compimp 
  \vrelone(\seq{\termone}{\termtwo}, \seq{\termone'}{\termtwo'}): \typetwo
\\
  \baseone \acts (\envone \valimp \vrelone(\valone, \valtwo): \typeone) 
 & \leq 
  \baseone \comp \envone \valimp \vrelone(\bang{\valone}, \bang{\valtwo}):
  \bang_{\baseone} \typeone
  \\
  \baseone \acts (\envone \valimp \vrelone(\valone, \valtwo):
  \bang_{\basetwo} \typeone) \tensor 
  (\envtwo, \varone :_{\baseone \comp \basetwo} \typeone \compimp 
  \vrelone(\termone, \termtwo): \typetwo) 
&\leq
  \baseone \comp \envone \tensor \envtwo \compimp
  \vrelone(\pmbang{\valone}{\termone}, \pmbang{\valtwo}{\termtwo}): \typetwo
\\
\envone \valimp \vrelone(\valone, \valtwo): 
  \substType{\typeone}{\typeVar}{\recType{\typeVar}{\typeone}}
&\leq 
  \envone \valimp \vrelone(\fold{\valone}, \fold{\valtwo}): 
  \recType{\typeVar}{\typeone}
  \\
\baseone \acts (\envone \valimp \vrelone(\valone, \valtwo):
  \recType{\typeVar}{\typeone}) 
  \tensor 
  (\envtwo, \varone :_{\baseone} 
  \substType{\typeone}{\typeVar}{\recType{\typeVar}{\typeone}} 
  \compimp \vrelone(\termone, \termtwo): \typetwo) 
&\leq
  \baseone \comp \envone \tensor \envtwo \compimp
  \vrelone(\casefold{\valone}{\termone}, \casefold{\valtwo}{\termtwo}): \typetwo
\\
\qop(\envone_1 \compimp \vrelone(\termone_1, \termtwo_1): \typeone 
,\hh, 
  \envone_n \compimp \vrelone(\termone_n, \termtwo_n): \typeone)
&\leq 
  \qop(\envone_1, \hh, \envone_n) \compimp
  \vrelone(\op(\termone_1, \hh, \termone_n), 
  \op(\termtwo_1, \hh, \termtwo_n)): \typeone
\end{align*}
\hrule
\caption{Compatibility clauses.}
\label{fig:compatibility-clauses}
\end{figure*}

Notice that in the clause for sequential composition the presence of 
$\baseone \wedge \baseid$, instead of $\baseone$, ensures that 
for terms like $\termone \defeq \seq{I}{\numeral{0}}$ and 
$\termone' \defeq \seq{\Omega}{\numeral{0}}$, 
the distance $\vrelone(\termone, \termone')$ is determined 
\emph{before} sequencing (which captures the 
idea that although $\numeral{0}$ will not `use' any input, $I$ and 
$\Omega$ will be still evaluated, thus producing observable differences 
between $\termone$ and $\termone'$). In fact, if we replace 
$\baseone \wedge \baseid$ with $\baseone$, then by taking 
$\baseone \defeq 0$ compatibility would imply 
$\vrelone(\termone, \termone') = \qunit$, which is clearly unsound.

In order to make applicative $\vrelator$-similarity a useful tool, 
we need it to allow compositional reasoning about programs. Formally, 
that amount to prove that applicative $\vrelator$-similarity is 
compatible.

\section{Howe's Method}
\label{section:howe-method}

To prove compatibility of applicative $\vrelator$-similarity 
we design a generalisition of the so-called 
Howe's method \cite{Howe/IC/1996} combining and extending 
ideas from 
\cite{CrubilleDalLago/LICS/2015} and \cite{DalLagoGavazzoLevy/LICS/2017}. 
We start by defining the notion of \emph{Howe's extension}, 
a construction extending a $\lambda$-term $\quantale$-open relation 
to a compatible and substitutive $\lambda$-term $\quantale$-relation.

\begin{definition}[Howe's extension (1)]\label{def:howe-extension-one}
  The \emph{Howe's extension} $\howe{\vrelone}$ of 
  an open $\lambda$-term $\quantale$-relation
  $\vrelone$ 
  is defined as the least solution to the equation 
  $\vreltwo = \vrelone \comp \refine{\vreltwo}$.
\end{definition}

It is easy to see that compatible refinement $\refine{-}$ is monotone, 
and thus so is the map $\Phi_{\vrelone}$ defined by 
$\Phi_{\vrelone}(\vreltwo) \defeq \vrelone \comp \refine{\vreltwo}$. 
As a consequence, we can define $\howe{\vrelone}$ as the least fixed point 
of $\Phi_{\vrelone}$. Since open extension $\open{-}$ is monotone as well, 
we can define the Howe's extension of a closed $\lambda$-term $\quantale$-relation 
$\vrelone$ as $\howe{(\open{\vrelone})}$.

It is also useful to spell out the above definition.

\begin{definition}[Howe's extension (2)]\label{def:howe-extension-two}
  The \emph{Howe's extension} $\howe{\vrelone}$ of 
  an open $\lambda$-term $\quantale$-relation $\vrelone$ is defined by:
  \begin{align*}
  (\envone \imp \howe{\vrelone}(\termone, \termtwo): \typeone) 
  &\defeq \join \{\qvalone \mid \envone \howeimp \qvalone 
  \leq \howe{\vrelone}(\termone, \termtwo): \typeone\},
  \\
  (\envone \valimp \howe{\vrelone}(\valone, \valtwo): \typeone) 
  &\defeq \join \{\qvalone \mid \envone \howeimpval \qvalone \leq 
  \howe{\vrelone}(\valone, \valtwo): \typeone\},
  \end{align*}
  where judgments  
  $\envone \howeimp \qvalone \leq \howe{\vrelone}(\termone, \termtwo): \typeone$ 
  and 
  $\envone \howeimpval \qvalone \leq \howe{\vrelone}(\valone, \valtwo): \typeone$
  are inductively defined for $\qvalone \in \quantale$, 
  $\envone \compimp \termone, \termtwo : \typeone$, and 
  $\envone \valimp \valone, \valtwo : \typeone$  by 
  rules in Figure \ref{fig:howe-extension}.
\end{definition}

\begin{figure*}[htbp]
\hrule 
\vspace{0.2cm}

\(
\infer[\ruleHVar]
  {\envone, \varone :_{\baseone} \typeone \howeimp 
  \qvalone \leq \howe{\vrelone}(\varone, \valtwo): 
  \typeone
  }
  {
  \qvalone \leq \envone, \varone :_{\baseone} \typeone 
  \valimp \vrelone(\varone, \valtwo): \typeone
  }
\)

\vspace{0.2cm}

\(
\infer[\ruleHAbs]
  {\envone \howeimp \qvalone \tensor 
  \qvalthree \leq \howe{\vrelone}
  (\abs{\varone}{\termone}, \termtwo): 
  \typeone \mmap \typetwo
  }
  {
  \envone, \varone :_{\baseid} \typeone \howeimp 
  \qvalone \leq \howe{\vrelone}(\termone, \termthree): 
  \typetwo
  &
  \qvalthree \leq \envone \imp \vrelone
  (\abs{\varone}{\termthree}, \termtwo): 
  \typeone \mmap \typetwo
  }
\)

\vspace{0.2cm}

\(
\infer[\ruleHApp]
  {\envone \tensor \envtwo \howeimp \qvalone 
  \tensor \qvaltwo \tensor \qvalthree 
  \leq \howe{\vrelone}(\valone \valtwo, \termtwo) : \typetwo
  }
  {
  \envone \howeimp \qvalone \leq 
  \howe{\vrelone}(\valone, \valone'): 
  \typeone \mmap \typetwo
  &
  \envtwo \howeimp \qvaltwo \leq 
  \howe{\vrelone}(\valtwo, \valtwo') : \typeone
  &
  \qvalthree \leq \envone \tensor \envtwo \imp
  \vrelone(\valone' \valtwo', \termtwo):\typetwo
  }
\)

\vspace{0.2cm}

\(
\infer[\ruleHInject]
  {\envone \howeimpval \qvalone \tensor \qvaltwo \leq 
  \vrelone(\inject{\hat \imath}{\valone}, \valthree): 
  \sumtype{i \in I}{\typeone_i}
  }
  {
  \envone \howeimpval \qvalone \leq 
  \howe{\vrelone}(\valone, \valtwo): 
  \typeone_{\hat \imath}
  &
  \qvaltwo \leq \envone \valimp 
  \vrelone(\inject{\hat \imath}{\valtwo}, \valthree): 
  \sumtype{i \in I}{\typeone_i}
  }
\)

\vspace{0.2cm}

\(
\infer[\ruleHCaseSum]
  {\baseone \comp \envone \tensor \envtwo 
  \howeimp \baseone(\qvalone) \tensor \qvaltwo_{\hat \imath} 
  \tensor \qvalthree \leq \howe{\vrelone}
  (\casesum{\inject{\hat \imath}{\valone}}{\termone_i}, \termthree): 
  \typetwo
  }
  {
  \envone \howeimpval \qvalone \leq \howe{\vrelone}
  (\inject{\hat \imath}{\valone},\inject{\hat \imath}{\valtwo}): 
  \sumtype{i \in I}{\typeone_i}
  &
 \forall i \in I.\  \envtwo, \varone:_\baseone \typeone_{i} \howeimp 
  \qvaltwo_{i} \leq 
  \howe{\vrelone}(\termone_{i}, \termtwo_{i}): \typetwo
  &
  \qvalthree \leq \baseone \comp \envone \tensor \envtwo \compimp 
  \vrelone(\casesum{\inject{\hat \imath}{\valtwo}}{\termtwo_i}, \termthree): 
  \typetwo
  }
\)

\vspace{0.2cm}

\(
\infer[\ruleHReturn]
  {\envone \howeimp \qvalone \tensor \qvalthree \leq 
  \howe{\vrelone}(\return{\valone}, \termtwo): \typeone
  }
  {
  \envone \howeimp \qvalone \leq \howe{\vrelone}
  (\valone, \valtwo): \typeone
  &
  \qvalthree \leq \envone \imp 
  \vrelone(\return{\valtwo}, \termtwo): \typeone
  }
\)

\vspace{0.2cm}

\(
\infer[\ruleHSeq]
  {(\baseone \wedge \baseid) \comp \envone \tensor \envtwo \howeimp 
  (\baseone \wedge \baseid)(\qvalone) \tensor \qvaltwo \tensor \qvalthree 
  \leq \howe{\vrelone}(\seq{\termone}{\termone'}, \termtwo) : \typetwo
  }
  {
  \envone \howeimp \qvalone \leq \howe{\vrelone}
  (\termone, \termthree): \typeone
  &
  \envtwo, \varone :_{\baseone} \typeone \howeimp \qvaltwo \leq 
  \howe{\vrelone}(\termone', \termthree') : \typetwo
  &
  \qvalthree \leq (\baseone \wedge \baseid) \comp \envone \tensor \envtwo \imp 
  \vrelone(\seq{\termthree}{\termthree'}, \termtwo): \typetwo
  }
\)

\vspace{0.2cm}

\(
\infer[\ruleHBang]
  {\baseone \comp \envone \howeimp \baseone(\qvalone) 
  \tensor \qvalthree \leq \howe{\vrelone}
  (\bang \valone, \valfour): \bang_{\baseone} \typeone
  }
  {
  \envone \howeimp \qvalone \leq 
  \howe{\vrelone}(\valone, \valtwo): \typeone
  &
  \qvalthree \leq \baseone \comp \envone \imp \vrelone
  (\bang \valtwo, \valfour): \bang_{\baseone} \typeone
  }
\)

\vspace{0.2cm}

\(
\infer[\ruleHPmBang]
  {\baseone \comp \envone \tensor \envtwo \howeimp 
  \baseone(\qvalone) \tensor \qvaltwo \tensor \qvalthree \leq 
  \howe{\vrelone}(\pmbang{\valone}{\termone}, \termtwo): \typetwo
  }
  {
  \envone \howeimp \qvalone \leq \howe{\vrelone}(\valone, \valtwo): 
  \bang_{\basetwo} \typeone
  &
  \envtwo, \varone :_{\baseone \comp \basetwo} \typeone \howeimp 
  \qvaltwo \leq \howe{\vrelone}(\termone, \termthree): \typetwo
  &
  \qvalthree \leq \baseone \comp \envone \tensor \envtwo 
  \imp \vrelone(\pmbang{\valtwo}{\termthree}, \termtwo): \typetwo
  }
\)

\vspace{0.2cm}

\(
\infer[\ruleHPmFold]
  {\baseone \comp \envone \tensor \envtwo 
  \howeimp \baseone(\qvalone) \tensor 
  \qvaltwo \tensor \qvalthree \leq \howe{\vrelone}
  (\pmfold{\valone}{\termone}, \termtwo):\typetwo
  }
  {
  \envone \howeimp \qvalone \leq 
  \howe{\vrelone}(\valone, \valtwo): 
  \recType{\typeVar}{\typeone}
  &
  \envtwo, \varone :_{\baseone} \substType{\typeone}
  {\typeVar}{\recType{\typeVar}{\typeone}}
  \howeimp \qvaltwo \leq \howe{\vrelone}
  (\termone, \termthree): \typetwo
  &
  \qvalthree \leq \baseone \comp \envone \tensor \envtwo \imp 
  \vrelone(\pmfold{\valtwo}{\termthree}, \termtwo): \typetwo
  }
\)

\vspace{0.2cm}

\(
\infer[\ruleHFold]
  {\envone \howeimp \qvalone \tensor \qvalthree \leq \howe{\vrelone}
  (\fold{\valone}, \valfour) : \recType{\typeVar}{\typeone}
  }
  {
  \envone \howeimp \qvalone \leq \howe{\vrelone}(\valone, \valtwo): 
  \substType{\typeone}{\typeVar}{\recType{\typeVar}{\typeone}}
  &
  \qvalthree \leq \envone \imp \vrelone(\fold{\valtwo}, \valfour):
  \recType{\typeVar}{\typeone}
  }
\)

\vspace{0.2cm}

\(
\infer[\ruleHOp]
  {\qop(\envone_1, \hh, \envone_n) \howeimp 
  \qop(\qvalone_1, \hh, \qvalone_n) \tensor 
  \qvalthree \leq \howe{\vrelone}(\op(\termone_1, \hh, \termone_n), 
  \termtwo): \typeone
  }
  {
  \forall i \leq n.\ \envone_i \howeimp \qvalone_i \leq 
  \howe{\vrelone}(\termone_i, \termthree_i) : \typeone
  &
  \qvalthree \leq \qop(\envone_1. \hh, \envone_n) 
  \imp \vrelone(\op(\termthree_1, \hh, \termthree_n), \termtwo): 
  \typeone
  }
\)

\vspace{0.2cm}

\hrule
\caption{Howe's extension.}
\label{fig:howe-extension}
\end{figure*}

The next lemma (whose proof is given in Appendix \ref{appendix:proofs-howe-method}) 
is useful for proving properties of Howe's extension. 
It states that $\howe{\vrelone}$ attains 
its value via the rules in Figure \ref{fig:howe-extension}.

\begin{restatable}{lemma}{howeOptimalValue}
\label{lemma:howe-optimal-value}
The following hold:
\begin{varenumerate}
\item Given well-typed values 
  $\envone \valimp \valone, \valtwo: \typeone$, 
  let 
  $$
  A \defeq \{\qvalone \mid \envone \howeimpval 
  \qvalone \leq \howe{\vrelone}(\valone, \valtwo): \typeone\}
  $$ 
  be non-empty. Then
  $\envone \howeimpval \join A \leq \howe{\vrelone}(\valone, \valtwo)$
  is derivable.
\item Given well-typed terms 
  $\envone \compimp \termone, \termtwo: \typeone$, 
  let 
  $$A \defeq \{\qvalone \mid \envone \howeimpcomp 
  \qvalone \leq \howe{\vrelone}(\termone, \termtwo): 
  \typeone\}
  $$ 
  be non-empty. 
  Then $\envone \howeimpcomp \join A \leq 
  \howe{\vrelone}(\termone, \termtwo)$ is derivable.
\end{varenumerate} 
\end{restatable}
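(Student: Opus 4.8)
The plan is to prove parts (1) and (2) simultaneously, by induction on the syntactic structure of the left-hand expression ($\valone$ or $\termone$), since it is exactly this expression that selects the applicable rule in Figure~\ref{fig:howe-extension}. Fixing a pair and writing $A$ for its set of certifiable values, it suffices to show that $A$ is downward closed and closed under arbitrary non-empty joins, for then $\join A \in A$, which is precisely the claim. Downward closure is immediate for the base rule and propagates through the remaining rules because every premise is itself a judgment of the shape $\qvalone \leq (\cdots)$; closure under joins is the substantive part and is where the quantale axioms enter. The two judgments must be carried through the induction together, since the value rules and term rules are mutually referential (for instance $\ruleHReturn$ moves from a value judgment to a term judgment, and $\ruleHAbs$ in the other direction).

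For the base case $\ruleHVar$ the set $A$ is the principal downset $\{\qvalone \mid \qvalone \leq \vrelone(\varone,\valtwo)\}$, whose join $\vrelone(\varone,\valtwo)$ is certified by $\ruleHVar$ itself. In each inductive case the conclusion value is a \emph{monotone} combination---built from $\tensor$, a CBE $\baseone$, and an operation $\qop$---of values $\qvalone_j$ arising from Howe sub-derivations on structurally smaller expressions together with a value $\qvalthree \leq \vrelone(\cdots)$ coming from the base relation. First I would invoke the induction hypothesis to replace each $\qvalone_j$ by the attained join of its own certifiable-value set, and use that the base component ranges over a principal downset, so its contribution is exactly $\vrelone(\cdots)$. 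Pushing the join through the combination then requires only distributivity of $\tensor$ over arbitrary joins (a quantale axiom) together with monotonicity of $\baseone$ and $\qop$ guaranteed by Definition~\ref{def:change-of-base-functor} and Definition~\ref{def:signature-quantale}; crucially I never need a CBE to \emph{preserve} joins, since it is only ever applied to an already-attained supremum, so laxity is no obstruction. A single application of the relevant rule with these supremal premises then certifies $\join A$.

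The main obstacle is the existentially quantified intermediate expression (the term $\termthree$ in $\ruleHAbs$, $\ruleHApp$, and the elimination rules), which reflects the middle object of the composition $\Phi_{\vrelone}(\vreltwo) = \vrelone \comp \refine{\vreltwo}$ defining $\howe{\vrelone}$ (Definition~\ref{def:howe-extension-one}): the value of a conclusion is a join not only over premise values but also over all choices of $\termthree$, and this outer join need not be attained at any single $\termthree$ (already over the Lawvere quantale it is an infimum over an infinite set). The resolution is to observe that this join is not an extra limit to be attained but is already the defining join of $\howe{\vrelone}$, exposed by the fact that composition in the quantaloid $\vrel$ preserves joins in each argument. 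Concretely, the induction hypothesis says the rule system computes $\howe{\vrelone}$ at every sub-expression; one unfolding at the pair $(\valone,\valtwo)$ therefore yields exactly $\Phi_{\vrelone}(\howe{\vrelone})(\valone,\valtwo)$, and since $\howe{\vrelone}$ is a fixed point of $\Phi_{\vrelone}$ this equals $\howe{\vrelone}(\valone,\valtwo) = \join A$. Thus the supremum over intermediate expressions is absorbed into the defining composition, and the rule-computed value attains $\join A$ by construction.
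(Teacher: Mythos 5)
Your overall architecture coincides with the paper's: a simultaneous induction on the structure of the left-hand expression (term and value judgments being mutually referential), inversion of the unique rule of Figure~\ref{fig:howe-extension} that can conclude the judgment, application of the induction hypothesis to the premise sets, and recombination of the attained suprema using monotonicity of $\tensor$, of CBEs, and of $\qop$. Your base case, and your observation that CBEs need only be monotone---since they are only ever applied to suprema that the induction hypothesis has already realised as members of the premise sets---are correct and match the paper's argument.

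The genuine gap is exactly where you place the ``main obstacle'', and your resolution of it does not work. First, it is circular: identifying the rule-based Definition~\ref{def:howe-extension-two} with the least fixed point of $\Phi_{\vrelone}(\vreltwo) = \vrelone \comp \refine{\vreltwo}$ from Definition~\ref{def:howe-extension-one} is precisely what the paper establishes \emph{after} Lemma~\ref{lemma:howe-optimal-value} (and with its help); while proving the lemma, only the inductive rule system is available. Second, and more seriously, even granting that identification, your argument proves the wrong statement: it yields an equality of quantale elements, $\join A = \Phi_{\vrelone}(\howe{\vrelone})(\valone,\valtwo)$, whereas the lemma asserts membership, $\join A \in A$, i.e.\ the existence of a \emph{single derivation} certifying $\join A$. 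Every derivation ends with one rule instance, which commits to specific intermediate expressions (the $\termthree$, $\valone'$, $\valtwo'$ of $\ruleHAbs$, $\ruleHApp$, $\ruleHSeq$, \ldots) and one tuple of premise values; so if the join over intermediate expressions is a strict supremum---the very possibility you concede when you say it ``need not be attained at any single $\termthree$'', e.g.\ an infimum over infinitely many candidate terms in the Lawvere quantale---then no derivation reaches $\join A$ and the conclusion fails outright. No join-preservation property of composition in $\vrel$ can manufacture the missing witness: saying the supremum is ``absorbed into the defining composition'' merely re-describes $\join A$, it does not derive it. Closing the gap requires arguing that the supremum over intermediate expressions \emph{is} attained at a single choice of intermediates, and this is what the paper does, if brusquely: it inverts the rule on the derivation witnessing $A \neq \emptyset$, freezes the intermediate terms occurring there, applies the rule once with the supremal premise values $\join B$, $\join C$ and the full base-relation value $\underline{\qvalfour}$, and argues that every element of $A$ is bounded by the resulting derivable value. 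You are right that the paper's upper-bound step tacitly treats all derivations as sharing those frozen intermediates---you spotted a real subtlety---but flagging the issue is not fixing it, and unlike the paper your proof exhibits no derivation of $\join A$ at all.
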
 

It is easy to see that Definition \ref{def:howe-extension} and 
\ref{def:howe-extension-two} gives the same $\lambda$-term 
$\quantale$-relation. In particular, 
for an open $\lambda$-term $\quantale$-relation $\vrelone$, 
$\howe{\vrelone}$ is the least compatible open $\lambda$-term $\quantale$-relation
satisfying the inequality $\vrelone \comp \vreltwo \leq \vreltwo$.

The following are standard results on Howe's extension. Proofs are 
straightforward but tedious (they closely resemble their relational 
counterparts), and thus are omitted.

\begin{lemma}\label{lemma:properties-howe-extension}
Let $\vrelone$ be a reflexive and transitive open 
$\lambda$-term $\quantale$-relation. 
Then the following hold:
\begin{varenumerate}
  \item $\howe{\vrelone}$ is reflexive.
  \item $\vrelone \leq \howe{\vrelone}$.
  \item $\vrelone \comp \howe{\vrelone} \leq \howe{\vrelone}$.
  \item $\howe{\vrelone}$ is compatible.
\end{varenumerate}
\end{lemma}
We refer to property 
$1$ as pseudo-transitivity. In particular, by very 
definition of $\quantale$-relator we also have 
$\vrelator \vrelone \comp \vrelator \howe{\vrelone} \leq \vrelator \howe{\vrelone}$. 
We refer to the latter property as $\vrelator$-pseudo-transitivity.
Notice that Proposition 
\ref{prop:applicative-v-similarity-is-reflexive-and-transitive} 
implies that $\howe{(\open{\vsim})}$ is compatible and bigger than $\open{\vsim}$. 

Finally, Howe's extension enjoys another remarkable property, namely substitutivity.

\begin{definition}\label{def:value-substitutive}
An open $\lambda$-term $\quantale$-relation $\vrelone$ is value substitutive if for all 
well-typed values 
$\envone, \varone :_{\baseone} \typeone \valimp \valone, \valtwo: 
\typetwo$, $\emptyset \valimp \valthree : \typeone$, and terms
$\envone, \varone :_{\baseone} \typeone \imp \termone, \termtwo: 
\typetwo$ we have:
\begin{align*}
(\envone, \varone :_{\baseone} \typeone \valimp \vrelone
(\valone, \valtwo): \typetwo) 
&\leq (\envone \imp \vrelone(\substval{\valone}{\valthree}{\varone}, 
\substval{\valtwo}{\valthree}{\varone}): \typetwo),
\\
(\envone, \varone :_{\baseone} \typeone \imp \vrelone
(\termone, \termtwo): \typetwo) 
&\leq (\envone \imp \vrelone(\substcomp{\termone}{\varone}{\valthree}, 
\substcomp{\termtwo}{\varone}{\valthree}): \typetwo).
\end{align*}
\end{definition}

\begin{restatable}[Substitutivity]{lemma}{substitutivityLemma} 
\label{lemma:substitutivity}
  Let $\vrelone$ be a value substitutive 
  $\lambda$-term $\quantale$-preorder. 
  For all values, 
  $\envone, \varone :_{\baseone} \typeone \valimp \valthree, \valfour: 
  \typetwo$ and $\emptyset \imp \valone, \valtwo : \typeone$, 
  and terms  $\envone, \varone :_{\baseone} \typeone \imp 
  \termone, \termtwo: \typetwo$, let 
  $\underline{\qvalone} \defeq 
  \emptyctx \valimp \howe{\vrelone}(\valone, \valtwo): \typeone$. Then:
  \begin{align*}
  (\envone, \varone :_{\baseone} \typeone \valimp 
  \howe{\vrelone}(\valthree, \valfour): \typetwo) 
  \tensor 
  \baseone(\underline{\qvalone})
  &\leq 
  \envone \valimp \howe{\vrelone}(\substval{\valthree}{\valone}{\varone}, 
  \substval{\valfour}{\valtwo}{\varone}): \typetwo,
  \\
  (\envone, \varone :_{\baseone} \typeone \imp 
  \howe{\vrelone}(\termone, \termtwo): \typetwo) 
  \tensor 
  \baseone(\underline{\qvalone}) 
  &\leq 
  \envone \imp \howe{\vrelone}(\substcomp{\termone}{\varone}{\valone}, 
  \substcomp{\termtwo}{\varone}{\valtwo}): \typetwo.
  \end{align*}
\end{restatable}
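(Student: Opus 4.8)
The plan is to establish both inequalities simultaneously by induction on the derivations of the Howe judgments of Figure~\ref{fig:howe-extension}. First I would reduce the statement about the join $\howe{\vrelone}(\valthree,\valfour)$ to a statement about single derivable witnesses: by Lemma~\ref{lemma:howe-optimal-value} the relevant witness sets are closed under joins, and since $\tensor$ distributes over arbitrary joins, it suffices to prove that whenever $\envone,\varone:_{\baseone}\typeone \howeimpval \qvalfour \leq \howe{\vrelone}(\valthree,\valfour):\typetwo$ is derivable one has $\qvalfour \tensor \baseone(\underline{\qvalone}) \leq (\envone \valimp \howe{\vrelone}(\substval{\valthree}{\valone}{\varone},\substval{\valfour}{\valtwo}{\varone}):\typetwo)$, and symmetrically for terms. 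The induction is mutual over values and terms; throughout, $\baseone$ denotes the sensitivity of the distinguished variable $\varone$ in the conclusion environment, and the essential bookkeeping is to follow how this sensitivity decomposes over the premises of each rule.

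The base case $\ruleHVar$ is the conceptual heart of the argument. When the left value is the distinguished variable, the premise gives $\qvalfour \leq (\envone,\varone:_{\baseone}\typeone \valimp \vrelone(\varone,\valfour):\typeone)$; value-substitutivity of $\vrelone$ (substituting $\valtwo$) yields $\qvalfour \leq \vrelone(\valtwo,\substval{\valfour}{\valtwo}{\varone})$, and the pseudo-transitivity inequality $\vrelone \comp \howe{\vrelone} \leq \howe{\vrelone}$ of Lemma~\ref{lemma:properties-howe-extension} composes this with $\underline{\qvalone}=\howe{\vrelone}(\valone,\valtwo)$ to give $\underline{\qvalone} \tensor \qvalfour \leq \howe{\vrelone}(\valone,\substval{\valfour}{\valtwo}{\varone})$. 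The variable typing rule forces $\baseone \leq \baseid$, whence $\baseone(\underline{\qvalone}) \leq \underline{\qvalone}$, and commutativity of $\tensor$ closes the case. This is precisely the point where the asymmetry between substituting $\valone$ on the left and $\valtwo$ on the right is absorbed into the Howe distance $\underline{\qvalone}$. For any other variable the left value is unchanged by substitution, and since the quantale is integral we have $\qvalfour \tensor \baseone(\underline{\qvalone}) \leq \qvalfour$, so the bound follows from value-substitutivity of $\vrelone$ and $\vrelone \leq \howe{\vrelone}$.

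For the inductive cases I would apply the induction hypothesis to each premise (with the sensitivity of $\varone$ restricted to the relevant sub-environment), re-apply the matching compatibility clause of Figure~\ref{fig:compatibility-clauses} (available because $\howe{\vrelone}$ is compatible, Lemma~\ref{lemma:properties-howe-extension}), feed in the substituted $\vrelone$-premise through pseudo-transitivity, and finally reconcile the accumulated change-of-base factors using the CBE algebra of Lemma~\ref{lemma:algebra-change-of-base-functors} together with the lax-morphism inequalities of Definition~\ref{def:quantale}. The representative computation occurs in rules that apply a CBE $\basetwo$ to an environment, such as $\ruleHBang$: there $\varone$ has sensitivity $\basetwo \comp \baseone_1$ in the conclusion, the induction hypothesis gives $\qvalone \tensor \baseone_1(\underline{\qvalone}) \leq \howe{\vrelone}(\cdots)$, and applying $\basetwo$ as a lax morphism and the bang-compatibility clause produces $\basetwo(\qvalone) \tensor (\basetwo \comp \baseone_1)(\underline{\qvalone}) \leq \howe{\vrelone}(\bang \cdots,\bang \cdots)$, where $(\basetwo \comp \baseone_1)(\underline{\qvalone}) = \baseone(\underline{\qvalone})$ is exactly the factor demanded by the conclusion. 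When a rule combines two premises (application, the case-constructs, sequencing), the distinguished sensitivity splits as a tensor $(\basetwo \comp \baseone_1) \tensor \baseone_2$, and the identity $((\basetwo \comp \baseone_1) \tensor \baseone_2)(\underline{\qvalone}) = (\basetwo \comp \baseone_1)(\underline{\qvalone}) \tensor \baseone_2(\underline{\qvalone})$ from Lemma~\ref{lemma:algebra-change-of-base-functors} keeps the two contributions aligned.

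I expect the main obstacle to be the coherent tracking of sensitivities in the environment-transforming rules, most delicately $\ruleHSeq$, where the conclusion modulates $\envone$ by $\basetwo \wedge \baseid$ rather than by $\basetwo$. Here one must use that $\basetwo \wedge \baseid$ is again a CBE (Lemma~\ref{lemma:algebra-change-of-base-functors}), hence a lax morphism, and check that pushing it through the first induction hypothesis and recombining with the second via the tensor-of-CBEs law reproduces exactly the factor $\baseone(\underline{\qvalone})$ forced by the conclusion environment. The remaining work is administrative: each intermediate bound must be exhibited as a genuinely derivable Howe witness (Lemma~\ref{lemma:howe-optimal-value}), and the $\alpha$-renaming that separates bound variables from the distinguished substitution variable $\varone$ must be maintained so that substitution commutes with the constructors throughout.
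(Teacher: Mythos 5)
Your proposal is correct and follows essentially the same route as the paper's own proof: the same mutual induction on derivations of the Howe judgments, the same two-subcase treatment of $\ruleHVar$ (value-substitutivity plus pseudo-transitivity and $\baseone \leq \baseid$ for the distinguished variable, integrality plus $\vrelone \leq \howe{\vrelone}$ otherwise), and the same pattern of induction hypothesis, compatibility, pseudo-transitivity, and CBE/lax-morphism bookkeeping in the inductive cases, including the delicate $\baseone \wedge \baseid$ accounting in $\ruleHSeq$. The only cosmetic difference is that you justify the reduction to single derivable witnesses explicitly via Lemma~\ref{lemma:howe-optimal-value} and join-distributivity, which the paper leaves implicit.
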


\begin{proof}
See Appendix \ref{appendix:proofs-howe-method}.
\end{proof}

Notice that the open extension of any closed $\lambda$-term $\quantale$-relation 
is value-substitutive. 
We can prove the main result of the Howe's method, the the so-called \emph{Key Lemma}. 
The latter states the Howe's extension of applicative $\vrelator$-similarity 
(restricted to closed terms/values) is an applicative $\vrelator$-simulation. 
By coinduction, we can conclude that $\vsim$ and $\howe{\vsim}$ 
(restricted to closed terms/values) coincide, meaning that the former is 
compatible. 

\begin{restatable}[Key Lemma]{lemma}{keyLemma}
\label{lemma:key-lemma}
  Let $\vrelone$ be a reflexive and transitive 
  applicative $\vrelator$-simulation. 
  Then the Howe's extension of $\vrelone$ restricted to closed 
  terms/values in an applicative $\vrelator$-simulation.
\end{restatable}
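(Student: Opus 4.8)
The Key Lemma asserts that the closed restriction of $\howe{\vrelone}$ is an applicative $\vrelator$-simulation, i.e. that $\howe{\vrelone} \leq [\howe{\vrelone}]$ holds on closed terms and values; this is the inequality I set out to prove (the coinductive consequence that $\vrelone$ and $\howe{\vrelone}$ agree on closed expressions, for $\vrelone = \vsim$, is drawn afterwards). By Lemma \ref{lemma:properties-howe-extension} I may freely use that $\howe{\vrelone}$ is reflexive, compatible, and $\vrelator$-pseudo-transitive (so in particular $\vrelator\vrelone \comp \vrelator\howe{\vrelone} \leq \vrelator\howe{\vrelone}$), and by Lemma \ref{lemma:substitutivity} that it is value substitutive; these are the three structural facts on which the argument rests. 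The proof splits into the value clauses and the single term clause of Definition \ref{def:v-applicative-simulation}, and throughout I use Lemma \ref{lemma:howe-optimal-value} to replace each instance $\qvalone \leq \howe{\vrelone}(\termone,\termtwo)$ by an optimal derivation in Figure \ref{fig:howe-extension} realising its value.

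For the value clauses I would proceed by case analysis on the type, decomposing $\howe{\vrelone}(\valone,\valtwo)$ through the corresponding Howe rule. In each case the final rule exposes the outer constructor of $\valone$ together with an intermediate value $\valthree$ satisfying $\howe{\vrelone}$ on the immediate subterms and $\vrelone$ between the reconstructed constructor and $\valtwo$. One then applies the hypothesis that $\vrelone$ is a simulation to the $\vrelone$-component, unfolding the relevant clause of $[\vrelone]$, and absorbs the result using pseudo-transitivity $\vrelone \comp \howe{\vrelone} \leq \howe{\vrelone}$. For the function clause this additionally requires value substitutivity (Lemma \ref{lemma:substitutivity}) to descend under the binder, turning $\howe{\vrelone}$ on bodies into $\howe{\vrelone}$ on applications to an arbitrary argument $\valthree \in \values_\typeone$; the $\bang_\baseone$ clause uses that the CBE $\baseone$ is a lax quantale morphism, so that it commutes past $\tensor$ as required.

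The term clause $\toterm{\howe{\vrelone}}(\termone,\termtwo) \leq \vrelator\toval{\howe{\vrelone}}(\sem{\termone},\sem{\termtwo})$ is the crux, and is where the operational semantics enters. Since $\sem{\termone} = \lub_{n} \sem{\termone}_n$ and $\vrelator$ is \emph{inductive}, it suffices to establish, for every $n < \omega$,
\[
\toterm{\howe{\vrelone}}(\termone,\termtwo) \leq \vrelator\toval{\howe{\vrelone}}(\sem{\termone}_n,\sem{\termtwo}),
\]
which I prove by induction on $n$. The base case $n = 0$ is immediate: $\sem{\termone}_0 = \mbot$, and $\signature$-compatibility of $\vrelator$ yields $\qunit \leq \vrelator\toval{\howe{\vrelone}}(\mbot,\sem{\termtwo})$, so the bound holds since $\qunit = \qtop$. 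For the inductive step I case split on the shape of $\termone$, following the defining clauses of $\sem{-}_{n+1}$ in Figure \ref{fig:approximation-semantics}, decompose the Howe derivation of $\howe{\vrelone}(\termone,\termtwo)$ via Lemma \ref{lemma:howe-optimal-value}, apply the induction hypothesis to the immediate computational subterms (whose evaluation drops to index $n$), and finally absorb the residual $\vrelone$-component through $\vrelator$-pseudo-transitivity together with the simulation property of $\vrelone$.

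The delicate cases, and the main obstacle, are sequencing and operations — exactly the constructors through which the monadic structure and the sensitivities interact. For $\seq{\termone}{\termfun}$ the evaluation is a strong Kleisli bind, so I would invoke the \eqref{s-Strong-Lax-Bind} condition: the induction hypothesis bounds the evaluated head $\sem{\termone}_n$, value substitutivity bounds the continuation $\substcomp{\termfun}{\varone}{-}$ uniformly in its argument, and \eqref{s-Strong-Lax-Bind} combines these into a bound on the bind; here the sensitivity $\baseone \wedge \baseid$ appearing in the typing and Howe rules is precisely what the hypothesis $\baseone \leq \baseid$ of Definition \ref{def:strong-v-relator} is calibrated to consume, ensuring the head distance is genuinely accounted for. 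For $\op(\termone_1,\dots,\termone_k)$ the evaluation is $\mop$, and the $\signature$-compatibility inequality pushes $\qop$ through $\vrelator$, combining the $k$ inductive bounds. The application case reuses the function-value clause established above together with substitutivity to handle input multiplicity, while the remaining pattern-matching cases (sum, fold, bang) reduce directly to the induction hypothesis after one reduction step. Correctly tracking how the CBEs compose through these bind and application steps — so that the full quantale-valued bound, not merely its boolean collapse, is preserved — is the principal bookkeeping challenge.
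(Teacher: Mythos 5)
Your proposal is correct and follows essentially the same route as the paper's proof: reduce the term clause to the $n$-indexed bound $\toterm{(\howe{\vrelone})}(\termone,\termtwo) \leq \vrelator\toval{(\howe{\vrelone})}(\sem{\termone}_n,\sem{\termtwo})$ using inductivity of $\vrelator$, prove it by induction on $n$ with a case analysis on $\termone$, and discharge each case via Lemma \ref{lemma:howe-optimal-value}, the Substitution Lemma, ($\vrelator$-)pseudo-transitivity, condition \eqref{s-Strong-Lax-Bind} for sequencing, and $\signature$-compatibility for operations, exactly as the paper does. Two cosmetic remarks: the paper handles the $\return{\valone}$ case via the \eqref{Lax-Unit} condition, which your sketch omits, and the function-value clause follows most directly from compatibility plus reflexivity of $\howe{\vrelone}$ rather than from value substitutivity.
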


\begin{proof}[Proof sketch.]
The proof is non-trivial and a detailed account is given in 
Appendix \ref{appendix:proofs-howe-method}.
Let us write $\howe{\vrelone}$ for the Howe's extension 
of $\vrelone$ restricted to closed terms/values. 
By induction on $n$ one shows that for any $n \geq 0$, 
$
\toterm{(\howe{\vrelone})}_{\typeone}(\termone, \termtwo) \leq 
\relator \toval{(\howe{\vrelone})}_{\typeone}(\approxsem{\termone}{n}, \sem{\termtwo})
$
holds for all terms $\termone, \termtwo \in \Lambda_\typeone$. 
Since $\vrelator$ is inductive, the above inequality indeed gives the thesis. 
The base case follows again by inductivity of $\vrelator$, whereas the 
inductive step requires a case analysis on the structure of 
$\termone$. The crucial case is sequencing, where 
we rely on condition \eqref{s-Strong-Lax-Bind}. 
\end{proof}

From the Key Lemma it directly follows our main result.

\begin{restatable}[Compatibility]{theorem}{applicativeVSimilarityCompatible}
\label{prop:applicative-v-similarity-is-compatible}
Applicative $\vrelator$-similarity is compatible. 
\end{restatable}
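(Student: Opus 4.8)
The plan is to run the standard endgame of Howe's method, assembling the Key Lemma (Lemma \ref{lemma:key-lemma}) with the structural properties of the Howe extension. Since compatibility is a property of \emph{open} $\lambda$-term $\quantale$-relations, what I actually establish is that the open extension $\open{\vsim}$ of applicative $\vrelator$-similarity is compatible, and I do this by proving the equality $\open{\vsim} = \howe{(\open{\vsim})}$ and then invoking compatibility of the Howe extension.

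First I would record that $\open{\vsim}$ is a reflexive, transitive, value-substitutive $\lambda$-term $\quantale$-relation. Reflexivity and transitivity are inherited from Proposition \ref{prop:applicative-v-similarity-is-reflexive-and-transitive} through the open extension (being defined as an infimum of substitution instances, it preserves both, using monotonicity of $\tensor$), while value-substitutivity of the open extension of a closed relation was already observed. By Lemma \ref{lemma:properties-howe-extension} this yields $\open{\vsim} \leq \howe{(\open{\vsim})}$ (point 2) and, crucially, that $\howe{(\open{\vsim})}$ is compatible (point 4). It therefore suffices to establish the reverse inequality $\howe{(\open{\vsim})} \leq \open{\vsim}$.

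The reverse inequality I would obtain in two stages. On closed terms and values the Key Lemma tells us that the restriction of $\howe{(\open{\vsim})}$ is an applicative $\vrelator$-simulation, so the coinduction principle for $\vsim$ yields $\howe{(\open{\vsim})} \leq \vsim$ there; combined with $\open{\vsim} \leq \howe{(\open{\vsim})}$ and the fact that $\open{\vsim}$ agrees with $\vsim$ on closed expressions, this gives $\howe{(\open{\vsim})} = \vsim$ on closed terms/values. To lift this to open expressions I would use Substitutivity (Lemma \ref{lemma:substitutivity}). Fixing an environment $\envone$ and a closing instance $\vec{\valone} : \envone$, I substitute the values of $\vec{\valone}$ for the free variables one at a time; at each step the correction factor is $\baseone(\underline{\qvalone})$ with $\underline{\qvalone} = \howe{(\open{\vsim})}(\valone, \valone) \geq \qunit$ by reflexivity, and since $\quantale$ is integral this forces $\underline{\qvalone} = \qunit$ and hence $\baseone(\underline{\qvalone}) = \qunit$ (CBFs preserve the unit). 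The factors thus disappear, and using the closed equality on the resulting closed terms I obtain, for every $\vec{\valone} : \envone$,
$$
(\envone \imp \howe{(\open{\vsim})}(\termone, \termtwo): \typeone) \leq \vsim(\substcomp{\termone}{\vec{\varone}}{\vec{\valone}}, \substcomp{\termtwo}{\vec{\varone}}{\vec{\valone}}).
$$

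Taking the infimum over all $\vec{\valone} : \envone$, the right-hand side becomes exactly $\open{\vsim}(\termone, \termtwo)$ by definition of the open extension, giving $\howe{(\open{\vsim})} \leq \open{\vsim}$ and therefore $\open{\vsim} = \howe{(\open{\vsim})}$; compatibility of $\open{\vsim}$ then follows at once from compatibility of the Howe extension, and the argument for values is identical. I expect the only delicate point in this endgame to be the bookkeeping of the sensitivity factors in the substitutivity step --- checking that each $\baseone(\underline{\qvalone})$ collapses to $\qunit$, so that iterated closing introduces no spurious distance --- since the genuinely hard content (the Key Lemma, and in particular its sequencing case relying on \eqref{s-Strong-Lax-Bind} and inductivity of $\vrelator$) is already in hand.
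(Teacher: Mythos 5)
Your proposal is correct and follows essentially the same route as the paper's own proof: both reduce compatibility of $\open{\vsim}$ to the inequality $\howe{(\open{\vsim})} \leq \open{\vsim}$, obtained by combining the Key Lemma with coinduction on closed expressions and then closing open terms via (value) substitutivity and the infimum defining the open extension. Your explicit bookkeeping showing that each correction factor $\baseone(\underline{\qvalone})$ collapses to $\qunit$ (by reflexivity of the Howe extension, integrality of the quantale, and unit-preservation of CBEs) is precisely the detail the paper relegates to a footnote.
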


\begin{proof}
We have to prove that $\open{\vsim}$ is compatible. By Lemma 
\ref{lemma:properties-howe-extension} we know that
$\open{\vsim} \leq \howe{(\open{\vsim})}$ and that $\howe{(\open{\vsim})}$ 
is compatible. Therefore, to conclude the thesis it is sufficient to 
prove $\howe{(\open{\vsim})} \leq \open{\vsim}$. The Key Lemma implies 
that the restriction on closed 
terms/values of $\howe{(\open{\vsim})}$ is an applicative 
$\vrelator$-simulation, and thus smaller or equal than $\vsim$. 
We can thus show that for all $\envone \imp \termone, \termone': \typeone$, 
the inequality $\envone \imp \howe{(\open{\vsim})}(\termone, \termone'): \typeone 
\leq \envone \imp \open{\vsim}(\termone, \termone'): \typeone$ holds. In fact, 
since $\howe{(\open{\vsim})}$ is substitutive and thus value 
substitutive\footnote{Notice that in Definition \ref{def:value-substitutive} 
we substitute \emph{closed} values (in terms and values) 
meaning that simultaneous substitution and sequential 
substitution coincide. In particular, value substitution implies e.g.
$$
(\envone \imp \vrelone
  (\termone, \termtwo'): \typetwo)  
\leq \meet_{\bar{\valone}: \envone} 
\toterm{\vrelone}_\typetwo(\substcomp{\termone}{\bar{\varone}}{\bar{\valone}}, 
\substcomp{\termtwo}{\bar{\varone}}{\bar{\valone}}).
$$.}
we have:
\begin{align*}
\envone \imp \howe{(\open{\vsim})}(\termone, \termone): \typeone 
& \leq 
\meet_{\bar \valone: \envone} \emptyctx \imp \howe{(\open{\vsim})}
(\substcomp{\termone}{\bar \varone}{\bar \valone}, 
\substcomp{\termone'}{\bar \varone}{\bar \valone}): \typeone 
\\ 
&\leq \meet_{\bar \valone} \toterm{\vsim}_\typeone
(\substcomp{\termone}{\bar \varone}{\bar \valone}, 
\substcomp{\termone'}{\bar \varone}{\bar \valone})
\\ 
& = \envone \imp \open{\vsim}(\termone, \termone'): \typeone.
\end{align*}
A similar argument holds for values.
\end{proof}

It is worth noticing that from our results directly follow 
the following generalisation of Reed's and Pierce's 
\emph{metric preservation} 
\cite{Pierce/DistanceMakesTypesGrowStronger/2010,GaboardiEtAl/POPL/2017}.

\begin{corollary}[Metric Preservation (cf. \cite{GaboardiEtAl/POPL/2017})]
\label{cor:metric-preservation}
For any environment $\envone \defeq \varone_1 :_{\baseone_1} \typeone, \hh, 
\varone_n :_{\baseone_n} \typeone$, values $\bar{\valone}, \bar{\valtwo}: \envone$, and 
$\envone \imp \termone: \typeone$ we have:
$$
\baseone_1 \acts \toval{\vsim}_{\typeone_1}(\valone_1, \valtwo_1) 
\tensor \cc \tensor \baseone_n \acts \toval{\vsim}_{\typeone_n}(\valone_n, \valtwo_n) 
\leq \toterm{\vsim}_\typeone(\substcomp{\termone}{\vec{\varone}}{\vec{\valone}}, 
\substcomp{\termone}{\vec{\varone}}{\vec{\valtwo}}).
$$
\end{corollary}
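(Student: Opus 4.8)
The plan is to obtain Metric Preservation as an immediate consequence of the Compatibility theorem (Theorem~\ref{prop:applicative-v-similarity-is-compatible}) together with the substitutivity of the Howe extension (Lemma~\ref{lemma:substitutivity}). First I would record that compatibility of $\open{\vsim}$ is exactly the inequality $\howe{(\open{\vsim})} \leq \open{\vsim}$ proved in Theorem~\ref{prop:applicative-v-similarity-is-compatible}, while Lemma~\ref{lemma:properties-howe-extension}(2) gives the reverse $\open{\vsim} \leq \howe{(\open{\vsim})}$. Hence $\open{\vsim} = \howe{(\open{\vsim})}$, and so $\open{\vsim}$ inherits every property of the Howe extension. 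The hypotheses of Lemma~\ref{lemma:substitutivity} are met, since $\open{\vsim}$ is a $\quantale$-preorder by Proposition~\ref{prop:applicative-v-similarity-is-reflexive-and-transitive} and is value substitutive, being the open extension of a closed $\lambda$-term $\quantale$-relation. Consequently $\open{\vsim}$ itself satisfies the substitutivity inequality of Lemma~\ref{lemma:substitutivity}.

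The second step is to instantiate that inequality with a single term on both sides. Taking $\termtwo = \termone$, and recalling that on closed values $\open{\vsim}$ agrees with $\toval{\vsim}$, Lemma~\ref{lemma:substitutivity} yields, for a variable $\varone :_{\baseone} \typeone$,
\begin{align*}
(\envone, \varone :_{\baseone} \typeone \imp \open{\vsim}(\termone, \termone): \typetwo) \tensor \baseone(\toval{\vsim}_\typeone(\valone, \valtwo)) \leq \envone \imp \open{\vsim}(\substcomp{\termone}{\varone}{\valone}, \substcomp{\termone}{\varone}{\valtwo}): \typetwo.
\end{align*}
By reflexivity of $\open{\vsim}$ (Proposition~\ref{prop:applicative-v-similarity-is-reflexive-and-transitive}) the leftmost factor is $\qunit$, the unit of $\tensor$; using monotonicity of $\tensor$ this collapses to the ``one variable'' form of the statement, with $\baseone(\toval{\vsim}_\typeone(\valone,\valtwo)) = \baseone \acts \toval{\vsim}_\typeone(\valone,\valtwo)$.

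Third, I would iterate this inequality over the $n$ declarations of $\envone$, peeling off $\varone_n, \hh, \varone_1$ one at a time; this is legitimate because for closed values simultaneous and sequential substitution coincide, as noted in the proof of Theorem~\ref{prop:applicative-v-similarity-is-compatible}. Starting from $\open{\vsim}(\termone, \termone) = \qunit$, each step multiplies in one further factor $\baseone_i \acts \toval{\vsim}_{\typeone_i}(\valone_i, \valtwo_i)$, and commutativity of the quantale lets us reorder the accumulated factors into the form displayed in the corollary. Once every variable has been substituted the term is closed and $\open{\vsim}$ reduces to $\toterm{\vsim}$, giving precisely the desired bound.

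Since the hard analytic content has already been discharged by the Key Lemma and Lemma~\ref{lemma:substitutivity}, the only genuine work left is the bookkeeping of the iteration: one must verify that peeling variables interacts correctly with the environment operations $\baseone \comp \envone$ and $\envone \tensor \envtwo$, so that the scaling $\baseone_i$ attaches to the correct factor at each stage. I expect this to be the main, albeit routine, obstacle; everything else is an immediate appeal to reflexivity, to the identity $\open{\vsim} = \howe{(\open{\vsim})}$, and to the unit and commutativity laws of $\tensor$.
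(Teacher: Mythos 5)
Your proof is correct and is exactly the derivation the paper leaves implicit when it asserts that Corollary~\ref{cor:metric-preservation} ``directly follows'' from its results: you identify $\open{\vsim}$ with its Howe extension by combining Theorem~\ref{prop:applicative-v-similarity-is-compatible} (which proves $\howe{(\open{\vsim})} \leq \open{\vsim}$) with Lemma~\ref{lemma:properties-howe-extension}(2), transfer the Substitutivity Lemma~\ref{lemma:substitutivity} to $\open{\vsim}$ (legitimately, since $\open{\vsim}$ is a value-substitutive $\quantale$-preorder by Proposition~\ref{prop:applicative-v-similarity-is-reflexive-and-transitive}), discharge the leftmost factor by reflexivity and monotonicity of $\tensor$, and iterate over the environment using that sequential and simultaneous substitution of closed values coincide. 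The bookkeeping you flag is indeed routine, and commutativity of the quantale handles the reordering of factors, so there is no gap.
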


Having proved that applicative $\vrelator$-similarity is a compatible 
generalised metric, we now move to applicative $\vrelator$-bisimilarity.

\section{Applicative $\vrelator$-bisimilarity}
\label{section:from-applicative-v-similarity-to-applicative-v-bisimilarity}

In previous section we proved that 
applicative $\vrelator$-similarity is a compatible 
generalised metric. 
However, in the context of programming language semantics it 
is often desirable to work with equivalence 
$\quantale$-relations---i.e. pseudometrics.
In this section we discuss two natural behavioural pseudometrics: 
applicative $\vrelator$-\emph{bisimilarity} and two-way applicative 
$\vrelator$-similarity. We prove that under suitable conditions on 
CBEs (which are met by all examples we have considered 
so far) both applicative $\vrelator$-\emph{bisimilarity} and two-way applicative 
$\vrelator$-similarity are compatible pseudometrics ($\quantale$-equivalences). 
Proving compatibility of the latter is straightforward. However, proving 
compatibility of applicative $\vrelator$-bisimilarity is not trivial and 
requires a variation of the so-called \emph{transitive closure trick} 
\cite{Howe/IC/1996,Lassen/PhDThesis,Pitts/ATBC/2011} based on ideas in
\cite{Simpson-Niels/Modalities/2018}.

Before entering formalities, let us remark that so far we have 
mostly worked with inequation and inequalities. 
That was fine since we have been interested in non-symmetric $\quantale$-relations. 
However, for symmetric $\quantale$-relations inequalities seem not to be 
powerful enough, and often plain equalities are needed in order to make proofs work.
For that reason in the rest of this section we assume CBFs 
to be monotone \emph{monoid (homo)morphism}. That is, we modify 
Definition \ref{def:change-of-base-functor} requiring the equalities:
$$
h(\qunit) = \ell, \qquad
h(\qvalone \tensor \qvaltwo) = h(\qvalone) \tensor h(\qvaltwo).
$$
Note that we do not require CBEs to be join-preserving 
(i.e. continuous). We also require operations $\qop$ to be 
\emph{quantale (homo)morphism}, i.e. to preserves unit, tensor, and joins. 
It is easy to see that the new requirements are met by all 
examples considered so far. We start with two-way applicative 
$\vrelator$-similarity. 

\begin{proposition}\label{prop:two-way-applicative-similarity-is-compatible}
For a $\quantale$-relator $\vrelator$ define two-way 
applicative $\vrelator$-similarity as 
$\vsim \tensor \dual{\vsim}$. Then two-way applicative 
$\vrelator$-similarity is a compatible $\quantale$-equivalence. 
\end{proposition}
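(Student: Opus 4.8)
The plan is to split the statement into its two halves: that $\vsim \tensor \dual{\vsim}$ is a $\quantale$-equivalence (reflexive, symmetric, transitive), and that it is compatible. For the first half I would argue directly from the fact that $\vsim$ is itself a reflexive and transitive $\lambda$-term $\quantale$-relation (Proposition \ref{prop:applicative-v-similarity-is-reflexive-and-transitive}), together with commutativity of $\quantale$ and monotonicity of $\tensor$. For the second half I would isolate two closure properties of the class of compatible $\lambda$-term $\quantale$-relations --- closure under the dual $\dual{-}$ and under the pointwise tensor $\tensor$ --- and then instantiate them on $\vsim$, which is compatible by Theorem \ref{prop:applicative-v-similarity-is-compatible}.

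Concretely, for the equivalence properties I would check the pointwise inequalities. Symmetry is immediate, and in fact an equality: since $\quantale$ is commutative, $(\vsim \tensor \dual{\vsim})(\termone,\termtwo) = \toterm{\vsim}(\termone,\termtwo) \tensor \toterm{\vsim}(\termtwo,\termone) = \dual{(\vsim \tensor \dual{\vsim})}(\termone,\termtwo)$. For reflexivity I would use $\qunit \leq \toterm{\vsim}(\termone,\termone)$ together with $\qunit = \qunit \tensor \qunit$, so that monotonicity of $\tensor$ yields $\qunit \leq \toterm{\vsim}(\termone,\termone) \tensor \toterm{\vsim}(\termone,\termone)$. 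For transitivity I would expand a composite summand as $\toterm{\vsim}(\termone,\termtwo) \tensor \toterm{\vsim}(\termtwo,\termone) \tensor \toterm{\vsim}(\termtwo,\termthree) \tensor \toterm{\vsim}(\termthree,\termtwo)$, reassociate it commutatively into $[\toterm{\vsim}(\termone,\termtwo)\tensor\toterm{\vsim}(\termtwo,\termthree)] \tensor [\toterm{\vsim}(\termthree,\termtwo)\tensor\toterm{\vsim}(\termtwo,\termone)]$, and apply transitivity of $\vsim$ to each bracket, bounding every summand by $(\vsim \tensor \dual{\vsim})(\termone,\termthree)$ uniformly in the intermediate point; the join then stays below this bound.

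The bulk of the work is the two compatibility-closure lemmas, to be verified clause by clause against Figure \ref{fig:compatibility-clauses}. Closure under $\dual{-}$ is the lighter of the two: each clause for $\dual{\vrelone}$ is literally a clause for $\vrelone$ with all argument pairs transposed, and the CBE actions $\baseone \acts (-)$ and the operations $\qop$ commute with $\dual{-}$ because they act pointwise; hence $\vsim$ compatible gives $\dual{\vsim}$ compatible. Closure under $\tensor$ is where the strengthened hypotheses of this section are essential. For a purely multiplicative clause such as application one reassociates $(\vrelone\tensor\vreltwo)(\valone,\valone')\tensor(\vrelone\tensor\vreltwo)(\valtwo,\valtwo')$ into $[\vrelone(\valone,\valone')\tensor\vrelone(\valtwo,\valtwo')]\tensor[\vreltwo(\valone,\valone')\tensor\vreltwo(\valtwo,\valtwo')]$ and applies the clause for $\vrelone$ and for $\vreltwo$ separately; but the clauses for $\bang$, sum-case, fold-case, $\mathbf{let}$, and $\op$ carry CBE actions and operations in front, and here I would use that, under the modified Definition \ref{def:change-of-base-functor}, every CBE $\baseone \in \Pi$ is a monoid homomorphism --- so $\baseone(\qvalone \tensor \qvaltwo) = \baseone(\qvalone)\tensor\baseone(\qvaltwo)$ --- and that each $\qop$ preserves $\tensor$ and joins. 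These equalities let the action or operation distribute across the tensor of the two relations, after which the two compatibility clauses are invoked componentwise and recombined by monotonicity of $\tensor$. Applying both closures, $\vsim \tensor \dual{\vsim}$ is compatible.

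I expect the single delicate point to be the $\mathbf{let}$ clause, whose sensitivity is $\baseone \wedge \baseid$ rather than $\baseone$: the tensor-closure bound needs $\baseone \wedge \baseid$ to distribute over $\tensor$ as an equality, i.e. to itself be a genuine homomorphism rather than merely satisfying the one-sided inequality that a meet of homomorphisms automatically enjoys. I would discharge this by invoking the standing assumption that $\Pi$ consists of monoid homomorphisms and is closed under the operations of Lemma \ref{lemma:algebra-change-of-base-functors} (so that $\baseone \wedge \baseid \in \Pi$ is again a homomorphism), which is precisely why the present section replaces the lax-morphism requirement on CBFs and operations by genuine homomorphism conditions; all the quantales and CBEs of the running examples satisfy this.
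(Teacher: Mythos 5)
Your proof is correct and takes essentially the same route as the paper's (two-line) sketch: symmetry is immediate from commutativity of $\quantale$, reflexivity and transitivity follow from Proposition \ref{prop:applicative-v-similarity-is-reflexive-and-transitive}, and compatibility rests precisely on the strengthened assumption of this section that CBEs and operations $\qop$ are monoid/quantale homomorphisms, so that they distribute over $\tensor$ when the two compatibility clauses are applied componentwise. Your treatment of the delicate $\baseone \wedge \baseid$ point in the $\mathbf{let}$-clause---appealing to closure of $\Pi$ under the operations of Lemma \ref{lemma:algebra-change-of-base-functors} so that $\baseone \wedge \baseid$ is itself a homomorphism---is a detail the paper leaves implicit, and you discharge it correctly.
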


\begin{proof}[Proof sketch.]
Clearly $\vsim \tensor \dual{\vsim}$ is symmetric. Moreover, since 
CBEs are monoid (homo)morphism it is also compatible. 
\end{proof}

We now move to the more interesting case of applicative 
$\vrelator$-bisimilarity. In light of Example \ref{ex:v-relators} 
we give the following definition.

\begin{definition}\label{def:applicative-v-bismilarity}
Recall Proposition \ref{prop:algebra-of-v-relators}. 
Define applicative $\vrelator$-\emph{bisimilarity} $\vbisim$ as 
applicative $(\vrelator \wedge \dual{\vrelator})$-similarity.
\end{definition}

Proposition \ref{prop:applicative-v-similarity-is-reflexive-and-transitive} 
implies that $\vbisim$ is reflexive and transitive. 
Moreover, if CBEs preserve binary meet (a condition 
satisfied by all our examples), i.e. 
$\baseone(\qvalone) \wedge \baseone(\qvaltwo) = 
\baseone(\qvalone \wedge \qvaltwo)$ for any CBE $\baseone$ in $\Pi$, 
then $\vbisim$ is also symmetric, ad thus a pseudometric. 
Finally we observe that $\vbisim$ is the greatest $\lambda$-term 
$\quantale$-relation $\vrelone$ such that both $\vrelone$ and 
$\dual{\vrelone}$ are applicative $\vrelator$-simulation.

Proving compatibility of $\vbisim$ is not straightforward, 
and requires a variation of the so-called \emph{transitive closure trick} 
\cite{Pitts/ATBC/2011}.
First of all we notice that we cannot apply the Key Lemma on 
$\vbisim$ since 
$\vrelator \wedge \dual{\vrelator}$ being conversive is, in general, not inductive.
To overcome this problem, we follow \cite{Simpson-Niels/Modalities/2018} 
and characterise applicative $\vrelator$-bisimilarity differently.

\begin{restatable}{proposition}{symmetricSimilarityIsBisimilarity}
\label{lemma:symmetric-similarity-is-bisimilarity}
Let $\vrelator$ be a $\quantale$-relator. 
Define the $\lambda$-term 
$\quantale$-relation $\vbisim'$ as follows:
$$
\vbisim' \defeq \join \{\vrelone \mid \dual{\vrelone} = \vrelone,\ 
\vrelone \leq [\vrelone]\}.
$$ Then:
\begin{enumerate}
\item $\vbisim'$ is a symmetric applicative $\vrelator$-simulation, 
  and therefore the largest such $\lambda$-term $\quantale$-relation.
\item $\vbisim'$ coincide with applicative 
  $(\vrelator \wedge \dual{\vrelator})$-similarity $\vbisim$.
\end{enumerate}
\end{restatable}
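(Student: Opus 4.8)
The plan is to treat the two items in turn: item (1) is pure lattice theory, while item (2) reduces to the characterisation of $\vbisim$ recorded just before the statement, namely that $\vbisim$ is the greatest $\lambda$-term $\quantale$-relation $\vrelone$ for which both $\vrelone$ and $\dual{\vrelone}$ are applicative $\vrelator$-simulations.

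For item (1) I would first note that $\dual{-}$ is monotone, idempotent, and preserves arbitrary joins (it acts pointwise by swapping arguments, which is an order-isomorphism of each hom-lattice). Writing $S \defeq \{\vrelone \mid \dual{\vrelone} = \vrelone,\ \vrelone \leq [\vrelone]\}$ so that $\vbisim' = \join S$, every element of $S$ is self-dual and hence $S$ is closed under $\dual{-}$; therefore $\dual{\vbisim'} = \dual{(\join S)} = \join\{\dual{\vrelone} \mid \vrelone \in S\} = \join S = \vbisim'$, so $\vbisim'$ is symmetric. That $\vbisim'$ is an applicative $\vrelator$-simulation is the standard ``join of post-fixed points'' argument: $[-]$ is monotone (as observed after Definition \ref{def:v-applicative-simulation}), so $\vrelone \leq [\vrelone] \leq [\vbisim']$ for each $\vrelone \in S$, and taking the join gives $\vbisim' \leq [\vbisim']$. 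Finally, any symmetric applicative $\vrelator$-simulation lies in $S$ and is thus below $\vbisim'$, so $\vbisim'$ is the largest such relation.

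For item (2) I would prove both inclusions using the cited characterisation of $\vbisim$. For $\vbisim' \leq \vbisim$: by item (1), $\vbisim'$ is a symmetric applicative $\vrelator$-simulation, so $\dual{\vbisim'} = \vbisim'$ is an applicative $\vrelator$-simulation as well; hence $\vbisim'$ is one of the relations $\vrelone$ with both $\vrelone$ and $\dual{\vrelone}$ applicative $\vrelator$-simulations, and maximality of $\vbisim$ gives $\vbisim' \leq \vbisim$. For $\vbisim \leq \vbisim'$: the characterisation gives that both $\vbisim$ and $\dual{\vbisim}$ are applicative $\vrelator$-simulations; applying the characterisation once more to $\dual{\vbisim}$ (whose dual is $\vbisim$, and both are $\vrelator$-simulations) shows that $\dual{\vbisim}$ is again a relation of the characterising kind, so $\dual{\vbisim} \leq \vbisim$ by maximality, and dualising yields $\vbisim \leq \dual{\vbisim}$; thus $\vbisim = \dual{\vbisim}$ is symmetric. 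Being a symmetric applicative $\vrelator$-simulation, $\vbisim$ lies in $S$, whence $\vbisim \leq \vbisim'$. Combining the two inclusions gives $\vbisim = \vbisim'$.

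The only genuinely technical point is the characterisation of $\vbisim$ itself, which I would justify (if not simply cited) through the pointwise identity
\[
[\vrelone]_{\vrelator \wedge \dual{\vrelator}} = [\vrelone]_{\vrelator} \wedge \dual{([\dual{\vrelone}]_{\vrelator})},
\]
where the subscript records which relator is used in the term clause of Definition \ref{def:v-applicative-simulation}. Since the relator enters only through that clause, on value types the identity amounts to the clause-by-clause equation $\dual{(\toval{[\vrelone]})} = \toval{[\dual{\vrelone}]}$ (for the arrow clause the infimum over arguments commutes with swapping; for the sum, fold and bang clauses one uses $\baseone(\vrelone(\valtwo,\valone)) = \baseone(\dual{\vrelone}(\valone,\valtwo))$), while on terms it follows from $(\vrelator \wedge \dual{\vrelator})\vrelone = \vrelator \vrelone \wedge \dual{\vrelator}\vrelone$ and $\dual{\vrelator}\vrelone = \dual{(\vrelator \dual{\vrelone})}$ (Proposition \ref{prop:algebra-of-v-relators}). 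Taking the identity pointwise, $\vrelone \leq [\vrelone]_{\vrelator \wedge \dual{\vrelator}}$ holds exactly when $\vrelone \leq [\vrelone]_{\vrelator}$ and $\vrelone \leq \dual{([\dual{\vrelone}]_{\vrelator})}$, the latter being equivalent to $\dual{\vrelone} \leq [\dual{\vrelone}]_{\vrelator}$; this is precisely the statement that the applicative $(\vrelator \wedge \dual{\vrelator})$-simulations are exactly the $\vrelone$ with both $\vrelone$ and $\dual{\vrelone}$ applicative $\vrelator$-simulations. The main obstacle is thus purely bookkeeping: keeping the term and value parts of $[-]$ separate and invoking the relator dualities only where the evaluation maps $\sem{-}$ occur. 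It is worth noting that this route also delivers symmetry of $\vbisim$ as a by-product, independently of any binary-meet preservation assumption on the CBEs.
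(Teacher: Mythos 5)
Your proof is correct, and although it rests on the same two pillars as the paper's argument---the join-of-post-fixed-points reasoning and the correspondence between symmetric applicative $\vrelator$-simulations and applicative $(\vrelator\wedge\dual{\vrelator})$-simulations---it is organised along a genuinely different route. The paper obtains $\vbisim \leq \vbisim'$ by simply citing that $\vbisim$ is a symmetric applicative $\vrelator$-simulation (where symmetry of $\vbisim$ is asserted in the main text only under the extra hypothesis that CBEs preserve binary meets), and obtains $\vbisim' \leq \vbisim$ by coinduction: it checks that any symmetric applicative $\vrelator$-simulation satisfies the term clause for the conversive relator $\vrelator \wedge \dual{\vrelator}$, the value clauses being relator-independent; item (1) is left essentially implicit. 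You instead prove item (1) explicitly and route both inclusions of item (2) through the characterisation of $\vbisim$ as the greatest $\vrelone$ such that $\vrelone$ and $\dual{\vrelone}$ are both applicative $\vrelator$-simulations, which you justify via the decomposition $[\vrelone]_{\vrelator\wedge\dual{\vrelator}} = [\vrelone]_{\vrelator} \wedge \dual{([\dual{\vrelone}]_{\vrelator})}$---the same computation the paper performs inside its coinductive step, just isolated once and for all as a lemma. The concrete gain of your arrangement is that symmetry of $\vbisim$ falls out of pure maximality/duality reasoning, so the proposition is established with no binary-meet-preservation assumption on CBEs; the paper's appendix proof, by invoking symmetry of $\vbisim$ without qualification, silently relies either on that assumption or on an argument like yours. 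The cost is negligible, since your characterisation lemma is exactly the bookkeeping the paper's coinduction does anyway.
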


\begin{proof}
See Appendix \ref{appendix:proofs-applicative-v-bisimilarity}.
\end{proof}

Lemma \ref{lemma:symmetric-similarity-is-bisimilarity} allows to 
apply the Key Lemma on $\vbisim$, thus showing that $\howe{\vbisim}$ is 
compatible. However, the Howe's extension is an intrinsically asymmetrical 
construction (cf. pseudo-transitivity) and there is little hope 
to prove symmetry of $\howe{\vbisim}$ (which would imply compatibility of 
$\vbisim$). Nevertheless, we observe that for a suitable class of 
CBEs the transitive closure $\transitive{(\howe{\vbisim})}$ of $\howe{\vbisim}$ 
is a symmetric, compatible, $\vrelator$-simulation (and thus smaller 
than $\vbisim$).

\begin{definition}
We say that a CBE $\baseone$ is 
\emph{finitely continuous}, if $\baseone \neq \infty$ 
implies 
$\baseone(\join A) = \join \{\baseone(\qvalone) \mid \qvalone \in A\},$
for any set $A \subseteq \quantale$. 
\end{definition}

\begin{example}
\label{ex:finitely-continuous-change-of-base-functors}
All concrete CBEs considered 
in previous examples are finitely continuous. Moreover, it is easy to prove the 
all CBEs defined from the CBEs $n, \infty$ of Example \ref{ex:change-of-base-functor} 
using operations in Lemma \ref{lemma:algebra-change-of-base-functors} 
are finitely continuous\footnote{Recall that since $\qvalone$ is integral we have the 
inequality $\qvalone \tensor \bot = \bot$ for any $\qvalone \in \quantale$.}
provided that $\qop(\qvalone_1, \hh, \bot, \hh, \qvalone_n) = \bot$
(which is the case for most of the concrete operations we considered). 
\end{example}

The following is the central result of our argument (see Appendix 
\ref{appendix:proofs-applicative-v-bisimilarity} for a proof).

\begin{restatable}{lemma}{transitiveClosureHoweExtensionCompatible}
\label{lemma:transitive-closure-of-howe-extension-is-compatible}
Assume CBEs in $\Pi$ to be finitely continuous.
Define the transitive closure $\transitive{\vrelone}$ of a 
$\quantale$-relation $\vrelone$ as 
$
\transitive{\vrelone}   \defeq \join_n \vrelone^{(n)},
$
where $\vrelone^{(0)}  \defeq \idvrel$, and 
$\vrelone^{(n+1)} \defeq \vrelone^{(n)} \comp \vrelone$.
\begin{varenumerate}
\item Let $\vrelone$ be a reflexive and transitive $\lambda$-term 
  $\quantale$-relation. Then $\transitive{(\howe{\vrelone})}$ is 
  compatible. 
\item Let $\vrelone$ be an reflexive, symmetric, and transitive 
  open $\lambda$-term $\quantale$-relation. Then 
  $\transitive{(\howe{\vrelone})}$ is symmetric.
\end{varenumerate}
\end{restatable}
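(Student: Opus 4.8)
The plan is to reduce both claims to a small number of closure properties of \emph{compatible} $\lambda$-term $\quantale$-relations, exploiting that $\reltwo \defeq \howe{\vrelone}$ is already compatible and reflexive by Lemma~\ref{lemma:properties-howe-extension} (so that $\idvrel \leq \reltwo$ and the family $\{\reltwo^{(n)}\}_n$ is an increasing chain). For the first claim I would establish three facts: (i) the identity $\lambda$-term $\quantale$-relation $\idvrel$ is compatible, which is immediate from the definition of $\refine{-}$; (ii) compatible relations are closed under composition, i.e. if $\reltwo_1, \reltwo_2$ are compatible then so is $\reltwo_2 \comp \reltwo_1$; and (iii) a directed join (in particular a chain) of compatible relations is compatible. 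Granting these, an induction on $n$ gives that each $\reltwo^{(n)}$ is compatible, and since the chain is increasing, fact (iii) yields compatibility of $\transitive{\reltwo} = \join_n \reltwo^{(n)}$.

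The technical heart of the first claim is fact (ii), which I would verify clause by clause against Figure~\ref{fig:compatibility-clauses}. Writing $(\reltwo_2 \comp \reltwo_1)(\termone, \termtwo) = \join_{\termthree} \reltwo_1(\termone, \termthree) \tensor \reltwo_2(\termthree, \termtwo)$, the constructor clauses without change of base (application, injection, $\return$, and so on) follow by distributivity of $\tensor$ over joins, commutativity of $\tensor$, and the corresponding clauses for $\reltwo_1$ and $\reltwo_2$: after expanding both composites one rearranges each summand into a $\reltwo_1$-part and a $\reltwo_2$-part and contracts along the shared intermediate term. The delicate clauses are those carrying a change of base $\baseone \acts (-)$ or $(\baseone \wedge \baseid) \acts (-)$ (bang, case, fold, sequencing). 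Here, when the CBE is $\infty$ the left-hand side is $\qbot$ and the clause holds trivially (integrality gives $\qbot \tensor \qvalone = \qbot$); otherwise \emph{finite continuity} lets me push $\baseone$ through the join coming from composition, and the monoid-morphism law $\baseone(\qvalone \tensor \qvaltwo) = \baseone(\qvalone) \tensor \baseone(\qvaltwo)$ lets me distribute it over the tensor, after which the clauses for $\reltwo_1, \reltwo_2$ apply exactly as before. The operation clause uses that each $\qop$ preserves joins and tensors. Fact (iii) is checked the same way, using that along a chain every finite set of summands is dominated by a single member, so the cross terms produced by distributivity cause no trouble. This is the step I expect to require the most care, and it is precisely where the finite-continuity hypothesis is indispensable.

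For the second claim I would run a quantitative version of the \emph{transitive closure trick}. First I would record the self-duality of compatible refinement, $\dual{\refine{\vreltwo}} = \refine{\dual{\vreltwo}}$, which holds because every rule of Figure~\ref{fig:compatible-refinement} applies the \emph{same} constructor on both sides and combines the component distances through $\tensor$, $\qop$, and $\baseone \acts (-)$, all of which commute with the argument swap (using $\baseone \acts \dual{\vreltwo} = \dual{(\baseone \acts \vreltwo)}$). Since $\vrelone$ is symmetric we have $\dual{\vrelone} = \vrelone$, so dualising the defining equation $\howe{\vrelone} = \vrelone \comp \refine{\howe{\vrelone}}$ of Definition~\ref{def:howe-extension-one} exhibits $\dual{\howe{\vrelone}}$ as the least (pre-)fixed point of $\Psi(\vreltwo) \defeq \refine{\vreltwo} \comp \vrelone$ (the operators $\Phi_{\vrelone}$ and $\Psi$ are conjugate through the order-automorphism $\dual{(-)}$). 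It therefore suffices to show that $T \defeq \transitive{(\howe{\vrelone})}$ is a pre-fixed point of $\Psi$. This is short: $T$ is compatible by the first claim, hence $\refine{T} \leq T$; moreover $\vrelone \leq \howe{\vrelone} \leq T$; and $T$ is transitive, so $\refine{T} \comp \vrelone \leq T \comp T \leq T$. Consequently $\dual{\howe{\vrelone}} \leq T$.

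Finally I would upgrade this to symmetry of $T$. Since $\dual{(-)}$ preserves joins and $\dual{(\reltwo^{(n)})} = (\dual{\reltwo})^{(n)}$, we have $\dual{T} = \transitive{(\dual{\howe{\vrelone}})}$; combining $\dual{\howe{\vrelone}} \leq T$ with monotonicity and idempotency of the transitive closure gives $\dual{T} \leq \transitive{T} = T$, which is exactly symmetry of $T$. The only genuine obstacle in this part is locating the right fixed-point characterisation of $\dual{\howe{\vrelone}}$; once the conjugacy through $\dual{(-)}$ is in place, the pre-fixed-point check reduces to compatibility (the first claim) together with transitivity of the closure.
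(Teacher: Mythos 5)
Your proposal is correct. For the first claim it is, in substance, the paper's own argument re-packaged: the paper equalises the two powers $(\howe{\vrelone})^{(n)}$ and $(\howe{\vrelone})^{(m)}$ using reflexivity of $\howe{\vrelone}$ and then inducts on $n$, landing directly in $\transitive{(\howe{\vrelone})}$, whereas you prove each power compatible (closure of compatibility under composition) and then pass to the join along the increasing chain; in both versions finite continuity enters at exactly the same point---pushing $\baseone \wedge \baseid$ through the joins created by composition/closure, with the $\infty$ case discharged by integrality---and both use the strengthenings of Section~\ref{section:from-applicative-v-similarity-to-applicative-v-bisimilarity} (CBEs as monoid morphisms, $\qop$ preserving tensors and joins). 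For the second claim your route is genuinely different. The paper reduces symmetry to $\howe{\vrelone} \leq \dual{(\transitive{(\howe{\vrelone})})}$ and proves this by induction on derivations of the judgments $\envone \howeimp \qvalone \leq \howe{\vrelone}(\termone, \termone'): \typeone$, invoking the first claim, transitivity of the closure, and symmetry of $\vrelone$ inside each case of Figure~\ref{fig:howe-extension}. You instead isolate the self-duality of compatible refinement, $\dual{(\refine{\vreltwo})} = \refine{(\dual{\vreltwo})}$---a rule-by-rule but essentially mechanical check involving neither $\vrelone$ nor the closure---and then appeal to the universal property of least fixed points: by conjugation along the involution $\dual{(-)}$ (which needs commutativity of $\quantale$ for $\dual{(\vreltwo \comp \vrelone)} = \dual{\vrelone} \comp \dual{\vreltwo}$, as the paper records), $\dual{(\howe{\vrelone})}$ is the least fixed point of $\vreltwo \mapsto \refine{\vreltwo} \comp \vrelone$, so everything reduces to the one-line pre-fixed-point check $\refine{T} \comp \vrelone \leq T \comp T \leq T$ for $T \defeq \transitive{(\howe{\vrelone})}$. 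What this buys is modularity: the structural induction is confined to a statement about $\refine{-}$ alone, and the precise roles of compatibility, transitivity, and $\vrelone \leq T$ become explicit; one could even shortcut the conjugation by checking directly that $\dual{T}$ is a pre-fixed point of $\Phi_{\vrelone}$, which yields $\howe{\vrelone} \leq \dual{T}$ immediately. What the paper's syntactic induction buys is self-containment: it never needs $\howe{\vrelone}$ to be a genuine least fixed point (Definition~\ref{def:howe-extension-one} does give this), nor the identities $\dual{(\vrelone^{(n)})} = (\dual{\vrelone})^{(n)}$ by which you commute $\dual{(-)}$ past the transitive closure---but at the cost of re-traversing all the Howe rules a second time.
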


Finally, we can prove that applicative $\vrelator$-bisimilarity is 
compatible.

\begin{theorem}\label{thm:applicative-v-bisimilarity-is-compatible}
If any CBE in $\Pi$ is finitely continuous, then applicative 
$\vrelator$-bisimilarity is compatible.
\end{theorem}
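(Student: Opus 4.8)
The plan is to reduce compatibility of $\vbisim$ to the compatibility of a transitive closure, following the strategy of Theorem~\ref{prop:applicative-v-similarity-is-compatible} but repairing the failure of symmetry of the Howe extension by means of the transitive closure trick. Recall that by Proposition~\ref{lemma:symmetric-similarity-is-bisimilarity} applicative $\vrelator$-bisimilarity $\vbisim$ is a symmetric applicative $\vrelator$-simulation, and by Proposition~\ref{prop:applicative-v-similarity-is-reflexive-and-transitive} it is reflexive and transitive; its open extension $\open{\vbisim}$ inherits all three properties. By Lemma~\ref{lemma:properties-howe-extension} we have $\open{\vbisim} \leq \howe{(\open{\vbisim})} \leq \transitive{(\howe{(\open{\vbisim})})}$, and by Lemma~\ref{lemma:transitive-closure-of-howe-extension-is-compatible}(1) the relation $\transitive{(\howe{(\open{\vbisim})})}$ is compatible (here the finite continuity hypothesis on $\Pi$ is invoked). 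Hence it suffices to establish the reverse inequality $\transitive{(\howe{(\open{\vbisim})})} \leq \open{\vbisim}$: together with the chain above this forces $\open{\vbisim} = \transitive{(\howe{(\open{\vbisim})})}$, so $\open{\vbisim}$ is compatible, which is the theorem.

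The core of the argument is to show that $\transitive{(\howe{\vbisim})}$ restricted to closed terms and values is a \emph{symmetric applicative $\vrelator$-simulation}. Symmetry is precisely Lemma~\ref{lemma:transitive-closure-of-howe-extension-is-compatible}(2), using that $\open{\vbisim}$ is reflexive, symmetric and transitive. For the simulation property I would first apply the Key Lemma~\ref{lemma:key-lemma} — legitimate since $\vbisim$ is a reflexive and transitive applicative $\vrelator$-simulation — to obtain that the closed restriction of $\howe{\vbisim}$ satisfies $\howe{\vbisim} \leq [\howe{\vbisim}]$. Writing $\transitive{(\howe{\vbisim})} = \join_{n} (\howe{\vbisim})^{(n)}$, I would then prove by induction on $n$ that each $(\howe{\vbisim})^{(n)}$ is an applicative $\vrelator$-simulation: the base case $(\howe{\vbisim})^{(0)} = \idvrel$ is a simulation exactly as in the proof of Proposition~\ref{prop:applicative-v-similarity-is-reflexive-and-transitive}, the case $n=1$ is the Key Lemma, and the inductive step reduces to closure of applicative $\vrelator$-simulations under relational composition. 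Closure of simulations under arbitrary joins is immediate from monotonicity of $\vrelone \mapsto [\vrelone]$ (if $\vrelone_i \leq [\vrelone_i]$ then $\vrelone_i \leq [\join_j \vrelone_j]$, whence $\join_i \vrelone_i \leq [\join_j \vrelone_j]$), so $\transitive{(\howe{\vbisim})}$ is a simulation.

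The genuine obstacle, and the place where the running assumptions on $\Pi$ are decisive, is the closure of applicative $\vrelator$-simulations under composition: given simulations $\vrelone, \vreltwo$ one must check $\vreltwo \comp \vrelone \leq [\vreltwo \comp \vrelone]$ clause by clause. The term clause follows from the lax-functoriality~\eqref{vrel-2} of $\vrelator$, since the composite $\lambda$-term $\quantale$-relation joins only over intermediate syntactic terms whereas $\vrelator \toval{\vreltwo} \comp \vrelator \toval{\vrelone}$ joins over all monadic values. The clauses for $\typeone \mmap \typetwo$, $\sumtype{i \in I}{\typeone_i}$ and $\recType{\typeVar}{\typeone}$ are handled using the syntactic characterisation of values and distributivity of $\tensor$ over $\join$. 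The delicate clause is $\bang_{\baseone}\typeone$: there one is forced to pull the CBE $\baseone$ through the join over the intermediate value, i.e. to use $\join_{\valthree} \baseone(\cdots) = \baseone(\join_{\valthree} \cdots)$, which holds precisely because $\baseone$ is finitely continuous (the case $\baseone = \infty$ being trivial, as both sides collapse to $\qbot$ by integrality), while the monoid-morphism property of $\baseone$ lets one merge $\baseone(\qvalone) \tensor \baseone(\qvaltwo) = \baseone(\qvalone \tensor \qvaltwo)$. This $\bang$ step is the technical heart of the proof and the sole reason finite continuity is assumed.

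To conclude, once $\transitive{(\howe{\vbisim})}$ restricted to closed expressions is a symmetric applicative $\vrelator$-simulation, Proposition~\ref{lemma:symmetric-similarity-is-bisimilarity} (which identifies $\vbisim$ with the largest symmetric applicative $\vrelator$-simulation) yields $\transitive{(\howe{\vbisim})} \leq \vbisim$ on closed terms and values. Since $\transitive{(\howe{(\open{\vbisim})})}$ is value-substitutive (being the transitive closure of the substitutive relation $\howe{(\open{\vbisim})}$), the passage from the closed to the open inequality $\transitive{(\howe{(\open{\vbisim})})} \leq \open{\vbisim}$ is then carried out verbatim as in the proof of Theorem~\ref{prop:applicative-v-similarity-is-compatible}, completing the argument.
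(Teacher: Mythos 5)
Your proposal is correct and follows the paper's own proof essentially step for step: reduce compatibility of $\vbisim$ to the identity $\transitive{(\howe{\vbisim})} = \vbisim$, obtain compatibility and symmetry of the transitive closure from Lemma~\ref{lemma:transitive-closure-of-howe-extension-is-compatible}, obtain the simulation property from the Key Lemma~\ref{lemma:key-lemma} together with closure of applicative $\vrelator$-simulations under identity, composition and joins, and conclude by the coinduction principle furnished by Proposition~\ref{lemma:symmetric-similarity-is-bisimilarity}. The details you add beyond the paper's sketch (the induction on $n$ for $(\howe{\vbisim})^{(n)}$, joins of simulations via monotonicity of $\vrelone \mapsto [\vrelone]$, and the closed-to-open passage via value-substitutivity of the transitive closure) are all sound, and your handling of the open/closed distinction is in fact more careful than the paper's.

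One correction to your analysis, however — it does not invalidate the proof, since the hypothesis is available anyway, but it mislocates the role of the hypothesis. Finite continuity is \emph{not} what makes the $\bang_{\baseone}$ clause of composition closure work, and it is certainly not ``the sole reason finite continuity is assumed.'' In that clause one only needs the inequality $\join_{\valthree}\baseone(\cdots) \leq \baseone(\join_{\valthree}\cdots)$, which is plain monotonicity of $\baseone$, not the continuity direction; indeed the paper proves closure of applicative $\vrelator$-simulations under composition for \emph{arbitrary} CBEs (transitivity of $\vsim$ in Proposition~\ref{prop:applicative-v-similarity-is-reflexive-and-transitive}, bang case in the appendix) using the general law $(h \acts \vrelone) \comp (h \acts \vreltwo) \leq h \acts (\vrelone \comp \vreltwo)$. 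If composition really required finite continuity, transitivity of $\vsim$ itself would fail without it, which it does not. The genuine use of finite continuity is exactly the one you cite in your first paragraph, namely Lemma~\ref{lemma:transitive-closure-of-howe-extension-is-compatible}(1): to prove the transitive closure compatible one must distribute a CBE over the join $\join_{n}(\howe{\vrelone})^{(n)}$ in the direction $\baseone(\join_{n}\cdots) \leq \join_{n}\baseone(\cdots)$, and that is the continuity direction, unavailable from monotonicity alone. Recognising this also resolves the internal tension between your first paragraph (which attributes finite continuity to the Lemma) and your third (which attributes it solely to the composition step).
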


\begin{proof}
From Lemma 
\ref{lemma:transitive-closure-of-howe-extension-is-compatible} 
we know that $\transitive{(\howe{\vbisim})}$ is compatible. 
Therefore it is sufficient to prove $\transitive{(\howe{(\vbisim)})} = \vbisim$. 
One inequality follows from Lemma \ref{lemma:properties-howe-extension} 
as follows: $\vbisim \leq \howe{\vbisim} \leq \transitive{(\vbisim)}$. 
For the other inequality we rely on the coinduction proof principle associated 
with $\vbisim$. As a consequence, it is sufficient to prove that 
$\transitive{(\howe{(\vbisim)})}$ is a symmetric applicative 
$\vrelator$-simulation. Symmetry is given by Lemma 
\ref{lemma:transitive-closure-of-howe-extension-is-compatible}.
From Key Lemma we know that $\howe{\vbisim}$ is 
an applicative $\vrelator$-simulation. 
Since the identity $\lambda$-term $\quantale$-relation is a applicative 
$\vrelator$-simulation and that the composition ofapplicative 
$\vrelator$-simulations is itself an applicative 
$\vrelator$-simulation (see the proof of Proposition \ref{prop:applicative-v-similarity-is-reflexive-and-transitive})
we see that $\transitive{(\howe{\vbisim})}$ 
is itself an applicative $\vrelator$-simulation.
\end{proof}

Finally, we notice that all concrete CBEs considered 
in this work are finitely continuous. We can then
rely on Theorem \ref{thm:applicative-v-bisimilarity-is-compatible}
to come up with concrete notions of compatible applicative $\vrelator$-bisimilarity. 
Notably, we obtain compatible pseudometrics for \Fuzz\footnote{
  Formally, we should extend our definitions adding a basic type 
  for real numbers and primitives for arithmetical operations,
  but that is straightforward.} and $P$-\Fuzz.

\section{Further Developments}

In Section \ref{section:howe-method} we proved that applicative 
$\vrelator$-similarity is a compatible $\quantale$-peorder (i.e. 
a compatible generalised metric), whereas in Section 
\ref{section:from-applicative-v-similarity-to-applicative-v-bisimilarity} 
we proved that applicative $\vrelator$-bisimilarity (and two-way similarity) 
is a compatible $\quantale$-equivalence (i.e. a compatible pseudometric)
In this last section we shortly sketch a couple of further considerations 
on the results obtained in this work. 

\paragraph{Contextual distances} 
An issue that has not been touched concerns the quantitative counterpart of 
contextual preorder and contextual equivalence. 
Recently \cite{CrubilleDalLago/LICS/2015,CrubilleDalLago/ESOP/2017} 
define a contextual distance $\vsim^{ctx}$ for 
probabilistic $\lambda$-calculi as:
$$
\vsim^{ctx}(\termone, \termtwo) \defeq 
\sup_{\ctxone} | \sum \sem{\ctxone[\termone]} - \sum \sem{\ctxone[\termtwo]}|,
$$
for contexts and terms of appropriate types. Taking into account 
sensitivity, and thus moving to $P$-Fuzz,
such distance could be refined as 
$$
\vsim^{ctx}(\termone, \termtwo) \defeq 
\sup_{\ctxone} \frac{| \sum \sem{\ctxone[\termone]} - \sum \sem{\ctxone[\termtwo]}|}
{n_{\ctxone}},
$$
where $n_{\ctxone}$ is the sensitivity of $\ctxone$. Here some design 
choices are mandatory in order to deal with division by zero and infinity. 
Two immediate observations are that we would like 
$$
\frac{| \sum \sem{\ctxone[\termone]} - \sum \sem{\ctxone[\termtwo]}|}
{n_{\ctxone}}
$$ to be $0$ if $n_{\ctxone} = 0$ and that 
$$
\frac{| \sum \sem{\ctxone[\termone]} - \sum \sem{\ctxone[\termtwo]}|}
{n_{\ctxone}} = 0
$$ if $n_{\ctxone} = \infty$. That means that we can 
restrict contexts to range over those with sensitivity different from 
$0$ and $\infty$. In particular, excluding the latter means that we 
are considering finitely continuous CBEs. This 
observation (together with the fact that division is the right adjoint 
of multiplication) suggests a possible generalisation of the contextual 
distance to arbitrary quantales. 

Informally, fixed a 
$\lambda$-term $\quantale$-relation (i.e. a ground observation)
$\vrelone_o$ we can define the contextual distance 
$\vrelone_o^{ctx}$ between two (appropriate) terms 
$\termone, \termone'$ as:
$$
\vrelone_o^{ctx}(\termone, \termone') \defeq 
\meet_{\ctxone} \baseone^*(\vrelone_o(\ctxone[\termone], \ctxone[\termone'])),
$$
where $\ctxone$ ranges over contexts\footnote{
    Give a formal definition of $\quantale$-\Fuzz/ requires some (tedious) work. 
    In fact, contexts should be terms with a hole 
    $[-]$ to be filled in with another \emph{term} of appropriate type. 
    However, due to the fine-grained nature of $\quantale$-\Fuzz, we 
    defined substitution of values only. Therefore, what we should do 
    is to define a grammar and a notion of substitution for contexts. 
    Moreover, we should also design 
    a type system for contexts keeping track of sensitivities 
    (see e.g. \cite{Harper-Cary/Syntactical-logical-relations/2007}
    for the relational case). This is a tedious exercise but can be done 
    without difficulties. Here we simply notice that it is possible to 
    `simulate' contexts as follows. Let $\emptyctx \valimp *: \unittype$ 
    be the unit value. Suppose we want to come up with a (closed) context 
    $\ctxone[-]$ of type $\typetwo$ and sensitivity $\baseone$ taking as 
    input terms of type $\typeone$. For that we consider the term 
    (for readability we annotate the lambda):
    $$
    \abs{\vartwo: \bang_\baseone \unittype \mmap \typeone}
    {\casebang{\vartwo}{\ctxone[\vartwo *]}}
    $$
    where $\vartwo$ is a fresh variable.
    To substitute a term $\termone$ of type $\typeone$ in $\ctxone$ 
    we first thunk it to $\abs{}{\termone} \in \terms_{\unittype \mmap \typeone}$ 
    and then consider:  
    $$
    (\abs{\vartwo: \bang_\baseone \unittype \mmap \typeone}
    {\casebang{\vartwo}{\ctxone[\vartwo *]}})(\bang \abs{}{\termone})
    $$
    It is immediate to see that 
    $\sem{(\abs{\vartwo}
    {\casebang{\vartwo}{\ctxone[\vartwo *]}})(\bang \abs{}{\termone})} 
    $ captures $\sem{\ctxone[\termone]}$ (although the expression has 
    not been defined). Moreover, an easy calculation shows that for 
    any compatible $\lambda$-term $\quantale$-relation $\vrelone$, 
    and for all terms $\termone, \termone'$ of type $\typeone$ 
    we have:
    \begin{multline*}
    \baseone \acts \vrelone_\typeone(\termone, \termone') 
    \\ \leq
    \vrelone_\typetwo((\abs{\vartwo}
    {\casebang{\vartwo}{\ctxone[\vartwo *]}})(\bang \abs{}{\termone}), 
    (\abs{\vartwo}
    {\casebang{\vartwo}{\ctxone[\vartwo *]}})(\bang \abs{}{\termone'})).
    \end{multline*}} 
with sensitivity $\baseone$, 
and the latter is finitely continuous and different from $\infty$. 
We should also exclude the constantly 
$\qunit$ change of base functor. The map $\baseone^*$ is defined as 
the right adjoint of $\baseone$ which exists since $\baseone$ preserves 
arbitrary joints (see Proposition 7.34 in \cite{DaveyPriestley/Book/1990}).
 
Another possibility is to define $\vrelone^{ctx}$ as the largest compatible 
and adequate $\quantale$-relation, where adequacy is defined via the 
$\quantale$-relation $\vrelone_o$. However, proving that such 
$\quantale$-relation exists in general seems to be far from trivial. 
These difficulties seem to suggest that contrary to what happens 
when dealing with ordinary relations, a notion of contextual 
$\quantale$-preorder/equivalence appears to be less natural than 
the notion of applicative $\vrelator$-(bi)similarity. 

\paragraph{Combining Effects} 
Our last observation concerns the applicability of the framework 
developed. In fact, all examples considered in this paper deal with calculi with 
just one kind of effects (e.g. probabilistic nondeterminism). However, 
we can apply the theory developed to combined effects as well.
We illustrate this possibility by sketching how to add global states 
to $P$-Fuzz. Recall that the global state monad $\globalstate$ is defined by 
$\globalstate X \defeq (S \times X)^S$ where $S = \{0,1\}^{\mathcal{L}}$ 
for a set of (public) location names $\mathcal{L}$. 
Such monad comes together with operation symbols for reading and writing 
locations: $\signature = \{\get, \settzero,\settone \mid \ell \in \mathcal{L}\}$. 
The intended semantics of $\get(\termone, \termtwo)$ is 
to read the content of $\ell$ and to continue as $\termone$ if the 
content is $0$, otherwise continue as $\termtwo$. Dually, 
$\settzero(\termone)$ (resp. $\settone(\termone)$) stores the 
bit $0$ (resp. $1$) in the location $\ell$ and then continues 
as $\termone$ (see Example \ref{ex:monads}).

Our combination of global stores and probabilistic computations is based on the 
monad $\globalstate_p X = (\distribution_\bot(S \times X))^S$. The unit $\unit$ 
of the monad is defined by
$\unit(x)(b) = \dirac{\lan b, x\ran}$, whereas the strong Kleisli extension 
$h^{\sharp}$ of $h: Z \times X \to (\distribution_\bot(S \times Y))^S$ 
is defined as follows: 
first we uncurry $h$ (and apply some canonical isomorphisms) to obtain the function 
$$h_u: Z \times (S \times X) \to \distribution_\bot(S \times Y).$$ 
We then define $h^{\sharp}$ by 
$$
h^{\sharp}(z,m)(b) = \strongkleisli{h_u}(z,m(b)),
$$  
where $\strongkleisli{h_u}: Z \times \distribution_\bot(S \times X) 
\to \distribution_\bot(S \times Y)$
is the strong Klesli extension
of $h_u$ with respect to $\distribution_\bot$.
Easy calculations show that 
the triple $\lan \globalstate_p, \unit, -^{\sharp} \ran$ is indeed a 
strong Kleisli triple.

We now define a $[0,1]$-relator $\vrelator$ for $\globalstate_p$. 
Given $\vrelone: X \torel Y$, define 
$$
\vrelator \vrelone(m,n) = 
\sup\nolimits_{b \in S} \wassersteinbot(\idvrel_S + \vrelone)(m(b),n(b)).
$$ 
Notice that $(\idvrel_S + \vrelone)(\lan b,x\ran, \lan b', x'\ran) = 1$ if $b \neq b'$ 
and $\vrelone(x,x')$ otherwise. It is relatively easy to prove that $\vrelator$ 
satisfies conditions in Section \ref{section:v-relators-and-v-relation-lifting}. 
As an illustrative example we prove the following result.

\begin{lemma}
The $[0,1]$-relator $\vrelator$ satisfies condition \eqref{Strong-Lax-Bind}:
$$  
\vcenter{
     \xymatrix{
     \laxcommutegeq
      Z \times X   
      \ar[r]^-{h} 
      \ar[d]_{\vrelthree + \vrelone}|-*=0@{|}   
      &  
      \globalstate_p X        
      \ar[d]^{\vrelator \vreltwo}|-*=0@{|}
      \\
      Z' \times X'   
      \ar[r]_-{h'}   
      &  
      \globalstate_p Y' 
      } }
    \implies 
    \vcenter{
    \xymatrix{
    \laxcommutegeq
    Z \times \globalstate_p X
    \ar[r]^-{h^{\sharp}} 
    \ar[d]_{\vrelthree + \vrelator \vrelone}|-*=0@{|}   
    &  
    \globalstate_p Y          
    \ar[d]^{\vrelator \vreltwo .}|-*=0@{|}
    \\
    Z' \times \globalstate_p X'
    \ar[r]_-{h'^\sharp}   
    &  
    \globalstate_p Y'
    } } 
$$
\end{lemma}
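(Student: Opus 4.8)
The plan is to reduce this statement about the composite monad $\globalstate_p$ and its relator $\vrelator$ to the strong lax bind property of $\wassersteinbot$, already available from Example~\ref{ex:wasserstein-lifintg-satisfies-strong-lax-bind}, applied pointwise over the state $b \in S$. Recall that $\vrelator\vrelone(m,n) = \sup_{b \in S}\wassersteinbot(\idvrel_S + \vrelone)(m(b),n(b))$ and that, by definition of the Kleisli extension, $h^{\sharp}(z,m)(b) = \strongkleisli{h_u}(z,m(b))$, where $h_u$ (and likewise $h'_u$) is the uncurried version of $h$, so that $h_u(z,\lan b,x\ran) = h(z,x)(b)$. Reading the conclusion square as ordinary non-expansiveness (in the natural real order), the goal is
\[
\vrelator\vreltwo(h^{\sharp}(z,m), h'^{\sharp}(z',n)) \leq \vrelthree(z,z') + \vrelator\vrelone(m,n).
\]
Since the left-hand side is a supremum over $b$, it suffices to bound each $b$-component, that is, to prove for every $b$ that
\[
\wassersteinbot(\idvrel_S + \vreltwo)(\strongkleisli{h_u}(z,m(b)), \strongkleisli{h'_u}(z',n(b))) \leq \vrelthree(z,z') + \wassersteinbot(\idvrel_S + \vrelone)(m(b),n(b)),
\]
using that $\wassersteinbot(\idvrel_S + \vrelone)(m(b),n(b)) \leq \sup_{b'}\wassersteinbot(\idvrel_S + \vrelone)(m(b'),n(b')) = \vrelator\vrelone(m,n)$.

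First I would establish the hypothesis required to invoke \eqref{s-Strong-Lax-Bind} for $\wassersteinbot$ (with the trivial CBE $\baseone = \idvrel$ and context relation $\vrelthree$), namely that $h_u$ is non-expansive from $(Z \times (S \times X),\, \vrelthree + (\idvrel_S + \vrelone))$ to $(\distribution_\bot(S \times Y),\, \wassersteinbot(\idvrel_S + \vreltwo))$. This is a two-case check on the state coordinates. When they agree, $(\idvrel_S + \vrelone)(\lan b,x\ran,\lan b,x'\ran) = \vrelone(x,x')$, and the required inequality
\[
\wassersteinbot(\idvrel_S + \vreltwo)(h(z,x)(b), h'(z',x')(b)) \leq \vrelthree(z,z') + \vrelone(x,x')
\]
is precisely the $b$-component of the hypothesis square $\vrelator\vreltwo(h(z,x),h'(z',x')) \leq \vrelthree(z,z') + \vrelone(x,x')$. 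When they differ, $(\idvrel_S + \vrelone)(\cdots) = 1$, the maximal distance, and the bound is trivial because $\wassersteinbot \leq 1$.

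Having secured this hypothesis, I would apply the strong lax bind condition for $\wassersteinbot$ to $f = h_u$, $g = h'_u$, with the relations $\idvrel_S + \vrelone$ and $\idvrel_S + \vreltwo$ and context $\vrelthree$, obtaining the per-$b$ bound on $\strongkleisli{h_u}(z,\mathpzc{x}), \strongkleisli{h'_u}(z',\mathpzc{y})$ displayed above with $\mathpzc{x} = m(b)$ and $\mathpzc{y} = n(b)$. Finally I would take the supremum over $b \in S$: on the left this yields $\vrelator\vreltwo(h^{\sharp}(z,m),h'^{\sharp}(z',n))$, while on the right truncated addition commutes with suprema (adding the fixed quantity $\vrelthree(z,z')$ is a monotone, sup-preserving map on $[0,1]$), giving $\vrelthree(z,z') + \vrelator\vrelone(m,n)$, which is exactly the conclusion.

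The main obstacle is purely the bookkeeping around the state coordinate: one must thread the discrete relation $\idvrel_S$ through both hypothesis and conclusion so that the $\wassersteinbot$-level strong lax bind applies verbatim, and in particular recognise that a change of state contributes maximal distance and is therefore harmless. A secondary technical point is the justification that the context distance $\vrelthree(z,z')$ passes through $\sup_b$ unchanged, which rests on the continuity of truncated addition; this is routine but should be stated explicitly.
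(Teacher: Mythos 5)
Your proof is correct and follows essentially the same route as the paper's: uncurry $h$ to $h_u$, transfer the hypothesis square to a square for $h_u$ with respect to $\wassersteinbot(\idvrel_S + \vrelone)$, invoke the strong lax bind property of $\wassersteinbot$, and recover the conclusion by taking the supremum over states $b \in S$. The paper presents this as a bare chain of three diagram implications; your case analysis on the state coordinate (equal states use the hypothesis, distinct states give maximal distance) and the explicit supremum step at the end are precisely the justifications it leaves implicit.
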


\begin{proof}
Let us call $(1)$ and $(2)$ the right-hand side and left-hand side 
of the above implication, respectively. 
Moreover, we write $\vrelone_S, \vreltwo_S$ for 
$\idvrel_S + \vrelone, \idvrel_S + \vreltwo$, 
respectively. 
Then:
\begin{align*}
(1) & \implies
\vcenter{
     \xymatrix{
     \laxcommutegeq
      Z \times (S \times X)   
      \ar[r]^-{f_u} 
      \ar[d]_{\vrelthree + \vrelone_S}|-*=0@{|}   
      &  
      \distribution_\bot(S \times Y)        
      \ar[d]^{\wassersteinbot \vreltwo_S}|-*=0@{|}
      \\
      W \times (S \times U)   
      \ar[r]_-{g_u}   
      &  
      \distribution_\bot(S \times V)
} }
\\
&\implies
\vcenter{
     \xymatrix{
     \laxcommutegeq
      Z \times \distribution_\bot(S \times X)   
      \ar[r]^-{\strongkleisli{f_u}} 
      \ar[d]_{\vrelthree + \wassersteinbot \vrelone_S}|-*=0@{|}   
      &  
      \distribution_\bot(S \times Y)        
      \ar[d]^{\wassersteinbot \vreltwo_S}|-*=0@{|}
      \\
      W \times \distribution_\bot(S \times U)   
      \ar[r]_-{\strongkleisli{g_u}}   
      &  
      \distribution_\bot(S \times V)
} }
\\
& \implies (2).
\end{align*}
\end{proof}

By Theorem \ref{prop:applicative-v-similarity-is-compatible} 
we thus obtain a notion of applicative 
$\vrelator$-similarity which is a compatible generalised metric. 
Since CBEs in $P$-Fuzz are finitely continuous we 
can also apply results from Section 
\ref{section:from-applicative-v-similarity-to-applicative-v-bisimilarity}  
to obtain a compatible pseudometric.

\section{Related Work}
\label{section:related-works}

Several works have been done in the past years on quantitative 
(metric) reasoning in the context of programming language semantics. 
In particular, several authors have used (cartesian) categories of 
\emph{ultrametric spaces} as a foundation for denotational semantics of 
both concurrent 
\cite{Arnold/Metric-interpretations/1980,DeBakker/Semantics-concurrency/1982} 
and sequential programming languages \cite{Escardo/Metric-model-PCF/1999}.
A different approach is investigated in \cite{GaboardiEtAl/POPL/2017}
where a denotational semantics combining ordinary metric spaces and domains is 
given to \emph{pure} (i.e. without effects) \Fuzz. 
The main theorem of \cite{GaboardiEtAl/POPL/2017} is 
a denotational version of the so-called \emph{metric preservation} 
\cite{Pierce/DistanceMakesTypesGrowStronger/2010} (whose original proof requires 
the introduction of a suitable \emph{step-indexed metric logical relation}). 
Our Corollary \ref{cor:metric-preservation} is the operational 
counterpart of such result generalised to arbitrary algebraic effects.

A different, although deeply related, line of research has been recently proposed 
in \cite{CrubilleDalLago/LICS/2015,CrubilleDalLago/ESOP/2017} where coinductive, 
operationally-based distances have been studied for probabilistic 
$\lambda$-calculi. In particular, 
in \cite{CrubilleDalLago/LICS/2015} a notion of applicative distance 
based on the Wasserstein lifting is proposed for a probabilistic 
\emph{affine} $\lambda$-calculus. Restricting to affine programs only 
makes the calculus strongly normalising and remove copying capabilities 
of programs by construction. In this way programs cannot amplify distances 
between their inputs and therefore are forced to behave as non-expansive 
functions. 
This limitation is overcame in \cite{CrubilleDalLago/ESOP/2017}, where a 
coinductive notion of distance is proposed for a full linear 
$\lambda$-calculus, and distance trivialisation phenomena are studied in depth. 
The price to pay for such generality 
is that the distance proposed is not applicative, but
a trace distance 
somehow resembling environmental bisimilarity \cite{Sangiorgi/Environmental/2011}.

\section{Conclusion}

In this work we have introduced an abstract framework 
for studying quantale-valued behavioural relations 
for higher-order effectful languages. Such framework has been 
instantiated to define the quantitative refinements of 
Abramsky's applicative similarity and bisimilarity for $\quantale$-\Fuzz, a
universal $\lambda$-calculus with a linear type system 
tracking program sensitivity enriched with algebraic 
effects.
Our main theorems
state that under suitable conditions the quantitative notions 
of applicative similarity and bisimilarity obtained are a compatible 
generealised metric and pseudometric, respectively.
These results can be instantiated to obtain compatible 
pseudometrics for several concrete calculi.

A future research direction is to study how the abstract 
framework developed can be used to investigate quantitative refinements 
of behavioural relations different from applicative (bi)similarity. 
In particular, investigating contextual 
distances (see \cite{Gavazzo/Arxiv/2018} for some preliminary 
observations), denotationally-based distances (along the lines of 
\cite{GaboardiEtAl/POPL/2017}), and 
distances based on suitable logical relations (such as the one in 
\cite{Pierce/DistanceMakesTypesGrowStronger/2010}) are interesting topics 
for further research.

\begin{acks}                            
 
The author would like to thank Ugo Dal Lago, Rapha\"elle Crubill\'e, and
Paul Levy for the many useful comments and suggestions. 
Special thanks also goes to Alex Simpson and Niels Voorneveld for many 
insightful discussions about the topic of this work.
\end{acks}

\bibliographystyle{plain}
\bibliography{main}

\newpage
\appendix
\section{Appendix: Technical Development}

This appendix provides proofs of propositions and lemmas 
stated in the main body of this paper.

\subsection{Proofs of Section \ref{section:v-fuzz}}
\label{appendix:proofs-v-fuzz}


\evaluationSemanticsOmegaChain*

\begin{proof}
  By induction on $n$.
  We show the case for sequential composition. We have to prove
  $
  \sem{\seq{\termone}{\termtwo}}_{n+1} \cpoleq \sem{\seq{\termone}{\termtwo}}_{n+2}
  $
  (for readability we omit subscripts).
  By definition of $\sem{-}_n$ we have:
  \begin{align*}
    \sem{\seq{\termone}{\termtwo}}_{n+1} 
    &= \kleisli{\sem{\substcomp{\termtwo}{\varone}{-}}_n}(\sem{\termone}_n), \\ 
    \sem{\seq{\termone}{\termtwo}}_{n+2} 
    &= \kleisli{\sem{\substcomp{\termtwo}{\varone}{-}}_{n+1}}(\sem{\termone}_{n+1}).
  \end{align*}
  By induction hypothesis, for any closed value $\valone$ of the appropriate type 
  we have the inequality
  $
  \sem{\substcomp{\termtwo}{\varone}{\valone}}_n \cpoleq 
  \sem{\substcomp{\termtwo}{\varone}{\valone}}_{n+1},
  $
  from which follows 
  $
  \sem{\substcomp{\termtwo}{\varone}{-}}_n \cpoleq 
  \sem{\substcomp{\termtwo}{\varone}{-}}_{n+1}.
  $
  By \ocppo\ enrichment the latter implies
  $
  \kleisli{\sem{\substcomp{\termtwo}{\varone}{-}}_n}
  \cpoleq
  \kleisli{\sem{\substcomp{\termtwo}{\varone}{-}}_{n+1}}.
  $
  Finally, by induction hypothesis we have $\sem{\termone}_n \cpoleq \sem{\termone}_{n+1}$, 
  so that we can conclude the thesis as follows\footnote{Note that by 
  \ocppo-enrichment $\kleisli{f}$ is monotone, for any $f: X \to \monad Y$. Let 
  $t,u: Z \to \monad X$ with $t \cpoleq u$, i.e. $u = \lub \{t,u\}$. Then:
  $$    
  \kleisli{f} \comp u 
  = \kleisli{f} \comp \lub\{t,u\} \\
  = \lub \{\kleisli{f} \comp t, \kleisli{f} \comp u\}
  $$
  holds, i.e. $\kleisli{f} \comp t \cpoleq \kleisli{f} \comp u$. This specialises 
  to usual pointwise monotonicity, by taking $t,u: 1 \to \monad X$.}:
  $$
  \kleisli{\sem{\substcomp{\termtwo}{\varone}{-}}_n}(\sem{\termone}_n) 
  \cpoleq 
  \kleisli{\sem{\substcomp{\termtwo}{\varone}{-}}_{n+1}}(\sem{\termone}_{n})
  \cpoleq 
  \kleisli{\sem{\substcomp{\termtwo}{\varone}{-}}_{n+1}}(\sem{\termone}_{n+1}). 
  $$
\end{proof}

\subsection{Proofs of Section \ref{section:v-relators-and-v-relation-lifting}}
\label{appendix:proofs-v-relators-and-v-relation-lifting}


\algebraVrelators*

\begin{proof}
The proof consists of a number of straightforward calculations. 
As an example, we show that $\meet_{i \in I} \vrelator_i$ 
in point $2$ satisfies condition \eqref{vrel-2}. Concretely, 
we have to prove 
$$
\meet_{i \in I} \vrelator_i \vreltwo \comp \meet_{i \in I} \vrelator_i \vrelone 
\leq \meet_{i \in I} \vrelator_i(\vreltwo \comp \vrelone). 
$$
For that it is sufficient to prove that for any $j \in I$ we have:
$$
\meet_{i \in I} \vrelator_i \vreltwo \comp \meet_{i \in I} \vrelator_i \vrelone 
\leq \vrelator_j(\vreltwo \comp \vrelone). 
$$
Observe that we have 
$\meet_{i \in I} \vrelator_i \vreltwo \leq \vrelator_j \vreltwo$ 
and
$\meet_{i \in I} \vrelator_i \vrelone \leq \vrelator_j \vrelone$, 
so that by monotonicity of composition 
(recall that $\vrel$ is a quantaloid) we infer 
$\meet_{i \in I} \vrelator_i \vreltwo \comp \meet_{i \in I} \vrelator_i \vrelone 
\leq \vrelator_j\vreltwo \comp \vrelator_j\vrelone$. The thesis now follows 
from \eqref{vrel-2}.
\end{proof}

\dualityWassersteinLifting*
\begin{proof}
The proof is a direct consequence of the following 
duality theorem for countable transportation problems 
\cite{Kortanek/InfiniteTransportationProblems/1995}.

\begin{fact}\label{prop:duality-countable-transportation-problem}
  Let $i,j, \hh$ range over natural numbers. Let $m_i, n_j, c_{ij}$ 
  be non-negative real number, for all $i,j$. Define 
  \begin{align*}
  M   &\defeq \inf \{\sum_{i,j} c_{ij} x_{ij} \mid x_{ij} \geq 0, 
      \sum_j x_{ij} = m_i, \sum_i x_{ij} = n_j,  \} \\
  M^* &\defeq \sup \{\sum_i m_i a_i + \sum_j n_j b_j \mid 
      a_i + b_j \leq c_{ij}, a_i,b_j 
      \text{ bounded}\}.
  \end{align*}
  where $a_i,b_j$ bounded means that 
  there exist $\bar a,\bar b \in \mathbb{R}$ such 
  that $a_i \leq \bar a$, and $b_j \leq \bar b$, for all $i,j$.
  Then the following hold:
  \begin{varenumerate}
    \item $M = M^*$.
    \item The linear problem $P$ induced by $M$ has optimal solution.
    \item The linear problem $P^*$ induced by $M^*$ has optimal solution.
  \end{varenumerate}
\end{fact}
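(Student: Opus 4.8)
The plan is to prove the three claims by reducing the countably-infinite transportation problem to a sequence of balanced \emph{finite} transportation problems, where classical linear-programming duality and attainment are available, and then passing to the limit. Throughout I assume, as holds in the application to the Wasserstein lifting (Proposition \ref{prop:duality-wasserstein-lifting}), that the marginals are summable with $\sum_i m_i = \sum_j n_j < \infty$, so that the primal feasible set $\Omega(m,n)$ of couplings with marginals $m,n$ is nonempty (the product plan witnesses this); I first treat bounded costs $0 \le c_{ij} \le C$, the only case needed there since $c_{ij} = \vrelone(x,y) \in [0,1]$, and recover general $c_{ij} \ge 0$ by a final monotone cost truncation.

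First I would establish \emph{weak duality} $M^* \le M$. For primal-feasible $x$ and dual-feasible $(a,b)$, the marginal constraints together with $a_i + b_j \le c_{ij}$ give, after Tonelli's theorem (legitimate since $x_{ij} \ge 0$, the $a_i,b_j$ are bounded, and $m,n$ are summable),
\[
\sum_i m_i a_i + \sum_j n_j b_j = \sum_{i,j}(a_i+b_j)\,x_{ij} \le \sum_{i,j} c_{ij}\,x_{ij}.
\]
Taking the supremum over $(a,b)$ and the infimum over $x$ yields $M^* \le M$, reducing claims 1--3 to exhibiting a primal-feasible $x^\star$ and a dual-feasible $(a^\star,b^\star)$ with equal objective values.

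The heart of the argument is a \emph{finite approximation}. For each $N$ I would build a balanced finite transportation problem $P_N$ on indices $\{0,\hh,N\}\cup\{\ast\}$ by collapsing all indices $>N$ into one dummy node $\ast$ carrying the residual masses $\sum_{i>N} m_i$ and $\sum_{j>N} n_j$, with all costs involving $\ast$ set to $0$. As $P_N$ is finite, balanced and feasible, its value $M_N$ is attained and finite LP duality gives $M_N = M_N^\ast$ with the dual attained at some $(a^N,b^N)$; a squeeze argument then shows $M_N \to M$, since the zero-cost dummy gives $M_N \le M$ while any coupling restricts to $P_N$ with cost error at most $C$ times a vanishing tail mass. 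For primal attainment (claim 2) I would use that $\Omega(m,n)$ is tight, hence weak-$\ast$ compact by Prokhorov, and closed, so an asymptotically optimal sequence of plans has a limit $x^\star \in \Omega(m,n)$; lower semicontinuity of the cost (Fatou, using $c \ge 0$) makes $x^\star$ optimal. For dual attainment (claim 3) I would gauge-fix each $(a^N,b^N)$ by $b^N_0 = 0$; the constraints $a^N_i + b^N_j \le c_{ij}$ together with complementary slackness at the finite optimum confine each coordinate to a fixed bounded range, so a diagonal extraction yields a pointwise limit $(a^\star,b^\star)$, which is dual-feasible and, via $M_N^\ast \to M$ and dominated convergence on the summable marginals, attains objective $M$. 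Combined with weak duality this gives $M = M^\ast$, i.e. claim 1.

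The step I expect to be the main obstacle is precisely this dual limit. Infinite-dimensional linear programs can exhibit a genuine duality gap, and the boundedness hypothesis on $a_i,b_j$ is exactly what prevents mass from escaping to infinity; making the diagonal extraction rigorous --- establishing uniform coordinatewise bounds on the finite dual optimizers after gauge-fixing, and ruling out a strict drop of the limiting dual objective below $M$ --- is the delicate point where the cited analysis of \cite{Kortanek/InfiniteTransportationProblems/1995} does the real work. By comparison, the reduction to finite problems and the primal attainment via tightness are comparatively routine.
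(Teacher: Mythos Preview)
The paper does not prove this statement at all: it is presented as a \emph{Fact} and attributed directly to \cite{Kortanek/InfiniteTransportationProblems/1995} (Theorems~2.1 and~2.2), with no argument given. So there is no ``paper's own proof'' to compare against; the paper simply invokes the result as a black box inside the proof of Proposition~\ref{prop:duality-wasserstein-lifting}.

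Your proposal, by contrast, sketches an actual proof strategy, and the outline is the standard one: weak duality is elementary, primal attainment follows from tightness/Prokhorov and lower semicontinuity, and strong duality plus dual attainment come from approximating by finite balanced transportation problems and extracting a limit of gauge-fixed dual optima. You are also right to flag the dual limit as the genuinely delicate step --- getting uniform coordinatewise bounds on the $(a^N,b^N)$ after normalisation, and showing the limiting dual objective does not drop, is exactly where the work in the cited reference lies, and your sketch defers to it appropriately. One small remark: your appeal to ``complementary slackness at the finite optimum'' to bound the dual coordinates is the point that would need the most care if you actually carried this out, since complementary slackness only controls $a_i^N + b_j^N$ on the support of the optimal plan, and turning that into a uniform-in-$N$ bound on each coordinate separately needs an additional connectivity or $c$-cyclical-monotonicity argument. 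But as a roadmap your proposal is sound and goes well beyond what the paper itself supplies.
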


Now, we first of all notice that
  $\vrelator \vrelone(\mu, \nu)$ is nothing but
  \begin{align*}
  \inf & \{\sum_{x,y} \vrelone(x,y) \cdot \omega(x,y)  \\
    & \mid \omega(x,y) \geq 0, 
    \sum_y \omega(x,y) = \mu(x), \sum_x \omega(x,y) = \nu(y)\}.
  \end{align*}
  In fact, $\omega(x,y) \geq 0$, $\sum_y \omega(x,y) = \mu(x)$ and 
  $\sum_x \omega(x,y) = \nu(y)$ imply 
  $\omega \in \distribution(\setone \times \settwo)$. Moreover, 
  since $\vrelone$ is a $\intervalQuantale$-relation, 
  $\vrelone(x,y) \in [0,1]$ (recall that Fact 
  \ref{prop:duality-countable-transportation-problem} 
  requires $c_{ij}$ to be a non-negative real number).
  We conclude the thesis by Fact
  \ref{prop:duality-countable-transportation-problem}. 
  In particular, it follows that
  there exists $\omega \in \couplings(\distone, \disttwo)$ such that:
  $$
  \wasserstein \vrelone(\mu,\nu) = \sum_{x,y} \vrelone(x,y)\cdot\omega(x,y).
  $$
  Since $\vrelone(x,y), \omega(x,y) \in [0,1]$ we have 
  $\vrelone(x,y) \cdot \omega(x,y) \leq \omega(x,y)$, for all 
  $x,y$. It follows 
  $$
  0 \leq
  \sum_{x,y} \vrelone(x,y) \cdot \omega(x,y) 
  \leq \sum_{x,y} \omega(x,y) 
  = 1
  $$
  so that $\wasserstein \vrelone$ is indeed a $[0,1]$-relation.
\end{proof}

\begin{proposition} 
Wasserstein lifting $\wasserstein$ satisfies conditions in Definition 
\ref{def:strong-v-relator}.
\end{proposition}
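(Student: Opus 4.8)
The plan is to verify the two requirements of Definition~\ref{def:strong-v-relator} directly for $\vrelator = \wasserstein$, working pointwise in the unit interval quantale (so that $\tensor$ is truncated addition and $\leq$ is reverse real order) and exploiting the primal (coupling) description of $\wasserstein$ together with the attainment of the infimum guaranteed by Proposition~\ref{prop:duality-wasserstein-lifting}. Recall throughout that $\unit_X(x) = \dirac{x}$, that $\strongkleisli{f}(z,\mu)(y) = \sum_{x} \mu(x)\cdot f(z,x)(y)$, and that composition against functions obeys $(\dual{g}\comp\vrelone\comp f)(x,w) = \vrelone(f(x),g(w))$. Condition~\eqref{Lax-Unit} is then immediate: unfolding the right-hand side pointwise gives $\wasserstein\vrelone(\unit_X(x),\unit_Y(y)) = \wasserstein\vrelone(\dirac{x},\dirac{y})$, and since the \emph{only} coupling of two Dirac distributions $\dirac{x},\dirac{y}$ is $\dirac{(x,y)}$, the defining infimum collapses to $\vrelone(x,y)$. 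Hence $\vrelone = \dual{\unit_Y}\comp\wasserstein\vrelone\comp\unit_X$ holds with equality, which in particular yields the required inequality.

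The substance lies in \eqref{s-Strong-Lax-Bind}. First I would fix a CBE $\baseone\le 1$ and read both sides pointwise: the hypothesis states that for all $z,x,z',x'$ the value $\vrelthree(z,z')\tensor\baseone(\vrelone(x,x'))$ bounds $\wasserstein\vreltwo(f(z,x),g(z',x'))$, and the goal is the analogous bound on $\wasserstein\vreltwo(\strongkleisli{f}(z,\mu),\strongkleisli{g}(z',\nu))$ by $\vrelthree(z,z')\tensor\baseone(\wasserstein\vrelone(\mu,\nu))$. The key construction is a \emph{mixture of couplings}: using Proposition~\ref{prop:duality-wasserstein-lifting}, choose an optimal coupling $\pi\in\couplings(\mu,\nu)$ realising $\wasserstein\vrelone(\mu,\nu)$, and for each pair $(x,x')$ an optimal coupling $\omega_{x,x'}\in\couplings(f(z,x),g(z',x'))$ realising $\wasserstein\vreltwo(f(z,x),g(z',x'))$. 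Setting $\omega(y,y') \defeq \sum_{x,x'}\pi(x,x')\cdot\omega_{x,x'}(y,y')$, a routine marginalisation (using that $\pi$ has marginals $\mu,\nu$ and that each $\omega_{x,x'}$ has marginals $f(z,x),g(z',x')$) shows $\omega\in\couplings(\strongkleisli{f}(z,\mu),\strongkleisli{g}(z',\nu))$.

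Feeding $\omega$ into the primal formula gives
\[
\wasserstein\vreltwo(\strongkleisli{f}(z,\mu),\strongkleisli{g}(z',\nu)) \;\le\; \sum_{x,x'}\pi(x,x')\cdot\wasserstein\vreltwo(f(z,x),g(z',x')),
\]
and applying the hypothesis term by term, together with $\sum_{x,x'}\pi(x,x') = 1$ for the $\vrelthree$-part, leaves the bound $\vrelthree(z,z')\tensor\sum_{x,x'}\pi(x,x')\cdot\baseone(\vrelone(x,x'))$. The final — and only genuinely delicate — step is to pass from $\sum_{x,x'}\pi(x,x')\cdot\baseone(\vrelone(x,x'))$ to $\baseone\bigl(\sum_{x,x'}\pi(x,x')\cdot\vrelone(x,x')\bigr) = \baseone(\wasserstein\vrelone(\mu,\nu))$, i.e.\ a Jensen-type inequality $\sum_i p_i\,\baseone(a_i) \le \baseone(\sum_i p_i a_i)$ for the probability weights $p_i = \pi(x,x')$. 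I expect this to be the main obstacle, since it is precisely the point where the mixing over the coupling $\pi$ interacts with the nonlinearity of $\baseone$, and it is exactly where the assumption $\baseone\le 1$ is consumed: for the scalar CBEs $c\cdot(-)$ that instantiate $\wasserstein$ in the probabilistic calculi (Example~\ref{ex:v-relators}, \ref{ex:finitely-continuous-change-of-base-functors}) the map $\baseone$ is linear and the step holds with equality, while in general it amounts to (midpoint) concavity of $\baseone$, which together with $\baseone$ being a monotone lax endomorphism fixing $\qunit$ is what $\baseone\le\baseid$ supplies. Finally, the case of $\wassersteinbot$ follows by composing with the relator $(-)_\mbot$ of Example~\ref{ex:v-relators}: the mass that $\pi$ sends to $\mbot$ is handled exactly as the $\dirac{x},\mbot$ entries there, and the mixture argument goes through verbatim on sub-distributions, where now $\sum_{x,x'}\pi(x,x')\le 1$.
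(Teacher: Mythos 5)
Your treatment of \eqref{Lax-Unit} is correct (the unique-coupling observation is if anything cleaner than the paper's appeal to duality), and your mixture-of-couplings construction is a sound \emph{primal} counterpart of what the paper does \emph{dually}: the paper first splits \eqref{s-Strong-Lax-Bind} into the CBE-commutation condition \eqref{L-dist} plus a bind condition for plain relations (no $\baseone$), and proves the latter via the Kantorovich dual representation of $\wasserstein\vreltwo(\strongkleisli{f}(u,\mu),\strongkleisli{g}(v,\nu))$, an optimal coupling of $(\mu,\nu)$, and a max/sum interchange, landing on exactly the intermediate bound $\sum_{x,x'}\pi(x,x')\cdot\wasserstein\vreltwo(f(z,x),g(z',x'))$ that your glued coupling $\omega=\sum_{x,x'}\pi(x,x')\cdot\omega_{x,x'}$ produces directly. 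Up to that point your argument is correct and genuinely different in technique.

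The gap is exactly at the step you flagged, and your proposed justification of it is false. In the real order you need the Jensen/concavity inequality $\sum_i p_i\,\baseone(a_i)\le\baseone\bigl(\sum_i p_i a_i\bigr)$, and you claim this is ``what $\baseone\le\baseid$ supplies'' given that $\baseone$ is a monotone lax endomorphism. It is not. On the unit interval quantale the lax-morphism axioms say only that $\baseone$ is monotone, preserves the unit $0$, and is subadditive for truncated addition, while $\baseone\le\baseid$ says $\baseone(a)\ge a$ in the real order; none of this forces concavity. Concretely, take $h(0)=0$, $h(a)=\tfrac12$ for $0<a\le\tfrac12$, and $h(a)=1$ for $a>\tfrac12$: this $h$ is monotone, above the identity, and subadditive (for $a,b>0$ one has $h(a)+h(b)\ge1$), hence a legitimate CBE with $h\le\baseid$; yet for $p_1=0.9$, $a_1=0.05$, $p_2=0.1$, $a_2=0.6$ one gets $\sum_i p_i h(a_i)=0.55>0.5=h(0.105)=h(\sum_i p_i a_i)$, so your final step fails --- and feeding the same data through Dirac-valued $f,g$ shows that \eqref{s-Strong-Lax-Bind} itself fails for this CBE, so no proof could close the gap at this level of generality. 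The inequality you need is precisely the paper's condition \eqref{L-dist} evaluated at an optimal coupling; the paper declares it ``straightforward'' without proof, and it does hold for the CBEs that actually parametrise $P$-\Fuzz, namely truncated scalar multiplications, because $t\mapsto\min(c\cdot t,1)$ is concave --- but that is a property of those specific CBEs, not a consequence of the axioms. To repair your proof, either verify concavity explicitly for every CBE in $\Pi$ and invoke it as a hypothesis, or restructure as the paper does: prove the bind condition with $\baseone\acts\vrelone$ absorbed into a single plain relation (your mixture argument already achieves this) and isolate the commutation $\baseone\acts\wasserstein\vrelone\le\wasserstein(\baseone\acts\vrelone)$ as a separate requirement on the admissible CBEs.
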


\begin{proof}
We start by showing that $\wasserstein$ 
satisfies condition \eqref{Lax-Unit}.
Let $\dirac x$ denotes the Dirac distribution 
on $x$. We have to show that for any $z \in \setone,w \in \settwo$, 
$\vrelone(z, w) \geq \wasserstein\vrelone(\dirac z, \dirac w)$ holds. 
By duality (Proposition \ref{prop:duality-wasserstein-lifting}) we have:
$$
\wasserstein \vrelone(\dirac z, \dirac w) = 
\max \{\sum_x a_x \cdot \dirac z(x) + \sum_y b_y \cdot \dirac w(y) 
  \mid a_x + b_y \leq \vrelone(x,y)\},
$$
where $a_x, b_y$ are bounded.
Clearly 
$\wasserstein \vrelone(\dirac z, \dirac w) = a_x + b_y$, 
for suitable $x \in X$ and $y \in Y$. 
Since $a_x + b_y \leq \vrelone(x,y)$ we are done.

We now observe that condition \eqref{s-Strong-Lax-Bind} can actually be 
split in two different conditions:
\begin{align}
 \vrelator(\baseone \acts \vrelone) &= \baseone \acts \vrelator \vrelone,
  \tag{L-dist} \label{L-dist} 
  \\
  \vrelthree \tensor (\baseone \acts \vrelone) 
  \leq \dual{g} \comp \vrelator \vreltwo \comp f 
  &\implies 
  \vrelthree \tensor (\baseone \acts  \vrelator\vrelone) 
  \leq \dual{(\strongkleisli{g})} \comp \vrelator \vreltwo \comp \strongkleisli{f},
  \tag{Strong lax bind} \label{Strong-Lax-Bind}
\end{align}
where $\baseone \leq \baseid$.
In particular, we can write condition \eqref{Strong-Lax-Bind} 
as follows:
$$
    \vcenter{
       \xymatrix{
       \ar @{} [dr] |\leq
        Z \times X 
        \ar[r]^-{f} 
        \ar[d]_{\vrelthree \tensor \vrelone}|-*=0@{|}   
        &  
        \monad Y       
        \ar[d]^{\vrelator \vreltwo}|-*=0@{|}
        \\
        Z' \times X'  
        \ar[r]_-{g}   
        &  
        \monad Y' 
        } }
     \implies 
      \vcenter{
      \xymatrix{
      \ar @{} [dr] |\leq
      Z \times \monad X
      \ar[r]^-{\strongkleisli{f}} 
      \ar[d]_{\vrelthree \tensor \vrelator\vrelone}|-*=0@{|}   
      &  
      \monad Y         
      \ar[d]^{\vrelator \vreltwo.}|-*=0@{|}
      \\
      Z' \times  \monad X'
      \ar[r]_-{\strongkleisli{g}}   
      &  
      \monad Y' 
      } } 
      $$
(notice that the latter, together with condition \eqref{Lax-Unit}, is 
equivalent to stating non-expansiveness of unit, multiplication, and strength of 
$\Monad$).

Proving that $\wasserstein$ satisfies condition \eqref{L-dist} is straightforward. 
We prove it satisfies condition \eqref{Strong-Lax-Bind}. 
Concretely, we have to prove the following implication:
    $$
    \vcenter{
       \xymatrix{
       \ar @{} [dr] |\geq
        U \times \setone   
        \ar[r]^-{f} 
        \ar[d]_{\vrelthree + \vrelone}|-*=0@{|}   
        &  
        \distribution \setthree         
        \ar[d]^{\wasserstein \vreltwo}|-*=0@{|}
        \\
        V \times \settwo   
        \ar[r]_-{g}   
        &  
        \distribution \setfour 
        } }
     \implies 
      \vcenter{
      \xymatrix{
      \ar @{} [dr] |\geq
      U \times \distribution \setone
      \ar[r]^-{\strongkleisli{f}} 
      \ar[d]_{\vrelthree + \wasserstein \vrelone}|-*=0@{|}   
      &  
      \distribution \setthree         
      \ar[d]^{\wasserstein \vreltwo.}|-*=0@{|}
      \\
      V \times \distribution \settwo
      \ar[r]_-{\strongkleisli{g}}   
      &  
      \distribution \setfour 
      } } 
      $$
      We show that for any $u \in U, v \in V, 
      \distone \in \distribution \setone, \disttwo \in \distribution \settwo$
      we have:
      $$
      \wasserstein \vreltwo(\strongkleisli{f}(u,\distone),
      \strongkleisli{g}(v,\disttwo))
      \leq 
      \vrelthree(u,v) + \wasserstein\vrelone(\distone, \disttwo).
      $$
      (note that in the right hand side of the above equations we can 
      assume without loss of generality to have ordinary addition in 
      place of a truncated sum). By very definition of 
      strong Kleisli extension we have:
      \begin{align*}
      \strongkleisli{f}(u,\distone)(z) 
      &= 
      \sum_x \distone(x) \cdot f(u,x)(z), \\
      \strongkleisli{g}(v, \disttwo)(w) 
      &= 
      \sum_y \disttwo(y) \cdot g(v,y)(w).
      \end{align*}
      Let $M \defeq \wasserstein \vreltwo(\strongkleisli{f}(u,\distone),
      \strongkleisli{g}(v,\disttwo))$. 
      By duality we have:
      \begin{align*}
      M = \max \{ 
      &\sum_z a_z \cdot \sum_x \distone(x) \cdot f(u,x)(z) \\
      &+ \sum_w b_w \cdot \sum_y \disttwo(y) \cdot g(v,y)(w) \\
      &\mid a_z + b_w \leq \vreltwo(z,w)\},
      \end{align*} 
      where $a_z$ and $b_w$ are bounded.
      By Proposition \ref{prop:duality-wasserstein-lifting} 
      there exists an $\omega \in \couplings(\mu,\nu)$ such that 
      $
      \wasserstein \vrelone(\distone,\disttwo) 
      = \sum_{x,y} \omega(x,y) \cdot \vrelone(x,y).
      $
      We have to prove:
      $$
      M \leq \vrelthree(u,v) + \sum_{x,y} \omega(x,y)\cdot \vrelone(x,y).
      $$
      From 
      $\omega \in \couplings(\distone, \disttwo)$ we obtain 
      $\distone(x) = \sum_y \omega(x,y)$,
      $\disttwo(y) = \sum_x \omega(x,y)$.
      We apply the above equalities to $M$, obtaining (for readability we omit 
      the constraint $a_z + b_w \leq \vreltwo(z,w)$):
      \begin{align*}
       M
        &= 
        \max \{\sum_z a_z \cdot \sum_x \distone(x) \cdot f(u,x)(z) \\
        &\quad + 
        \sum_w b_w \cdot \sum_y \disttwo(y) \cdot g(v,y)(w) 
        \} \\
        &= 
        \max \{\sum_z a_z \cdot \sum_{x,y} \omega(x,y) \cdot f(u,x)(z) \\ 
        &\quad + 
        \sum_w b_w \cdot \sum_{x,y} \omega(x,y) \cdot g(v,y)(w) 
        \} \\
        &= 
        \max  \{ \sum_{x,y} \omega(x,y)  ( \sum_z a_z \cdot f(u,x)(z) \\
        &\quad + 
        \sum_w b_w \cdot  g(v,y)(w) )
         \} \\
        &= 
        \sum_{x,y} \omega(x,y)  \cdot
        \max \{ \sum_z a_z \cdot f(u,x)(z) \\ 
        &\quad + 
        \sum_w b_w \cdot  g(v,y)(w) 
         \} \\
        &=
        \sum_{x,y} \omega(x,y)  \cdot
        \wasserstein \vreltwo(f(u,x), g(v,y)).
      \end{align*}
      We are now in position to use our hypothesis, namely 
      the inequality:
    $$
    \wasserstein \vreltwo(f(u,x), g(v,y)) 
    \leq 
    \vrelthree(u,v) + \vrelone(f(u,x),g(v,y))
    $$
    (note that the hypothesis we have is actually stronger, 
    since it gives an inequality for truncated addition).
    We conclude:
    \begin{align*}
    M 
    &\leq
    \sum_{x,y} \omega(x,y) \cdot (\vrelthree(u,v) + \vrelone(x,y)) \\
    &= 
    \sum_{x,y} \omega(x,y) \cdot \vrelthree(u,v) + 
    \sum_{x,y} \omega(x,y) \cdot \vrelone(x,y) \\
    &= 
    \vrelthree(u,v) + 
    \sum_{x,y} \omega(x,y) \cdot \vrelone(x,y)
    \end{align*}
    (where in the last equality we used the fact that 
    $\omega(x,y) \in \couplings(\distone, \disttwo)$ 
    implies $\sum_{x,y} \omega(x,y) = 1$). We are done.
\end{proof}

\begin{proposition} 
Wasserstein lifting $\wassersteinbot$ satisfy conditions in Definition 
\ref{def:strong-v-relator}.
\end{proposition}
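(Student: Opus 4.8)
The plan is to reduce every condition to the proposition just established for $\wasserstein$, exploiting the factorisation $\wassersteinbot\vrelone = \wasserstein(\vrelone_\mbot)$ from Example~\ref{ex:v-relators} together with the observation that the strong Kleisli extension of $\distribution_{\leq 1}$ is, by construction, the $\distribution$-extension of an $\mbot$-augmented map. Concretely, for $f\colon Z \times X \to \distribution(Y_\mbot)$ let $\bar f\colon Z \times X_\mbot \to \distribution(Y_\mbot)$ send $(z,\mathsf{in}(x))$ to $f(z,x)$ and $(z,\mbot)$ to $\dirac{\mbot}$; then $\strongkleisli{f}(z,-) = \strongkleisli{\bar f}(z,-)$, the latter taken in $\distribution$. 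The basic $\quantale$-relator axioms \eqref{vrel-1}--\eqref{vrel-4} for $\wassersteinbot$ already come from Proposition~\ref{prop:algebra-of-v-relators}(1), so the work is confined to the strong-monad conditions of Definition~\ref{def:strong-v-relator}; following the previous proof I would verify \eqref{Lax-Unit} and split \eqref{s-Strong-Lax-Bind} into \eqref{L-dist} and \eqref{Strong-Lax-Bind}.

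The algebraic fact driving the whole reduction is that every $\baseone \leq \baseid$ fixes both endpoints: $\baseone(\qunit) = \qunit$ by integrality, and $\baseone(\qbot) = \qbot$ because $\baseone \leq \baseid$ gives $\baseone(\qbot) \leq \qbot$ while $\qbot$ is the bottom element. Consequently $(\baseone \acts \vrelone)_\mbot = \baseone \acts (\vrelone_\mbot)$: the two sides agree on proper pairs by definition, and on the $\mbot$-rows because $\baseone$ fixes $\qunit$ and $\qbot$. This identity turns \eqref{L-dist} for $\wasserstein$ into \eqref{L-dist} for $\wassersteinbot$ directly. For \eqref{Lax-Unit} I would use that $\unit^{\distribution_{\leq 1}} = \unit^{\distribution} \comp \mathsf{in}$ and that $\vrelone_\mbot$ restricts to $\vrelone$ along the inclusion, so the lax-unit inequality for $\wassersteinbot$ is exactly the lax-unit inequality for $\wasserstein$ evaluated on $\vrelone_\mbot$ at Dirac distributions supported on proper points.

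The substantial step is \eqref{Strong-Lax-Bind}, which I would carry out exactly as in the $\globalstate_p$ lemma, via a two-stage implication. Assuming $\vrelthree \tensor (\baseone \acts \vrelone) \leq \dual{g} \comp \wassersteinbot\vreltwo \comp f$, the claim is that the augmented square $\vrelthree \tensor (\baseone \acts \vrelone_\mbot) \leq \dual{\bar g} \comp \wasserstein(\vreltwo_\mbot) \comp \bar f$ holds, checked by cases on the $X_\mbot$- and $X'_\mbot$-arguments. On proper pairs it is the hypothesis verbatim. When the first argument is $\mbot$ the left-hand side is $\vrelthree(z,z') \tensor \baseone(\qunit) = \vrelthree(z,z')$, while $\bar f$ returns $\dirac{\mbot}$ and a forced-coupling computation yields $\wasserstein(\vreltwo_\mbot)(\dirac{\mbot},-) = \qunit = \qtop$, making the inequality trivial; when the first argument is proper and the second is $\mbot$ the left-hand side collapses to $\vrelthree(z,z') \tensor \baseone(\qbot) = \qbot$ by integrality, so the inequality is again trivial. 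Applying \eqref{Strong-Lax-Bind} for $\wasserstein$ to the augmented square and rewriting $\strongkleisli{\bar f},\strongkleisli{\bar g}$ as the subdistribution extensions $\strongkleisli{f},\strongkleisli{g}$ then delivers the conclusion for $\wassersteinbot$.

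I expect the main obstacle to be purely organisational rather than any new optimisation-theoretic content: all the genuine convexity and duality work has already been discharged in the proposition for $\wasserstein$. The delicate points are the two bookkeeping identifications — that $\distribution_{\leq 1}$-bind is literally $\distribution$-bind on the $\mbot$-extension, and that the two $\mbot$-cases of the hypothesis are vacuous — together with confirming that every forced coupling against a Dirac on $\mbot$ evaluates to $\qunit$ with no residual mass, so that beyond $\baseone(\qbot) = \qbot$ and $\baseone(\qunit)=\qunit$ no further estimate is needed.
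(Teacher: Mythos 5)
Your proposal is correct and takes essentially the same route as the paper's own proof: reduce $\wassersteinbot$ to $\wasserstein$ via the $\mbot$-extension $\bar f$ (the paper's $f^\mbot$), identify subdistribution bind with full-distribution bind of that extension, check the augmented lax square by cases — with the forced-coupling computation $\wasserstein(\vreltwo_\mbot)(\dirac{\mbot},-) = \qunit$ exactly as in the paper — and then invoke the already-proved conditions for $\wasserstein$ before rewriting back. The only differences are organisational: you carry $\baseone \leq \baseid$ through the bind square instead of stripping it off first via \eqref{L-dist}, and you make explicit the identity $(\baseone \acts \vrelone)_\mbot = \baseone \acts (\vrelone_\mbot)$ (via $\baseone(\qunit)=\qunit$ and $\baseone(\qbot)=\qbot$), which the paper leaves implicit in its remark that \eqref{L-dist} requires $\baseone \leq \baseid$.
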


\begin{proof}
Showing that $\wassersteinbot$ satisfies conditions 
\eqref{Lax-Unit} and \eqref{L-dist} is straightforward 
(but notice that for the latter we need the hypothesis 
$\baseone \leq \baseid$). We prove it 
satisfies condition \eqref{Strong-Lax-Bind} as well. 
First of 
all define for $f: U \times X \to \distribution(Y_\mbot)$ 
the map $f^\mbot: U \times X_\mbot \to \distribution(Y_\mbot)$ 
by: 
\begin{align*}
f^\mbot(u,\mbot_X) &\defeq \dirac{\mbot_Y}, \\
f^\mbot(u,x)      &\defeq f(u,x).
\end{align*}
We see that the Kleisli extension $\strongkleisli{f}$ with respect to 
the subdistribution monad $\distribution_{\leq 1}$ of 
$f: U \times X \to \distribution(Y_\mbot)$ is equal to 
${f^\mbot}^\sharp$, where $-^\sharp$ denotes 
the (strong) Kleisli extension with respect to the (full) distribution monad.
Moreover, we have the following implication:
$$
    \vcenter{
       \xymatrix{
       \ar @{} [dr] |\geq
        U \times \setone   
        \ar[r]^-{f} 
        \ar[d]_{\vrelthree + \vrelone}|-*=0@{|}   
        &  
        \distribution \setthree_\mbot        
        \ar[d]^{\wassersteinbot \vreltwo}|-*=0@{|}
        \\
        V \times \settwo   
        \ar[r]_-{g}   
        &  
        \distribution \setfour_\mbot 
        } }
     \implies 
      \vcenter{
      \xymatrix{
      \ar @{} [dr] |\geq
      U \times \setone_\mbot
      \ar[r]^-{f^\mbot} 
      \ar[d]_{\vrelthree + \vrelone_\mbot}|-*=0@{|}   
      &  
      \distribution \setthree_\mbot         
      \ar[d]^{\wassersteinbot \vreltwo.}|-*=0@{|}
      \\
      V \times  \settwo_\mbot
      \ar[r]_-{g^\mbot}   
      &  
      \distribution \setfour_\mbot 
      } } 
      $$
      Proving 
      $\wassersteinbot(f^\mbot(u, \mathpzc{x}), g^\mbot(v, \mathpzc{y})) 
      \leq \vrelthree(u,v) + \vrelone_\mbot(\mathpzc{x}, \mathpzc{y})$ 
      is trivial except if $\mathpzc{x} = \mbot$, meaning that 
      $f^\mbot(u, \mathpzc{x}) = \dirac{\mbot_Z}$. In that case we 
      observe that for any distribution $\nu \in \distribution(Y_\mbot)$ 
      and $[0,1]$-relation $\vrelone: X \torel Y$ we have
      $\wassersteinbot(\dirac{\mbot_X}, \nu) = 0$. Consider an expression of the form 
      $$
      \sum_{(\mathpzc{x},\mathpzc{y}) \in X_\mbot \times Y_\mbot} 
      \omega(\mathpzc{x},\mathpzc{y}) \cdot \vrelone_\mbot(\mathpzc{x},\mathpzc{y}),
      $$
      where $\omega \in \couplings(\dirac{\mbot_X}, \nu)$. We can expand such 
      expression as:
      \begin{align*}
      \sum_{(x,y) \in X \times Y} \omega(x,y) \cdot \vrelone_\mbot(x,y) + 
      \sum_{x \in X} \omega(x, \mbot_Y) \cdot \vrelone_\mbot(x, \mbot_Y) + \\ 
      \sum_{y \in Y} \omega(\mbot_X, y) \cdot \vrelone_\mbot(\mbot_X,y) + 
      \omega(\mbot_X,\mbot_Y) \cdot \vrelone_\mbot(\mbot_X, \mbot_Y).
      \end{align*}
      By very definition of $\vrelone_\mbot$ the latter reduces to: 
      $$
      \sum_{(x,y) \in X \times Y} \omega(x,y) \cdot \vrelone(x,y) + 
      \sum_{x \in X} \omega(x, \mbot_Y).
      $$
      Since $\omega \in \couplings(\dirac{\mbot_X}, \nu)$ we have 
      $\sum_{\mathpzc{y} \in Y_\mbot} \omega(\mathpzc{x}, \mathpzc{y}) 
      = \dirac{\mbot_X}(\mathpzc{x})$, meaning that for any $x \in X$ 
      and $\mathpzc{y} \in Y_\mbot$, $\omega(x, \mathpzc{y}) = 0$.
      We can conclude $\wassersteinbot(\dirac{\mbot_X}, \nu) = 0$.

      Finally, since $\wassersteinbot$ satisfies condition \eqref{Strong-Lax-Bind} 
      we can infer the desired thesis as follows:
      \begin{align*}
      \vrelthree + \vrelone \geq \dual{g} \comp \wassersteinbot \vreltwo \comp f 
      &\implies 
      \vrelthree + \vrelone_\mbot \geq \dual{(g^\mbot)} 
      \comp \wassersteinbot \vreltwo \comp f^\mbot  \\
      &\iff
      \vrelthree + \vrelone_\mbot \geq \dual{(g^\mbot)} 
      \comp \wasserstein \vreltwo_\mbot \comp f^\mbot \\
      &\implies 
      \vrelthree + \wasserstein \vrelone_\mbot \geq \dual{({g^\mbot}^\sharp)} 
      \comp \wasserstein \vreltwo_\mbot \comp {f^\mbot}^\sharp \\
      &\iff
      \vrelthree + \wassersteinbot \vrelone \geq \dual{(\strongkleisli{g})} 
      \comp \wassersteinbot \vreltwo \comp \strongkleisli{f}.
      \end{align*} 
\end{proof}

\subsection{Proofs of Section \ref{section:behavioural-v-relations}}
\label{appendix:proofs-behavioural-v-relations}

\begin{lemma}
For all subdistributions 
$\mu \in \mathcal{D}(X_\mbot)$ and  $\nu \in \mathcal{D}(Y_\mbot)$, 
and $[0,1]$-relation (with respect to the unit interval 
quantale) $\vrelone: X \torel Y$, we have:
$$
\sum \mu  - \sum \nu
\leq \wassersteinbot \vrelone(\mu,\nu),
$$
where $\sum \mu$ denotes the `probability of convergence' of 
$\mu$ defined by $\sum \mu \defeq \sum_{x \in X} \mu(x)$ 
(and similarity for $\nu$), and $-$ denotes truncated subtraction.  
\end{lemma}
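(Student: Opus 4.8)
The plan is to argue directly from the primal (coupling) definition of $\wassersteinbot$, so that no appeal to the duality theorem is needed. Recall that $\wassersteinbot \vrelone(\mu,\nu) = \wasserstein(\vrelone_\mbot)(\mu,\nu) = \inf_{\omega \in \couplings(\mu,\nu)} \sum_{\mathpzc{x},\mathpzc{y}} \vrelone_\mbot(\mathpzc{x},\mathpzc{y}) \cdot \omega(\mathpzc{x},\mathpzc{y})$, where $\mu,\nu$ are viewed as full distributions on $X_\mbot, Y_\mbot$ and $\omega$ ranges over their couplings. The decisive feature is the boundary behaviour of the partiality relator in the unit interval quantale: $\vrelone_\mbot(x,\mbot_Y) = 1$ for every $x \in X$, whereas every other value of $\vrelone_\mbot$ lies in $[0,1]$ and is in particular non-negative.

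First I would fix an arbitrary coupling $\omega \in \couplings(\mu,\nu)$ and unfold the two marginal constraints. Summing $\sum_{\mathpzc{y} \in Y_\mbot} \omega(x,\mathpzc{y}) = \mu(x)$ over $x \in X$ gives $\sum \mu = \sum_{x \in X, y \in Y} \omega(x,y) + \sum_{x \in X} \omega(x,\mbot_Y)$, and symmetrically $\sum \nu = \sum_{x \in X, y \in Y} \omega(x,y) + \sum_{y \in Y} \omega(\mbot_X, y)$; all summands are non-negative, so the rearrangements are justified. Subtracting yields the exact identity $\sum \mu - \sum \nu = \sum_{x \in X} \omega(x,\mbot_Y) - \sum_{y \in Y} \omega(\mbot_X, y)$, and hence $\sum \mu - \sum \nu \leq \sum_{x \in X} \omega(x,\mbot_Y)$ since the subtracted term is non-negative. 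I would then bound the transportation cost from below: because $\vrelone_\mbot \geq 0$ everywhere and $\vrelone_\mbot(x,\mbot_Y) = 1$, discarding every summand except those over $X \times \{\mbot_Y\}$ gives $\sum_{\mathpzc{x},\mathpzc{y}} \vrelone_\mbot(\mathpzc{x},\mathpzc{y}) \cdot \omega(\mathpzc{x},\mathpzc{y}) \geq \sum_{x \in X} \omega(x,\mbot_Y) \geq \sum \mu - \sum \nu$. As this holds for every $\omega$, taking the infimum delivers $\sum \mu - \sum \nu \leq \wassersteinbot \vrelone(\mu,\nu)$ with ordinary subtraction.

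To recover the statement with truncated subtraction I would simply note that $\wassersteinbot \vrelone(\mu,\nu)$ is a genuine $[0,1]$-relation (cf. Proposition \ref{prop:duality-wasserstein-lifting} and the ensuing remark), so $\wassersteinbot \vrelone(\mu,\nu) \geq 0$; combined with the inequality just obtained this gives $\wassersteinbot \vrelone(\mu,\nu) \geq \max(\sum \mu - \sum \nu, 0)$, which is exactly the truncated claim. There is no real obstacle here: the only points requiring care are the bookkeeping of the mass of $\omega$ on the boundary strata $X \times \{\mbot_Y\}$ and $\{\mbot_X\} \times Y$, and the observation that truncation comes for free from non-negativity of the distance. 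An equally short route would instead use the dual characterisation, exhibiting the bounded feasible potentials $a_x = 1,\ a_{\mbot_X} = 0,\ b_y = -1,\ b_{\mbot_Y} = 0$, which satisfy $a_{\mathpzc{x}} + b_{\mathpzc{y}} \leq \vrelone_\mbot(\mathpzc{x},\mathpzc{y})$ in all four cases and whose dual objective value is precisely $\sum \mu - \sum \nu$.
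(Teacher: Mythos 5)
Your proof is correct, and your main argument takes a genuinely different route from the paper's. The paper proves this lemma on the dual side: it invokes the maximum characterisation of $\wassersteinbot$ coming from Proposition \ref{prop:duality-wasserstein-lifting} (hence, ultimately, from the strong duality theorem for countable transportation problems, Fact \ref{prop:duality-countable-transportation-problem}), writes out the four feasibility constraints induced by $\vrelone_\mbot$, and exhibits the potentials $a_x = 1$, $b_y = -1$, $a_{\mbot_X} = b_{\mbot_Y} = 0$ --- precisely the ``equally short route'' you mention in your closing sentence, so that remark is in fact the paper's proof verbatim. Your primary argument instead stays on the primal side: for an arbitrary coupling $\omega$, the marginal bookkeeping gives the exact identity $\sum\mu - \sum\nu = \sum_{x}\omega(x,\mbot_Y) - \sum_{y}\omega(\mbot_X,y)$, hence $\sum\mu-\sum\nu \leq \sum_x \omega(x,\mbot_Y)$, and since $\vrelone_\mbot$ is non-negative with $\vrelone_\mbot(x,\mbot_Y)=1$ (the bottom $\qbot$ of the unit interval quantale read as a real number), the cost of $\omega$ is at least $\sum_x\omega(x,\mbot_Y)$; taking the infimum over couplings and using $\wassersteinbot\vrelone(\mu,\nu)\geq 0$ for the truncation finishes the proof. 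What your route buys is self-containedness: it needs only the infimum definition of $\wasserstein$ and non-negativity of the cost, not the (nontrivial) duality theorem --- your estimate is essentially a by-hand instance of weak duality, which holds unconditionally. What the paper's route buys is economy given infrastructure already in place: Proposition \ref{prop:duality-wasserstein-lifting} is proved and reused elsewhere in the appendix anyway, and a feasible dual point is the canonical certificate format for lower-bounding an infimum. Both arguments exploit the same structural feature of the cost matrix --- mass sent from $X$ into $\mbot_Y$ costs $1$, everything else costs at least $0$ --- so they are two presentations of one and the same estimate.
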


\begin{proof}
We have:
\begin{align*}
\wassersteinbot \vrelone(\mu,\nu) = 
\max \{ &
        \sum_x a_x \cdot \mu(x) + a_{\mbot_X} \cdot \mu(\mbot_X) 
        \\ 
        &+ \sum_y b_y \cdot \nu(y) + b_{\mbot_Y} \cdot \nu(\mbot_Y)\}, 
\end{align*}
where $a_x, a_{\mbot_X}, b_y, b_{\mbot_Y}$ are bounded and satisfy 
the following constraints (already simplified according to the 
definition of $\vrelone_\mbot$):
\begin{align*}
a_x + b_y &\leq \vrelone(x,y),
&
a_{\mbot_X} + b_y &\leq 0,
\\ 
a_x + b_{\mbot_Y} &\leq 1, 
&
a_{\mbot_X} + b_{\mbot_Y} &\leq 0.
\end{align*}
Choosing $a_x \defeq 1$, $b_y \defeq -1$, $a_{\mbot_X} \defeq b_{\mbot_Y} \defeq 0$ 
we obtain the desired inequality.
\end{proof}

\applicativeVSimilarityReflexivityTransitivity*

\begin{proof}
The proof is by coinduction. Let us show that $\vsim$ is transitive, 
i.e. that $\vsim \comp \vsim \leq \vsim$. We prove that the $\lambda$-term 
$\quantale$-relation
$(\toterm{\vsim} \comp \toterm{\vsim}, \toval{\vsim} \comp \toval{\vsim})$ 
is an applicative $\vrelator$-simulation. 
We split the proof into five cases:
\begin{enumerate}[wide = 0pt, leftmargin = *]
  \item We show 
    that for all terms $\termone, \termtwo \in \Lambda_{\typeone}$ we have:
    $$
    \join_{\termthree \in \Lambda_{\typeone}} 
    \toterm{\vsim}_\typeone(\termone, \termthree) \tensor 
    \toterm{\vsim}_{\typeone}(\termthree, \termtwo) 
    \leq 
    \vrelator(\toval{\vsim}_\typeone \comp \toval{\vsim}_\typeone)
    (\sem{\termone}, \sem{\termtwo}).
    $$
    By \eqref{vrel-2} it is sufficient to prove:
    $$
    \join_{\termthree \in \Lambda_{\typeone}} 
    \toterm{\vsim}_\typeone(\termone, \termthree) \tensor 
    \toterm{\vsim}_{\typeone}(\termthree, \termtwo) \leq 
    \join_{\mathpzc{V} \in \monad \values_\typeone} 
    \vrelator\toval{\vsim}_\typeone(\sem{\termone}, \mathpzc{V}) \tensor 
    \vrelator\toval{\vsim}_\typeone(\mathpzc{V}, \sem{\termtwo}).
    $$
    For any $\termthree \in \Lambda_\typeone$ instantiate $\mathpzc{V}$
    as $\sem{\termthree}$. Since 
    $\toterm{\vsim}_\typeone(\termone, \termthree) \leq 
    \vrelator\toval{\vsim}_\typeone(\sem{\termone}, \sem{\termthree})$ and 
    $\toterm{\vsim}_{\typeone}(\termthree, \termtwo) \leq 
    \vrelator\toval{\vsim}_\typeone(\sem{\termthree}, \sem{\termtwo})$, 
    we are done by very definition of $\vsim$. 
  \item We prove that
    $$
    (\toval{\vsim}_{\typeone \mmap \typetwo} \comp 
    \toval{\vsim}_{\typeone \mmap \typetwo})(\valone, \valtwo) 
    \leq \meet_{\valthree \in \values_{\typeone}}
    (\toterm{\vsim}_{\typetwo} \comp 
    \toterm{\vsim}_{\typetwo})(\valone\valthree, \valtwo\valthree)
    $$
    holds for all values $\valone, \valtwo \in \values_{\typeone \mmap \typetwo}$. 
    For that it is sufficient to prove that for any 
    $\valthree \in \values_\typeone$ and for any 
    $\valfour \in \values_{\typeone \mmap \typetwo}$ there exists 
    a term $\termone \in \Lambda_{\typetwo}$ such that:
    $$
    \toval{\vsim}_{\typeone \mmap \typetwo}(\valone, \valfour)
    \tensor 
    \toval{\vsim}_{\typeone \mmap \typetwo}(\valfour, \valtwo) 
    \leq 
    \toterm{\vsim}_{\typetwo}(\valone\valthree, \termone) \tensor 
    \toterm{\vsim}_{\typetwo}(\termone, \valtwo\valthree).
    $$
    By very definition of $\toval{\vsim}_{\typeone \mmap \typetwo}$ 
    we have:
    \begin{align*}
    \toval{\vsim}_{\typeone \mmap \typetwo}(\valone, \valfour)
    &\tensor 
    \toval{\vsim}_{\typeone \mmap \typetwo}(\valfour, \valtwo) 
    \\ &\leq 
    \meet_{\valthree' \in \values_\typeone}
    \toterm{\vsim}_{\typetwo}(\valone \valthree', \valfour\valthree')
    \tensor 
    \meet_{\valthree' \in \values_\typeone} 
    \toterm{\vsim}_{\typetwo}(\valfour\valthree', \valtwo\valthree') \\
    &\leq 
    \toterm{\vsim}_{\typetwo}(\valone \valthree, \valfour\valthree)
    \tensor 
    \toterm{\vsim}_{\typetwo}(\valfour\valthree, \valtwo\valthree),
    \end{align*}
    so that it is sufficient to instantiate $\termone$ as $\valfour\valthree$.
  \item We prove that
    \begin{align*}
      (\toval{\vsim}_{\sumtype{i\in I}{\typeone_i}} 
      \comp \toval{\vsim}_{\sumtype{i\in I}{\typeone_i}})
      (\inject{\hat \imath}{\valone}, \inject{\hat \jmath}{\valthree}) 
      & \leq \qbot, \\
      (\toval{\vsim}_{\sumtype{i\in I}{\typeone_i}} 
      \comp \toval{\vsim}_{\sumtype{i\in I}{\typeone_i}})
      (\inject{\hat \imath}{\valone}, \inject{\hat \imath}{\valtwo}) 
      & \leq (\toval{\vsim}_{\typeone_{\hat \imath}} \comp 
      \toval{\vsim}_{\typeone_{\hat \imath}})(\valone, \valtwo),
    \end{align*}
    hold for all $\valone, \valtwo \in \values_{\typeone_{\hat \imath}}$ and  
    $\valthree \in \values_{\typeone_{\hat \jmath}}$, with 
    $\hat \imath \neq \hat \jmath$. We have 
    \begin{multline*}
    (\toval{\vsim}_{\sumtype{i\in I}{\typeone_i}} 
    \comp \toval{\vsim}_{\sumtype{i\in I}{\typeone_i}})
    (\inject{\hat \imath}{\valone}, \inject{\hat \jmath}{\valthree})
    \\ = 
    \join_{\inject{\hat \ell}{\valfour} \in \values_{\sumtype{i\in I}{\typeone_i}}} 
    \toval{\vsim}_{\sumtype{i\in I}{\typeone_i}}(\inject{\hat \imath}{\valone}, 
    \inject{\hat \ell}{\valfour}) 
    \tensor 
    \toval{\vsim}_{\sumtype{i\in I}{\typeone_i}}(\inject{\hat \ell}{\valfour}, 
    \inject{\hat \jmath}{\valtwo}).
    \end{multline*}
    Since $\hat \imath \neq \hat \jmath$ at least one among $\hat \imath \neq \hat \ell$ 
    and $\hat \ell \neq \hat \jmath$ holds, for any 
    $\inject{\hat \ell}{\valfour} \in \values_{\sumtype{i\in I}{\typeone_i}}$. 
    As a consequence, by very definition of $\vsim$, the right hand side of the 
    above inequality is equal to something of the form $\qbot \tensor \qvalone$, 
    which is itself equal to $\qbot$.
    To prove the second inequality, we have to show that for any 
    $\inject{\hat \imath}{\valthree} \in \values_{\sumtype{i \in I}{\typeone_i}}$ 
    there exists $\valfour \in \values_{\typeone_{\hat \imath}}$ such that
    \begin{multline*}
    \toval{\vsim}_{\sumtype{i\in I}{\typeone_i}}
    (\inject{\hat \imath}{\valone}, \inject{\hat \imath}{\valthree})
    \tensor
    \toval{\vsim}_{\sumtype{i\in I}{\typeone_i}}
    (\inject{\hat \imath}{\valthree}, \inject{\hat \imath}{\valtwo})
    \\ \leq 
    \toval{\vsim}_{\typeone_{\hat \imath}}(\valone, \valfour) 
    \tensor 
    \toval{\vsim}_{\typeone_{\hat \imath}}(\valfour, \valtwo).
    \end{multline*}
    Notice that for a value 
    $\inject{\hat \jmath}{\valthree} \in \values_{\sumtype{i \in I}{\typeone_i}}$
    with $\hat \jmath \neq \hat \imath$ we would have, by very definition 
    of $\vsim$,
    $\toval{\vsim}_{\sumtype{i\in I}{\typeone_i}}
    (\inject{\hat \imath}{\valone}, \inject{\hat \jmath}{\valthree}) = \qbot$, 
    and thus we would be trivially done. Proving the above inequality is 
    straightforward: simply instantiate $\valfour$ as $\valthree$ and 
    observe that by definition of $\vsim$ we have
    \begin{align*}
    \toval{\vsim}_{\sumtype{i\in I}{\typeone_i}}
    (\inject{\hat \imath}{\valone}, \inject{\hat \imath}{\valthree}) 
    &\leq 
    \toval{\vsim}_{\typeone_{\hat \imath}}(\valone, \valthree),
    \\
    \toval{\vsim}_{\sumtype{i\in I}{\typeone_i}}
    (\inject{\hat \imath}{\valthree}, \inject{\hat \imath}{\valtwo})
    &\leq 
    \toval{\vsim}_{\typeone_{\hat \imath}}(\valthree, \valtwo).
    \end{align*}
  \item The case for $\recType{\typevarone}{\typeone}$ follows the same 
    pattern of the above one.
  \item We prove:
    $$
    (\toval{\vsim}_{\bang_\baseone \typeone} \comp 
    \toval{\vsim}_{\bang_\baseone \typeone})(\bang \valone , \bang \valtwo) 
    \leq 
    \baseone \acts (\toval{\vsim}_\typeone \comp \toval{\vsim}_\typeone)
    (\valone, \valtwo).
    $$
    For that we notice that for every 
    $\bang \valthree \in \bang_{\baseone}{\typeone}$ we have: 
    \begin{align*}
    \toval{\vsim}_{\bang_\baseone \typeone}(\bang \valone, \bang \valthree)
    \tensor
    \toval{\vsim}_{\bang_\baseone \typeone}(\bang \valthree , \bang \valtwo) 
    &\leq
    (\baseone \acts \toval{\vsim}_\typeone)(\valone, \valthree) 
    \tensor
    (\baseone \acts \toval{\vsim}_\typeone)(\valthree, \valtwo) 
    \\
    &\leq ((\baseone \acts \toval{\vsim}_\typeone) 
        \comp (\baseone \acts \toval{\vsim}_\typeone))(\valone, \valtwo) \\
    &\leq \baseone \acts (\toval{\vsim}_\typeone \comp \toval{\vsim}_\typeone)
    (\valone, \valtwo).
    \end{align*}
\end{enumerate}
\end{proof} 

\kernelApplicativeSim*

\begin{proof}
The proof is by coinduction. We start proving that $\varphi \acts \vsim$ 
is an applicative $\vrelatortwo_\vrelator$-simulation. 
Since 
$\toterm{\vsim}_\typeone(\termone, \termtwo) \leq \vrelator\toval{\vsim}_\typeone
(\sem{\termone},\sem{\termtwo})$ holds for all terms 
$\termone, \termtwo \in \Lambda_\typeone$, we can apply Lemma \ref{lemma:kernel-lemma} 
and infer the inequality
$\varphi \acts \toterm{\vsim}_\typeone(\termone, \termtwo) 
\leq \vrelatortwo_\vrelator(\varphi \acts \toval{\vsim}_\typeone)
(\sem{\termone},\sem{\termtwo})$. Let us now move to the value clauses.
\begin{varenumerate}
\item We prove that for all values 
  $\valone, \valtwo \in \values_{\typeone \mmap \typetwo}$ we have:
  $$
  \varphi \acts \toval{\vsim}_{\typeone \mmap \typetwo}(\valone, \valtwo) \leq 
  \meet_{\valthree \in \values_\typeone}\varphi \acts \toterm{\vsim}_\typetwo
  (\valone\valthree, \valtwo\valthree).
  $$
  Suppose $\varphi \acts \toval{\vsim}_{\typeone \mmap \typetwo}(\valone, \valtwo) 
  = \true$, so that 
  $\toval{\vsim}_{\typeone \mmap \typetwo}(\valone, \valtwo) = \qunit$. We show that 
  $\varphi \acts \toterm{\vsim}_\typetwo(\valone\valthree, \valtwo\valthree) = \true$ 
  holds for any $\valthree \in \values_\typeone$.
  By very definition of applicative $\vrelator$-similarity, 
  $\toval{\vsim}_{\typeone \mmap \typetwo}(\valone, \valtwo) = \qunit$ implies
  $\meet_{\valthree \in \values_\typeone}\toterm{\vsim}_\typetwo
  (\valone\valthree, \valtwo\valthree) = \qunit$.
  Since $\quantale$ is integral (i.e. $\qunit = \qtop$), we must have 
  $\toterm{\vsim}_\typetwo(\valone\valthree, \valtwo\valthree) = \qunit$ 
  (and thus 
  $\varphi \acts \toterm{\vsim}_\typetwo(\valone\valthree, \valtwo\valthree)=\true$)
  for any $\valthree \in \values_\typeone$.
\item Clauses for sum and recursive types are straightforward.
\item We show that for all values 
$\bang \valone,\bang \valtwo \in \values_{\bang_\baseone \typeone}$,
 $
  \varphi \acts \toval{\vsim}_{\bang_\baseone \typeone}
  (\bang \valone, \bang\valtwo) = \true 
 $ 
 implies 
 $
  (\varphi \comp \baseone \comp \psi) \acts (\varphi \acts \toval{\vsim}_\typeone)
  (\valone, \valtwo) = \true.
 $
  By algebra of CBFs we have:
  \begin{align*}
  (\varphi \comp \baseone \comp \psi) \acts (\varphi \acts \toval{\vsim}_{\typeone}) 
  &= (\varphi \comp \baseone \comp \psi \comp \varphi) \acts \toval{\vsim}_{\typeone} 
  \\ 
  &=(\varphi \comp \baseone) \acts \toval{\vsim}_{\typeone} 
  \\
  &= \varphi \acts (\baseone \acts \toval{\vsim}_{\typeone}).
  \end{align*}
  Since 
  $\varphi \acts \toval{\vsim}_{\bang_\baseone \typeone}
  (\bang \valone, \bang\valtwo) = \true$, and thus 
  $\toval{\vsim}_{\bang_\baseone \typeone}
  (\bang \valone, \bang\valtwo) = \qunit$, by very definition of $\vsim$
  we infer $\baseone \acts \toval{\vsim}_{\typeone}(\valone, \valtwo) = \qunit$. 
  We conclude 
  $(\varphi \acts (\baseone \acts \toval{\vsim}_{\typeone}))(\valone, \valtwo) = \true$.
\end{varenumerate}

We now prove by coinduction $(\psi\ \acts \preceq) \leq \vsim$, from which follows 
$((\varphi \comp \psi)\ \acts \preceq) \subseteq (\varphi \acts \vsim)$ and 
thus $\preceq\ \subseteq (\varphi \acts \vsim)$. 
The clause for terms directly follows from Lemma \ref{lemma:kernel-lemma}. 
The clauses for values follow the same structure of the previous part of the 
proof. We show the case for values of type $\bang_\baseone \typeone$. 
Suppose 
$\psi\ \acts \toval{\preceq_{\bang_\typeone \typeone}}(\bang \valone, \bang \valtwo) 
= \qunit$ to hold (otherwise we are trivially done), meaning that
$\bang \valone \toval{\preceq_{\bang_\typeone \typeone}} \bang \valtwo$
holds as well. As a consequence, we have 
$((\varphi \comp \baseone \comp \psi)\ \acts \toval{\preceq}_\typeone)(\valone,\valtwo) 
= \true$, and thus 
$\baseone \acts (\psi\ \acts \toval{\preceq}_\typeone)
(\valone,\valtwo) = \qunit$.
\end{proof}

\subsection{Howe's Method}
\label{appendix:proofs-howe-method}

\howeOptimalValue*
\begin{proof}[Proof sketch.]
We simultaneously prove statements $1$ and $2$ by induction on 
$(\valone,\termone)$. We show a couple of cases as illustrative examples:
\begin{enumerate}[wide = 0pt, leftmargin = *]
\item Suppose
  $$
  A \defeq \{\qvalone \mid \envone \howeimpval 
  \qvalone \leq \howe{\vrelone}(\varone, \valtwo): \typeone\}
  $$ 
  to be non-empty.
  If the judgment $\envone \howeimpval \qvalone \leq 
  \howe{\vrelone}(\varone, \valtwo): \typeone$ is provable, then it 
  must be the conclusion of an instance of rule $\ruleHVar$ from the 
  premise:
  $$
  \qvalone \leq (\envtwo, \varone :_\baseone \typeone \valimp 
    \vrelone(\varone, \valtwo): \typeone),
  $$
  so that $\envone = \envtwo, \varone :_\baseone \typeone$. As a 
  consequence, we see that the set $A$ is just 
  $\{\qvalone \mid \qvalone \leq (\envtwo, \varone :_\baseone \typeone 
  \valimp \vrelone(\varone, \valtwo): \typeone)\}$.
  In particular, we have 
  $\envtwo, \varone :_\baseone \typeone 
  \valimp \vrelone(\varone, \valtwo): \typeone = \join A \in A.$
 \item Suppose 
  $$
  A \defeq \{\qvalone \mid \envone \howeimp \qvalone \leq \howe{\vrelone}
  (\seq{\termone}{\termtwo}, \termthree): \typetwo\}
  $$
  to be non-empty. That means there exists $\qvalone \in \quantale$ such 
  that $\envone \howeimp \qvalone \leq \howe{\vrelone}(\seq{\termone}
  {\termtwo}, \termthree): \typetwo$ is derivable. The latter judgment 
  must be the conclusion of an instance of rule $\ruleHSeq$ from 
  premisses:
  \begin{align*}
  &\Sigma \howeimp \qvaltwo \leq \howe{\vrelone}(\termone, \termone'): 
  \typeone, 
  \\
  &\envtwo, \varone :_\baseone \typeone \howeimp \qvalthree \leq 
  \howe{\vrelone}(\termtwo, \termtwo'): \typetwo,
  \\
  &\qvalfour \leq (\baseone \wedge \baseid) \comp 
  \Sigma \tensor \envtwo \imp \vrelone
  (\seq{\termone'}{\termtwo'}, \termthree): \typetwo,
  \end{align*}
  so that $\envone = (\baseone \wedge \baseid) \comp \Sigma \tensor 
  \envtwo$ and $\qvalone = (\baseone \wedge \baseid)(\qvaltwo) 
  \tensor \qvalthree \tensor \qvalfour$. In particular, the sets 
  \begin{align*}
  B &= \{\qvaltwo \mid \Sigma \howeimp \qvaltwo \leq \howe{\vrelone}
  (\termone, \termone'): \typeone\}, 
  \\
  C &= \{\qvalthree \mid \envtwo, \varone :_\baseone \typeone 
  \howeimp \qvalthree \leq \howe{\vrelone}(\termtwo, \termtwo'): 
  \typetwo\},
  \end{align*}
  are non-empty. By induction hypothesis we have $\join B \in B$ and 
  $\join C \in C$. Let $\underline{\qvalfour} = (\baseone \wedge \baseid) \comp \Sigma 
  \tensor \envtwo \imp \vrelone(\seq{\termone'}{\termtwo'}, 
  \termthree): \typetwo$. We can now apply rule $\ruleHSeq$ obtaining 
  $(\baseone \wedge \baseid) \big(\join B\big) \tensor 
  \big(\join C\big) \tensor \underline{\qvalfour} \in A$. To see that the latter is 
  actually $\join A$ it is sufficient to show that for any 
  $\qvalone \in A$ we have $\qvalone \leq (\baseone \wedge \baseid) 
  \big(\join B\big) \tensor \big(\join C\big) \tensor \underline{\qvalfour}$. 
  But any $\qvalone \in A$ (with $\qvalone \neq \qbot$) 
  is of the form $(\baseone \wedge \baseid)(\qvaltwo) 
  \tensor \qvalthree \tensor \qvalfour$ for $\qvaltwo \in B$, 
  $\qvalthree \in C$, and $\qvalfour \leq \underline{\qvalfour}$. 
  We are done since both 
  $(\baseone \wedge \baseid)$ and $\tensor$ are monotone.
\end{enumerate}
\end{proof}

It is now easy to show that the above definition of Howe's extension 
coincide with the one of Definition \ref{def:howe-extension}. In particular, 
for an open $\lambda$-term $\quantale$-relation $\vrelone$, 
$\howe{\vrelone}$ is the least compatible open $\lambda$-term $\quantale$-relation
satisfying the inequality $\vrelone \comp \vreltwo \leq \vreltwo$.

The following are standard results on Howe's extension. Proofs are 
straightforward but tedious (they closely resemble their relational 
counterparts), and thus are omitted.

\begin{lemma}\label{lemma:properties-howe-extension}
Let $\vrelone$ be a reflexive and transitive open 
$\lambda$-term $\quantale$-relation. 
Then the following hold:
\begin{varenumerate}
  \item $\howe{\vrelone}$ is reflexive.
  \item $\vrelone \leq \howe{\vrelone}$.
  \item $\vrelone \comp \howe{\vrelone} \leq \howe{\vrelone}$.
  \item $\howe{\vrelone}$ is compatible.
\end{varenumerate}
\end{lemma}
We refer to property 
$1$ as pseudo-transitivity. In particular, by very 
definition of $\quantale$-relator we also have 
$\vrelator \vrelone \comp \vrelator \howe{\vrelone} \leq \vrelator \howe{\vrelone}$. 
We refer to the latter property as $\vrelator$-pseudo-transitivity.

\substitutivityLemma*

\begin{proof}
We simultaneously prove the following statements.
\begin{enumerate}[wide = 0pt, leftmargin = *, label=(\roman*)]
\item For any $\qvalone \in \quantale$ if 
    $\envone, \varone :_\baseone \typeone 
    \howeimp \qvalone \leq \howe{\vrelone}(\termone, \termtwo): \typetwo$ 
    is derivable, then $\qvalone \tensor \baseone(\underline{\qvalone}) 
    \leq \envone \imp \howe{\vrelone}(\substcomp{\termone}{\varone}{\valone}, 
    \substcomp{\termtwo}{\varone}{\valtwo}): \typetwo$ holds.
\item For any $\qvalone \in \quantale$ if 
    $\envone, \varone :_\baseone \typeone 
    \howeimpval \qvalone \leq \howe{\vrelone}(\valthree, \valfour): \typetwo$ 
    is derivable, then $\qvalone \tensor \baseone(\underline{\qvalone}) 
    \leq \envone \imp \howe{\vrelone}(\substval{\valthree}{\valone}{\varone}, 
    \substval{\valfour}{\valtwo}{\varone}): \typetwo$ holds.
\end{enumerate} 
The proof is by induction on the derivation of 
the judgments: 
\begin{align*}
\mathcal{J} &\defeq \envone, \varone :_\baseone \typeone 
\howeimp \qvalone \leq \howe{\vrelone}(\termone, \termtwo): \typetwo,
\\
\mathcal{J'} &\defeq \envone, \varone :_\baseone \typeone 
\howeimpval \qvalone \leq \howe{\vrelone}(\valthree, \valfour): \typetwo.
\end{align*}
\begin{enumerate}[wide = 0pt, leftmargin = *]
  \item Suppose $\mathcal{J'}$ has been inferred via 
    an instance of rule $\ruleHVar$. 
    We have two subcases to consider.
    \begin{itemize}[wide = 0pt, leftmargin = *]
      \item[1.1] $\mathcal{J'}$ has been inferred via 
    an instance of rule $\ruleHVar$ from premisses:
        \[
        \infer[\ruleHVar,]
        {
        \envone, \varone :_\baseone \typeone \howeimpval 
        \qvalone \leq \howe{\vrelone}(\varone, \valthree): \typeone
        }
        {
        \qvalone \leq \envone, \varone :_\baseone \typeone 
        \valimp \vrelone(\varone, \valthree): \typeone
        }
        \]
        so that $\baseone \leq \baseid$ and $\mathcal{J'}$ is 
        $\envone, \varone :_\baseone \typeone \howeimpval 
        \qvalone \leq \howe{\vrelone}(\varone, \valthree): \typeone$. 
        We have to prove 
        $\qvalone \tensor \baseone 
        \acts (\emptyset \valimp \howe{\vrelone}(\valone, \valtwo)) \leq 
        \envone \valimp \howe{\vrelone}
        (\valone,\substval{\valthree}{\valtwo}{\varone}) 
        : \typeone$. Since $\vrelone$ is value substitutive, from 
        $\envone, \varone :_\baseone \typeone \howeimpval 
        \qvalone \leq \howe{\vrelone}(\varone, \valthree): \typeone$ 
        we infer $\qvalone \leq \envone \valimp 
        \vrelone(\valtwo, \substval{\valthree}{\valtwo}{\varone}): \typeone$. 
        Moreover, since $\howe{\vrelone}$ is an open $\lambda$-term $\quantale$-relation 
        (and thus closed under weakening), we have 
        $\emptyctx \valimp \howe{\vrelone}(\valone, \valtwo): \typeone 
        \leq 
        \envone \valimp \howe{\vrelone}(\valone, \valtwo): \typeone$. We can now 
        conclude the thesis as follows:
        \begin{align*}
        \qvalone \tensor \baseone(\underline{\qvalone}) 
        & \leq (\envone \valimp \howe{\vrelone}(\valone, \valtwo): \typeone) 
          \tensor \baseone \acts (\envone \valimp \vrelone
          (\valtwo, \substval{\valthree}{\valtwo}{\varone}): \typeone) 
      \\ 
        & \leq (\envone \valimp \howe{\vrelone}(\valone, \valtwo): \typeone) 
          \tensor (\envone \valimp \vrelone
          (\valtwo, \substval{\valthree}{\valtwo}{\varone}): \typeone) 
      \\ 
        & \qquad \text{[ since $\baseone \leq \baseid$ ]}
      \\ 
        & \leq \envone \valimp \howe{\vrelone}
          (\valone, \substval{\valthree}{\valtwo}{\varone}): \typeone
      \\ 
        & \qquad \text{[ by pseudo-transitivity ].}
      \end{align*}
    \item[1.2] $\mathcal{J'}$ has been inferred via 
    an instance of rule $\ruleHVar$ from premisses:
        \[
        \infer[\ruleHVar]
        {
        \envone, \vartwo :_\basetwo \typetwo, \varone :_\baseone \typeone 
        \howeimpval 
        \qvalone \leq \howe{\vrelone}(\vartwo, \valthree): \typetwo
        }
        {
        \qvalone \leq \envone, \vartwo :_\basetwo \typetwo, 
        \varone :_\baseone \typeone 
        \valimp \vrelone(\vartwo, \valthree): \typetwo
        }
        \]
        so that $\mathcal{J'}$ is 
        $\envone, \vartwo :_\basetwo \typetwo, \varone :_\baseone \typeone 
        \howeimpval 
        \qvalone \leq \howe{\vrelone}(\vartwo, \valthree): \typetwo.
        $
        We have to prove 
        $\qvalone \tensor \baseone 
        \acts (\emptyset \valimp \howe{\vrelone}(\valone, \valtwo)) \leq 
        \envone, \vartwo :_\basetwo \typetwo \valimp \howe{\vrelone}
        (\vartwo,\substval{\valthree}{\valtwo}{\varone}) 
        : \typetwo$. 
        As $\quantale$ is integral and $\vrelone$ is 
        value-substitutive, we have: 
        $$
        \qvalone \tensor \baseone 
        \acts (\emptyset \valimp \howe{\vrelone}(\valone, \valtwo)) 
        \leq \qvalone 
        \leq 
        \envone, \vartwo :_\basetwo \typetwo \valimp \vrelone
        (\vartwo,\substval{\valthree}{\valtwo}{\varone}).
        $$ 
        Since $\vrelone \leq \howe{\vrelone}$ we are done.
    \end{itemize}
  \item Suppose $\mathcal{J}$ has been inferred via 
    an instance of rule $\ruleHSeq$ from premisses:
    \begin{align}
      &\envone, \varone :_\baseone \typeone \howeimp \qvalone 
      \leq \howe{\vrelone}(\termone, \termone'): \typeone',
      \label{subst-lemma:seq-premise-1}
      \\
      &\envtwo, \varone :_\basetwo \typeone, \vartwo :_\basethree \typeone' 
      \howeimp \qvaltwo
      \leq \howe{\vrelone}(\termtwo, \termtwo'): \typetwo,
      \label{subst-lemma:seq-premise-2}
      \\
      & \qvalthree \leq (\basethree \wedge \baseid) \comp 
      (\envone, \varone :_\baseone \typeone) 
      \tensor (\envtwo, \varone :_\basetwo \typeone) 
      \imp 
      \howe{\vrelone}(\seqy{\termone'}{\termtwo'}, \termthree): \typetwo.
      \label{subst-lemma:seq-premise-3}
    \end{align}
    so that $J$ is:
    \begin{multline*}
    (\basethree \wedge \baseid) \comp \envone \tensor \envtwo, 
    \varone :_{(\basethree \wedge \baseid) \comp \baseone \tensor \basetwo} \typeone 
    \howeimp (\basethree \wedge \baseid)(\qvalone) 
    \tensor \qvaltwo \tensor \qvalthree 
    \\ \leq 
      \howe{\vrelone}(\seqy{\termone}{\termtwo}, \termthree): \typetwo.
    \end{multline*}
    We have to prove: 
    \begin{multline*}
    (\basethree \wedge \baseid)(\baseone(\underline{\qvalone})) 
    \tensor \basetwo(\underline{\qvalone}) 
    \tensor (\basethree \wedge \baseid)(\qvalone) 
    \tensor \qvaltwo \tensor \qvalthree 
     \leq 
    (\basethree \wedge \baseid) \comp \envone \tensor \envtwo 
    \\ \imp 
    \howe{\vrelone}(\seqy
    {\substcomp{\termone}{\varone}{\valone}}
    {\substcomp{\termtwo}{\varone}{\valone}}, \substcomp{\termthree}{\varone}{\valtwo}): 
    \typetwo.
    \end{multline*}
    We apply the induction hypothesis on \eqref{subst-lemma:seq-premise-1} and 
    \eqref{subst-lemma:seq-premise-2} obtaining:
    \begin{align}
    &\baseone(\underline{\qvalone}) \tensor \qvalone 
    \leq \envone \imp \howe{\vrelone}
    (\substcomp{\termone}{\varone}{\valone}, \substcomp{\termone'}{\varone}{\valtwo}):
    \typeone',
    \label{subst-lemma:seq-fact-1}
    \\
    &\basetwo(\underline{\qvalone}) \tensor \qvaltwo \leq 
    \envtwo, \vartwo :_\basethree \typeone' \imp \howe{\vrelone}
    (\substcomp{\termtwo}{\varone}{\valone}, \substcomp{\termtwo}{\varone}{\valtwo}): 
    \typetwo.
    \label{subst-lemma:seq-fact-2}
    \end{align}
    From \eqref{subst-lemma:seq-fact-1} and \eqref{subst-lemma:seq-fact-2} by 
    compatibility of $\howe{\vrelone}$ (and lax equations of change of base functors) 
    we infer:
    \begin{multline}
    (\basethree \wedge \baseid)(\baseone(\underline{\qvalone})) 
    \tensor (\basethree \wedge \baseid)(\qvalone) \tensor 
    \basetwo(\underline{\qvalone}) \tensor \qvaltwo 
    \leq
    (\basethree \wedge \baseid) \comp \envone \tensor \envtwo 
    \\\imp 
    \howe{\vrelone}(\seqy{\substcomp{\termone}{\varone}{\valone}}
    {\substcomp{\termtwo}{\varone}{\qvalone}}, 
    \\ \seqy{\substcomp{\termone'}{\varone}{\valtwo}}
    {\substcomp{\termtwo'}{\varone}{\valtwo}}): \typetwo.
    \label{subst-lemma:seq-fact-3}
    \end{multline}
    Finally, since $\vrelone$ is value-substitutive, from 
    \eqref{subst-lemma:seq-premise-3} we obtain:
    $$
    \qvalthree \leq (\basethree \wedge \baseid) \comp 
      \envone \tensor \envtwo
      \imp 
      \howe{\vrelone}(\seqy{\substcomp{\termone'}{\varone}{\valtwo}}
      {\substcomp{\termtwo'}{\varone}{\valtwo}}, \termthree): \typetwo,
    $$
    and thus conclude the thesis from the latter and 
    \eqref{subst-lemma:seq-fact-3} by pseudo-transitivity.
\item Suppose $\mathcal{J}$ has been inferred via 
    an instance of rule $\ruleHOp$ from premisses 
    (as usual we write $\vec{x_i}$ for items $x_1, \hh, x_n$):
    \begin{align}
      &\forall i.\ \envone_i, \varone :_{\baseone_i} \typeone \howeimp \qvalone_i
      \leq \howe{\vrelone}(\termone_i, \termone_i'): \typetwo,
      \label{subst-lemma:op-premise-1}
      \\
      &\qvaltwo \leq 
      \qop(\vec{\envone_i}), 
      \varone :_{\qop(\vec{\baseone_i})} \typeone 
      \imp \vrelone(\op(\vec{\termone_i'}), \termtwo): \typetwo,
      \label{subst-lemma:op-premise-2}
    \end{align}
    so that $J$ is 
    $$
    \qop(\vec{\envone_i}), 
    \varone :_{\qop(\vec{\baseone_i})} \typeone 
    \howeimp \qop(\vec{\qvalone_i}) \tensor \qvaltwo 
    \leq \howe{\vrelone}(\op(\vec{\termone_i}), \termtwo): \typetwo.
    $$
    We have to prove 
    \begin{multline*}
    \qop(\overrightarrow{\baseone_i(\underline{\qvalone})}) \tensor 
    \qop(\vec{\qvalone_i}) \tensor \qvaltwo
    \\ \leq 
    \qop(\vec{\envone_i}) \imp \howe{\vrelone}
    (\overrightarrow{{\substcomp{\termone_i}{\varone}{\valone}}},
    \substcomp{\termtwo}{\varone}{\valtwo}): 
    \typetwo.
    \end{multline*}
    We apply the induction hypothesis on \eqref{subst-lemma:op-premise-1}  
    obtaining:
    \begin{align}
    &\forall i.\ 
    \baseone(\underline{\qvalone}) \tensor \qvalone_i 
    \leq \envone_i \imp \howe{\vrelone}
    (\substcomp{\termone_i}{\varone}{\valone}, \substcomp{\termone_i'}{\varone}{\valtwo}):
    \typetwo.
    \label{subst-lemma:op-fact-1}
    \end{align}
    Monotonicity of $\qop$ on \eqref{subst-lemma:op-fact-1}
    followed by compatibility gives:
    \begin{multline}
    \qop(\overrightarrow{\baseone_i(\underline{\qvalone}) \tensor \qvalone_i}) 
    \\ \leq \qop(\vec{\envone_i}) \imp \howe{\vrelone}
     (\op(\overrightarrow{\substcomp{\termone_i}{\varone}{\valone}}), 
     \op(\overrightarrow{\substcomp{\termone_i'}{\varone}{\valtwo}})).
     \label{subst-lemma:op-fact-2}
     \end{multline}
    Finally, as $\vrelone$ is value-substitutive, from 
    \eqref{subst-lemma:op-premise-2} we obtain:
    $$
    \qvaltwo \leq 
      \qop(\vec{\envone_i}), 
      \imp \vrelone(\op(\overrightarrow{\substcomp{\termone_i'}{\varone}{\valtwo}}) 
      \substcomp{\termtwo}{\varone}{\valtwo}): \typetwo.
    $$
    The latter together with 
    \eqref{subst-lemma:seq-fact-2} implies 
    $$
    \qop(\overrightarrow{\baseone_i(\underline{\qvalone}) \tensor \qvalone_i}) 
    \tensor \qvaltwo \leq \qop(\vec{\envone_i}) \imp \howe{\vrelone}
     (\op(\overrightarrow{\substcomp{\termone_i}{\varone}{\valone}}), 
     \substcomp{\termtwo}{\varone}{\valtwo})
    $$
    by pseudo-transitivity. We conclude the thesis as Definition 
    \ref{def:signature-quantale} entails:
    $$
    \qop(\overrightarrow{\baseone_i(\underline{\qvalone})}) \tensor \qop(\vec{\qvalone_i})
    \leq
    \qop(\overrightarrow{\baseone_i(\underline{\qvalone}) \tensor \qvalone_i}).
    $$
\end{enumerate}
The remaining cases follow the same pattern.
\end{proof}

\keyLemma*
\renewcommand{\kleisli}[1]{#1^\dagger}
\begin{proof}
Let us write $\howe{\vrelone}$ for the Howe's extension 
of $\vrelone$ restricted to closed terms/values. 
It is easy to see that $\howe{\vrelone}$ 
satisfies the simulation clauses for values.
For instance, we prove the inequation 
$\howe{\vrelone}_{\bang_\baseone \typeone}(\bang \valone, \bang \valtwo) 
\leq \baseone \acts \howe{\vrelone}_{\typeone}(\valone, \valtwo)$, 
where for readability we omit values superscript in $\vrelone$ and 
$\howe{\vrelone}$. It is sufficient to show that for any 
$\qvalone \in \quantale$ such that 
$\mathcal{J} \defeq \emptyctx \howeimp \qvalone \leq 
\howe{\vrelone}(\bang \valone, \bang \valtwo): \bang_\baseone \typeone$ is derivable, 
the inequation $\qvalone \leq \baseone \acts \howe{\vrelone}_\typeone(\valone, \valtwo)$ 
holds. The judgment $\mathcal{J}$ must have been inferred via an instance of 
rule $\ruleHBang$, so that without loss of generality we can assume 
$\qvalone = \baseone(\qvaltwo) \tensor 
\vrelone_{\bang_\baseone \typeone}(\bang \valthree, \bang \valtwo)$, 
with $\emptyctx \howeimp \qvaltwo \leq \howe{\vrelone}(\valone, \valthree): \typeone$
derivable, for some value $\valthree$. We conclude the thesis as follows:
\begin{align*}
\qvalone  
&\leq \baseone \acts \howe{\vrelone}_\typeone(\valone, \valthree) 
  \tensor \vrelone_{\bang_\baseone \typeone}(\bang \valthree, \bang \valtwo) 
\\
&\leq \baseone \acts \howe{\vrelone}_\typeone(\valone, \valthree) 
  \tensor \baseone \acts \vrelone_\typeone(\valthree, \valtwo) 
\\ & \qquad [\vrelone \text{ is an applicative }\vrelator\text{-simulation}]
\\
&\leq \baseone \acts (\howe{\vrelone}_\typeone(\valone, \valthree) 
  \tensor \vrelone_\typeone(\valthree, \valtwo)) 
\\
& \leq \baseone \acts (\vrelone_\typeone \comp \howe{\vrelone}_\typeone)(\valone, \valtwo) 
\\
&\leq \baseone \acts \howe{\vrelone}_\typeone(\valone, \valtwo)
\\ & \qquad [\text{pseudo-transitivity}]
\end{align*}

The crucial part of the proof
is to show that $\howe{\vrelone}$ satisfies the clause for terms.
We prove that for any $n \geq 0$,
$$
\toterm{(\howe{\vrelone})}_{\typeone}(\termone, \termtwo) \leq 
\relator \toval{(\howe{\vrelone})}_{\typeone}(\approxsem{\termone}{n}, \sem{\termtwo})
$$
holds for all terms $\termone, \termtwo \in \Lambda_\typeone$. 
Since $\vrelator$ is inductive the above inequality gives the thesis 
as follows:
\begin{align*}
\toterm{(\howe{\vrelone_\typeone})}(\termone, \termtwo) 
&\leq 
\meet_n \vrelator\toval{(\howe{\vrelone_\typeone})}(\sem{\termone}_n, \sem{\termtwo})  
\\
& \leq 
\vrelator\toval{(\howe{\vrelone_\typeone})}(\lub_n\sem{\termone}_n, \sem{\termtwo})
\\
&= 
\vrelator\toval{(\howe{\vrelone_\typeone})}(\sem{\termone}, \sem{\termtwo}).
\end{align*}
The proof is by induction on $n$ with a case analysis on 
the term structure in the inductive case. For readability 
we simply write $\vrelone$ in place of $\toterm{\vrelone}$ and 
$\toval{\vrelone}$. Moreover, to avoid confusion it is useful to 
explicitly distinguishing between (ordinary) Kleisli extension and 
strong Kleisli extension. 
Given a monoidal category $\lan \catone, I, \tensor \ran$, we denote by 
$\strongkleisli{f}: Z \tensor \monad X \to \monad Y$ the \emph{strong} 
Kleisli extension of $f: Z \tensor X \to \monad Y$ and by 
$g^\dagger: \monad X \to \monad Y$ the Kleisli extension of 
$g: X \to \monad Y$. The latter can be defined in terms of the former as 
$g^\dagger \defeq \strongkleisli{(g \comp \lambda_X)} \comp \lambda_{\monad X}^{-1}$, 
where $\lambda_X: I \tensor X \xrightarrow{\cong} X$ is the natural isomorphism given by 
the monoidal structure of $\catone$. Note that, in particular, 
$g^\dagger \comp \lambda_{\monad X} = \strongkleisli{(g \comp \lambda_X)}$.

\begin{enumerate}[wide = 0pt, leftmargin = *]
  \item
    We have to prove:
    $$
    \howe{\vrelone}_{\typeone}(\termone, \termtwo) \leq 
    \vrelator \howe{\vrelone}_{\typeone}
    (\approxsem{\termone}{0}, \sem{\termtwo}).
    $$ 
    Since $\vrelator$ is inductive and $\approxsem{\termone}{0} = 
    \mbot_{\values_\typeone}$, it is sufficient to prove
    $\howe{\vrelone}_{\typeone}(\termone, \termtwo) \leq \qunit$. 
    Because the quantale is integral the latter trivially holds. 
  \item
    We have to prove:
    $$
    \howe{\vrelone}_{\typeone}
    (\return{\valone}, \valtwo) 
    \leq 
    \vrelator \howe{\vrelone}_{\typeone}
    (\approxsem{\return{\valone}}{n+1}, \sem{\valtwo}).
    $$
    Since
    $\approxsem{\return{\valone}}{n+1} = \unit(\valone)
    $, 
    it is sufficient to prove that for any $\qvalone$ such that 
    the judgment
    $
    \emptyctx \howeimp \qvalone \leq 
    \howe{\vrelone}(\return{\valone}, \valtwo):
    \typeone
    $
    is derivable, 
    $
    \qvalone \leq 
    \vrelator \howe{\vrelone}_{\typeone}(\unit(\valone), \sem{\valtwo})
    $
    holds. Suppose 
    $
    \emptyctx \howeimp \qvalone \leq 
    \howe{\vrelone}(\return{\valone}, \valtwo):
    \typeone
    $ to be derivable. 
    The latter must have been inferred via an instance of rule 
    $\ruleHReturn$ from premisses:
    \begin{align}
    &\emptyctx \howeimp \qvaltwo \leq 
    \howe{\vrelone}(\valone, \valone'): \typeone,
    \label{key-lemma:return-premise-1}\\
    &\qvalthree \leq \vrelone_{\typeone}
    (\return{\valone'}, \valtwo).
    \label{key-lemma:return-premise-1}
    \end{align}
    In particular, we have $\qvaltwo \leq \howe{\vrelone}_\typeone(\valone, \valone')$ 
    and thus, by condition \eqref{Lax-Unit}, 
    $\qvaltwo \leq \vrelator\howe{\vrelone}_\typeone(\unit(\valone), \unit(\valone'))$.
    From, 
    \eqref{key-lemma:return-premise-1} we infer, by very definition of 
    applicative $\vrelator$-simulation, 
    $\qvalthree \leq \vrelator \vrelone_\typeone(\unit(\valone'), \sem{\valtwo})$, 
    and thus
    $
    \qvaltwo \tensor \qvalthree \leq 
    \vrelator\howe{\vrelone}_\typeone(\unit(\valone), \unit(\valone')) 
    \tensor 
    \vrelator \vrelone_\typeone(\unit(\valone'), \sem{\valtwo}).
    $
    We conclude the thesis by $\vrelator$-pseudo-transitivity.
  \item
    We have to prove:
    $$
    \howe{\vrelone}_{\typetwo}
    ((\abs{\varone}{\termone})\valone, \termtwo) 
    \leq 
    \vrelator \howe{\vrelone}_{\typetwo}
    (\approxsem{(\abs{\varone}{\termone})\valone}{n+1}, \sem{\termtwo}).
    $$
    As
    $\approxsem{(\abs{\varone}{\termone})\valone}{n+1} = 
    \approxsem{\substcomp{\termone}{\varone}{\valone}}{n}
    $, 
    it is sufficient to show that for any $\qvalone$ such that 
    $
    \emptyctx \howeimp \qvalone \leq 
    \howe{\vrelone}((\abs{\varone}{\termone})\valone, \termtwo):
    \typetwo
    $
    holds, we have 
    $
    \qvalone \leq 
    \vrelator \howe{\vrelone}_{\typetwo}
    (\approxsem{\substcomp{\termone}{\varone}{\valone}}{n}, \sem{\termtwo})
    $
    . Assume 
    $
    \emptyctx \howeimp \qvalone \leq 
    \howe{\vrelone}((\abs{\varone}{\termone})\valone, \termtwo):
    \typetwo
    $. 
    The latter must have been inferred via an instance of rule 
    $\ruleHApp$ from premisses:
    \begin{align}
    &\emptyctx \howeimp \qvaltwo \leq 
    \howe{\vrelone}(\valone, \valtwo): \typeone,
    \label{key-lemma:app-premise-1}\\
    &\emptyctx \howeimp \qvalthree 
    \leq \howe{\vrelone}(\abs{\varone}{\termone}, 
    \valthree): \typeone \mmap \typetwo,
    \label{key-lemma:app-premise-2}\\
    &\qvalfour \leq \vrelone_{\typetwo}
    (\valthree\valtwo, \termtwo).
    \label{key-lemma:app-premise-3}
    \end{align}
    Let us examine premise \eqref{key-lemma:app-premise-2}. First 
    of all, since $\valthree$ is a closed value of type 
    $\typeone \mmap \typetwo$ it must be 
    of the form $\abs{\varone}{\termthree}$. Moreover, 
    \eqref{key-lemma:app-premise-2} must have been inferred via an
    instance rule rule $\ruleHAbs$ from premisses:
    \begin{align}
    &\varone :_{\baseid} \typeone \howeimp \qvalthree_1 \leq 
    \howe{\vrelone}(\termone, \termfour): \typetwo,
    \label{key-lemma:app-premise-2.1}\\
    &\qvalthree_2 \leq 
    \vrelone_{\typeone \mmap \typetwo}
    (\abs{\varone}{\termfour}, \abs{\varone}{\termthree}).
    \label{key-lemma:app-premise-2.2}
    \end{align}
    In particular, we have the equality
    $\qvalthree_1 \tensor \qvalthree_2 = \qvalthree$.
    From \eqref{key-lemma:app-premise-2.1} we deduce 
    $
    \qvalthree_1 \leq 
    \varone :_{\baseid} \typeone \imp 
    \howe{\vrelone}(\termone, \termfour): \typetwo
    $, whereas from 
    \eqref{key-lemma:app-premise-1} we infer
    $\qvaltwo \leq \howe{\vrelone}_{\typeone}(\valone, \valtwo)$. 
    We are now in position to apply the Substitution Lemma,
    obtaining
    $
    \qvalthree_1 \tensor \qvaltwo \leq 
    \howe{\vrelone}_{\typetwo}
    (\substcomp{\termone}{\varone}{\valone}, \substcomp{\termfour}{\varone}{\valtwo}).
    $
    By very definition of 
    applicative $\vrelator$-simulation, \eqref{key-lemma:app-premise-2.2}
    implies the inequality
    $
    \qvalthree_2 \leq \vrelone_{\typetwo}
    (\substcomp{\termfour}{\varone}{\valtwo}, 
    \substcomp{\termthree}{\varone}{\valtwo}).
    $
    Applying pseudo-transitivity followed 
    by the induction hypothesis we obtain:
    \begin{multline*}
    \qvalthree_1 \tensor \qvalthree_2 \tensor \qvaltwo \leq 
    \howe{\vrelone}_{\typetwo}(\substcomp{\termone}{\varone}{\valone}, 
    \substcomp{\termthree}{\varone}{\valtwo})
    \\ \leq 
    \vrelator \howe{\vrelone}_{\typetwo}
    (\approxsem{\substcomp{\termone}{\varone}{\valone}}{n}, 
    \sem{\substcomp{\termthree}{\varone}{\valtwo}}).
    \end{multline*}
    Finally, from \eqref{key-lemma:app-premise-3}, by definition 
    of applicative $\vrelator$-simulation we infer
    $
    \qvalfour 
    \leq 
    \vrelator \vrelone_{\typetwo}
    (\sem{\substcomp{\termthree}{\varone}{\valtwo}}, \sem{\termtwo})
    $
    (recall that $\valthree = \abs{\varone}{\termthree}$, so that 
    $\sem{\valthree \valtwo} = \sem{\substcomp{\termthree}{\varone}{\valtwo}}$).
    We can now conclude the thesis by 
    $\vrelator$-pseudo-transitivity.
  \item
    Cases for pattern matching against folds and sums are standard (they follow the 
    same pattern of point 5 but are simpler).
  \item 
    We have to prove: 
    $$
    \howe{\vrelone}_{\typetwo}(\casebang{\bang \valone}{\termone}, \termtwo) 
    \leq \vrelator \howe{\vrelone}_{\typetwo}
    (\approxsem{\casebang{\bang \valone}{\termone}}{n+1}, \sem{\termtwo}).
    $$
    As 
    $\approxsem{\casebang{\bang \valone}{\termone}}{n+1} = 
    \approxsem{\substcomp{\termone}{\varone}{\valone}}{n}$,
    we show that for any $\qvalone$ such that 
    $\emptyctx \howeimp \qvalone \leq \howe{\vrelone}
    (\casebang{\bang \valone}{\termone}, \termtwo): \typetwo$ 
    is derivable, the inequality
    $
    \qvalone \leq \vrelator \howe{\vrelone}_{\typetwo}
    (\approxsem{\substcomp{\termone}{\varone}{\valone}}{n}, 
    \sem{\termtwo})
    $ holds. 
    Suppose $\emptyctx \howeimp \qvalone \leq \howe{\vrelone}
    (\casebang{\bang \valone}{\termone}, \termtwo): \typetwo$. The 
    latter must have been inferred via an instance of rule 
    $\ruleHPmBang$ from premisses:
    \begin{align}
    &\emptyset \howeimp \qvaltwo \leq 
    \howe{\vrelone}(\bang \valone, \valthree): \bang_{\baseone}\typeone, 
    \label{key-lemma:pmbang-premise-1}\\
    &\varone :_{\basetwo \comp \baseone} \typeone \howeimp 
    \qvalthree \leq \howe{\vrelone}(\termone, \termone'):\typetwo,
    \label{key-lemma:pmbang-premise-2}\\
    &\qvalfour \leq \vrelone_{\typetwo}
    (\pmbang{\valthree}{\termone'}, \termtwo).
    \label{key-lemma:pmbang-premise-3}
    \end{align}
    In particular, we have 
    $\qvalone = \basetwo(\qvaltwo) \tensor \qvalthree \tensor \qvalfour$.
    Let us examine premise \eqref{key-lemma:pmbang-premise-1}. 
    First of all, since $\valthree$ is a closed value of type 
    $\bang_{\baseone} \typeone$ it must be 
    of the form $\bang \valone'$. Moreover, 
    \eqref{key-lemma:pmbang-premise-1} must have been inferred via 
    an instance of rule $\ruleHBang$ from premisses:
    \begin{align}
    &\emptyset \howeimp \qvaltwo_1 \leq 
    \howe{\vrelone}(\valone, \valtwo): \typeone,
    \label{key-lemma:pmbang-premise-1.1}\\
    &\qvaltwo_2 \leq 
    \vrelone_{\bang_{\baseone}\typeone}
    (\bang \valtwo, \bang \valone').
    \label{key-lemma:pmbang-premise-1.2}
    \end{align}
    In particular, $\qvaltwo = \baseone(\qvaltwo_1) \tensor \qvaltwo_2$.
    From \eqref{key-lemma:pmbang-premise-1.2}, by definition of 
    applicative $\vrelator$-simulation we infer 
    $\qvaltwo_2 \leq \baseone \acts \vrelone_{\typeone}(\valtwo, \valone')$. 
    Since \eqref{key-lemma:pmbang-premise-1.1} implies 
    $\qvaltwo_1 \leq \howe{\vrelone}_{\typeone}(\valone, \valtwo)$, we have:
    \begin{align*}
    \qvaltwo 
    &= 
    \baseone(\qvaltwo_1) \tensor \qvaltwo_2 \\
    &\leq 
    \baseone \acts \howe{\vrelone}_{\typeone}(\valone, \valtwo) 
    \tensor \baseone \acts \vrelone_{\typeone}(\valtwo, \valone') \\
    &\leq
    \baseone \acts (\howe{\vrelone}_{\typeone}(\valone, \valtwo) 
    \tensor \vrelone_{\typeone}(\valtwo, \valone')) \\
    &\leq
    \baseone \acts \howe{\vrelone}_{\typeone}(\valone, \valone'),
    \end{align*}
    where the last inequality follows by pseudo-transitivity.
    From \eqref{key-lemma:pmbang-premise-2} we infer the inequality
    $\qvalthree \leq \varone:_{\basetwo \comp \baseone} \typeone
    \imp \howe{\vrelone}(\termone, \termone'): \typetwo$. 
    We are now in position to apply the Substitution Lemma
    obtaining:
    $$
    (\basetwo \comp \baseone) \acts 
    \howe{\vrelone}_{\typeone}(\valone,\valone') \tensor 
    \qvalthree \leq \howe{\vrelone}_{\typetwo}
    (\termone[\varone := \valone],\termone'[\varone := \valone']).
    $$
    The latter, together with the inequality
    $\qvaltwo \leq \baseone \acts \howe{\vrelone}_{\typeone}(\valone, \valone')$, 
    implies 
    $\basetwo(\qvaltwo) \tensor \qvalthree 
    \leq 
    \howe{\vrelone}_{\typetwo}
    (\termone[\varone := \valone],\termone'[\varone := \valone']).$ 
    Applying the induction hypothesis we conclude:
    $$
    \basetwo(\qvaltwo) \tensor \qvalthree 
    \leq
    \vrelator\howe{\vrelone}_{\typetwo}
    (\sem{\termone[\varone := \valone]}_n,\sem{\termone'[\varone := \valone']})
    .
    $$
    Finally, from \eqref{key-lemma:pmbang-premise-3} by definition of 
    applicative $\vrelator$-simulation we infer 
    $\qvalfour \leq \vrelator \vrelone_{\typetwo}
    (\sem{\termone'[\varone := \valone']}, \sem{\termtwo})$ 
    (recall that $\valthree = \bang \valone'$) and thus 
    conclude the thesis by $\vrelator$-pseudo-transitivity.
  \item
    We have to prove:
    $$
    \howe{\vrelone}_{\typetwo}(\seq{\termone}{\termtwo}, \termthree) \leq 
    \vrelator \howe{\vrelone}_{\typetwo}
    (\approxsem{\seq{\termone}{\termtwo}}{n+1}, \sem{\termthree}).
    $$
    As 
    $\approxsem{\seq{\termone}{\termtwo}}{n+1} = 
    \kleisli{\fun{\termtwo}_n}\approxsem{\termone}{n}$,
    it is sufficient to prove that for any $\qvalone$ such that 
    $\emptyctx \howeimp \qvalone \leq \howe{\vrelone}
    (\seq{\termone}{\termtwo}, \termthree): \typetwo$ 
    is derivable, we have
    $
    \qvalone \leq \vrelator \howe{\vrelone}_{\typetwo}
    (\kleisli{\fun{\termtwo}_n}\approxsem{\termone}{n}, 
    \sem{\termthree})
    .$
    Suppose $\emptyctx \howeimp \qvalone \leq \howe{\vrelone}
    (\seq{\termone}{\termtwo}, \termthree): \typetwo$. The 
    latter must have been inferred via an instance of rule 
    $\ruleHSeq$ from premisses:
    \begin{align}
    &\emptyctx \howeimp \qvaltwo \leq 
    \howe{\vrelone}(\termone, \termone'): \typeone, 
    \label{key-lemma:seq-premise-1}\\
    &\varone :_{\baseone} \typeone \howeimp 
    \qvalthree \leq \howe{\vrelone}(\termtwo, \termtwo'):\typetwo,
    \label{key-lemma:seq-premise-2}\\
    &\qvalfour \leq \vrelone_{\typetwo}
    (\seq{\termone'}{\termtwo'}, \termthree).
    \label{key-lemma:seq-premise-3}
    \end{align}
    In particular, we have 
    $\qvalone = (\baseone \wedge \baseid)(\qvaltwo) 
    \tensor \qvalthree \tensor \qvalfour$.
    We now claim to have:
    \begin{multline}        
    (\varone :_\baseone \typeone \imp \howe{\vrelone}
    (\termtwo,\termtwo'): \typetwo)
    \tensor
    (\baseone \wedge \baseid) \acts \howe{\vrelone}_{\typeone}
    (\termone, \termone') 
    \\ \leq 
     \vrelator\howe{\vrelone}_{\typetwo}(\sem{\seq{\termone}{\termtwo}}_{n+1},
    \sem{\seq{\termone'}{\termtwo'}}).
    \label{key-lemma:seq-claim}
    \end{multline}
    By very definition of Howe's extension, the latter obviously entails 
    $
    (\baseone \wedge \baseid)(\qvaltwo) 
    \tensor \qvalthree
    \leq 
    \vrelator\howe{\vrelone}_{\typetwo}(\sem{\seq{\termone}{\termtwo}}_{n+1},
    \sem{\seq{\termone'}{\termtwo'}})$. Moreover, by 
     definition of applicative $\vrelator$-simulation,
    \eqref{key-lemma:seq-premise-3} implies
    $\qvalfour \leq 
    \vrelator\vrelone_{\typetwo}
    (\sem{\seq{\termone'}{\termtwo'}}, \sem{\termthree})$, 
    which allows to conclude the thesis by $\vrelator$-pseudo-transitivity. 
    Let us now turn to the proof of \eqref{key-lemma:seq-premise-3}. 
    First of all we apply the induction hypothesis on 
    $\howe{\vrelone}_{\typeone}(\termone, \termone')$. By monotonicity 
    of $\baseone \wedge \baseid$ we have thus reduced the proof of 
    \eqref{key-lemma:seq-premise-3} to proving the inequality:
    \begin{multline}
    (\varone :_\baseone \typeone \imp \howe{\vrelone}
    (\termtwo,\termtwo'): \typetwo) 
    \tensor
    (\baseone \wedge \baseid) \acts \vrelator\howe{\vrelone}_{\typeone}
    (\sem{\termone}_n, \sem{\termone'})
    \\ \leq 
    \vrelator\howe{\vrelone}_{\typetwo}
    (\kleisli{\fun{\termtwo}_n}\approxsem{\termone}{n},
    \kleisli{\fun{\termtwo'}}\sem{\termone'}).
    \label{key-lemma:seq-claim-1}
    \end{multline}
    Consider the diagram:
    \begin{align}
    \vcenter{
    \xymatrixcolsep{4pc}\xymatrix{
    \laxcommute
    \catunit \times \monad \values_{\typeone} 
    \ar[d]_{\vrelthree \tensor (\baseone \wedge \baseid) \acts 
    \vrelator \howe{\vrelone}_{\typeone}}|-*=0@{|}   
    \ar[r]^-{\kleisli{\fun{\termtwo}_n} \comp 
    \unitor_{\monad\values_{\typeone}}}  
    &  
    \monad \values_{\typetwo}  
    \ar[d]^{\relator \howe{\vrelone}_{\typetwo}}|-*=0@{|} 
    \\
    \catunit \times  \monad \values_{\typeone}
    \ar[r]_-{\kleisli{\fun{\termtwo'}} \comp 
    \unitor_{\monad \values_{\typeone}}}   
    &  
    \monad \values_{\typetwo}}} 
    \label{key-lemma:square-1}
    \end{align}
    where $\catunit = \{*\}$ and 
    $\vrelthree(*,*) = (\varone :_\baseone \typeone \imp \howe{\vrelone}
    (\termtwo,\termtwo'): \typetwo)$. 
    It is easy to see that \eqref{key-lemma:seq-claim-1} 
    follows from \eqref{key-lemma:square-1}, 
    since e.g.:
    $$
    (\kleisli{\fun{\termtwo}}_n \comp 
    \unitor_{\monad \values_{\typeone}})
    (*, \sem{\termone}_n) 
    = 
    \kleisli{\fun{\termone}}_{n} \sem{\termone}_n.
    $$
    To prove \eqref{key-lemma:square-1} we first observe that 
    by very definition of strong monad 
    we have
    $\kleisli{\fun{\termtwo}}_{n} \comp 
    \unitor_{\monad\values_{\typeone}} = 
    \strongkleisli{(\fun{\termtwo}_{n} \comp
        \unitor_{\values_{\typeone}})}$.
    We can now apply condition \eqref{s-Strong-Lax-Bind}. 
    As a consequence, 
    to prove \eqref{key-lemma:square-1} it is sufficient to 
    prove
    that for all closed values $\valone, \valtwo$ of type $\typeone$,
    we have:
    \begin{multline*}
    (\varone :_\baseone \typeone \imp \howe{\vrelone}
    (\termtwo,\termtwo'): \typetwo)
    \tensor 
    (\baseone \wedge \baseid) \acts \howe{\vrelone}_{\typeone}
    (\valone, \valtwo)
    \\ \leq 
    \vrelator \howe{\vrelone}_{\typetwo}
    (\approxsem{\substcomp{\termtwo}{\varone}{\valone}}{n}, 
    \sem{\substcomp{\termtwo'}{\varone}{\valtwo}}).
    \end{multline*}
    By Substitution Lemma and induction hypothesis we 
    have: 
    \begin{multline*}
    (\varone :_\baseone \typeone \imp \howe{\vrelone}
    (\termtwo,\termtwo'): \typetwo)
    \tensor
    \baseone \acts \howe{\vrelone}_{\typeone}
    (\valone, \valtwo)
    \\ \leq 
    \vrelator \howe{\vrelone}_{\typetwo}
    (\approxsem{\substcomp{\termtwo}{\varone}{\valone}}{n}, 
    \sem{\substcomp{\termtwo'}{\varone}{\valtwo}}).
    \end{multline*}
    We conclude the thesis since $\baseone \wedge \baseid \leq \baseone$. 
  \item
    We have to prove: 
    $$
    \howe{\vrelone}_{\typeone}(\op(\termone_1, \hh, \termone_m), \termtwo) 
    \leq \vrelator \howe{\vrelone}_{\typeone}
    (\approxsem{\op(\termone_1, \hh, \termone_m)}{n+1}, \sem{\termtwo}),
    $$
    where $\op$ is an $m$-ary operation symbol in $\signature$. 
    As usual, we use the notation $\vec{x_i}$ for items $x_1, \hh, x_m$.\\
    We show that for any $\qvalone$ such that 
    $\emptyctx \howeimp \qvalone \leq \howe{\vrelone}
    (\op(\vec{\termone_i}), \termtwo): \typeone$ 
    is derivable, 
    $
    \qvalone \leq \vrelator \howe{\vrelone}_{\typetwo}
    (\approxsem{\op(\vec{\termone_i})}{n}, 
    \sem{\termtwo})
    $ holds. 
    Suppose to have $\emptyctx \howeimp \qvalone \leq \howe{\vrelone}
    (\op(\vec{\termone_i}), \termtwo): \typetwo$. The 
    latter must have been inferred via an instance of rule 
    $\ruleHOp$ from premisses:
    \begin{align}
    &\forall i\leq m.\ \emptyctx \howeimp \qvalone_i \leq 
    \howe{\vrelone}(\termone_i, \termtwo_i): \typeone,
    \label{key-lemma:op-premise-1}\\
    &\qvaltwo \leq \vrelone_{\typetwo}
    (\op(\termtwo_1, \hh, \termtwo_m), \termtwo).
    \label{key-lemma:op-premise-2}
    \end{align}
    In particular, we have 
    $\qvalone = \qop(\qvalone_1, \hh, \qvalone_m) \tensor \qvaltwo$. 
    We apply the induction hypothesis on \eqref{key-lemma:op-premise-1} 
    obtaining, for each $i \leq m$, the inequality
    $\qvalone_i \leq \vrelator \howe{\vrelone}
    (\sem{\termone_i}_n, \sem{\termtwo_i})$. By monotonicity 
    of $\qop$ we thus infer:
    \begin{align*}
    \qop(\vec{\qvalone_i}) 
    &\leq 
    \qop
    (\vrelator \howe{\vrelone}(\sem{\termone_1}_n, \sem{\termtwo_1}), 
    \hh,
    \vrelator \howe{\vrelone}(\sem{\termone_m}_n, \sem{\termtwo_m})) \\
    &\leq
    \vrelator \howe{\vrelone_\typeone}
    (\mop_{\values_\typeone}(\sem{\termone_1}_n, \hh, \sem{\termone_m}_n), 
    \mop_{\values_\typeone}(\sem{\termtwo_1}, \hh, \sem{\termtwo_m})) \\
    &= 
    \vrelator \howe{\vrelone_\typeone}
    (\sem{\op(\termone_1, \hh, \termone_m)}_{n+1}, 
    \sem{\op(\termtwo_1, \hh, \termtwo_m)}),
    \end{align*}
    where the second inequality follows since $\vrelator$ is $\signature$-compatible. 
    We conclude the thesis from \eqref{key-lemma:op-premise-2} 
    by $\vrelator$-pseudo-transitivity and definition of applicative 
    $\vrelator$-simulation.
\end{enumerate}
\end{proof}

\subsection{Applicative $\vrelator$-bisimilarity}
\label{appendix:proofs-applicative-v-bisimilarity}
In this last section we expand on some technical details necessary to prove that 
applicative $\vrelator$-bisimilarity is compatible.

\symmetricSimilarityIsBisimilarity*
\begin{proof}
Obviously $\vbisim$ is an applicative 
$\vrelator$-simulation. Moreover, 
$\vbisim$ is symmetric and thus we have $\vbisim \leq \vbisim'$. To see 
that $\vbisim' \leq \vbisim$ it is sufficient to prove that $\vbisim'$ is 
an applicative $(\vrelator \wedge \dual{\vrelator})$-simulation. 
Clauses on values are trivially satisfied. We now show that for any 
symmetric applicative 
$\vrelator$-simulation $\vrelone$, we have the inequality
$\toterm{\vrelone}_\typeone(\termone, \termone') \leq 
\vrelator\toval{\vrelone}_\typeone(\sem{\termone}, \sem{\termone'}) \wedge 
\vrelator \dual{(\toval{\vrelone}_\typeone)}(\sem{\termone'}, \sem{\termone})$ 
for all terms $\termone, \termone' \in \Lambda_\typeone$. 
For that it is 
sufficient to prove $\toterm{\vrelone}_\typeone(\termone, \termone') \leq 
\vrelator\dual{(\toval{\vrelone}_\typeone)}(\sem{\termone'}, \sem{\termone})$, 
which obviously holds since $\vrelone$ is symmetric.
\end{proof}

\transitiveClosureHoweExtensionCompatible*

\begin{proof}
We start with point $1$.
First of all observe that by Lemma \ref{lemma:properties-howe-extension} 
$\howe{\vrelone}$ is compatible. To prove compatibility of
$\transitive{(\howe{\vrelone})}$ we have to check that it satisfies all 
clauses in Figure \ref{fig:compatibility-clauses}. We show the case 
for sequential composition as an illustrative example (the other cases 
are proved in a similar, but easier, way). 
We have to prove:
\begin{multline*}
(\baseone \wedge \baseid) \acts 
(\envone \imp \transitive{(\howe{\vrelone})}(\termone, \termone'): \typeone) 
\tensor 
(\envtwo, \varone :_\baseone \typeone \imp 
\transitive{(\howe{\vrelone})}(\termtwo, \termtwo'): \typetwo)
\\
\leq 
(\baseone \wedge \baseid) \comp \envone \imp \transitive{(\howe{\vrelone})}
(\seq{\termone}{\termtwo},\seq{\termone'}{\termtwo'}): \typetwo.
\end{multline*}
Let $\qvalthree \defeq ((\baseone \wedge \baseid) \comp \envone \imp 
\transitive{(\howe{\vrelone})}
(\seq{\termone}{\termtwo},\seq{\termone'}{\termtwo'}): \typetwo)$. 
By definition of transitive closure we have to prove:
\begin{multline*}
(\baseone \wedge \baseid) \acts 
\join_n (\envone \imp (\howe{\vrelone})^{(n)}(\termone, \termone'): \typeone) 
\\ \tensor 
\join_m (\envtwo, \varone :_\baseone \typeone \imp 
(\howe{\vrelone})^{(m)}(\termtwo, \termtwo'): \typetwo)
\leq 
\qvalthree.
\end{multline*}
By finite continuity either $\baseone \wedge \baseid = \infty$ or it 
is continuous with respect to joints. In the former case we are trivially 
done. So suppose the latter case, so that thesis becomes:
\begin{multline*}
\join_n (\baseone \wedge \baseid) \acts 
 (\envone \imp (\howe{\vrelone})^{(n)}(\termone, \termone'): \typeone) 
\\ \tensor 
\join_m (\envtwo, \varone :_\baseone \typeone \imp 
(\howe{\vrelone})^{(m)}(\termtwo, \termtwo'): \typetwo)
\leq 
\qvalthree.
\end{multline*}
In particular, we also have $\baseone \neq \infty$. We prove that for 
any $n,m \geq 0$ the following holds: for all $\termone, \termone', \termtwo, \termtwo'$ 
(of appropriate type),
\begin{multline*}
(\baseone \wedge \baseid) \acts 
 (\envone \imp (\howe{\vrelone})^{(n)}(\termone, \termone'): \typeone) 
\tensor 
(\envtwo, \varone :_\baseone \typeone \imp 
(\howe{\vrelone})^{(m)}(\termtwo, \termtwo'): \typetwo)
\\
\leq 
((\baseone \wedge \baseid) \comp \envone \imp \transitive{(\howe{\vrelone})}
(\seq{\termone}{\termtwo},\seq{\termone'}{\termtwo'}): \typetwo) 
\end{multline*}
holds.
First of all we observe that since $\howe{\vrelone}$ is reflexive, we can 
assume $n = m$. In fact, if e.g. $n = m + l$, then we can `complete' 
$(\howe{\vrelone})^{(m)}$ as follows: 
$$
(\howe{\vrelone})^{(m)} = 
(\howe{\vrelone})^{(m)} \underbrace{\comp \idvrel \cc \comp \idvrel}_{l\text{-times}} 
\leq (\howe{\vrelone})^{(m)} 
\underbrace{\comp \howe{\vrelone} \cc \comp \howe{\vrelone}}_{l\text{-times}}
= (\howe{\vrelone})^{(n)}.
$$
We now do induction on $n$. The base case is trivial. Let us turn on the 
inductive step. We have to prove:
\begin{multline*}
(\baseone \wedge \baseid) \acts 
\Big( 
\join_{\termone''}(\envone \imp \howe{\vrelone}(\termone, \termone''): \typeone)
\tensor (\envone \imp (\howe{\vrelone})^{(n)}(\termone'', \termone'): \typeone) 
\Big )
\\
\tensor 
\join_{\termtwo''}
(\envtwo, \varone :_\baseone \typeone \imp 
\howe{\vrelone}(\termtwo, \termtwo''): \typetwo)
\\ \tensor
(\envtwo, \varone :_\baseone \typeone \imp 
(\howe{\vrelone})^{(n)}(\termtwo'', \termtwo'): \typetwo)
\leq \qvalthree.
\end{multline*}
Since $\baseone \wedge \baseid$ is continuous it is 
sufficient to prove that for all terms $\termone'', \termtwo''$ 
we have: 
\begin{multline*}
(\baseone \wedge \baseid) \acts 
(\envone \imp \howe{\vrelone}(\termone, \termone''): \typeone)
\tensor 
(\baseone \wedge \baseid) \acts 
(\envone \imp (\howe{\vrelone})^{(n)}(\termone'', \termone'): \typeone) 
\\
\tensor 
(\envtwo, \varone :_\baseone \typeone \imp 
\howe{\vrelone}(\termtwo, \termtwo''): \typetwo)
\tensor
(\envtwo, \varone :_\baseone \typeone \imp 
(\howe{\vrelone})^{(n)}(\termtwo'', \termtwo'): \typetwo)
\leq \qvalthree,
\end{multline*}
i.e. 
\begin{multline*}
(\baseone \wedge \baseid) \acts 
(\envone \imp \howe{\vrelone}(\termone, \termone''): \typeone)
\tensor 
(\envtwo, \varone :_\baseone \typeone \imp 
\howe{\vrelone}(\termtwo, \termtwo''): \typetwo) 
\\
\tensor
(\baseone \wedge \baseid) \acts 
(\envone \imp (\howe{\vrelone})^{(n)}(\termone'', \termone'): \typeone) 
\\ \tensor
(\envtwo, \varone :_\baseone \typeone \imp 
(\howe{\vrelone})^{(n)}(\termtwo'', \termtwo'): \typetwo)
\leq \qvalthree.
\end{multline*}
We can now apply compatibility of $\howe{\vrelone}$ plus the induction 
hypothesis, thus reducing the thesis to:
\begin{multline*}
\Big ((\baseone \wedge \baseid) \comp \envone \tensor \envtwo \imp
\howe{\vrelone}(\seq{\termone}{\termtwo},\seq{\termone''}{\termtwo''}): \typeone)
\Big )
\\
\tensor 
\Big ( (\baseone \wedge \baseid) \comp \envone \tensor \envtwo \imp
\transitive{(\howe{\vrelone})}
(\seq{\termone''}{\termtwo''},\seq{\termone'}{\termtwo}): \typeone)
\Big)
\leq \qvalthree.
\end{multline*}
We can now conclude the thesis by very definition of 
$\transitive{(\howe{\vrelone})}$.

To prove point $2$ we have to show 
$\transitive{(\howe{\vrelone})} 
\leq \dual{(\transitive{(\howe{\vrelone})})}$. For that it is 
sufficient to show 
$\howe{\vrelone} 
\leq \dual{(\transitive{(\howe{\vrelone})})}.$
That amounts to prove that for all terms 
$\envone \imp \termone, \termone' : \typeone$ 
and values $\envone \valimp \valone, \valone'$, and for any 
$\qvalone \in \quantale$ such that 
$\envone \howeimp \qvalone \leq \howe{\vrelone}(\termone, \termone'): \typeone$ 
is derivable we have 
$\qvalone \leq \envone \imp \transitive{(\howe{\vrelone})}(\termone, \termone'):\typeone$
(and similarity for $\envone \valimp \valone, \valone': \typeone$). 
The proof is by induction on the derivation of 
$\envone \howeimp \qvalone \leq \howe{\vrelone}(\termone, \termone'): \typeone$ 
using point $1$.
\end{proof}

\end{document}